\newif\ifsubmit     
\newif\ifllncs      
\newif\ifexabs      
\newif\ifblind  
\else \documentclass[letterpaper,11pt,pdfa]{article}
  \spnewtheorem{claim}{Claim}{\bfseries}{\rmfamily}
  \crefname{claim}{claim}{claims}
  \Crefname{claim}{Claim}{Claims}
   \spnewtheorem{fact}{Fact}{\bfseries}{\rmfamily}
  \crefname{fact}{fact}{facts}
  \Crefname{fact}{fact}{Facts}
  \newtheorem{theorem}{Theorem}[section]
  \newtheorem{definition}[theorem]{Definition}
  \newtheorem{remark}[theorem]{Remark}
  \newtheorem{lemma}[theorem]{Lemma}
  \newtheorem{claim}[theorem]{Claim}
  \newtheorem*{remark*}{Remark}
  \newtheorem{fact}[theorem]{Fact}
  \newtheorem*{theorem*}{Theorem}
  \newtheorem*{lemma*}{Lemma}
  \newcommand{\email}[1]{\href{mailto:#1}{\texttt{#1}}}
\setlist[description]{noitemsep}
\setlist[enumerate]{noitemsep}
\setlist[itemize]{noitemsep}
    \NewDocumentCommand{\whiten}{ m }
    {
      \int_step_function:nnnN {1}{1}{#1} \white_text:n
    }
  \NewDocumentCommand{ \varul }{ D<>{5} O{0.2ex} O{0.1ex} +m } {%
    \begingroup
    \setul{#2}{#3}%
    \def\SOUL@uleverysyllable{%
      \setbox0=\hbox{\the\SOUL@syllable}%
      \ifdim\dp0>\z@
      \SOUL@ulunderline{\phantom{\the\SOUL@syllable}}%
      \whiten{#1}%
      \llap{%
        \the\SOUL@syllable
        \SOUL@setkern\SOUL@charkern
      }%
      \else
      \SOUL@ulunderline{%
        \the\SOUL@syllable
        \SOUL@setkern\SOUL@charkern
      }%
      \fi}%
    \ul{#4}%
    \endgroup
  }
    \newcommand{\jiahui}[1]{}
    \newcommand{\markz}[1]{}
    \newcommand{\jiahui}[1]{{\color{purple} Jiahui: #1}}
    \newcommand{\markz}[1]{{\color{blue} Mark: #1}}
\newcommand{\Enc}{\mathsf{Enc}}
\newcommand{\KeyGen}{\mathsf{KeyGen}}
\newcommand{\Dec}{\mathsf{Dec}}
\newcommand{\td}{\mathsf{td}}
\newcommand{\pk}{\mathsf{pk}}
\newcommand{\mpk}{\mathsf{mpk}}
\newcommand{\sk}{\mathsf{sk}}
\newcommand{\ct}{\mathsf{ct}}
\newcommand{\id}{\mathsf{id}}
\newcommand{\wibe}{\mathsf{wIBE}}
\newcommand{\ibe}{\mathsf{IBE}}
\newcommand{\aux}{\mathsf{aux}}
\newcommand{\ver}{{\sf Ver}}
\renewcommand{\st}{{\sf st}}
\newcommand{\prove}{{\sf Prove}}
\newcommand{\sign}{{\sf Sign}}
\newcommand{\sig}{{\sf sig}}
\newcommand{\OTS}{{\sf OTS}}
\newcommand{\DS}{{\sf DS}}
\newcommand{\negl}{\mathsf{negl}}
\newcommand{\A}{\mathcal{A}}
\newcommand{\adv}{{\sf Adv}}
\newcommand{\poly}{{\sf poly}}
\newcommand{\cA}{\mathcal{A}}
\newcommand{\cD}{{\mathcal{D}}}
\newcommand{\cC}{{\mathcal{C}}}
\newcommand{\cM}{{\mathcal{M}}}
\newcommand{\cS}{{\mathcal{S}}}
\newcommand{\cO}{{\mathcal{O}}}
\newcommand{\cQ}{{\mathcal{Q}}}
\newcommand{\cT}{{\mathcal{T}}}
\newcommand{\cF}{{\mathcal{F}}}
\newcommand{\cY}{\mathcal{Y}}
\newcommand{\cX}{\mathcal{X}}
\newcommand{\sX}{\mathcal{X}}
\newcommand{\sY}{\mathcal{Y}}
\newcommand{\cB}{{\mathcal{B}}}
\newcommand{\N}{\mathbb{N}}
\newcommand{\constrain}{\mathsf{Constrain}}
\newcommand{\csig}{\mathsf{CSig}}
\newcommand{\SecAlg}{\mathsf{SecEval}}
\newcommand{\PubAlg}{\mathsf{PubEval}}
\newcommand{\CSig}{\mathsf{CSig}}
\newcommand{\watermark}{\mathsf{Mark}}
\newcommand{\extract}{\mathsf{Extract}}
\newcommand{\tildesk}{\mathsf{\tilde{sk}}}
\newcommand{\vk}{\mathsf{vk}}
\newcommand{\xk}{\mathsf{xk}}
\newcommand{\mk}{\mathsf{mk}}
\newcommand{\WP}{\mathsf{WP}}
\newcommand{\wsecEval}{\mathsf{wSecEval}}
\newcommand{\wpubEval}{\mathsf{wPubEval}}
\newcommand{\eval}{\mathsf{Eval}}
\newcommand{\pke}{\mathsf{PKE}}
\newcommand{\ske}{\mathsf{SKE}}
\newcommand{\msk}{\mathsf{msk}}
\newcommand{\wmsetup}{\mathsf{WMSetup}}
\newcommand{\view}{\mathsf{view}}
\newcommand{\Sim}{\mathsf{Sim}}
\newcommand{\out}{\mathsf{out}}
\newcommand{\inp}{\mathsf{inp}}
\newcommand{\istar}{{i^*}}
\newcommand{\prf}{\mathsf{PRF}}
\newcommand{\wprf}{\mathsf{wPRF}}
\newcommand{\prp}{\mathsf{PRP}}
\newcommand{\mac}{\mathsf{MAC}}
\newcommand{\wmac}{\mathsf{wMAC}}
\newcommand{\whe}{\mathsf{wHE}}
\newcommand{\wpke}{\mathsf{wPKE}}
\newcommand{\ccapke}{\mathsf{CCA-PKE}}
\newcommand{\cca}{\mathsf{CCA}}
\newcommand{\verify}{\mathsf{Verify}}
\newcommand{\crs}{\mathsf{CRS}}
\newcommand{\nizk}{\mathsf{NIZK}}
\newcommand{\NP}{\mathsf{NP}}
\newcommand{\delegate}{\mathsf{Delegate}}
\newcommand{\ind}{\mathsf{index}}
\newcommand{\setup}{\mathsf{Setup}}
\newcommand{\ConstrainSign}{\mathsf{ConstrainSign}}
\newcommand{\nm}{\mathsf{NM}}
\newcommand{\abe}{\mathsf{ABE}}
\newcommand{\dabe}{\mathsf{DABE}}
\newcommand{\mfe}{\mathsf{MFE}}
\newcommand{\fe}{\mathsf{FE}}
\newcommand{\fhe}{\mathsf{FHE}}
\newcommand{\qtrace}{\mathsf{QTrace}}
\newcommand{\comp}{\mathsf{comp}}
\newcommand{\mode}{\mathsf{mode}}
\newcommand{\skenc}{\mathsf{SKEnc}}
\newcommand{\MSK}{\mathsf{MSK}}
\newcommand{\Marked}{\mathsf{Marked}}
\newcommand{\Unmarked}{\mathsf{Unmarked}}
\newcommand{\SKF}{\mathsf{SKF}}
\title{Composability in Watermarking Schemes}
\author{}
\author{Jiahui Liu\inst{1}\orcidID{0000-0003-4380-8168}, 
\and Mark Zhandry\inst{2}\orcidID{0000-0001-7071-6272}}
\institute{Massachusetts Institute of Technology, \email{jiahuiliu@csail.mit.edu}
\and 
NTT Research, \email{mark.zhandry@ntt-research.com}
}
\author[1]{Jiahui Liu}
\author[2]{Mark Zhandry}
\affil[1]{Massachusetts Institute of Technology, \email{jiahuiliu.crypto@gmail.com}}
\affil[2]{NTT Research, \email{mark.zhandry@ntt-research.com}}
\date{}
\begin{document}

\maketitle

\begin{abstract}Software watermarking allows for embedding a mark into a piece of code, such that any attempt to remove the mark will render the code useless. Provably secure watermarking schemes currently seems limited to programs computing various cryptographic operations, such as evaluating pseudorandom functions (PRFs), signing messages, or decrypting ciphertexts (the latter often going by the name ``traitor tracing''). Moreover, each of these watermarking schemes has an ad-hoc construction of its own.

We observe, however, that many cryptographic objects are used as building blocks in larger protocols. We ask: just as we can compose building blocks to obtain larger protocols, can we compose watermarking schemes for the building blocks to obtain watermarking schemes for the larger protocols? We give an affirmative answer to this question, by precisely formulating a set of requirements that allow for composing watermarking schemes. We use our formulation to derive a number of applications.
\end{abstract}

\section{Introduction}
\jiahui{TODO:
1. add an outline section
2. review all typos according
3. rewrite the general watermarking algorithm, to handle the "honest" circuits
4. Tech overview CCA MAC 

5. Give more intuition on the 2 types of reduction in the technical section!

6. change all watermarking reductions to use the watermarking key to answer queries in stage 1}
Watermarking is an old idea, which aims to embed a mark in some object, such that any attempt to remove the mark destroys the object. In software watermarking, this means embedding a mark into program code, such that any attempt to remove the code will make the code useless. Such watermarking aims to deter piracy by identifying the source of pirated software. Recently, software watermarking has become an active area of research within cryptography, with numerous positive results for watermarking cryptographic functionalities, such as trapdoor functions~\cite{EC:Nishimaki13}, pseudorandom functions~\cite{STOC:CHNVW16,C:KimWu17,goyal2021beyond}, decryption~\cite{CFNP00} (under the name ``traitor tracing''), and more~\cite{goyal2019watermarking}.

In this work, we initiate the study of \emph{composing} watermarked functionalities. That is, if a cryptographic primitive $A$ (or perhaps several primitives) is used to build a primitive $B$, can we use a watermarking scheme for $A$ to realize a watermarking scheme for $B$? Our aim is to show when such a composition is possible, based on properties of the construction and security proof for $B$ using $A$. 

\subsection{Motivation}

Abstractions are central to cryptography, as they allow for decomposing various tasks into smaller building blocks, which can then be instantiated independently. The literature is full of results that show how to generically realize one abstraction assuming solutions to one or more input abstractions.

Given that cryptographic primitives are often composed, and that most watermarking schemes with provable security are for cryptographic primitives, an interesting question is whether watermarking schemes can be composed. To the best of our knowledge, this question has not been previously asked. In contrast, existing watermarking schemes are each developed ``from scratch.'' Even if the underlying techniques are similar, watermarking schemes for different primitives must go through separate constructions and security proofs. This can be a time-consuming process.

\paragraph{What does it mean to watermark a PRF?} As a running example throughout this this introduction, we will use pseudorandom functions (PRFs), which are one of the main workhorses in symmetric key cryptography, and are used as a central component in many higher-level protocols. In~\cite{STOC:CHNVW16}, the authors show how to watermark the evaluation procedure for a certain class of PRFs. We point out, however, that PRFs are typically not considered a cryptographic end goal, but rather a tool used to build other cryptographic notions. So \emph{what, then, is the utility of watermarking a PRF?}

Another fundamental question is the following: for important reasons that we will not get into here, the watermarking guarantee proved by~\cite{STOC:CHNVW16} (and all subsequent work on watermarking PRFs) is weaker than one may expect. Namely, they show that it is impossible to remove the mark without causing the program to fail on \emph{random} inputs. Certain PRFs called ``weak PRFs'' are only guaranteed secure when the adversary sees evaluations on random inputs, and so for this reason may authors (e.g.~\cite{goyal2021beyond,PKC:MaiWu22,EC:KitNis22}) refer to such a scheme as watermarking ``weak PRFs.'' Weak PRFs can be used as building blocks for many applications, though not as many as ordinary PRFs. In the context of using PRFs as a building block, what is implication of watermarking a \emph{weak} PRF?

\paragraph{Composition of Watermarking Schemes.} Our thesis is that watermarking, at least in many cases, should be defined and executed in such a way as to be composable, allowing watermarking schemes for building blocks to generically imply watermarking schemes for higher-level protocols. Watermarking a weak PRF, for example, should generically enable watermarking for many applications of weak PRFs, such as CPA-secure symmetric encryption. Naturally, a more ambitious goal is: watermarking a message-authentication code scheme or a digital signatures scheme, when composed with a watermarkable PRF scheme, should lead to a watermarkable CCA-secure SKE scheme.

\paragraph{Not all compositions support watermarking.} Certainly, not all composition results from cryptography can be used to compose watermarking schemes. For example, pseudorandom functions can be built from any pseudorandom generator (PRG)~\cite{GGM86}. But this does not seem to yield a viable path toward constructing a watermarking scheme for PRFs. After all, what would it even mean to watermarking a PRG, given that the evaluation algorithm for the PRG is public?

\subsection{Overview of Our Results}

\paragraph{Defining Watermarking Security.} Most cryptographic primitives are defined by an interactive game between an adversary $A$ and challenger.  Given a security game, we can translate the game into a watermarking definition, as follows. The attacker gets a watermarked secret key, and then tries to produce a program $A$. We say that $A$ is ``good'' if it can win the security game with non-negligible advantage. We then require the existence of a tracing algorithm, which can extract the mark from any good program $A$ produced by the adversary. In the case of watermarking public key decryption functionalities, our notion corresponds exactly to existing notions of traitor tracing, since the security game is non-interactive. However, for other primitives, our definition is potentially stronger than existing notions: existing notions only ask for mark extraction for non-interactive programs $A$ that can evaluate some function, whereas we must extract from any $A$ which wins a security experiment, which is potentially interactive. The strengthened definitions will be crucial for our composition theorem, which we now describe.

\paragraph{Composition Theorem.} Our first result is a composition theorem, which gives conditions under which a construction of a target primitive $P$ from input primitives $P_1,\cdots,P_k$ can be turned into a compiler for watermarking schemes. 
\begin{theorem}[Main Theorem (Informal)]\label{thm:maininf}
If the construction of a target primitive $P$ from input primitives $P_1,\cdots,P_k$ satisfies some given conditions and the watermarking schemes for $P_1,\cdots,P_k$ satisfy the above watermarking security definition, then we can compose the construction to watermarking schemes for $P_1,\cdots,P_k$, black-boxly into a watermarking scheme for $P$ that satisfies the above watermarking definition.
\end{theorem}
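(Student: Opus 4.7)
The plan is to build the watermarking scheme for $P$ by running the underlying construction on \emph{watermarked} versions of the building-block keys, and to define the tracing algorithm by running the black-box security reduction against the adversarially-produced program.

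First, I would describe the composed scheme. The setup algorithm runs, independently, the setup of each component watermarking scheme to obtain master keys $(\mpk_i,\msk_i)$ for $i=1,\dots,k$. To generate a watermarked key for $P$ embedding the mark $m$, we run the construction of $P$ from $P_1,\dots,P_k$, but every time the construction would invoke the honest key generator of $P_i$, we replace it by a call to the watermarked key generator $\wKeyGen_i$ with mark $m$. The resulting object is the watermarked secret key for $P$; correctness follows from the correctness of each component watermarking scheme together with the correctness of the construction.

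Second, I would construct the tracing algorithm. Suppose an adversary $\cA$ produces a program $\Psf$ that wins the security game for $P$ with non-negligible advantage. By the hypothesis that the construction is supported by a black-box security reduction, there exists some index $i$ and a polynomial-time reduction $\cR_i$ such that $\cR_i^{\Psf}$ wins the security game for $P_i$ with non-negligible advantage, using only oracle access to $\Psf$ and the ability to simulate (using knowledge of the other primitives' secret keys) the remaining components. Since $\cR_i^{\Psf}$ is itself an efficient program that wins the $P_i$-security game with non-negligible probability, our strengthened watermarking definition for $P_i$ guarantees that the tracing algorithm $\qtrace_i$, given $\cR_i^{\Psf}$ and $\msk_i$, outputs the embedded mark $m$ except with negligible probability. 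The composed tracer iterates over $i\in\{1,\dots,k\}$, running this procedure; by a union/averaging argument over which component's reduction succeeds, at least one invocation yields $m$, and we accept the majority or first extracted value.

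The main obstacle is ensuring that the reduction $\cR_i^{\Psf}$ is an \emph{admissible} adversary for the $P_i$-watermarking game, rather than merely an admissible adversary for the plain $P_i$-security game. In standard cryptographic arguments $\cR_i$ may use $\cA$ in rewinding or stateful ways, and may feed it simulated messages that do not look like a faithful copy of the $P$-security game; our watermarking definition, however, only certifies extraction from programs that win the corresponding security experiment. I would handle this by positing as one of the ``given conditions'' on the construction that the reduction is \emph{straight-line} and that it simulates the $P$-game faithfully enough that $\Psf$ still wins with non-negligible probability in its head, and by requiring the watermarking definition to encompass adversaries that are themselves interactive game players (which is exactly the strengthening emphasized earlier in the paper). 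A secondary subtlety is that $\wKeyGen_i$ must produce keys whose joint distribution with the honestly-sampled components is indistinguishable from a fully honest execution of the construction of $P$, so that the security reduction for $P$ still applies; this is captured by requiring each component scheme to satisfy a key-indistinguishability property and by restricting to constructions whose reductions use each $P_i$ key in a black-box manner.
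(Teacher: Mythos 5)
Your high-level skeleton matches the paper: watermark the component keys and build the tracer by running the construction's security reduction against the pirate program, then running each component extractor on the resulting reduced program. You also correctly identify the central obstacle, namely that the reduced program $\cR_i^{\Psf}$ must be a valid adversary in the $P_i$-\emph{watermarking} game, not merely the plain $P_i$-security game. Where the proposal goes wrong is in the conditions you propose to discharge that obstacle.

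You posit "straight-line" plus "faithful simulation." Neither captures what the paper actually needs, which is a \emph{two-stage obliviousness} requirement on the reduction. In the watermarking game, stage 1 of the $P$-security game is replaced by handing the adversary a watermarked key, which it can use internally in ways the tracer can never observe (encrypting or signing "under the hood," running it inside an FHE, etc.). Consequently, when the tracer later builds $\cR_i^{\Psf}$ it cannot rewind to, or condition on, anything the adversary did during that phase. The paper therefore requires that the security game for every primitive decompose into two stages with the winning condition depending only on stage 2, and that the reduction's stage-2 component depends only on the adversary's stage-2 behavior: it gets the public parameters and the challenge but \emph{none} of the stage-1 queries. "Straight-line" (no rewinding) is orthogonal to this: a straight-line reduction can still record and reuse the adversary's first-stage queries, which the tracer cannot do. This two-stage obliviousness is precisely what rules out, e.g., signature schemes as a \emph{target} primitive, since the signature winning condition must look at all past queries.

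A second omission is the extraction-key simulation property. Your tracer feeds $\cR_i^{\Psf}$ to $\qtrace_i$ with $\msk_i$, but in the paper the extraction key $\xk_i$ must be able to (perfectly/statistically/computationally) simulate the stage-2 security game $G_{P_i}^2$ for an arbitrary circuit — this is both how $C_i$ answers $C$'s oracle queries that involve $\sk_j, j\neq i$ (the tracer only holds $\xk_j$, not $\sk_j$), and how $\WP_i.\extract$ internally runs $C_i$ as an interactive stage-2 adversary. Without this property the composition doesn't type-check. Relatedly, your "key-indistinguishability" requirement is not what the paper uses; functionality-preserving of each $\WP_i$ suffices, since the watermarked key for $P$ is literally the tuple of watermarked component keys fed directly into the plain construction. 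Finally, the paper also separates building blocks into a watermarked set $\cS$ and an unwatermarked set $\bar\cS$ (e.g., one-time signatures whose keys are generated afresh at each encryption, or NIZK trapdoors used only in proofs), and shows that for $\bar\cS$ only plain security is needed; your proposal treats every $P_i$ as watermarked, which both overconstrains the hypothesis and does not even make sense for ephemeral keys.
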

\noindent Our conditions apply to a large class of constructions. They are, very roughly, as follows:
\begin{itemize}
    \item The construction of $P$ from $P_1,\cdots,P_k$ is black-box. Moreover, the secret key $\sk$ for $P$ is $\sk=(\sk_1,\cdots,\sk_k)$ where $\sk_i$ is the secret key for $P_i$.
    \item The security proof for $P$ turns an adversary $A$ for $P$ into adversaries $A_1,\cdots,A_k$ for $P_1,\cdots,P_k$, respectively, such that if $A$ has non-negligible advantage, so does \emph{at least} one of the $P_i$. Typical proofs in cryptography utilizing hybrid proofs will usually have this form.
    \item Moreover, we require a property of the security proof, which we call a ``watermarking-compatible reduction''. This is a rather technical definition, but roughly we allow the security games for $P,P_1,\cdots,P_k$ to consist of two phases, where the adversary's winning condition is only dependent on the second phase. We require that the reduction respects the phases, in the sense that the second stages of $A_1,\cdots,A_k$ only depend on the second stage of $A$.

    \item Depending on the construction and reduction, \emph{not all} input primitives $P_1,\cdots,P_k$ have to be watermarkable to give a watermarking construction for the target primitive $P$. That is, we can compose the "plain" constructions of these input primitives with the watermarkable versions of the other input primitives to give a watermarkable construction for $P$.
    \item If all input primitives (that need to be watermarkable) have collusion-resistant watermarking security, the resulting watermarkable $P$ also satisfies collusion-resistant security. If all input primitives can be traced using only a public key, the same is true of $P$.
\end{itemize}

Many reductions in the literature are watermarking-compatible. For example, consider constructing CPA-secure symmetric encryption from weak PRFs. Here, the first stage for CPA-security consists of all queries occurring prior to the challenge query, and the second stage consists of the challenge query and all subsequent queries. The winning condition does not depend on any first stage CPA queries (or second stage). The first and second stages for the weak PRF are just two rounds of queries to the weak PRF oracle, and here again the win condition does not depend on the actually queries. Thus, this reduction is watermarking compatible.

A non-example would be many proofs involving signatures as the target primitive $P$. The issue is that, the winning condition for a signature scheme security game is that the adversary produces a ``new'' signature on a message that was not seen in a previous query. But checking this win condition requires knowing all the queries. So if there is a first phase where the attacker can make signature queries, then the win condition is not solely dependent on the second stage. If we let the winning condition depend only on the second stage, the adversary can trivially win by querying a signature in the first stage and give it to the second stage adversary as the final output.

\paragraph{Applications.}
We demonstrate that many well-known cryptographic constructions have watermarking-compatible reductions, and therefore we can compose the watermarkable constructions of the input primitives to obtain watermarkable constructions of the target primitive. Within the scope of this work, we give the following examples:
\begin{itemize}
\item Two most simple examples are: 
\begin{itemize}
    \item Watermarkable CPA-secure secret-key encryption scheme from watermarkable weak PRF
    
    \item Watermarkable CCA2-secure secret-key encryption scheme from watermarkable weak PRF and watermarkable MACs.
\end{itemize}

\item Some more advanced examples are: 
\begin{itemize}
    \item Watermarkable CCA2-secure public-key encryption from watermarkable selectively secure identity based encryption and strong one-time signatures, which can in turn be based on LWE. Here, the strong one-time signature we need is a "plain scheme" which does not need to be watermarkable.
    \item Watermarkable CCA2-secure PKE from watermarkable CPA-secure PKE and NIZK (without watermarking), which can in turn be based on LWE too.

    \item Watermarkable weak pseudorandom permutation from watermarkable weak PRF.

    \item Watermarkable CCA2-secure hybrid encryption scheme from watermarkable CCA2-secure PKE and CCA2-secure SKE (without watermarking).
\end{itemize}

\end{itemize}

Additionally, we show that all the input primitives (weak PRF, CPA-secure PKE, selective IBE, signatures, etc.) in the above examples have constructions that satisfy our security definition for watermarking. We can obtain these constructions by using or modifying the existing watermarking schemes in \cite{goyal2019watermarking}, \cite{goyal2021beyond,PKC:MaiWu22}. Therefore, the above composed schemes all have concrete constructions. 

We also briefly discuss how the functional encryption 
 construction in \cite{goldwasser2013reusable} is also watermarking compatible. We cannot possibly elaborate all the concrete examples of watermarking compatible reductions within the scope of this work, but given our generic framework of composition, one can easily verify whether a construction is watermarking compatible by looking into its security proof.

\subsection{Other Related Work}

\cite{TCC:Nis20} presents a general framework for constructing watermarking schemes for any primitive which admits a certain ``all-but-one'' reduction. The work and ours focus on different aspects of watermarking: theirs is focused on constructing watermarking schemes in the first place, whereas our composition theorem shows how to combine watermarking schemes. We also note that the framework in~\cite{TCC:Nis20} seems very much tied to the collusion-free setting, whereas ours is much more general, and can accommodate collusions if the input tracing mechanisms are collusion resistant.

\subsection{Technical Overview}
\label{sec:tech_overview}

\paragraph{Watermarking} We first briefly recall the definition of watermarking a cryptographic primitive: in a watermakrable cryptographic scheme, apart from the usual evaluation algorithms (such as key generation, encrypt, decrypt or
sign and verify), it additionally has a $\watermark$ algorithm and an $\extract$ algorithm (as well as the corresponding marking and extraction keys). The $\watermark$ algorithm allows one to embed a mark into the secret key used to evaluate the cryptographic functionality; the $\extract$ key allows one to extract a mark from an allegedly marked key. The watermarking security, usually referred to as "unremovability", states that given a marked secret key with an adversarially requested mark $\tau$, the adversary should not be able to produce a  circuit that has the same functionality as the secret key such that the above mark 
$\tau$ cannot be extracted from this adversarial circuit.

As explored in this work and some previous works (\cite{goyal2019watermarking,goyal2021beyond}), one important definitional aspect in the above security lies in what we mean by the "adversary's output circuit has the same functionality as the original secret key". We will elaborate in the following sections of the overview.

\paragraph{Watermarking-Compatible Reduction}
Before we go into how we compose watermarking schemes, we expand more on the type of reductions that allow us to build a watermarking composition upon. We call these reductions watermarking-compatible. We will then give a concrete example to help comprehension.

Consider a black-box construction of a target primitive $P$ from input primitives $P_1, \cdots, P_k$ and consider a
reduction algorithm $\cB$ from security of $P$ to the security of $P_i$: we let both the adversary $\A$ and the reduction $\cB$ be divided into two stages. In the first stage, stage-1 adversary $\A_1$, for the security game of $P$, receives some public parameters from stage-1 $\cB_1$ and makes some queries; $\cB_1$ answers these queries by making queries to the oracle provided by the challenger of the security game for $P_i$. 

Entering the second stage, as one can expect, $\A_1$ can give an arbitrary state  it to stage-2 adversary $\A_2$. However, what $\cB_1$ can give to the second stage reduction $\cB_2$ is more restricted: $\cB_1$ can give all the public parameters 
to the second stage but \emph{none of the queries} made by $\A_1$, to $\cB_2$.
$\cB_2$ continues to simulate the query stage to answer $\A$'s queries. In particular, $\cB$ records $\A_2$'s queries. 
Note that the challenge phase of the security game where $\A_2$ (and resp. $\cB$) receives its challenge from the challenger always happens in stage 2 \footnote{But as we will see in some concrete examples, $\A$ can commit to some "challenge messages" in stage 1, which will be taken into stage-2 as an "auxiliary input" and later used by the challenger to prepare the challenge.}. 

In the final output phase, 
$\cB_2$'s answer to the challenger in game $P_i$ will be dependent on (some of) the following pieces of information: challenger's challenge input to $\cB_2$, $\A_2$'s queries made during stage 2, $\cB_2$'s randomness
used to prepare the challenge input for $\A_2$ and $\A_2$'s final output.

We give some intuition on why we need $B$ to be "oblivious" about $\A$'s queries in stage 1 of the above reduction. Looking forward, 
in the actual watermarking (unremovability) security game, we can replace "answering  
 $\A$'s queries in stage 1" with a 
giving out a watermarked secret key to $\A$, where $\cB$ will have no idea what inputs $\A$ has evaluated on using the key. We therefore model the reduction as the above to capture this scenario.

\paragraph{Example: CCA2-Secure Secret-Key Encryption}
To make the above abstract description concrete, we take an example of the reduction from CCA2-security 
to weak PRF and MAC.

We briefly recall the textbook construction: the scheme's secret key consists of the PRF's secret key $\sk_1$ and MAC's secret key $\sk_2$. The encryption algorithm, on input message $m$, computes ciphertext $\ct = (r, \ct' = \prf.\eval(\sk_1, r) \oplus m, \sig)$, where $\sig \gets \mac.\sign(\sk_2, (r,\ct'))$.
The decryption algorithm will first verify the signature $\sig$ on $(r,\ct')$, if yes continue to decrypt using $\sk_1$, else abort.

\jiahui{first divide stages}

It's not hard to see how the CCA2 security game can  fit into the format of a two-stage game we have depicted above: stage 1 consists simply of letting the adversary making queries to the encryption and decryption oracles. Entering stage 2, the adversary $\A$ is allowed to make more queries and then submits the challenge plaintexts; then $\A$ receives challenge ciphertexts from the challenger. $\A$ continues to make more admissible queries and finally outputs its answer.

By viewing the security game in two stages, we will recall the security proof for CCA2 security. The proof can go through a case-by-case analysis: 
\begin{enumerate}
    \item In case 1, in stage 2 of the CCA2 security game, there is a decryption query from $\A$ of the form $\ct = (\ct_0, \sigma)$, such that it has never been the output of an encryption query. Also, the decryption oracle did not output $\bot$ on this query. 

    \item In case 2, there is no such decryption query in stage 2.
\end{enumerate}
In the first case, we can do a reduction to unforgeability of $\mac$:
\begin{itemize}

    \item $\A_\mac$ samples its own $\prf$ secret key $\sk_1$. For stage 1: For any encryption query from 
    stage-1 adversary, $\A^1$, 
 stage-1 reduction $\A^1_\mac$ will simulate the response as follows: it will compute the $\ct_0$ part of ciphertext and then query the signing oracle $\mac.\sign(\sk_2, \cdot)$ in the security game $G_\mac$.
    For any decryption query, $\A_\mac^1$ will query the verification oracle $\mac.\verify(\sk_2, \cdot)$ first and decrypt those whose response is not $\bot$.
    \item After entering stage 2, for any encryption or decryption query from $\A^2$, $\A^2_\mac$ will still simulate the response as the above. Recall that $\A^2_\mac$ will not get to see any queries made in stage 1 and $\A^2_\mac$ will record all queries from $\A_2$.
    
    \item In the challenge phase, 
    $\A^2$ sends in challenge messages $(m_0, m_1); $  $\A^2_\mac$ flips a coin $b \gets \{0,1\}$, prepares and sends the signed encryption $\ct^*$ of $m_b$ to $\A^2$.  $A^2_\mac$ continues to simulate the encryption and decryption oracles for $\A^2$, on queries $\ct \neq \ct^*$.
    
    \item In the end, $\A^2_\mac$ will look up  $\A^2$'s queries: find a decryption query  $\ct = (\ct', \sigma)$ such that it has never been the output of an encryption query and the decryption oracle did not output $\bot$ on this query. $\A^2_\mac$ output $(\ct', \sigma)$ as its forgery.
\end{itemize}

In the second case, we consider a reduction $\A_\prf$ to the  weak pseudorandomness of the PRF. In the following reduction, for he sake of simplicity, let's consider a variant of the weak PRF game: the adversary $\A_\prf$ is given adaptive query access to an oracle $F: \{0,1\}^n \to \{0,1\}^m$ that computes a PRF function; there will be a challenge phase where a challenge input $r^*$ is sampled at random;  $\A_\prf$ will receive one of $(r^*, F(r^*))$ or $(r^*, y \gets \{0,1\}^m)$ at random and try to tell which one it receives \footnote{Watermarkable weak PRF with this type of security can be constructed in \cite{goyal2019watermarking}. 
}.
\begin{itemize}
      \item In stage 1: $\A_\prf$ samples its own $\mac$ key $\sk_2$. For any encryption query from $\A^1$, $\A^1_\prf$ will simulate the response as follows: it will query the PRF oracle $F(\cdot) := \prf.\eval(\sk_1, \cdot)$ on  a fresh $r$ of its own choice. 
    After obtaining $(r,F( r))$, $\A_\prf$ signs the message $(r,F(r) \oplus m)$ using $\sk_2$ and sends the entire ciphertext to $\A$. 

    Similarly, for decryption queries, $\A_\prf$ first verifies the signature and then queries $F( \cdot)$ oracle to decrypt.
    \item After entering stage 2, for any encryption query from $\A^2$, $\A^2_\prf$ will still simulate the response as the above. Recall that $\A^2_\prf$ will not get to see any queries made in stage 1.
    
    \item In the challenge phase, $\A^2_\prf$ will receive a challenge input-output pair $(r^*, y^*)$ from the weak $\prf$ challenger: 
    $r^* \gets \{0,1\}^\ell, y^* \in \{0,1\}^\ell$ where $y^*$ is either $\prf.\eval(\sk_1, r^*)$ or uniformly random. 

    \item $\A^2$ sends in challenge messages, $(m_0, m_1); $  $\A^2_\prf$ flips a coin $b \gets \{0,1\}$ and sends $\ct = (r^*, y^* \oplus m_b, \sigma^*)$ to $\A^2$.


    \item If $\A^2$ guesses the correct $b$, then $\A^2_\prf$ output 0, for "$y^*$ is $\prf.\eval(\sk_1, r^*)$", else $\A^2_\prf$ output 1, for "$y^*$ is uniformly random".
\end{itemize}

Note that in the above reductions, both $\A_\prf$ and $\A_\mac$ do not need the stage-1 queries from $\A$ to help them win their own games.

\paragraph{Definition and Composition Framework of Watermarking}
Now we relate the above reduction to the watermarking goal:
in our watermarking (i.e. unremovability)  security game for primitive $P$, the watermarking adversary $\A$ acts like a "stage-1" adversary in the original security game for the underlying primitive P.
 Then $\A$ produces a program $C$ ----this signifies the end of "stage 1" of the security game.

 A notable feature of our extraction algorithm is the following. The adversarial program $C$ produced by $\A$ will be treated as a stage-2 adversary by the extraction algorithm: the extraction algorithm will try to extract a watermark by having only black-box access to $C$. Since $C$ is supposed to function like a stage-2 adversary that wins the security game of $P$, the extraction procedure ensures $C$ operates properly by simulating the stage-2 security game for $C$.
 
Finally, $\A$ wins if $C$ is a "good" program in winning the corresponding security game for $P$, but no valid watermarks can be extracted from $C$.

Suppose a target primitive $P$ can be built from input primitives $P_1, \cdots, P_k$ via watermarking-compatible reductions, and each watermarking scheme for $P_i$ satisfies our definition, then we can compose the watermarkable versions of $P_1, \cdots, P_k$ to give a watermarkable $P$.

The composition construction works roughly as follows: 
\begin{enumerate}
    \item A watermarkable $P$'s key generation algorithm is similar to the unwatermarkable (plain) construction of $P$: by generating all keys of $P_1, \cdots, P_k$ and concatenate them, for secret/public keys respectively. The extra step is that now we also concatenate the marking keys and extraction keys generated from the watermarkble $P_1, \cdots, P_k$.

    \item How $P$ evaluates is exactly the same as in the plain construction of $P$ form  $P_1, \cdots, P_k$: by running the $P_1, \cdots, P_k$ algorithms as black-box subroutines.

    \item To watermark $P$, we simply watermark the secret keys of the $P_1,\cdots,P_k$ respectively. The watermarked key for $P$ is then just the concatenation of the watermarked keys for $P_1,\cdots,P_k$. Since the construction is black-box and assumes the key is just a tuple of keys for $P_1,\cdots,P_k$,  the evaluation with watermarked keys is still the same, running the evaluation algorithm with black-box from subroutines $P_1,\cdots,P_k$, except now with the marked keys respectively. We obtain correctness (functionality-preserving property), following from the correctness of the plain construction and functionality preserving properties of watermarkable $P_1,\cdots,P_k$.

    \item In order to extract a mark, we turn any pirated algorithm $A$ for $P$ into pirated algorithm $A_1,\cdots,A_k$ for $P_1,\cdots,P_k$. We do this by applying the security reduction for $P$ to the pirated algorithm $A$, and let $A_1,\cdots,A_k$ be the adversaries produced by the reduction. We then interpret $A_1,\cdots,A_k$ as pirated programs for $P_1,\cdots,P_k$, and attempt to extract the mark from each of them.
    Note that any mark extraction algorithm is supposed to use only the input-output behavior of the pirate program.
    
    The guarantee of the reduction is that one of these pirated programs $A_i$ must actually be a "good" program in the security game for the corresponding primitive $P_i$, which means that mark extraction must succeed for that $A_i$. The security proof therefore shows that we are guaranteed to extract some mark \footnote{The $\extract$ algorithm assumes the (input) adversarial circuit to be possibly stateful and interactive, but does not require any input circuit to be stateful and interactive. For example, an honestly generated watermarked circuit is not stateful or interactive in our construction.}.
\end{enumerate}

Formalizing this composition takes some care, as we need to ensure that the mark extraction algorithm has the ability to actually transform $A$ into each of the $A_i$. 
There are a couple issues with getting this to work:

\begin{itemize}
    \item In a reduction, it is typically assumed that, when the reduction is attacking $P_i$, it has access to the keys for $P_j,j\neq i$. This access is often used to simulate the view of $A$ when constructing $A_i$. In our watermarking construction, we therefore need to ensure that the tracing algorithm has this information.

    \item Security proofs often work via hybrid arguments that change various terms. Importantly, the hybrid arguments get to simulate the \emph{entire} game. For us, however, we only get to simulate the second stage of the game; the first stage has actually already been fixed by the time the adversary produces its pirated program $A$. So we have to ensure that the tracing algorithm can carry out the reduction to create a program $A_i$, even though it cannot simulate the entire security experiment from the very beginning.
\end{itemize}

Our definition and proof take care of these issues and more, to give a composition theorem that applies any time our criteria are met.

Note that the restriction to games where the winning condition is independent of the first stage seems necessary. This is because the adversary actually gets in its possession a watermarked key, which lets it compute the various cryptographic functions of the key. For example, in the case of digital signatures, the adversary can actually compute signatures for itself. In the case of encryption, the adversary can encrypt and decrypt messages. Moreover, there is no way to track what the adversary is doing, since it is all happening internally to the adversary rather than being explicitly queried (suppose for example, that the adversary signs a few messages, but has its entire state encrypted under a fully homomorphic encryption (FHE) scheme so that the signature and message being signed are all ``under the hood'' of the FHE scheme). We model this ability by having a first stage of the security experiment where the adversary can freely query the functionalities, but then do not have the win condition depend on those queries.



\paragraph{Watermarking Composition Example: CCA2-Secure SKE}

Now we demonstrate how the above abstract watermarking composition works by using the concrete example of a CCA2-secure SKE built from weak PRF and MAC again.

Consider having input constructions as watermarkable weak PRF and watermarkable MAC which satisfies our syntax and unremovability security requirements,
a watermarking CCA-secure scheme consists of several algorithms: $\wmsetup, \Enc, \Dec, \watermark, \extract$.
The $\wmsetup$ generates the secret key, marking key and extraction key by concatenating the corresponding keys from the PRF and MAC.
The marking algorithm is on input $\sk = (\sk_\prf, \sk_\mac)$ and watermarking message $\tau$, 
output $\tildesk = (\tildesk_\prf \gets \prf.\watermark(\sk_\prf, \tau),  \tildesk_\mac \gets \prf.\watermark(\sk_\mac, \tau)).$

Finally, the $\extract$ algorithm is slightly more complex:
on input extraction key $\xk = (\xk_\mac, \xk_\prf)$ and a circuit $C$, do \emph{both} of the following:
\begin{enumerate}

\item Create a circuit $C_\prf$: 
\begin{itemize}
    \item  Simulate the stage-2 CCA2-security for $C$ (as described in our previous paragraph on watermarking-compatible reduction from CCA2 security to PRF). Note that just as a real reduction $C_\prf$ is only hardcoded with the MAC extraction key $\xk_\mac$ (unless the underlying watermarkable PRF scheme has a public extraction key), which enables it to simulate the signing and verification of MAC; to compute the part that involves the PRF evaluation oracle, $C_\prf$ need to make external queries to some "challenger" to get answers.

    How $C_\prf$ uses $C$ to get its final output is exactly the same as in the watermarking-compatible reduction from CC SKE to weak PRF we described.          
\end{itemize}
After $C_\prf$ is created, the extraction algorithm uses the watermarkable $\prf$'s extraction algorithm to extract a mark from $C_\prf$: $\tau/\bot \gets \prf.\extract(\xk_\prf, C_\prf)$. Note that the "external queries" made by the circuit will be answered now by the $\prf.\extract(\xk_\prf, \cdot)$ algorithm, because it treats $C_\prf$ as a stage-2 adversary in the weak pseudorandomness security game and uses the extraction key $\xk_\prf$ to simulate the game.

\item Create a circuit $C_\mac$: 
\begin{itemize}
    \item  Simulate the stage-2 CCA2-security for $C$ (as described in our previous paragraph on watermarking-compatible reduction from CCA2 security to MAC). Note that just as a real reduction $C_\mac$ is only hard-coded with the PRF extraction key $\xk_\prf$, which enables it to simulate the oracles of PRF; to compute the part that involves the MAC signing/verifying oracles, $C_\mac$ need to make external queries to some "challenger" to get answers.

    How $C_\mac$ uses $C$ to get its final output is exactly the same as in the watermarking-compatible reduction from CCA SKE to MAC we described.          
\end{itemize}
After $C_\mac$ is created, the  extraction algorithm will use the watermarkable $\mac$'s extraction algorithm to extract a mark from $C_\mac$: $\tau/\bot \gets \mac.\extract(\xk_\mac, C_\prf)$. Note that the "external queries" made by the circuit will be answered by the algorithm $\mac.\extract(\xk_\mac, \cdot)$ algorithm, because it treats $C_\mac$ as a stage-2 adversary in the MAC security game and uses the extraction key $\xk_\mac$ to simulate the game.
\end{enumerate}

We say that the adversary $\A$ wins the watermarkable CCA2-secure SKE's unremovability game if no (previously queried) watermark can be extracted from either of the above circuits $C_\prf, C_\mac$ created during the extraction algorithm and $C$ is a "good" program in winning the CCA2 security game. The intuition for the security proof is relatively straightforward: if $C$ is a "good" program for winning CCA2-security game, then at least one of $C_\prf$ and $C_\mac$ must be a "good" program for winning its corresponding security game pf weak PRF or MAC, by the property of the reduction. 

For instance, imagine a reduction algorithm $\cB_\prf$ to the unremovability of watermarkable $\prf$: $\cB_\prf$ will simulate the marking query stage for $\A$   by making marking queries to the marking oracle of the weak PRF challenger, along with the $\mac$ key it sampled on its own.
Note that the evaluation queries to the $\prf$ evaluation oracle can now be replaced by having access to a marked $\prf$ key.
After $\A$ outputs circuit $C$, $\cB_\prf$ simply creates the same circuit $C_\prf'$ as the  $C_\prf$ created in the above extraction algorithm. If the  $C_\prf$ circuit created in the extraction is "good" at winning the weak pseudorandomness game, then so is  $C_\prf'$. Since we cannot extract a watermark from $C_\prf$, neither can we extract from $C_\prf'$ because they have the same input-output behavior when using the same $C$ as its subroutine black-boxly. The same argument applies to the reduction to unremovability of watermarkable MAC.

\paragraph{More Advanced Watermarking Constructions with "Unwatermarkable" Building Blocks}


The watermarking reduction in the above example of CCA2-secure SKE is relatively straightforward to acquire. 
Some similar construction examples include a watermarkable weak PRP from watermarkable weak PRF (\ifllncs see full version for details \else see \Cref{sec:wm_wprp_from_wprf}\fi).

However,
there exist various constructions that look markedly distinct from the above example of CCA2-secue SKE and may be much more involved.

More importantly, some constructions have building blocks which are cryptographic primitives that don't make sense to watermark, for instance, a non-interactive zero knowledge proof scheme.
Nevertheless, we show that a large class of them can be watermarking-compatible and in many scenarios, we simply do \emph{not} need to watermark all the building blocks.

We briefly discuss two examples on a high level:
\begin{itemize}
    \item \textbf{CCA-secure PKE from IBE and One-Time Signatures} The first example is the CCA2-secure PKE from selectively secure IBE and one-time signatures by \cite{boneh2007chosen}.

    In the original scheme,  the encryption samples a fresh signature signing/verification key pair $(\sk, \vk)$, compute an IBE ciphertext by using the $\vk$ as the identity, then signs this ciphertext using signing key $\sk$, and finally outputs the IBE ciphertext, the signature and $\vk$.

    To decrypt, one first verifies the signature under $\vk$, if valid then derives an identity-embedded decryption key from the IBE master secret key using $\vk$ as the identity and decrypt the ciphertext.

    Note that in this scheme, the keys for one-time signatures are only sampled "online" upon every encryption algorithm and how to watermark such keys is not well-defined. However, as it turns out---we don't have to watermark the signing key in our composed construction. A watermarkable CCA2-secure PKE using the above construction only has to watermark the IBE secret key. The security proof would show that a successful unremovability adversary can help us either break the unremovability security of the watermark IBE scheme or break the strong one-time unforgeability security of the signature scheme. In other words, leveraging the "plain" security of the one-time signature scheme suffices for the composed watermarking scheme.

    In more detail, we can show the following: the pirated algorithm $C$ produced by an adversary is a "good" program for breaking the CCA-security and meanwhile no watermark can be extracted from $C$. However, it does not help us win the selective IBE CPA-security game. Then, simply following the hybrid argument in the security analysis of the CCA-secure PKE itself, we observe that $C$ must be breaking the security of the one-time signature scheme. Thereby, we can use $C$ black-boxly to create a reduction for the one-time signature scheme.

\item  \textbf{CCA-secure PKE from CPA-secure PKE and NIZK} A second example is the \cite{naor1990public} CCA-secure PKE scheme built from CPA-secure PKE and a non-interactive zero-knowledge proof scheme (NIZK). 
The encryption algorithm encrypts a message twice under two different public keys and uses a NIZK proof to prove that they encrypt the same message.

The watermarkable version of the above construction does not watermark the NIZK scheme, but only the two decryption keys of the PKE scheme. How the NIZK scheme is used in the watermarking scheme is less obvious to see: intuitively, the extraction algorithm 
will try to create two circuits that break the underlying CPA security game of the PKE, with the corresponding keys. If we look carefully into the proof for the \cite{naor1990public} scheme, to make this reduction go through, one has to first work in a hybrid game where the real NIZK proof in the challenge ciphertext is replaced with a simulated proof. Our extraction algorithm thereby uses such a simulated proof when interacting with the input circuit $C$. The security of NIZK helps us say that if $C$ is "good" in the original CCA-security game, so will $C$ be good in the game simulated by the extraction algorithm.
\end{itemize}
To summarize, we only have to watermark the secret keys of input primitives $P_i$ which have their keys generated in the main Key Generation algorithm of the target primitive $P$ and have their secret keys used during the evaluation algorithm of $P$.
If an input primitive has its keys sampled "freshly" during every invocation of $P$'s evaluation algorithms or sampled in key generation, but not used in $P$'s evaluation algorithm (e.g. only used in the security proof instead), then we don't need a watermarkable version of $P_i$ to build a watermarkable $P$. Leveraging $P_i$'s original security property suffices.

More advanced examples include the functional encryption from attribute-based encryption, FHE and garbled circuits in \cite{goldwasser2013reusable}, where we only have to watermark the ABE scheme; a hybrid CCA-secure encryption  scheme from CCA-secure PKE and CCA secure SKE, where we only have to watermark the PKE scheme. We refer the readers to the construction and discussions of these examples in \ifllncs the full version. \else \Cref{sec:naor_yung_wm}, \ref{sec:wm_cca2_pke_ibe}, \ref{sec:wm_wprp_from_wprf},\ref{sec:wm_ccahybrid_enc}, \ref{sec:wm_fe_from_abe}. \fi

\ifllncs
\else
\section{Organizations}

\fi

\section{Definitions: General Cryptographic Primitive and Watermarking-Compatible Constructions}

\subsection{General Cryptographic Primitive}
\label{sec:general_crypto_primitive}

In this section, we present syntax and definitions for a general cryptographic primitive.
The notations and definitions formalized in this section will assist the demonstration of our generic watermarking framework.

\paragraph{General Cryptographic Primitive Syntax}
A cryptographic primitive $P = (\KeyGen, \SecAlg,\PubAlg)$ consists of the following algorithms:
\begin{itemize}
    \item $\KeyGen(1^\lambda) \to (\sk, \pk)$: is a (randomized) algorithm that takes a security parameter $\lambda$ and interacts with an adversary $\A$: $(\sk, \pk) \leftarrow (\A \Leftrightarrow \KeyGen(1^\lambda
))$ where $\sk$ is some secret information unknown to $\A$, and $\A$ can get some public information $\pk$ from
the interaction. 
    
    \item $\SecAlg(\sk, \pk, x \in \sX) \to y \in \sY_s:$ a secret-evaluation algorithm that takes in the secret information $\sk$, public information $\pk$ and some input $x$ from input space $\sX$, and output a value $y \in \sY_s$. 

    \item $\PubAlg(\pk, x \in \sX) \to y \in \sY_p:$ 
     a public-evaluation algorithm that takes in the public information $\pk$ and some input $x$ from input space $\sX$, and output a value $y \in \sY_p$.

\end{itemize}
More generally, the $\KeyGen$ algorithm can generate several secret keys $\sk_1, \cdots, \sk_\ell$ for some polynomial $\ell$. There will be $\ell$
different secret algorithms $\{\SecAlg_i(\sk_i, \pk, x \in \sX_{s,i})\}_{i \in [\ell]}$ and (some constant) $m$ different public algorithms $\{\PubAlg_j( \pk, x \in \sX_{p,j})\}_{j \in [m]}$. For example, an identity-based encryption scheme can have two decryption algorithms, one using the master secret key, the other with a secret key embedded with an identity.

However, without loss of generality, we will make two simplifications: 
\begin{itemize}
    \item All algorithms have their input space padded to be the same length as $\sX$; 
    \item We view all secret algorithms $\{\SecAlg_i(\sk_i, \pk, x \in \sX_{s,i})\}_{i \in [\ell]}$ as one algorithm $\SecAlg(\sk,\pk, \cdot)$ that will take in an index $i \in [\ell]$ to decide which mode to use, similarly for public algorithms. 
    But we will assume such an index specification to be implicit and omit the use of indices in the algorithm to avoid an overflow of letters.
\end{itemize}

Occasionally, we denote all the algorithms $(\{\SecAlg_i(\sk_i, \pk, x \in \sX_{s,i})\}_{i \in [\ell]}, \ifllncs \\ \else\fi \{\PubAlg_j( \pk, x \in \sX_{p,j})\}_{j \in [m]})$ as a combined functionality hardcoded with the corresponding keys $\eval(\pk, \sk)$. We call it $\eval$ for short. 

\paragraph{Correctness for Predicate $F_R$} 

Before going into the correctness property, we first define a notion important to our generic definition:
\begin{definition}
[Predicate]
\label{def:predicate}
    A predicate $F(C, x, z_1, \cdots , z_k, r)$ is a binary outcome function that runs a program $C$ on a some input $x$ to get output $y$, and outputs 0/1 depending on whether $(x, y, z_1, \cdots , z_k, r) \in R$
for some binary relation defined by $R$. The randomness of input $x$, program $C$ 
 both depend on randomness $r$.
. $z_1, · · · , z_k$ are auxiliary inputs that specify the
relation.
\end{definition}

A correctness property of a primitive $P$ with respect to predicate $F_R$ says that 
\begin{definition}[Correctness for Predicate $F_R$]
\label{def:correctness_p}
    $P$ satisfies correctness if there exists some function $\epsilon = \epsilon(\lambda) \in [0,1]$ so that for all $\lambda \in \N, x \in \sX$ \footnote{For a cryptographic primitive $P$, there can be many different correct properties, each defined with respect to a different predicate.}:
$$ \Pr_{r \gets D_r, (\sk,\pk)\gets \KeyGen(1^\lambda)}[F_R(\eval, \pk, \sk, x,r) = 1] \geq 1-\epsilon.$$
The randomness used in checking the predicate is sampled from a distribution $D_r$. We can simply take $D_r$ to be the uniform distribution, which can be mapped to any distribution we need when computing the predicate.

All $\epsilon$ within the scope of this work is negligible in $\lambda$.
\end{definition}


\begin{remark}
    To further explain the above correctness property, we consider the following (abstract) example.
    Given $\eval  = (\{\SecAlg_i(\sk_i, \pk, x \in \sX)\}_{i \in [\ell]}, \{\PubAlg_j( \pk, x \in \sX \}_{j \in [m]})$ , $F_R$ samples input $x \in \sX$ using the first part of $r$; then it runs algorithm (supposing randomized, using part 2 of string $r$) $\SecAlg_1(\sk_1,\pk,x)$ to give some outcome $y_1$; then runs (supposing deterministic) $\PubAlg_1(\pk_1, y_1)$ to give outcome $y_2$; check if $(x, y_1, y_2) \in R$; output 1 if yes, 0 otherwise. 

    A concrete example is a public key encryption scheme: the predicate is encrypting a message $x$ using $r$ and then decrypting it  to check if one can recover the original message. 
\end{remark}


\paragraph{Game-based Security}
A security property of the cryptographic primitive $P$ is described by a  interactive procedure $G_P$ between a challenger and adversary $\A$. 

$G_P$ will involve $\KeyGen, \SecAlg, \PubAlg$ as its subroutines. $G_P$ outputs a bit 1 if the game is not aborted and a certain condition has been met at the end of the game; else it outputs 0.

The security of a primitive $P$ with respect to $G_P$ says: For all $\lambda \in \N$, there exists some function $\eta = \eta(\lambda)$ such that for all \emph{admissible} $\A$ , there exists a function 
$\negl(\lambda)$:
$$ \Pr[G_P(1^\lambda, \A) = 1  ] \leq \eta + \negl(\lambda)$$

where $\eta$ is the trivial probability for any admissible $\A$ to make $G_P$ output 1 and the probability is over the randomness used in $G_P$. 

\paragraph{2-Stage Game-based Security}

To be compatible with the context of watermarking, we will  view the game $G_P$ as a 2-stage security game. 2-stage game-based security is a central notion that connects a prmitive's "plain security" to its watermarking security.

The $\KeyGen$ procedure and a first part of the interactions between the challenger and $\A$ are taken out of the original game and executed as a first stage $G_P^1$. 
Then, $G_P^2$ denotes the rest of the game and will take in parameters $(\sk,\pk)$ generated in stage 1 as well as some auxiliary input.

\begin{definition}[2-Stage Game-based Security]
\label{def:2-stage_game}
    A security property of the cryptographic primitive $P$ is described by a stage-1 (possibly interactive) key generation procedure a challenger and adversary $\A_1$ followed by a fixed-parameter game $G_P^2$ between a challenger and adversary $\A_2$.

We denote $G_P^1(1^\lambda, \A_1 )$ as the first stage adversary $\A_1$ interacting a challenger which runs the $\KeyGen(1^\lambda), \\ \SecAlg, \PubAlg$ algorithm; together they output a key pair $(\sk,\pk)$ and some auxiliary parameters $\aux$ which will be later used in the game $G_P$. $\A_1$ only gets $\pk, \aux$ but may make arbitrary polynomial number of admissible queries to oracles provided by the challenger during the interaction.

The stage-2 game challenger $G_P^2$ is parametrized by inputs $\sk,\pk,\aux$ generated during stage 1 and will involve $\SecAlg, \PubAlg$ as its subroutines. 
Stage-2 adversary $\A_2$ gets an arbitrary polynomial size state $\st$ from stage 1 $\A_1$.
$G_P^2$ outputs a bit 1 if the game is not aborted and a certain condition has been met at the end of the game; else it outputs 0.


The security of a primitive $P$ with respect to $G_P$ says: there exists some function $\eta = \eta(\lambda)$ such that for all \emph{admissible} $\A = (\A_1, \A_2)$ and any non-negligible $\gamma = \gamma(\lambda)$, there exists a function 
$\negl(\lambda)$  for all $\lambda \in \N$:
$$ \Pr[\A_2(\st) \text{ is $\gamma$-good in } G_P^2(\sk,\pk, \aux, \cdot) : \{(\sk,\pk), \aux, \st\} \gets G_P^1(\A_1, 1^\lambda) ] \leq \negl(\lambda)$$
where $\A_2$ is said to be $\gamma$-good if:
$$\Pr[G_P^2(\sk,\pk, \aux, \A_2) = 1 ] \geq \eta + \gamma$$
where $\eta$ is the trivial probability for any admissible $\A$ to make $G_P$ output 1 and the probability is over the randomness used in $\KeyGen$ and by $G_P$. 
\end{definition}

\begin{remark}
While some readers may find the introduction of parameter $\gamma$ in the above 2-stage game confounding, we would like to
    make a note that the above two definitions are essentially equivalent. 
    
    An alternative way of stating the 2-stage game security would be:
    there exists some function $\eta = \eta(\lambda)$ such that for all \emph{admissible} $\A = (\A_1, \A_2)$, there exists two negligible  functions 
$\negl_1(\lambda), \negl_2(\lambda)$ such that  for all $\lambda \in \N$:
$$ \Pr\left[\Pr[G_P^2(\sk,\pk, \aux, \A_2(\st)) = 1 ] \leq \eta + \negl_1(\lambda): \{(\sk,\pk), \aux, \st\} \gets G_P^1(\A_1, 1^\lambda) \right] \ifllncs \\ \else\fi \geq 1- \negl_2(\lambda)$$
\end{remark}

\begin{remark}[Division into a 2-stage Game]
    How we divide a security game into two stages depends on the application and suppose that the game $G_P$ has different stages by its definition, the way we divide stage-1 and 2 is usually not the same as in the original definition of $G_P$. We will see examples later.

    In most settings, stage-1 game $G_P^1$ only involves running the key generation $\KeyGen$ and letting $\A_1$ make some queries. In a few special settings, $\A_1$ needs to commit to some challenge messages which will be put into $\aux$ and be part of the input to stage 2.
    Examples include the challenge attribute/identity in an attribute/identity-based encryption, because the queries made in the first stage need to go through the "admissibility" check that depends on $\A$'s choice of challenge messages.
\end{remark}

\begin{remark}
    In the rest of this work, it is usually clear from the context which stage of $G_P$ we are refering to because $G_P^2$ will take in paramters $\sk,\pk,\aux$ while the entire game $G_P$ takes in only security parameter $1^\lambda$. Plus $G_P^1$ is seldomly used explicitly in our language. Occasionally we will omit the superscript and slightly abuse the notation to denote $G_P^2$ as $G_P(\sk,\pk, \aux, \cdot)$
\end{remark}

For our convenience in later notations, we also give the following simple definition:
\begin{definition}[Stage-2 Game View]
\label{def:game_view}
    The stage-2 $G_P^2$ can also output a view $\view(\sk,\pk, \aux, \A_2)$ that is the transcript of interaction of $\A_2$ with the challenger in $G_P^2$, including the final output. 
\end{definition}


\begin{remark}
We make a few notes on the scope of security games we consider in this work:
\begin{enumerate}
    \item  We mainly focus on game-based (and falsifiable) security notions within the scope of this work. We need the challenger in the security to be efficient for the watermarking construction to be efficient.


    \item All the admissible adversary need to be PPT. We do not consider security models other than polynomially bounded in time/circuit size (such as bounded storage).

\end{enumerate}

\end{remark}

\subsection{Watermarking-Compatible Construction of Cryptographic Primitive }

\label{sec:watermarking_compatible_construction}

In this section, we characterize what type of black-box cryptographic constructions are watermarking compatible.
As we will see in later sections,
watermarking compatible constructions allow us to give a watermarking scheme for the target primitive constructed, black-boxly from the watermarking schemes of the building 
blocks.

\jiahui{add more explanation at the beginning of this section to help the readers undertsnad these two different reductions}

\paragraph{Outline and Intuition}
Let use denote $P$ as the target, or outcome primitive built in a construction. Let $P_i, i \in [k]$ denote each underlying primitive we use to build $P$. 

Intuitively, one expects to watermark all underlying building blocks to ensure watermarking security of the target primitive $P$.
As discussed in our technical overview section "More advanced watermarking constructions with unwatermarked building blocks", many constructions possess building blocks of primitives that do \emph{not} need to be watermarkable to ensure the watermarkability of the final target primitive.


Here, we discuss the matter in more details:
we need the partitioning for primitives in $\{P_i\}_{i \in [k]}$ into $S$ and $\bar S$. These 
primitives in these two sets will play different roles when we construct the watermarking scheme for the target primitive $P$: the primitives in set $S$ need to be watermarkable and those in set $\bar S$ do not need to be watermarkable. 
The reasoning is that the primitives in set $S$ will have their secret keys generated during the Key Generation of the target primitive $P$, but primitives in set $\bar S$ will only have their secret keys generated freshly during every run of $\SecAlg$ or $\PubAlg$ algorithm of the target primitive $P$.  

A simple example is CCA2-secure PKE scheme based on IBE and one-time signatures, which we discussed in the technical overview. During the encryption algorithm, one generates fresh one-time signature keys and the message is encrypted with the one-time signature's verification key as the identity.
Therefore, the one-time signature in the above construction belongs to set $\bar S$ because its keys are only generated during the encryption algorithm.

In the watermarking construction for $P$,  the $\KeyGen, \PubAlg, \SecAlg$ algorithms will follow from the plain construction for $P$ from primitives in set $S$ and $\bar S$. Therefore, we cannot watermark the keys for the primitives in set $\bar S$ because their keys are not generated when we give out the watermarked key (which only contains keys in set $S$) to a user. 
We therefore distinguish these two sets in our presentation for watermarking-compatible construction in this section and watermarkable implementation in \Cref{sec:watermark_composition}. 

Moreover, looking forward: we will observe that we indeed do \emph{not} need watermarking security (i.e. unremovability) for the primitives in set $\bar S$, to achieve watermarking security for the target primitive P.  We only need to rely on their  “plain security” (e.g. unforgeability for a signature scheme, IND-CPA-security for an encryption scheme). 



\paragraph{Notations}
Now we give formalization for the above outline.

Suppose a cryptographic primitive $P$ is constructed black-boxly from primitives $P_1, P_2, \cdots, P_k$ in the following way, where $P_i, P_j, i \neq j$ are allowed to be the same primitive. 

\begin{itemize}
    \item Let $\cS$ be some fixed subset of $[k]$ defined in the construction. $\cS$ specifies two ways of using primitive $P_i$. The major difference is:
for primitives $P_i, i \in \cS$, the algorithm $\KeyGen(1^\lambda)$ will compute $P_i.\KeyGen(1^\lambda)$ and the secret keys $P_i.\sk$ generated will be used in the secret evaluation algorithm $\wsecEval$.

Without loss of generality, we let the first $|\cS|$ number of $P_i$'s be those corresponding to the set $\cS$.

\item For primitives $P_i, i \notin \cS$, the algorithm $\KeyGen(1^\lambda)$ will not compute $P_i.\KeyGen(1^\lambda)$. They  will either (1) have their keys generated freshly  upon every run of $\SecAlg$ or $\PubAlg$ (2) will have a reference string (which can be viewed as a public key) generated during $\KeyGen(1^\lambda)$, but have no secret keys or their secret keys will not be used in the $\SecAlg$ algorithm of the target primitive $P$.

\item For the sake of formality, we make a further division of the set $\bar S$ into $T_S, T_P, T_K$. They perform slightly different functionalities in the construction.

\begin{itemize}
    \item Let $\cT_S \in [k]$ denote the set of indices $i$ where $P_i$'s keys will be generated during the secret evaluation algorithm $\SecAlg$;

    \item Let $\cT_P \in [k]$ denote the set of indices $i$ where $P_i$'s keys will be generated during the public evaluaton algorithm $\PubAlg$;

    \item  Let $\cT_K$ denote the set of indices $i$ where $P_i$ have no secret keys/secret keys are not used in $\SecAlg$ and will have a public reference string (which can be viewed as a public key) generated during $\KeyGen(1^\lambda)$.
    
    \footnote{As we will see, primitives in $\cT_k$ will have their secret keys (called trapdoors $\td$ here) used only in the security proofs. 
    
    Looking forward: the sets $\cT_S, \cT_P$ will play the same role in the watermarking reduction, and the set $\cT_K$ will incur a slightly different argument but will essentially play the same role as the other primitives in set $\bar{S}$. That is, only their plain security is required to achieve the watermarking security of the target primitive $P$.}.
\end{itemize}

\jiahui{add more explanation on this set}

\end{itemize}

\paragraph{Watermarking-Compatible Construction Syntax}

\begin{description}
      \item $\KeyGen(1^\lambda) \to (\sk, \pk)$: 
      \begin{enumerate}
        \item compute $(\sk_i, \pk_i) \gets P_i.\KeyGen(1^\lambda)$ for all $i \in \cS$; compute $(\pk_i, \td_i) \gets P_i.\KeyGen(1^\lambda)$ for all $i \in \cT_K$. 
        \item output $\sk = (\{\sk_{i}\}_{i \in \cS}); \pk = (\{\pk_{i}\}_{i \in \cS \cup \cT_K})$.
    \end{enumerate}
       \item $\SecAlg(\sk, \pk, x \in \sX) \to y \in \sY_s:$ is an algorithm that 
       \begin{enumerate}
         \item   uses $P_i.\SecAlg(\sk_i,\pk_i, \cdot), i \in \cS$ as subroutines.
         \item uses $P_i.\PubAlg(\pk_i, \cdot), i \in \cS \cup \cT_K$ as subroutines.
           \item computes $(\sk_j, \pk_j) \gets P_j.\KeyGen(1^\lambda)$ for some $j \in \cT_S$ (can include these $\pk_j$ generated as part of the output).
         
       \end{enumerate}

    \item $\PubAlg(\pk, x \in \sX) \to y \in \sY_p:$ 
     is an algorithm that  \begin{enumerate}
         \item uses $P_i.\PubAlg(\pk_i, \cdot), i \in \cS \cup \cT_K$ as subroutines.
           \item computes $(\sk_j, \pk_j) \gets P_j.\KeyGen(1^\lambda)$ for all $j \in \cT_P$ (may include these $\{\pk_j\}_j$ generated as part of the output).       
       \end{enumerate}

\end{description}

We say the construction is watermarking compatible if the above construction of $P$ satisfies:
\begin{enumerate}

    \item \textbf{Correctness of Construction  }: the above construction of $P$ satisfies a correctness property defined by predicate $F_R$.
    Moreover, the proof of this correctness property follows from correctness properties of $P_1, \cdots, P_k$ for predicates $F_{R_i}$ respectively.

    \item \textbf{ Security of Construction}: The above construction satisfies security defined with game $G_P$. The security
    can be proved from the security properties of $P_1, \cdots, P_k$ with respect to some game $G_{P_1}, \cdots, G_{P_k}$

    \item \textbf{ Watermarking Compatible Reduction} The above security proof is of the following format, which we call \emph{Watermarking-Compatible Reduction}, shown below.
\end{enumerate}


\jiahui{Use the "leakage " in the stage 1?}

\jiahui{Give more intuition on the following reductions}

Among the above 3 properties,
the first two are natural properties that come with any black-box cryptographic constructions with provable security and correctness. 
We will focus on discussing property 3.

\paragraph{Watermarking Compatible Reductions}

On a high-level, a watermarking-compatible reduction is essentially a natural security reduction, except with the following feature: we view both the the security game $G_P$ for the target security primitive $P$ and the security game $G_{P_i}$ for the input primitive $P_i$ as two-stage games defined in \Cref{def:2-stage_game}.
The supposed adversary  $\A$ for $G_P$ will just be any usual PPT adversary. But we restrict the reduction in stage 2 to be "oblivious" about the queries made by $\A$ in stage 1: that is, it cannot pass on any queries made by $\A$ to the stage 2 reduction but we should nevertheless make the reduction go through successfully.

We divide our watermarking compatible reductions into two types. 
Type 1 reduction captures the reduction from breaking the security of $P$
to breaking the security of a primitive $P_i, i \in \cS$, i.e. the primitives that have their secret keys generated in the $\KeyGen$ algorithm and used in the $\SecAlg$ algorithm of $P$.

Type 2 reduction accordingly captures the reduction from breaking the security of $P$
to breaking the security of a primitive $P_i, i \in \bar{\cS}$.

\jiahui{added the difference discussion: both reductions need to be oblivious about the queries regarding the watermarked set. }
The main difference is in 
how they are used in proving the unremovability security in the composed watermarking scheme. In the actual watermarking unremovability, type 1 reduction is supposed to be oblivious about the queries made by the adversary and will eventually lead to breaking the unremovability of the underlying primitive; type 2 reduction operates similarly but will eventually lead to breaking the plain security of the underlying primitive.

\paragraph{Watermarking-Compatible Reduction Type 1}

First we consider reductions for primitives $P_i$ for $i \in \cS$.

    
\begin{enumerate}

    \item Consider a reduction to $P_i$ for the $j$-th $P_i \in \cS$; let $\A_i = (\A_i^1, \A_i^2)$ be the two-stage adversary for the 2-stage game of primitive $P_i$ defined in \Cref{def:2-stage_game}.

\jiahui{leakage based description?}
    \item $\A_i$ receives public key $\pk_i$ 

    \item $\A_i$ prepares $(\sk_j, \pk_j) \gets \KeyGen_j(1^\lambda) $ for all $j \neq i, j \in \cS$ and $\A_i$ sends all $(\{\pk_j\}_{j \in \cS})$
to $\A_1$.



    \item  $\A_i$ simulates the security game $G_P$ for $P$ with $\A$, while being the adversary in game $G_{P_i}$.

    \begin{itemize}
    \item    In stage 1, $\A_i^1$ provides the oracles needed for game $G_P^1$ and answer $\A^1$'s queries 
    as follows:
    \begin{itemize}
         \item  For any (admissible) queries in the game $G_P^1$, if $\sk_i$ is required to compute the answer for the query, $\A_i$ can use the oracles provided  in game $G_{P_i}^1$. 
       To answer the entire query, $\cA_i$ may finish the rest of the computation using 
       $\{\sk_j\}_{j\neq i}$.

        \item If only $\{\sk_j\}_{j\neq i}$ are needed, $\A_i$ answers the queries by itself.
    \end{itemize}


    
        \item During the interaction of stage-1 game, $\A^1$ and $\A_i^1$ generates an auxiliary information $\aux$ which is public to both of them.  
        
        \item Upon entering stage 2, $\A^1$ generates an arbitrarily polynomial-size state $\st$ and gives it to $\A^2$. 

     \item  $\A_i$ also enters its second stage $\A_i^2$. $\A_i^2$ also receives all of  $(\{\sk_j\}_{j \neq i}, \{\pk\}_{i \in \cS}, \aux)$ from 
     But $\A_i^2$ does \emph{not} obtain any 
     of  $\A^1$'s queries in stage 1. 

        \item $\A_i^2$  simulates the stage-2 game $G_P^2$ for $\A^2$,
      using the above strategy as $A_i^1$ uses. Note that the oracle operations done by $\A_i^2$  and the conditions on admissible queries \emph{cannot} be dependent on of $\A^1$'s queries in stage 1, because $\A_i^2$ cannot see these queries.

        \item $\A_i^2$ also records all of $\cA^2$'s queries. 

           \item In the challenge phase of $G_{P_i}^2$ 
           $\A_i^2$ receives a challenge input $\inp_i$ from the challenger.

         \item In $G_P$'s challenge phase, $\A_i^2$ samples some randomness $r$ and prepares a challenge input $\inp$ for $\A$ using $r$ and $\inp_i$. $\A_i^2$ sends $\inp_i$ to $\A_2$. \footnote{Note that $G_P$'s challenge phase may be before, after or concurrent with $G_{P_i}$'s challenge phase depending on the reduction. Similarly, the challenge $\A^2$ receives may be dependent on $\inp$.}

         \item $\A_i^2$ continues to simulate the oracles needed in  $G_P^2$ game for $\A^2$ if there are further query stages.  
    \end{itemize}

    \item 
    $\A_i$ computes an efficiently computable function $f_i(\out,r, \cQ, \inp_i)$ on input of $\A^2$'s final answer $\out$, the randomness used to prepare $\A^2$'s challenge input, $\A^2$'s queries $\cQ$ and $A_i$'s challenge input $\inp_i$, and gives it to the challenger of primitive $P_i$.

\end{enumerate}

\begin{remark}
    In the actual watermarking unremovability reduction, $\A_i^1$ will receive a watermarked secret key $\tilde{\sk_i}$ and simulate the queries required using the oracles in game $G_{P_i}^1$ with $\tilde{\sk_i}$ instead.

    More generically, we can also model the queries made by $\A^1$ in stage 1, related to the keys in the set $\cS$, $\{\sk_i\}_{i \in \cS}$ as some arbitrary polynomial size (admissible) leakage on these keys. In the plain security reduction, such leakage correspond to $\A^1$'s adaptive queries made to the oracles provided. In the watermarking unremovability game, it corresponds to admissible marking queries where we give out marked secret keys.
\end{remark}

\paragraph{Watermarking-Compatible Reduction Type 2}

Watermarking-compatible reduction type 2 operates essentially the same as the above reduction for $P_i$  but for some $i \notin \cS$.

The main difference is that in the actual watermarking unremovability security game, in order to simulate oracle queries for $G_{P}^1$, the stage 1 reduction $\A_i^1$ will actually need oracles in the game $G_{P_i}^1$. Naturally, this is because the primitives in set $\bar{\cS}$ are not watermarked in the watermarking composition and the corresponding reduction will not get a watermarked key from the challenger, but only the oracles provided in the plain security game of $G_{P_i}$. 
Even though this also means that the reduction $A_i^1$
gets to observe $\A^1$'s queries (using the oracles of the game  $G_{P_i}$) in the actual watermarking security game, and can make use of them in stage 2,  $A_i^1$ does not need to make any of such queries to the challenger since the keys of any primitive in set $\bar{\cS}$ will only be sampled online by $A_i^1$ itself. Meanwhile, $\A_i^2$ should still not inherit any queries related to the "watermarked" primitives in set $\cS$ since in the real watermarking unremovability game, $\A_i^2$ is not supposed to see them.


\begin{enumerate}
    
    \item Consider doing reduction to $P_i$ for some $P_i, i \notin \cS$; let $\A_i = (\A_i^1, \A_i^2)$ be the two-stage adversary for primitive $P_i$, where the stages are defined by \Cref{def:2-stage_game}.

    \item $\A_i$ receives public key $\pk_i$ from the challenger.

    \item $\A_i$ prepares $(\sk_\ell, \pk_\ell) \gets P_\ell.\KeyGen(1^\lambda) $ for all $\ell \in \cS$ 
$\A_i$ sends all $(\{\pk_\ell)$
to $\A_1$. 


    \item  $\A_i$ simulates the security game $G_P$ for $P$ with $\A$: 
    \begin{itemize}
        \item In stage-1 game $G_P^1$, for any (admissible) queries  if $\sk_\ell, \ell \in \cS$ is required to compute the answer for the query, $\A_i$ can answer the query since it possesses the keys  $\{\sk_\ell\}_{\ell \in \cS}$.


        The secret keys of primitives  $\{P_j\}_{j \notin \cS}$  are all sampled freshly upon every run of $\wsecEval(\sk, \pk,\cdot)$ (or sampled freshly in $\wpubEval(\pk, \cdot)$), which $\A_i^1$ can do it on its own.
        


        \item During the interaction of stage-1 game, $\A^1$ and $\A_i^1$ generates an auxiliary information $\aux$ which is public to both of them.

        \item Entering stage 2, $\A^1$ generates an arbitrarily polynomial-size state $\st$ and gives it to $\A^2$.  

    $\A_i$ also enters its second stage $\A_i^2$. $\A_i^2$ also receives all of  $(\{\sk_\ell\}_{\ell \in \cS}, \{\pk_\ell\}_{\ell \in \cS}, \aux)$  
   but  does \emph{not} obtain any of  $\A^1$'s queries involving the use of $\{\sk_\ell\}_{\ell \notin \cS}$. 

        \item $\A_i^2$ continues to provide the oracles needed for game $G_P^2$ and answer $\A^2(\st)$'s queries using the above strategy as $A_i^1$ uses. 

        \jiahui{anything that needs to pay attention here?}
        
        \item $\A_i^2$ also records all  of $\cA^2$'s queries. 

        \item In the challenge phase of $G_{P_i}$, 
        $\A_i^2$ receives a challenge input $\inp_i$ from the challenger.

         \item In $G_P^2$'s challenge phase, $\A_i^2$ samples some randomness $r$ and prepares a challenge input $\inp$ for $\A$ using $r$ and $\inp_i$. $\A_i^2$ sends $\inp_i$ to $\A_2$. 

    \item $\A_i^2$ continues to simulate the oracles needed in  $G_P^2$ game for $\A^2$ if there are further query stages
         
    \end{itemize}

    \item 
    $\A_i$ computes an efficiently computable function $f_i(\out,r, \cQ, \inp_i)$ on input of $\A^2$'s final answer $\out$, its queries $\cQ$ and $A_i$'s challenge input $\inp_i$ and secrret radomness $r$, and gives it to the challenger of primitive $P_i$.

\end{enumerate}

\paragraph{Properties of Watermarking-Compatible Reductions} 
We further give some properties of watermarking-compatible reduction. These properties come with any natural black-box security roof with hybrid argument and reductions. We present them here for the sake of convenience and use them as facts later.

\jiahui{change the following statement fo stage 2 game}
\begin{fact}[Reduction Property 1]
\label{fact:reduction_property1}
   A watermarking-compatible reduction guarantees that: if there exists some $\A$ such that the advantage of $\A^2$ winning the game $G_P$ is non-negligible, i.e. $\Pr[G_P(1^\lambda,A) = 1] \geq \eta+\gamma$ where $\eta$ is the trivial winning probability and $\gamma$ is non-negligible in $\lambda$, then there exists some $i \in [k]$  such that $\A_i$, using the above reduction strategy, wins $G_{P_i}$ with some non-negligible advantage $\gamma_i$.
\end{fact}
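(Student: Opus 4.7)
The plan is to derive the fact essentially as a contrapositive / hybrid argument, using property 2 of the watermarking-compatible construction definition: the security of $P$ is proved by reducing to the security of $P_1, \ldots, P_k$. A typical such proof proceeds via a sequence of hybrid games $H_0, H_1, \ldots, H_m$, where $H_0$ is the real game $G_P$ and $H_m$ is a ``trivial'' game in which any admissible adversary wins with probability at most $\eta + \negl(\lambda)$. Each hybrid transition $H_{j-1} \to H_j$ is justified by invoking the security of some input primitive $P_{i(j)}$, and the specification of the watermarking-compatible reduction in \Cref{sec:watermarking_compatible_construction} is precisely the adversary $\A_{i(j)}$ that this transition produces.

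Concretely, suppose for contradiction that every $\A_i$ for $i \in [k]$ has only negligible advantage in $G_{P_i}$. I would then argue, by a standard triangle-inequality/union-bound argument over the hybrid sequence, that
\[
\bigl| \Pr[H_0(1^\lambda, \A) = 1] - \Pr[H_m(1^\lambda, \A) = 1] \bigr| \leq \sum_{j=1}^{m} \gamma_{i(j)} = \negl(\lambda),
\]
since each per-hybrid gap is bounded by the advantage of the corresponding $\A_{i(j)}$, which by assumption is negligible, and $m$ is polynomial. Combined with $\Pr[H_m(1^\lambda, \A) = 1] \leq \eta + \negl(\lambda)$, this contradicts the hypothesis $\Pr[G_P(1^\lambda, \A) = 1] \geq \eta + \gamma$ with $\gamma$ non-negligible. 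Hence at least one $\A_i$ must achieve a non-negligible advantage $\gamma_i$.

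The main subtlety, and the step that I expect to need the most care, is verifying that this hybrid argument genuinely respects the two-stage structure imposed by the watermarking-compatible reduction definition: the stage-2 reduction $\A_i^2$ is not allowed to see $\A^1$'s stage-1 queries on the ``watermarked'' keys $\{\sk_\ell\}_{\ell \in \cS}$, and for type-2 reductions is further restricted in which queries it may relay. The point to check is that the hybrids composing the security proof of the construction can all be implemented with this restriction---i.e., that the indistinguishability of $H_{j-1}$ and $H_{j}$ can be witnessed by an adversary of exactly the form described in the definition. This is essentially what we mean by requiring the construction's security proof to be ``watermarking-compatible'' in the first place, and it is the hypothesis we are invoking. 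Once this compatibility is in hand, the above counting argument goes through routinely.

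Finally, I would note that the non-negligible $\gamma_i$ one obtains is at least $\gamma / m$ up to a negligible additive term, so the advantage of the winning $\A_i$ can be taken polynomial-related to $\gamma$. This quantitative form will be convenient later when invoking the fact inside the unremovability reduction, since it lets us translate a non-negligible advantage against $P$ into a non-negligible advantage against some specific $P_i$ identifiable from the hybrid the adversary ``broke.''
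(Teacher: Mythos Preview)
Your proposal is correct and matches the paper's own treatment: the paper does not give a detailed proof but simply remarks that ``the above property follows naturally from any hybrid argument of proof,'' which is exactly the contrapositive/triangle-inequality argument you spell out. Your elaboration on the two-stage compatibility subtlety and the quantitative $\gamma/m$ bound goes beyond what the paper writes, but is consistent with its intent.
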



\jiahui{add the alternative def statement}
\begin{remark}
The above property follows naturally from any hybrid argument of proof.

    If using the language of the 2-stage security game: 
If there exists some adversary $\A = (\A^1, \A^2)$ such that for some non-negligible $\epsilon$ and some non-negligible $\gamma$, we have:
$$\Pr\left[\Pr[G_P^2(\sk,\pk, \aux, \A_2(\st)) = 1 ] \geq \eta + \gamma: \{(\sk,\pk), \aux, \st\} \gets G_P^1(\A_1, 1^\lambda) \right] \geq \epsilon$$
Then there exists some $i \in [k]$  such that $\A_i = (\A_i^1, \A_i^2)$, using the above reduction strategy, such that for some non-negligible $\gamma_i, \epsilon_i$, we have:
     $$\Pr\left[\Pr[G_{P_i}^2(\sk_i,\pk_i, \aux, \A_i^2(\st)) = 1 ] \geq \eta + \gamma_i: \{(\sk_i,\pk_i), \aux, \st\} \gets G_{P_i}^1(\A_i^1, 1^\lambda) \right] \geq \epsilon_i.$$
\end{remark}


\begin{fact}[Reduction Property 2]
\label{fact:reduction2}
    For all $i \in \cS$, once give the secret key $P_i.\sk$ in the clear, there exists a PPT algorithm $T_i$ that wins the security game $G_P$ with probability 1.

Additionally, such a $T_i$ can be used black-boxly by the watermarking-compatible reduction algorithm $\A_i$ to win the security game $G_{P_i}$ with noticeable probability.
\end{fact}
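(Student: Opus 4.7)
The plan is to split the argument into the two halves of the statement, obtaining the second by plugging the ``trivial'' adversary constructed in the first half into the Type-1 watermarking-compatible reduction.

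For the first half, I would take $T_i$ to be the natural ``honest secret-owner'' strategy. Since $i\in\cS$, the construction in \Cref{sec:watermarking_compatible_construction} guarantees $\sk = (\sk_j)_{j\in\cS}$, so $\sk_i$ is one legitimate component of a full secret key. On input $\sk_i$, $T_i$ samples $(\sk_j,\pk_j)\gets P_j.\KeyGen(1^\lambda)$ for every $j\in\cS\setminus\{i\}$ and $(\pk_j,\td_j)\gets P_j.\KeyGen(1^\lambda)$ for every $j\in\cT_K$, and generates keys for $j\in\cT_S\cup\cT_P$ on demand when they are needed inside $\SecAlg, \PubAlg$. It thereby assembles an honestly-distributed full key pair $(\sk,\pk)$ of $P$. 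Possessing the entire honest secret key trivially breaks any natural game $G_P$ with probability $1$---decrypt the challenge ciphertext in an IND-style game, sign a fresh message in an unforgeability-style game, and so on---so $T_i$ has advantage $1-\eta$ over the trivial bound.

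For the second half, I would plug $T_i$ as the inner adversary into the Type-1 watermarking-compatible reduction $\A_i$ described in \Cref{sec:watermarking_compatible_construction}. By construction $\A_i$ already samples $\{\sk_j\}_{j\in\cS\setminus\{i\}}$ itself and has oracle access, through the stage-1 and stage-2 oracles of $G_{P_i}$, to every $\sk_i$-dependent sub-routine of $P_i$; so $\A_i$ can simulate $T_i$ in a perfectly black-box manner by forwarding each $\sk_i$-touching call made by $T_i$ to its own $G_{P_i}$ oracle and executing everything else locally with the keys it already holds. Once $T_i$ produces its output, $\A_i$ submits $f_i(\out,r,\cQ,\inp_i)$ to the $G_{P_i}$ challenger as the reduction prescribes.

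The main obstacle I anticipate is not the simulation itself, which is routine, but the gap between \Cref{fact:reduction_property1}, which only guarantees that \emph{some} $\A_j$ succeeds, and the present claim, which requires \emph{this} $\A_i$ to succeed when the inner adversary is specifically $T_i$. To close this I would appeal to the structure of the underlying hybrid proof: the reduction $\A_i$ is exactly the hybrid step that isolates the security of $P_i$, and $T_i$'s probability-$1$ win is contingent on the $G_{P_i}$ oracle faithfully implementing $P_i$ on a genuine $\sk_i$. If that oracle were replaced by the ideal alternative of the hybrid, $T_i$'s simulated $G_P$ transcript would visibly deviate from an honest one; the distinguishing test encoded in $f_i$ is designed to detect precisely this deviation, which yields noticeable advantage for $\A_i$ in $G_{P_i}$ and finishes the argument.
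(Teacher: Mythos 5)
The paper never proves this statement: Fact~\ref{fact:reduction2} (like Fact~\ref{fact:reduction_property1}) is explicitly framed as an \emph{assumption} on the class of constructions and reductions being considered. Immediately before these two facts the paper writes, ``These properties come with any natural black-box security proof with hybrid argument and reductions. We present them here for the sake of convenience and use them as facts later.'' So there is no proof in the paper to compare your attempt against; the statement is a property that the construction of $P$ is \emph{required} to satisfy in order to be watermarking-compatible, and it is subsequently \emph{invoked} (notably in the proof of correctness of extraction, where for each $i\in\cS$ one takes $T$ to decrypt/sign the $G_P^2$ challenge ``using a strategy associated with the key $\sk_i$''). A blind proof attempt should have flagged that one cannot derive this fact from the abstract definitions in Section~\ref{sec:watermarking_compatible_construction} alone.

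Setting that aside, your sketch also contains concrete errors. In the first half, you have $T_i$ sample \emph{fresh} $(\sk_j,\pk_j)$ for $j\ne i$ and then conclude that $T_i$ ``assembles an honestly-distributed full key pair $(\sk,\pk)$ of $P$'' and ``possesses the entire honest secret key.'' That is false: those freshly sampled keys are independent of the $G_P$ challenger's actual $\sk_j,\pk_j$, so they confer no power against the challenge. The intended $T_i$ must win using $\sk_i$ \emph{alone} together with the public $\pk$ and the oracles of $G_P^2$ (as in the CCA2-SKE example: $T_\prf$ decrypts the challenge ciphertext directly with $\sk_\prf$; $T_\mac$ re-randomizes and re-signs the challenge with $\sk_\mac$ and relays it through the decryption oracle). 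Whether $\sk_i$ by itself suffices to win $G_P$ is precisely the non-trivial content of the fact, not a corollary of having a ``full key.'' In the second half, you have $\A_i$ ``simulate $T_i$ \ldots\ by forwarding each $\sk_i$-touching call made by $T_i$ to its own $G_{P_i}$ oracle,'' but $T_i$ has $\sk_i$ hard-wired in the clear and therefore makes no such calls; $\A_i$ simply runs $T_i$ as an opaque box and its advantage comes from how the $G_{P_i}$ challenge is embedded into the $G_P^2$ challenge that $\A_i$ feeds to $T_i$. Your closing appeal to ``the distinguishing test encoded in $f_i$ detects precisely this deviation'' is the desired conclusion restated, not an argument for it; to make it a proof you would need further structural constraints on the reduction and $f_i$, which is exactly why the paper posits Fact~\ref{fact:reduction2} rather than deriving it.
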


\jiahui{still need to capture what $\A_i$ does in order to "win" according to $\A$'s behavior. Here is my take:}

\paragraph{Reduction Function}
Recall that in the end of the reduction,  $\A_i$ computes an efficiently computable function $f_i(\out,r, \cQ, \inp_i)$ on input of $\A^2$'s final answer $\out$, the randomness used to prepare $\A^2$'s challenge input, $\A^2$'s queries $\cQ$ and $A_i$'s challenge input $\inp_i$, and gives it to the challenger of primitive $P_i$.

For convenience, we will refer to the function $f_i$ used by the reduction as \emph{reduction function}. 

\begin{remark}   
[Reduction Function]
\label{def:reduction_func}
    We refer to the  above function $f_i$ used by the reduction as \emph{reduction function}. 
\end{remark}



\begin{remark}[Special Primitives in $\cT_K$]
\label{remark:set_t_k_nizk}
The role of the primitives in the set $\cT_K$ may be confusing to the readers at this point. We make some further explanation. As we go into later sections and go into examples (\ifllncs see full version for details \else\Cref{sec:naor_yung_wm}\fi), its role will become clear.

  More specifically, $\cT_k$ only contains the simulation-based primitives $P_\ell$ where
  \begin{enumerate}
      \item there exists an efficient simulator algorithm such that the output of $P_\ell.\PubAlg(P_\ell.\pk, \cdot)$ is indistinguishable from the output of the simulator $\Sim(P_\ell.\td, \cdot)$. 
  Their secret key (trapdoor $\td$) will only come up in the security proof for $P$.

  \item In the security proof for $P$, the rest of the hybrid arguments and reduction happen in a hybrid world where we have already invoked the above security for $P_\ell$.
  \end{enumerate}


The one example primitive in the set $\cT_K$ we use within this work is a NIZK scheme with a trapdoor. Its "keys" (the common reference string $\crs$, and trapdoor $\td$) are generated in the main key generation algorithm, but the trapdoor is not used in any of $\SecAlg, \PubAlg$, and only in the security proof.
    
\end{remark}

\section{Watermarking Composition Framework}

\subsection{Definition: Watermarkable Implementation of a Cryptographic Primitive }
\label{sec:def_watermarkable_primitive}

In this section, we will define what we call a "watermarkable implementation" of a cryptographic primitive $P$. We distinguish it from the usual naming ("watermarkable P") because of some differences in syntax and definition. But we will sometimes use watermarkable P in our work for convenience. Please note that all "watermarkable P" used in the technical part of this work refers to a watermarkable implementation of a cryptographic primitive $P$ defined below.

On a high-level, a watermarkable implementation of $P$ differs from most existing watermarking definition in two aspects:
\begin{enumerate}
    \item Unremovability security: A pirate circuit produced by the adversary in the unremovability security game is considered to function successfully as long as it can break the security game $G_P$.

    \item Extraction algorithm syntax: the extraction key used to extract a watermark is able to simulate the game $G_P$ for the underlying "plain security" of $P$.
\end{enumerate}

In more detail, a watermarkable implementation of a cryptographic primitive has the following syntax and properties.

\paragraph{Watermarkable Primitive Syntax} A watermarkable primitive $\WP$ for a cryptographic primitive $P$  with a security game $G_P$ consists of the following algorithms:

\begin{description}
    \item $\wmsetup(1^\lambda ) \to (\sk, \pk, \mk, \xk)$:  on security parameter, outputs a secret key $\sk$, public key $\pk$, marking key $\mk$, extraction key $\xk$.

    %

    \item $\wsecEval(\sk, \pk, x)$:  takes in the secret information $\sk$, public information $\pk$ and some input $x$ from input space $\sX$, and output a value $y \in \sY_s$.

    \item $\wpubEval(\pk, x \in \sX)$: takes in the public information $\pk$ and some input $x$ from input space $\sX$, and output a value $y \in \sY_p$.

    \item $\watermark(\mk, \sk, \tau \in \cM_{\watermark}) \to \sk_\tau$: takes in marking key $\mk$, a secret key $\sk$ and a message $\tau \in \cM_{\watermark}$; output a marked key $\sk_\tau$\footnote{In defintions in the watermarking literature, the output of $\watermark$ is the program $\wsecEval(\sk_\tau, \pk, \cdot)$. Since  conventionally by Kerkhoff's principle, the evaluation algorithm itself is public, giving out only the marked secret key is equivalent to giving out the program, we give out the key for convenience that come up later. There are watermarking schemes where the evaluation algorithm may be different when running on a marked key; in that case we can consider $\wsecEval$ to have two modes, one for unmarked keys one for marked keys.}.

    \item $\extract(\xk, \pk, \aux, C) \to \tau \in \cM_\watermark / \vec{\tau} \in \cM_\watermark^{k}/\bot$: on input extraction key $\sk$, public key $\pk$, auxiliary information $\aux$ and circuit $C$, outputs a mark message $\tau \in \cM_\tau$ or a tuple of marked messages $\vec{\tau} \in \cM^k_{\tau}$ where $k$ is a constant/small polynomial parameter related to the concrete watermarking construction. 
\end{description}

\begin{remark}
The $\extract$ algorithm is allowed to output a tuple of marks when working on adversarial programs, when the watermarkable implementation is one that comes from composing underlying watermarking implementations. More details to be discussed later in \Cref{sec:watermark_composition}.
\end{remark}

\begin{remark}
The extraction algorithm takes in the public key $\pk$ (which is not a limitation because it's public) and an auxiliary input $\aux$. As we will see in the concrete examples, depending on the security game of the primitives we watermark, we may need $\extract$ to take in some $\aux$ or may not.

    When it is clear from the context that there is no public key or auxiliary information for $\extract$ to take, we will omit them from the input parameters for notation cleanness.
\end{remark}

\begin{remark}
    There can be several different (constant number of) $\wsecEval$ and $\wpubEval$. 
    As aforementioned, we view all secret algorithms as one algorithm $\SecAlg(\sk,\pk, \cdot)$ that will take in an index $i \in [\ell]$ to decide which mode to use, similarly for public algorithms.
    We thus also use $\sk$ to denote $ (\{\sk_i\}_{i \in [\ell]}$.
\end{remark}
Let us denote $\eval = (\wsecEval, \wpubEval)$.

\paragraph{Correctness}
The construction should satisfy "unmarked" correctness for unmarked keys:
a watermarkable implementation of $P$ is said to be correct with respect to predicate $F_R$ if there exists a negligible function $\negl(\cdot)$ such that for all $\lambda \in \N, x \in \sX$:
$$ \Pr[F_R(\eval, \pk,\sk, x, r) = 1 : (\sk, \pk) \gets \wmsetup(1^\lambda), r \gets D_r] \geq 1-\negl(\lambda).$$
where the probability is taken over randomness used in $\wmsetup$ and  $D_r$. Recall that randomness $r$ is used in checking whether the predicate $F_R(\eval, \pk, \sk, x, r)$ is satisfied \Cref{def:predicate} and $D_r$ can be simply taken to be the uniform distribution.


\paragraph{Functionality Preserving vs. Exact Functionality Preserving}
We present both definitions of functionality-preserving for the sake of comprehensiveness because they have both appeared in watermarking literatures. 
Before \cite{goyal2019watermarking}, the watermarking literature mainly used exact functionality preserving only.
While it is not particularly vital to the contributions in this work, we would like to mention that exact functionality preserving is a stronger notion. For some of the underlying building blocks we use,  only constructions under the relaxed functionality-preserving definition are known, for example the watermarkable signature scheme in \cite{goyal2019watermarking}.

\paragraph{ Functionality Preserving}
The functionality-preserving property says: 
if for a predicate $F_R$, the underlying primitive $P$ satisfies he correctness in \Cref{def:correctness_p}, then there exists a negligible function $\negl(\lambda)$ ssuch that for all $\lambda \in \N, x \in \sX$:
$$ \Pr \left[F_R(\eval, \pk, \tildesk, x, r) =1 :  \begin{array}{cc} (\sk,\pk, \xk, \mk)  \gets \wmsetup(1^\lambda) \\  \widetilde{\sk} \gets \watermark(\mk, \sk, \tau), r \gets D_r   \end{array} \right] \geq 1 -\negl(\lambda).$$


\jiahui{added more explanation on why we have two definitions}

\paragraph{Exact Functionality Preserving}
The exact functionality preserving is a stronger property that: there exists a negligible function $\negl(\lambda)$ such that for all $\lambda \in \N, x \in \sX, \tau \in \cM_\tau$:
$$ \Pr \left[\wsecEval(\tildesk,\pk, x) = \SecAlg(\sk,\pk, x)  :  \begin{array}{cc} (\sk,\pk, \xk, \mk)  \gets \wmsetup(1^\lambda)  \\  \widetilde{\sk} \gets \watermark(\mk, \sk, \tau)   \end{array} \right] \geq 1 - \negl(\lambda).$$

\paragraph{Correctness of Extraction}
There exists a publicly known efficient algorithm $T$ and a negligible function $\negl(\lambda)$ such that for all $\lambda \in \N$, for all $\tau \in \cM_\tau$:
$$ \Pr\left[\extract(\xk, T(\sk_\tau, \cdot)) = \tau \,:\, \begin{array}{cc} (\sk,\pk, \xk, \mk) \gets \wmsetup(1^\lambda)  \\ \widetilde{\sk} \gets \watermark(\mk, \sk, \tau)  
\end{array} \right] \geq 1 - \negl(\lambda)$$

\paragraph{Security with Security Game $G_P$}
The watermarkable primitive $\WP$ satisfies the same security defined by the security game $G_P$ for $P$, except that the subroutines used in $G_P$, $\KeyGen,\SecAlg,\PubAlg$ are replaced with $\wmsetup,\wsecEval,\wpubEval$ respctively. The $\extract, \watermark$ algorithms and keys are ignored in the context of these games.


\paragraph{Watermarking Security Compatible with A Security Game}

Now we present the unremovability security of the watermarking implementation for primitive P.

Informally, the security guarantees that: any PPT adversary $\A$, when given the watermarked key(s), generates a pirate circuit $C^*$; if $C^*$ can win the security game $G_P$ with some non-negligible advantage, then we must be able to extract a (previously queried) watermark from $C^*$.

\begin{definition}[$\gamma$-Unremovability with Game $G_P$]
\label{def:unremovability_with_security_game}
   We say a watermarkable implementation of $P$ is $\gamma$-unremovable if:
   
   For every stateful $\gamma$-unremovable admissible PPT adversary $\A$, there exists a
negligible function $\negl(\cdot)$ such that for all $\lambda \in \N$, the following holds:
$$ \Pr\left[\extract(\xk, \pk,\aux, C^*) \notin \cQ: \begin{array}{cc}
    (\sk,\pk,\mk,\xk) \gets \wmsetup(1^\lambda)   \\
    (\aux, C^*) \gets \A^{\watermark(\mk,\sk,\cdot)}(1^\lambda, \pk)\\      
\end{array}  \right] \leq \negl(\lambda) $$
$\cQ$ denotes the set of f marks queried by $\A$ to the marking oracle $\watermark(\mk, \sk, \cdot)$, 
$G_P^1$ is the stage 1 of a security game $G_P$, as defined in \Cref{def:2-stage_game}. 

We call a PPT adversary $\A$ as $\gamma$-unremovable admissible if $\A$'s output $C^*$ is an admissible adversary in the \emph{stage-2} security game $G_P^2$ and is $\gamma$-good, where $\gamma$-good is defined as:
$$ \Pr[G_P^2(\sk,\pk, \aux, C^*) = 1] \geq \eta + \gamma$$ 
Here, $\eta$ is the trivial success probability for any admissible adversary in $G_P$.

The randomness over testing whether $C^*$ is $\gamma$-good is the randomness used 
answering any oracle queries from $C^*$ and preparing the challenge for $C^*$, in the stage-2 security game $G_P^2$.

To match our syntax, we denote $\extract(\xk, \pk,\aux, C^*) \notin \cQ$ to mean that:
\begin{enumerate}
    \item If $\extract(\xk, \pk,\aux, C^*)$ outputs a single mark $\tau \in \cM_\tau / \tau = \bot$, it is considered to be not in the query set $\cQ$ if and only if $\tau \notin \cQ$.

    \item If $\extract(\xk, \pk,\aux, C^*)$ outputs a tuple of marks $\vec{\tau} = (\tau_1, \cdots, \tau_k)$, where each $\tau_i \in \cM_\tau / \tau_i = \bot$, it is considered not in the query set $\cQ$ if and only if for all $i$,  $\tau_i \notin \cQ$.
\end{enumerate}
\end{definition}

\ifllncs
For more discussions on the above definition, we refer to the full version of the paper.

\else
\begin{remark}
The $\extract$ algorithm will only output a tuple of marks when used on adversarial circuits $C$. Because of composition, the adversary might be able to acquire a composed watermarked circuit where each "component" has a different watermark, but this does not formulate an attack as long as we can still extract any valid (previously queried) watermarks from the circuit.
We will discuss this aspect in more details in \Cref{sec:watermark_composition}.
\end{remark}

\begin{remark}[Discussions on the $\gamma$-Unremovability Definition]

We give the above definition with a parameter $\gamma$ and it helps illustrate how this definition corresponds to the 2-stage security game in \Cref{def:2-stage_game}: $\A$ corresponds to the "stage-1" adversary and the circuit $C^*$ corresponds to stage-2 adversary. 

If we want a watermarking scheme's $\gamma$ to be any non-negligible function (the strongest notion here) in terms of the security parameter, we
can simply require it to satisfy $\gamma$-unremovability for any non-negligible $\gamma$ (defined below as "strong unremovability").

In our paper, all our underlying watermarking building blocks used for composition satisfy $\gamma$-unremovability for any non-negligible gamma. So they are \emph{not} sensitive to $\gamma$. Our statements about the watermarking compiler also only consider this case (see \Cref{lem:unremovability_composition}).

Besides, we also inherit this $\gamma$-removability definition partially from \cite{goyal2019watermarking}.  In some other watermarking literatures, it may be also meaningful to consider a weaker security where gamma is not necessarily negligible. 
    
\end{remark}

\fi

\begin{definition}[Strong Unremovability]
\label{def: strong_unremovability}
    We say a watermarking implementation of $P$ satisfies strong unremovability if it satisfies $\gamma$-unremovability for any non-negligible $\gamma$.
\end{definition}

\ifllncs\else
\begin{remark}[Statefulness of the Adversarial Program]
Since the adversarial circuit in our security game is assumed to be interactive and stateful, one may ask how stateful it can be.

    This issue with stateful pirates is almost always present in the traitor tracing and watermarking literature. For this reason, the vast majority of works operate in a stateless program model. This requires an assumption that the pirate program can be reset by the tracer to its initial conditions, which makes sense if, say, the program is given as software. In our work, we allow the pirate program to be stateful, in the sense that we allow it to play an interactive security game. However, we always assume between runs of the game that the program is reset to its initial conditions. This notion of statefulness generalizes the existing watermarking/traitor tracing definitions from non-interactive games to interactive games, while also avoiding the issue of self-destruction. Like in the existing models, our model makes sense if, say, the program is given as software.

\end{remark}
\fi

We leave more remark on the unremovability definition \ifllncs to the full version. 
 \else at the end of this section 
\Cref{remark:define_stages_in_game}. \fi




\jiahui{A "simulated view" property is not reuqired. Only the syntax is required?
}


\paragraph{Security Game Simulation Property of the Extraction Key}

We additionally require the extraction key $\extract$ to be able to simulate the (stage-2) security game $G_P^2$ for the primitive $P$ to be watermarked.

Consider the security game $G_P$ for the underlying primitive $P$: $G_P = (G_P^1, G_P^2)$ is an interactive game between a challenger and admissible $\A = (\A_1, \A_2)$. Recall that in the stage-2 \Cref{def:2-stage_game} $G_P^2$ can output a view $\view(\sk, \pk, \aux, \A_2)$ (\Cref{def:game_view}).

\begin{definition}[Extraction key simulation property]
\label{def:extraction_key_simulation}
    The extraction key simulation property of the extraction key says that: 
given $(\sk, \pk,  \aux)\gets G_P^1(\A_1, 1^\lambda)$ where we use $\wmsetup(1^\lambda)$ to obtain $(\sk, \pk, \xk)$ in $G_P^1$, there exists a PPT simulator $\Sim(\xk, \pk, \aux, \A_2)$ ( where $
\Sim$ interacts with $\A_2$ black-boxly)
that outputs a simulated view $\view_\Sim(\xk, \pk, \aux, \A_2)$,  such that for any $\lambda \in \N$, for any admissible $\A$ in $G_P$, the distributions $\view_{G_P^2} = \{\view
(\sk,\pk,\aux, \A_2)\}$ and $\view_{\Sim} = \{\view_\Sim
( \xk, \pk, \aux, \A_2)\}$ are perfectly/statistically/computationally indistinguishable. 
\end{definition}

\paragraph{Extraction Syntax: Simulation of Security Game $G_P$ inside $\extract$}
Following the above property, 
we require that the $\extract$ algorithm in a watermarkable implementation of a primitive $P$ follows a specific format: it simulates the stage-2 security game $G_P^2$ for any input circuit, while trying to extract a mark.

\begin{mdframed}[frametitle = $\extract$ Algorithm ]
\label{frame:extract_syntax}
On input $(\xk, \pk, \aux)$ and a program $C$:

\begin{itemize}
    \item 
     Run an algorithm $E$ where $E$ uses the following subroutine  (for possibly $\mathsf{poly}(\lambda)$) many times):

\begin{itemize}
    \item  Use $(\xk, \pk, \aux)$ to simulate the stage-2 game $G_P^2(\sk,\pk,\aux, C)$ for primitive $P$, running $C$ black-boxly by treating $C$ as the stage-2 adversary $\A_2$ defined in \Cref{def:2-stage_game}.

 \item If $C$ is non-interactive (i.e. $C$ does not make any queries or respond to interaction in the challenge phase of $G_P^2$), then $E$ samples challenge inputs on its own and runs $C$ on them.  
\end{itemize}

\item $E$ can take any of $C$'s outputs, including $C$'s queries made during the above simulated game, as inputs to compute the extraction algorithm's final output.

\item Output a watermark $\tau \in \cM_\tau$ or $\bot$ 

\end{itemize}

\end{mdframed}

\paragraph{Examples of Extraction Algorithm that Simulates a Security Game}

It is not hard to create an extraction algorithm with the above syntax and simulation capability. In fact, some existing works have already built watermarking schemes that have $\extract$ algorithm with such a format.
\begin{itemize}
    \item To enable the extraction key to simulate the security game, a naive solution in the
 private extraction case is to simply let $\xk$ contain the secret key $\sk$. 

 One example is the extraction algorithm in a watermarkable signature scheme (see \ifllncs the full version \else \Cref{sec:wm_signatures} \fi). To answer the signature queries 
 for the pirate program, the extraction key is the signing key.
 
 \item In some other scenarios, we don't need the secret key to simulate the security game and one can even have public extraction.
 
 For example, a CPA secure PKE scheme.
 Another example is when we can sample from the input-output space of the evaluation oracles without having the actual key: in the weak PRF setting (with non-adaptive queries), we can simply answer its queries by sampling. In \cite{goyal2021beyond}, one can use indistinguishability obfuscation to build such a sampler, so that we have watermarkable weak PRF with public extraction.  
\end{itemize}

\jiahui{added a comment that statefulness and interaction are not required!}

\ifllncs
See the full version for more discussions on the extractability of watermarks, simulation property of the extraction key and additional definitions such as  meaningfulness and collusion resistance.
\else
\begin{remark}[Extraction of watermarks from honestly generated circuit]
\label{remark:extract_honest}
Note that even though our extraction algorithm simulates a possibly interactive game for any input program, it does not require or assume the input program to be interactive or even stateful.
As we have described, when the input circuit is not interactive, then the extraction algorithm simply samples challenge inputs and run $C$ on them to get $C$'s outputs.

For example, we can also correctly extract from an honestly watermarked circuit which is neither stateful nor interactive.

\end{remark}

\begin{remark}[Computational Simulation]
\label{remark:computational_sim}
    In most of our applications, the extraction key can simulate the security game's view with perfect indistinguishability. 
     But whether the simulation quality is perfect/statistical/computational does not affect our watermarking composition theorem and applications, as long as the difference is negligible.
     
     We will encounter computational simulation only when there are primitives in the set $\cT_K$: the extraction algorithm will simulate a "hybrid version" of the security game, after invoking the security of the primitive in  $\cT_K$, instead of the original security game. See more discussions \Cref{remark: comp_sim_2}.
     
     The one example provided in our paper is the use of NIZK proof with trapdoors in \Cref{sec:naor_yung_wm}. 
\end{remark}

\paragraph{Meaningfulness}
The meaningfulness property says that an honestly generated key should not have watermarks.

A watermarking implementation of primitive $P$ scheme satisfies the meaningfulness property if there exists a negligible function $\negl(\cdot)$ such that for all $\lambda \in \N$:
$$ \Pr\left[\extract(\xk, \sk) \neq \bot: \begin{array}{cc}
    (\sk,\pk,\mk,\xk) \gets \wmsetup(1^\lambda)   \\      
\end{array} \right] \leq \negl(\lambda)$$

\paragraph{Private Extraction vs. Public Extraction}
A watermakable implementation of 
a cryptographic primitive is a scheme with public-extraction if the extraction key $\xk$ can be made public while the above properties still hold. Otherwise it has private extraction.

\paragraph{Single-key vs. Collusion Resistant}
A watermakable implementation of 
a cryptographic primitive is collusion resistant if in the unremovability security game, $\A$ is allowed to query the marking oracle $\watermark(\mk,\sk, \cdot)$ for arbitrarily many times and $q$-bounded collusion resistance if it is allowed to query for a-priori bounded polynomial $q$-times. the scheme is single-key secure if $q = 1$.

\begin{remark}[Defining stages in the unremovability game]
\label{remark:define_stages_in_game}

As discussed briefly in \Cref{def:2-stage_game}:
  an example of subtlety when defining the staged unremovability game is: in the original security game for $P$, $\A$ needs to commit to some challenge messages that will be used by the challenger to prepare the final challenge for $\A$. The question is whether we make $\A$ commit to such challenge messages in stage 1, or let the stage-2 adversary, i.e. the circuit $C$ output by $\A$ choose their challenge messages later in stage 2.

  In summary, if admissibility of queries made in stage 1 will be dependent on the challenge, then we must make $\A$ commit to its own part on the challenge (as some auxiliary information $\aux$) before it produces $C$ , and we run $C$ with
these prefixed $\aux$ in stage-2 game $G_P^2(\pk,\xk, \aux, C)$.

  
    Let's take a "flexible" example, where we can let the stage-2 adversary $\A_2$ do such commitment, instead of forcing it to commit in stage 1. Accordingly, we can let the watermarking adversary's output program $C$ selects its own commitment when we run $C$: for example, in CPA/CCA encryption games, we can allow $\A$ to submit its challenge messages $(m_0, m_1)$ either together/before it outputs $C$ or output a $C$ that will choose its own $(m_0, m_1)$. This flexibility comes from the fact that  whether the queries made by $\A$ and $C$ are valid does \emph{not} depend on the choice of $(m_0, m_1)$. 
    
    A non-flexible example would be the challenge attribute/identity in ABE/IBE. We must let $\A$ choose its challenge attribute/identity before/at the same time with outputting $C$. $C$ is simulated in a second stage security game with this pre-selected attribute/identity as its input. 
    Otherwise $\A$ can easily cheat because the extraction algorithm(stage-2 game) does not get to check if the previous key generation queries made by $\A$ are valid with respect to challenge attribute/identity.
\end{remark}

\fi

\subsection{Watermarking Composition: Target Primitive from Input Primitives}
\label{sec:watermark_composition}
In this section, we show that if primtivive $P$ is built from $P_1,\cdots,P_k$ where the security can be shown via a watermarking-compatible reduction,  we can construct a watermarkable implementation of $P$, named $\WP$ to satisfy definition in \ref{sec:def_watermarkable_primitive}, from existing watermarkable implementations of $P_1,\cdots,P_k$  called $\WP_1, \WP_2, \cdots, \WP_k$, satisfying definition in \ref{sec:def_watermarkable_primitive}.
We will still refer to $P$ as the target primitive and $P_1,\cdots,P_k$ as input primitives.

\paragraph{Outline and Intuition}
On a high level, the watermarking scheme of the target primitive $P$ simply follows the construction of plain $P$ from the plain underlying building blocks in terms of evaluation algorithms $\SecAlg,\PubAlg$.
Correctness and functionality-preserving compose in a relatively natural way.

To mark a key of $P$, the marking algorithm concatenates the marked keys of all underlying $P_i$ that need to be marked. To extract a mark, we attempt to run the extraction algorithm of all underlying $P_i$ on the input circuit. If any valid mark is extracted, then the circuit is considered marked. 

In particular, the extraction algorithm will treat the circuit as an adversary in the stage-2 security game of $P$ and turns it black-boxly into a reduction for the security game of each underlying $P_i$, one by one and then run the underlying extraction algorithm of $P_i$ on it.

In order to remove a mark from a $P$'s key, the adversary $\A$ must remove all marks from each $P_i$'s key. Meanwhile, the pirate program still needs to win the security game of $P$, then we must be able to use to break the unremovability of at least one underlying $P_i$. In more generic scenarios, the pirate program made by $\A$ may not break any unremovability security of watermarkable building blocks, but get around the task of removing marks by breaking the security of some unwatermarked building blocks. By similar means, we can use the pirate program to build our reduction to the security 
of these unwatermarked building blocks. By the properties of the watermarking-compatible reduction (\Cref{sec:watermarking_compatible_construction}) that $P$'s construction satisfies, the above analysis is exhaustive.

\paragraph{Construction of $\WP$}
Similar to the description of construction in \Cref{sec:watermarking_compatible_construction}, we recall the following notations:

\begin{itemize}
    \item Let $\cS \subset [k]$ be a fixed set used in $P$'s construction from $P_1, \cdots, P_k$, where the primitives $P_i$ with $i \in \cS$'s keys will be generated in the $\KeyGen$ algorithm of $P$.
Without loss of generality, we let the first $|\cS|$ number of $P_i$'s be those corresponding to the set $\cS$.

\item Let $\cT_S \in [k]$ denote the set of indices $i$ where $P_i$'s keys will be generated during $\SecAlg$; Let $\cT_P \in [k]$ denote the set of indices $i$ where $P_i$'s keys will be generated during $\PubAlg$. 
Let $\cT_K$ denote the set of indices $i$ where $P_i$'s secret key (we call trapdoor $\td$) will be generated during $\KeyGen(1^\lambda)$ but will not be used in $P$'s algorithms. 

\item Without loss of generality, we sometimes assume a numbering on the primitives so that the first $|\cS|$ primitives are in the set $\cS$. 

\end{itemize}

\begin{description}
    \item $\wmsetup(1^\lambda) \to (\sk, \pk, \xk, \mk):$ 
    \begin{enumerate}
        \item compute $(\sk_i, \pk_i, \xk_i, \mk_i) \gets \WP_i.\wmsetup(1^\lambda)$ for all $i \in \cS$.
        \item compute $(\pk_\ell, \td_\ell) \gets P_\ell.\KeyGen(1^\lambda)$ for all $\ell \in \cT_K$;
        
        \item output $\sk = (\sk_{j_1}, \cdots, \sk_{|\cS|}); \pk = (\pk_1, \cdots, \pk_{|\cS|)},  \{\pk_\ell\}_{\ell \in \cT_k})); \\ \xk = (\xk_1, \cdots, \xk_{|\cS|}, \{\td_\ell\}_{\ell \in \cT_k}); \mk = (\mk_1, \cdots, \mk_{|\cS|})$
    \end{enumerate}

    \item $\wsecEval(\sk, \pk, x )$:
    \begin{enumerate}
        \item parse input $\sk = (\sk_{1}, \cdots, \sk_{|\cS|}); \pk = (\pk_1, \cdots, \pk_{|\cS|},  \{\pk_\ell\}_{\ell \in \cT_k}))$.

        \item $\wsecEval(\sk, \pk,  )$ is the same algorithm as $P.\SecAlg(\sk,\pk, \cdot)$ in the construction of $P$ from $P_1, \cdots, P_k$, except that $P_i.\SecAlg(\sk_i, \cdot)$ is replaced with $\WP_i.\wsecEval(\sk_i, \cdot)$ for $i \in \cS$.
        Overall, $\wsecEval(\sk, \pk,  )$ is an algorithm that:
        \begin{enumerate}
         \item   uses $\WP_i.\wsecEval(\sk_i,\pk_i, \cdot), i \in \cS$ as subroutines.
         \item uses $\WP_i.\wpubEval(\pk_i, \cdot), i \in \cS \cup \cT_k$ as subroutines.
           \item computes $(\sk_j, \pk_j) \gets P_j.\KeyGen(1^\lambda)$ for some $j \in \cT_S$ ( and may include these $\pk_j$ generated as part of the output). 
           
       \end{enumerate}
    \end{enumerate}

    \item $\wpubEval(\pk, x)$
        \begin{enumerate}
        \item parse input $\pk = (\pk_1, \cdots, \pk_{|\cS|}, \{\pk_\ell\}_{\ell \in \cT_k})$.

        \item $\wpubEval(\pk,  )$ is the same algorithm as $P.\PubAlg(\pk, \cdot)$ in the construction of $P$ from $P_1, \cdots, P_k$, except that $P_i.\PubAlg(\pk_i, \cdot)$ is replaced with $\WP_i.\wpubEval(\pk_i, \cdot)$ for $i \in \cS \cup \cT_k$.
        Overall, $\wpubEval(\sk, \pk, \cdot )$ is an algorithm that:
        \begin{enumerate}
         \item   uses $\WP_i.\wpubEval(\pk_i, \cdot), i \in \cS$ as subroutines.
         \item uses $\WP_i.\wpubEval(\pk_i, \cdot), i \in \cS$ as subroutines.
           \item computes $(\sk_j, \pk_j) \gets P_j.\KeyGen(1^\lambda)$ for some $j \in \cT_P$ ( and may include these $\pk_j$ generated as part of the output). 
           
       \end{enumerate}
    \end{enumerate}


    \item $\watermark(\mk, 
    \sk, \tau \in \cM_\watermark) \to \sk_\tau$ 
    \begin{enumerate}
        \item parse $\mk = (\mk_1, \cdots, \mk_{|\cS|}); \sk = (\sk_1, \cdots, \sk_{|\cS|}); \pk = (\pk_1, \cdots, \pk_{|\cS|},  \{\pk_\ell\}_{\ell \in \cT_k})$.

        \item Compute $\sk_{i,\tau} \gets \WP_i.\watermark(\mk_i, \sk_i, \tau)$ for all $i \in \cS$. 

        \item output $\tilde{\sk} = (\sk_{1,\tau}, \cdots, \sk_{|\cS|,\tau})$

    \end{enumerate}

    \item $\extract(\xk, \pk, \aux, C) \to \tau \in \cM_\watermark / \vec{\tau} \in \cM_\watermark^{|\cS|}/\bot:$ 
    \begin{enumerate}
        \item parse $\xk = (\xk_1, \cdots, \xk_{|\cS|}, \{\td_\ell\}_{\ell \in \cT_K}); \aux = (\aux_1, \cdots, \aux_{|\cS|}); \pk = (\pk_1, \cdots, \pk_{|\cS|},  \{\pk_\ell\}_{\ell \in \cT_k})$.

        \item Initialize an empty set $\vec{\tau}$.

        \item For each $i \in \cS$:
        \begin{enumerate}
        \item prepare the following circuit $C_i$ using black-box access to $C$.

        \begin{enumerate}
            \item $C_i$ uses $(\{\xk_j\}_{j \in \cS, j\neq i}, \{\td_\ell\}_{\ell \in \cT_K})$  and external queries to simulate security game $G_P^2(\sk, \pk, \aux, C)$ for $C$ \footnote{$C_i$  can also have $\xk_i$ if the watermarkable implementation  $\WP_i$ has a public extraction key. In this case, $C_i$ also does not need to make external queries.}:
            \begin{itemize}
                \item  For any (admissible) queries in the stage-2 game $G_P^2(\sk, \pk, \aux, C)$, if the oracles provided in security game $G_{P_i}(\sk_i, \pk_i, \aux_i, \cdot)$ is required to compute the answer for the query, $C_i$ can will make a query to an external challenger.
                To answer the entire query, $C_i$ may finish the rest of the computation using $(\{\xk_j\}_{j\neq i}, \{\td_\ell\}_{\ell \in \cT_k})$.

        \item If only $\{\xk_j\}_{j\neq i}$ are needed to answer a query, $C_i$ answers it by itself.
            \end{itemize}

            \item $C_i$ records queries from $C$ into a set $\cQ_C$.

            
            \item $C_i$ receives its challenge input $\inp_i$ from the interaction with an external challenger, samples a random string $r$, and prepares a challenge input $\inp$ for $C$ using $\inp_i$ and $r$. 

            \item If there is a query phase after the challenge phase, $C_i$ continues to simulate the oracles required using $\{\xk_j\}_{j\neq i}$  and external queries to a challenger.

            \item After $C$ makes its final output $\out$, $C_i$ computes the function $f_i(\out, r, \cQ_c, \inp)$  where $f_i$ is the reduction function (\Cref{def:reduction_func}) used in watermarking-compatible reduction of $P$ to $P_i$ . Output the result of this computation.
        \end{enumerate}
            
            
                \item compute $\tau_i/\bot \gets \WP_i.\extract(\xk_i, C_i)$.

                \item 
                add $\tau_i$(or $\bot$) to the tuple $\vec{\tau}$, and go to step 2 with $i := i+1$.

        \end{enumerate}

        \item Output
        \begin{itemize}
            \item   $\vec{\tau} = (\tau_1, \cdots, \tau_{|\cS|})$ if  $\exists i,j$ where $\tau_i \neq \tau_j$;

            \item else if $\tau_i = \tau_j = \tau$(or $\bot$) for all $i,j$, output $\tau$ (or $\bot$ resp.).

        \end{itemize}
      
    \end{enumerate}
\end{description}

\ifllncs

We refer the proof of correctness, functionality-preserving, watermarking security and more discussions
for the full version.

\else
\begin{remark}
We also give a more systematic description of Extract is using the language of watermarking-compatible reduction:

The $\extract$ algorithm:
\begin{enumerate}
    \item On input circuit $C$ and $\xk, \pk, \aux$;  for $i \in \cS$:

\begin{enumerate}

    \item Create program $C_i$ ($C_i$ has black-box access to $C$). $C_i$ treats $C$ as the stage-2 adversary $\A^2$ in the stage-2 game $G_P^2(\sk,\pk,\aux, \A_2)$. 
    
    \item $C$ acts as a stage-2 reduction $\A^2_i$ in the watermarking-compatible reduction from primitive P to $P_i$, with the difference that $C$ uses $\{\xk_j\}_{j\neq i}$ (instead of $\{\sk_j\}_{j\neq i}$) and external queries to challenger to simulate the game.
    

       \item compute $\tau/\bot \gets \WP_i.\extract(\xk_i, C_i$); add it to the extracted mark tuple. 
    \end{enumerate}
      \item Output the tuple of marks extracted.
\end{enumerate}
\end{remark}

\begin{remark}
As discussed previously, for the convenience of discussions in some later parts, our $\watermark$ algorithm outputs a marked key instead of circuits since the $\wsecEval$ algorithm itself is public. 

\end{remark}

\begin{remark}[Outputting a tuple of marks]
The $\extract$ algorithm will only output a tuple of marks (with $|\cS|$ number of them at most) when used on adversarial circuits $C$. For honestly generated circuit $C$, the extraction will always output only one marked message.

First this is clearly not a limitation because $|\cS|$ is the number of input primitives and usually a small constant/polynomial.
Second, the need for the extraction algorithm to output a tuple of marks is essential for the composability of watermarking schemes, otherwise we cannot capture the following trivial "attack": the adversary receives an honestly marked output primitive key, which consists of $k$ marked secret keys of the input primitives.
It can simply destroy all the keys except one and this leftover marked key suffices for breaking the security of $P$. Such an "attack" should not be considered as a successful attack because we can still extract a valid watermark from the entire program.
Therefore, we need the extraction to output a tuple of marks so that we can check if the adversary has succeeded at removing/replacing marks from each underlying functionality.


\end{remark}

\begin{remark}
\label{remark: comp_sim_2}
    As previously discussed in \Cref{remark:computational_sim}, when we have a primitive in the set $\cT_K$, 
    the $\extract$ algorithm needs to use the trapdoor(s) $\td$ of the primitives in set $\cT_K$ to simulate a hybrid version of the security game $G_P$. This simulation is supposedly computationally indistinguishable from the original game, by the security of the primitive(s) in $\cT_K$.

  In more detail, $\cT_k$ only contains primitives $P_\ell$ where
  \begin{enumerate}
      \item There exists an efficient simulator algorithm such that the output of $P_\ell.\PubAlg(P_\ell.\pk, \cdot)$ is indistinguishable from the output of the simulator $\Sim(P_\ell.\td, \cdot)$. 
  Their secret key (trapdoor $\td$) will only come up in the security proof for $P$.

  \item In the security proof for $P$, the rest of the hybrid arguments and reduction happen in a hybrid game where we have already invoked the above security for $P_\ell$.
  \end{enumerate}
Due to the above reasons, the $\extract$ algorithm has to simulate hybrid version of game $G_P$ where the security of $P_\ell \in \cT_K$ is already invoked, in order for the reduction (and thus extraction) for the rest of the primitives to go through.

Essentially, we simply need the plain security of primitives in the set $\cT_K$ to realize the watermarking security of $P$, like other primitives in $\Bar{\cS}$.

     Such examples are rare. The only example in this work is a NIZK proof system with a trapdoor used to simulate proofs (\Cref{sec:naor_yung_wm}).
\end{remark}

We now state our main theorem:

\begin{theorem}[Watermarking Composition]
    \label{thm:watermark_compiler}
    Suppose the input primitives in the set $\cS$, $\{P_i\}_{i \in \cS}$ have watermarkable implementations  $\{\WP_i\}_{i \in \cS}$ which satisfy the definitions in  \ref{sec:def_watermarkable_primitive}, the primitives outside the set  $\cS$ have secure constructions (defined by their corresponding correctness and security in \Cref{sec:general_crypto_primitive}),
    and the construction of $P$ is watermarking-compatible by the definitions in \ref{sec:watermarking_compatible_construction}, then the output watermarkable implementation $\WP$ for primitive $P$ will satisfy the properties in Definition \ref{sec:def_watermarkable_primitive}.
\end{theorem}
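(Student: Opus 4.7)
The plan is to verify each property of a watermarkable implementation from \Cref{sec:def_watermarkable_primitive} in turn. Correctness of $\wmsetup/\wsecEval/\wpubEval$ on unmarked keys, functionality-preservation for marked keys, meaningfulness, and security of $\WP$ under $G_P$ are all inherited directly from the analogous properties of $\WP_i$ for $i \in \cS$ and the plain security/correctness of $P_i$ for $i \notin \cS$, together with the assumption that the construction of $P$ from $P_1, \dots, P_k$ is black-box and watermarking-compatible. Likewise, correctness of extraction on an honestly marked key $\tilde\sk = (\sk_{1,\tau}, \dots, \sk_{|\cS|,\tau})$ follows because each simulated game $G_P^2(\sk, \pk, \aux, \tilde\sk)$ run inside each $C_i$ is just an honest execution (by functionality-preservation), and then $\WP_i.\extract(\xk_i, C_i)$ recovers $\tau$ by correctness of extraction for $\WP_i$; since every coordinate recovers the same $\tau$, the final output is $\tau$. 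The extraction-key simulation property is immediate from the definition of $\extract$: by construction, it simulates $G_P^2$ using $\xk = (\xk_1, \dots, \xk_{|\cS|}, \{\td_\ell\}_{\ell \in \cT_K})$, and the simulators provided by each $\WP_i$ (and the NIZK-style simulator for each $P_\ell \in \cT_K$) compose to give an indistinguishable view.

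\paragraph{Unremovability: the main argument.} Fix a $\gamma$-unremovable admissible PPT adversary $\A$ that, on input $\pk$ and oracle access to $\watermark(\mk, \sk, \cdot)$, outputs $(\aux, C^*)$ with queried-mark set $\cQ$, such that $C^*$ is $\gamma$-good in $G_P^2(\sk, \pk, \aux, \cdot)$. Assume for contradiction that $\extract(\xk, \pk, \aux, C^*)$ outputs a tuple $\vec\tau = (\tau_1, \dots, \tau_{|\cS|})$ (or a single mark / $\bot$) with \emph{every} coordinate not in $\cQ$. View the interaction $\A \leftrightarrow \wmsetup \leftrightarrow \watermark(\mk,\sk,\cdot)$ followed by $C^*$'s interaction as exactly the two-stage game $G_P = (G_P^1, G_P^2)$: the first stage includes key generation and marking queries, the second stage is played by $C^*$. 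By \Cref{fact:reduction_property1} applied to the watermarking-compatible reduction underlying the construction, there exists some $i^* \in [k]$ such that the two-stage reduction $\A_{i^*} = (\A_{i^*}^1, \A_{i^*}^2)$ specified in \Cref{sec:watermarking_compatible_construction} has a non-negligible probability over stage-1 of producing an $\A_{i^*}^2$ that is $\gamma_{i^*}$-good in $G_{P_{i^*}}^2$.

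\paragraph{Case $i^* \in \cS$: reducing to unremovability of $\WP_{i^*}$.} I build a reduction $\cB$ against the $\gamma_{i^*}$-unremovability of $\WP_{i^*}$ as follows. $\cB$ receives $\pk_{i^*}$ from the $\WP_{i^*}$ challenger and oracle access to $\WP_{i^*}.\watermark(\mk_{i^*}, \sk_{i^*}, \cdot)$. $\cB$ generates $(\sk_j, \pk_j, \mk_j, \xk_j)$ for the other $j \in \cS$ and $(\pk_\ell, \td_\ell)$ for $\ell \in \cT_K$ on its own, assembles $\pk$, and runs $\A$; each marking query from $\A$ on message $\tau$ is answered by forwarding $\tau$ to the challenger to obtain $\sk_{i^*, \tau}$, and combining with locally-computed $\sk_{j,\tau} \gets \WP_j.\watermark(\mk_j, \sk_j, \tau)$. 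When $\A$ produces $(\aux, C^*)$, $\cB$ constructs exactly the circuit $C_{i^*}$ from the $\extract$ procedure (using $\{\xk_j\}_{j \neq i^*, j\in\cS}$ and $\{\td_\ell\}_{\ell\in\cT_K}$), and outputs $(\aux_{i^*}, C_{i^*})$ as its forgery. The crucial observation is that $C_{i^*}$ executes precisely the stage-2 behavior of $\A_{i^*}^2$ from the watermarking-compatible reduction — this is why the reduction is required to be ``oblivious'' about stage-1 queries — so $C_{i^*}$ is $\gamma_{i^*}$-good as an adversary in $G_{P_{i^*}}^2$. By unremovability of $\WP_{i^*}$, $\WP_{i^*}.\extract(\xk_{i^*}, C_{i^*})$ must return a mark $\tau_{i^*} \in \cQ$, contradicting the assumption that $\tau_{i^*} \notin \cQ$.

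\paragraph{Case $i^* \notin \cS$: reducing to plain security.} Here $P_{i^*}$ is either in $\cT_S \cup \cT_P$ (keys freshly sampled online) or in $\cT_K$ (only used in the proof). I build $\cB$ against the plain security game $G_{P_{i^*}}$. Since no keys of $P_{i^*}$ appear in $\sk, \mk, \xk$, $\cB$ can internally generate all of $\{(\sk_j, \pk_j, \mk_j, \xk_j)\}_{j \in \cS}$ and simulate all marking queries honestly. For primitives in $\cT_K$ the reduction makes use of the trapdoor-simulator switch within a prior hybrid, which is justified by the plain (simulation) security of $P_\ell$. When $\A$ outputs $(\aux, C^*)$, $\cB$ plays the role of $\A_{i^*}^2$ in the watermarking-compatible reduction using $C^*$ as the stage-2 adversary, and returns the reduction function $f_{i^*}(\out, r, \cQ_C, \inp_{i^*})$ to its challenger. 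Again by \Cref{fact:reduction_property1} this wins $G_{P_{i^*}}$ with non-negligible advantage, contradicting plain security of $P_{i^*}$. For $\ell \in \cT_K$ an additional hybrid step is needed to justify that $\extract$'s use of the simulator (rather than the real $P_\ell.\PubAlg$ distribution) does not change whether $C^*$ is $\gamma$-good except negligibly; this is \Cref{remark: comp_sim_2}.

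\paragraph{Main obstacle.} The most delicate step is the correspondence between the circuit $C_i$ built by $\extract$ and the stage-2 reduction $\A_i^2$ demanded by \Cref{fact:reduction_property1}: the former only has the extraction keys $\{\xk_j\}_{j\neq i}$ and trapdoors $\{\td_\ell\}$ rather than the actual secret keys, and its ``external queries'' are routed either to the outside challenger (in the $\cB$ simulation above) or to the surrounding $\WP_i.\extract$ call (in the real $\extract$). Making these two views formally indistinguishable — so that a good circuit in one view is good in the other up to negligible loss — is precisely where the extraction-key simulation property (\Cref{def:extraction_key_simulation}) for each $\WP_j$, $j \neq i$, and the plain simulation security for primitives in $\cT_K$, are invoked. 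Combining this with the two case analyses above yields the desired contradiction and hence unremovability, which completes the proof.
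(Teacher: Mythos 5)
Your unremovability argument matches the paper's proof of \Cref{lem:unremovability_composition}. The paper first splits on whether, during the run of $\extract$, some $C_i$ ($i \in \cS$) is $\gamma_i$-good (Case 1) or none is (Case 2), and invokes \Cref{fact:reduction_property1} inside Case 2 to locate the culprit $i^* \notin \cS$; you instead invoke \Cref{fact:reduction_property1} up front to get an index $i^* \in [k]$ and then split on $i^* \in \cS$ versus $i^* \notin \cS$. These are logically equivalent once the correspondence between the $C_i$ built by $\extract$ (which uses $\{\xk_j\}_{j \neq i}$) and the reduction's $\A_i^2$ (which the paper's description uses $\{\sk_j\}_{j \neq i}$) is justified by the extraction-key simulation property — which you correctly identify as the crux. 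Your handling of the $\cT_K$ hybrids, marking-query forwarding, and the reduction function $f_{i^*}$ all agree with the paper.

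One genuine gap: your correctness-of-extraction argument appeals to ``correctness of extraction for $\WP_i$,'' but that property only guarantees $\WP_i.\extract(\xk_i, T(\sk_{i,\tau},\cdot)) = \tau$ for the \emph{fixed} publicly-known algorithm $T$ defined by $\WP_i$. The circuit $C_i$ that $\extract$ wraps around the honest marked circuit is not of the form $T(\sk_{i,\tau},\cdot)$ — it is a new program that makes external queries and simulates the stage-2 $G_P^2$ game — so correctness of extraction for $\WP_i$ does not apply to it directly. The paper instead views the honest user as a one-marking-query admissible adversary in the $\gamma$-unremovability game of $\WP_i$: the honestly marked circuit is $\gamma$-good for $\gamma \approx 1$ (via a designed $T$ that wins $G_P^2$ using the marked key), and by \Cref{fact:reduction2} the derived $C_i$ is also good in $G_{P_i}^2$, so \emph{unremovability} of $\WP_i$ forces extraction of the queried $\tau$. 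You should replace your appeal to correctness of extraction with this unremovability-based argument. The remaining sub-properties (functionality-preservation, security with $G_P$, meaningfulness, collusion resistance) you treat the same way the paper does.
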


To prove \Cref{thm:watermark_compiler}, we show that all the properties from \Cref{sec:def_watermarkable_primitive} hold.

 \paragraph{Functionality-Preserving}
\begin{lemma}[Functionality-Preserving and Exact Functionality-Preserving]
    Suppose the above construction of $P$ from $P_1, \cdots, P_k$ satisfies correctness of construction, and constructions $\WP_1,\cdots, \WP_{|\cS|}$ satisfy the functionality-preserving properties in Definition \ref{sec:def_watermarkable_primitive}, then $\WP$ satisfies correctness defined by predicate $F_R$.

    If all $\WP_1,\cdots, \WP_{|\cS|}$ satisfy the exact functionality-preserving properties, then $\WP$ satisfies exact functionality-preserving.
\end{lemma}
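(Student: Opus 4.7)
My plan is to handle the \emph{exact} functionality-preserving clause first as a clean warm-up, then adapt the same structure into a hybrid argument for the $F_R$ functionality-preserving clause. For exact functionality-preserving, the key observation is that $\wsecEval(\tilde{\sk}, \pk, \cdot)$ runs precisely the same code as $P.\SecAlg(\sk, \pk, \cdot)$ from the plain construction, except that each invocation of $P_i.\SecAlg(\sk_i, \cdot)$ is replaced by $\WP_i.\wsecEval(\sk_{i,\tau}, \cdot)$ for $i \in \cS$, while the calls to $\PubAlg$, the fresh key generation for $\cT_S$, and the reference-string lookups for $\cT_K$ are unchanged. I would apply the exact functionality-preserving guarantee of each $\WP_i$ to each of the polynomially many subroutine call sites and take a union bound over those sites and over $i \in \cS$; with overwhelming probability every substituted call returns exactly the same output as the unmarked call, so the composed evaluation agrees with the plain one pointwise. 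The analogous argument for $\wpubEval$ is even easier since the public algorithms of $\WP_i$ do not depend on marking.

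For functionality-preserving with respect to $F_R$, the natural plan is an $|\cS|$-step hybrid. Hybrid $0$ runs $\wsecEval/\wpubEval$ using unmarked keys $(\sk_1, \ldots, \sk_{|\cS|})$; by the assumed correctness of the plain construction of $P$ from $P_1, \ldots, P_k$, predicate $F_R$ evaluates to $1$ with overwhelming probability. Hybrid $i$ uses $(\sk_{1,\tau}, \ldots, \sk_{i,\tau}, \sk_{i+1}, \ldots, \sk_{|\cS|})$, so hybrid $|\cS|$ is the fully-marked computation we want to analyze. To bound the gap between hybrids $i{-}1$ and $i$, I will define an induced predicate $F_{R_i'}$ on primitive $P_i$ whose check, given access to a $P_i$ evaluation oracle, internally runs the rest of the composed computation: it samples (or hard-codes) $(\sk_{1,\tau}, \ldots, \sk_{i-1,\tau})$ for the already-replaced positions, $(\sk_{i+1}, \ldots, \sk_{|\cS|})$ for the not-yet-replaced ones, freshly generates keys in $\cT_S \cup \cT_P$ exactly as $\SecAlg/\PubAlg$ do, and samples the $\cT_K$ reference strings, then finally evaluates $F_R$ on the resulting transcript. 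By hybrid $i{-}1$'s bound, plain $P_i$ satisfies $F_{R_i'}$ with overwhelming probability, so the functionality-preserving guarantee of $\WP_i$ (which is universally quantified over predicates that plain $P_i$ satisfies correctness for) yields that $\sk_{i,\tau}$ also satisfies $F_{R_i'}$ up to a negligible gap. Summing the $|\cS|$ transitions delivers the claim.

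The principal obstacle is conceptual rather than computational: I must verify that the induced predicate $F_{R_i'}$ genuinely fits the template of \Cref{def:predicate} for primitive $P_i$. Concretely, the auxiliary inputs $z_1, \ldots, z_k$ of $F_{R_i'}$ must absorb the other primitives' keys and public parameters, and its randomness $r$ (drawn from a suitable product distribution $D_r$) must cover the sampling of the unmarked $\sk_j$, the marking randomness used to produce the $\sk_{j,\tau}$ for $j < i$, the fresh key generation inside $\cT_S \cup \cT_P$, the setup of $\cT_K$, the randomness internal to the $\WP_j.\wsecEval$ calls, and the input sampling inside $F_R$ itself. Once everything is packaged into the predicate's input in this way, the functionality-preserving definition of $\WP_i$ applies directly with no further modification, and the hybrid step goes through with a negligible loss; the final bound is a sum of $|\cS|+1$ negligible terms, which is itself negligible.
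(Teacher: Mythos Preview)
Your proposal is correct, but it takes a notably different route from the paper's own proof. The paper argues at a much higher level: it invokes the ``Correctness of Construction'' property directly, which posits that correctness of $P$ for predicate $F_R$ follows black-boxly from correctness of each $P_i$ for its fixed predicate $F_{R_i}$. Since functionality-preserving of each $\WP_i$ says that marked keys still satisfy correctness for $F_{R_i}$, and since the composed evaluation algorithms of $\WP$ are literally the same black-box construction as $P$ (just run on $\WP_i$'s subroutines instead of $P_i$'s), the same correctness proof carries over verbatim. The primitives outside $\cS$ contribute nothing new because their keys are never marked. That is the entire argument in the paper; there is no hybrid and no induced predicate.

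Your approach, by contrast, re-derives the implication explicitly via an $|\cS|$-step hybrid, manufacturing at each step a fresh predicate $F_{R_i'}$ that packages the rest of the composed computation and then invoking the (universally quantified) functionality-preserving guarantee of $\WP_i$ for that predicate. This is more work but also more self-contained: it does not lean on the somewhat informal ``the proof of correctness follows from'' clause in the Correctness-of-Construction assumption, and it makes the negligible losses at each step visible. The trade-off is exactly the packaging issue you flag in your last paragraph---you must check that $F_{R_i'}$ really fits the shape of \Cref{def:predicate} for $P_i$, with the other primitives' setups, marking randomness, and fresh-key sampling all absorbed into the auxiliary inputs and the $D_r$-randomness. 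Provided that bookkeeping goes through (and it does, given the generality of the predicate template), your argument is sound and arguably more rigorous than the paper's two-sentence sketch. For exact functionality-preserving, your union-bound-over-call-sites argument and the paper's one-line observation coincide.
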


\begin{proof}
 Since $\WP_1,\cdots, \WP_k$ satisfy the functionality-preserving properties in Definition \ref{sec:def_watermarkable_primitive}, the correctness properties of $P_1, \cdots, P_k$ are preserved by their watermarkable implementations. Since $P$'s correctness follows from the correctness properties of $P_1, \cdots, P_k$ and moreover, $\WP$'s evaluation algorithms are constructed the same way as they are in $P$ (even though the $\wmsetup(1^\lambda)$ algorithm differs from $\KeyGen(1^\lambda)$, we can ignore the watermarking-related components $\xk,\mk$ in this context). The constructions for primitives $P_i, i \notin \cS$ also do not affect functionality preserving since we use their plain, unwatermarked keys during evaluations.
 $\WP$'s correctness property thereby follows.

 It is easy to see that if all $\WP_i$ satisfies exact functionality preserving, then $\WP$ satisfies exact functionality preserving. 
\end{proof}

\paragraph{Correctness of Extraction}
Our correctness property is a computational correctness that relies on the unremovability of the input watermarkable implementations. Since our $\extract$ algorithm interacts with any input circuit black-boxly by simulating the security game $G_P$, the only way to extract a watermark is through the circuit's input-output behavior in this game. 
We use the guarantee that "if the circuit can win the game, then we can extract a watermark" to show correctness on a circuit embedded with an honestly watermarked key (which can win the game with probability 1 naturally).
Also note that we do not need our honestly watermarked circuit to be stateful/interactive as in \Cref{remark:extract_honest}.

\begin{lemma}[Correctness of Extraction]
    Suppose the above construction of $P$ from $P_1, \cdots, P_k$ has a watermarking-compatible reduction, and all watermarkable implementation of input primitives in set $\cS$, $\WP_1, \cdots, \WP_{\vert \cS \vert}$ satisfy strong unremovability 
     and the extraction syntax, then the watermarkable implementation of target primitive $\WP$ satisfies correctness of extraction. 
\end{lemma}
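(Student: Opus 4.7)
The plan is to exhibit a publicly known honest evaluator $T$ such that, for every $i \in \cS$, the subprogram $C_i$ assembled inside $\extract$ from $C := T(\tilde\sk, \cdot)$ is a noticeable-advantage stage-$2$ adversary in $G_{P_i}^2$. Strong unremovability of $\WP_i$ then forces $\tau_i = \tau$ for each $i$, after which the output rule of $\extract$ collapses the tuple $(\tau, \ldots, \tau)$ to $\tau$.

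I would take $T(\tilde\sk, \cdot)$ to be the natural honest composed evaluator that runs $\wsecEval(\tilde\sk, \pk, \cdot)$ and $\wpubEval(\pk, \cdot)$ on each interaction in $G_P^2$, possibly augmented so that for every $i \in \cS$ it also realizes the strategy $T_i$ whose existence is guaranteed by Fact~\ref{fact:reduction2}. By functionality-preserving of each $\WP_j$, $j \in \cS$, this $T$ computes $P$'s functionality correctly up to negligible error, hence (by the first clause of Fact~\ref{fact:reduction2}) $C$ wins $G_P^2$ with probability $1 - \negl(\lambda)$.

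Next, fix $i \in \cS$ and consider $C_i$. By construction, $C_i$ wraps $C$ with the stage-$2$ portion of the watermarking-compatible reduction from $P$ to $P_i$, simulating $G_P^2$ for $C$ using $\{\xk_j\}_{j \ne i}$ and $\{\td_\ell\}_{\ell \in \cT_K}$, and routing $P_i$-specific oracle calls to an external $G_{P_i}^2$ challenger. The extraction-key simulation property (Definition~\ref{def:extraction_key_simulation}) of each $\WP_j$, $j \ne i$, together with the simulatability of the $\cT_K$ primitives, guarantees that $C$'s view inside this simulation is (at worst computationally) indistinguishable from its view in a real $G_P^2$ execution, so $C$ still wins with probability $1 - \negl(\lambda)$ there. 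The second clause of Fact~\ref{fact:reduction2} then yields that $C_i$ wins $G_{P_i}^2$ with some non-negligible advantage $\gamma_i$. Packaging the above into a stage-$1$ unremovability adversary $\mathcal{B}_i$ against $\WP_i$, which samples every other component honestly, queries the marking oracle once on $\tau$ so that $\cQ_i = \{\tau\}$, and outputs $C_i$ together with the appropriate $\aux$, strong unremovability of $\WP_i$ gives $\WP_i.\extract(\xk_i, C_i) \in \{\tau\}$ except with negligible probability, so $\tau_i = \tau$. A union bound over the (at most polynomial-size) set $\cS$ concludes the argument.

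The main obstacle is that Fact~\ref{fact:reduction_property1} alone only guarantees a successful $C_i$ for \emph{some} $i \in \cS$, whereas correctness-of-extraction requires it for every $i$ simultaneously. This is precisely what the stronger second clause of Fact~\ref{fact:reduction2} demands, and when instantiating the lemma one must verify that the natural honest $T$ really does drive every individual reduction $\A_i$ (for instance, in the CCA-SKE-from-weak-PRF-and-MAC example, $T$ must, on its own, generate decryption queries shaped like forgeries so that the MAC reduction can succeed). A subsidiary issue is that, in the presence of $\cT_K$ primitives, the extraction replaces real $\PubAlg$ outputs by simulator outputs, so the indistinguishability between $C$'s real and simulated views is only computational; one must argue, via a standard hybrid, that $C$'s winning probability drops by at most a negligible amount across this replacement.
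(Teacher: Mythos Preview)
Your approach differs from the paper's in one key respect: you attempt to show that \emph{every} $C_i$ assembled inside $\extract$ is a noticeable-advantage $G_{P_i}^2$ adversary, so that all $\tau_i$ equal $\tau$ and the tuple collapses to a single mark. The paper does not do this. It fixes a single index $i\in\cS$, takes $T$ to be the strategy that uses $\sk_{i,\tau}$ to win $G_P$ (the first clause of Fact~\ref{fact:reduction2}), and argues only that at step $i$ of $\extract$ the resulting $C_i$ is $\gamma_i$-good (the second clause of Fact~\ref{fact:reduction2}, which says $T_i$ feeds $\A_i$, not $\A_j$ for $j\neq i$). Strong unremovability of $\WP_i$ then yields $\tau_i=\tau$, and the paper concludes that ``extracting one valid watermark suffices''---i.e.\ it reads correctness as the output containing $\tau$, consistent with how $\extract(\ldots)\notin\cQ$ is interpreted in Definition~\ref{def:unremovability_with_security_game}.

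The obstacle you flag in your last paragraph is real and is precisely what the paper's route sidesteps. Fact~\ref{fact:reduction2} supplies, for each $i$ separately, a $T_i$ that drives the $i$-th reduction; it nowhere promises a single $T$ that simultaneously drives all of them. Your CCA--SKE example is on point: a $T$ that merely decrypts with the PRF key never issues a forgery-shaped decryption query, so the MAC-side circuit $C_{\mac}$ has no forgery to output and is not $\gamma_{\mac}$-good, and no appeal to extraction-key simulation repairs this. The augmentation you suggest (have $T$ also realize every $T_i$) is an ad~hoc patch outside what the abstract framework guarantees, so at the level of generality of the lemma your step ``the second clause of Fact~\ref{fact:reduction2} then yields that $C_i$ wins $G_{P_i}^2$'' does not close for all $i$. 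If you insist on all coordinates equalling $\tau$ you would need to strengthen Fact~\ref{fact:reduction2}; the paper instead weakens the target to a single coordinate.
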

\begin{proof}
   Given an honestly marked key $\sk_\tau = (\sk_{1,\tau}, \cdots, \sk_{|\cS|,\tau}) \gets \WP.\watermark(\mk, \sk)$:
we can observe that there exists a public algorithm $T$ that once given $\sk_\tau$, can win the  game $G_P(\sk,\pk, \cdot)$ with probability $(1-\negl(\lambda))$. 
We can sample a random $i \gets \cS$ and design a circuit $T$ to work as follows: $T$ on any challenge input, uses a strategy associated with the key $\sk_i$ to compute the challenge given in game $G_{P}^2$.

Take the concrete example of the watermarkable implementation $\WP$ for a CCA2 encryption scheme, used in our technical overview section (\Cref{sec:tech_overview}). The watermarked key is a concatenation of a watermarked decryption key/weak PRF key $\sk_{1,\tau}$ and a watermarked MAC signing key $\sk_{2, \tau}$. Then the algorithm $T$ can choose the weak PRF key as its functionality: on any challenge ciphertext, 
$T$ simply decrypts the ciphertext using the watermarked key $\sk_{1,\tau}$ and outputs the correct answer.

  We then look into what happens when we compute $\extract(\xk, C)$: $C$ is treated as a stage-2 adversary in the watermarking-compatible reduction from the security of $P$ to the security of $P_i$ and $C_i$ with black box access to C is a stage-2 reduction algorithm. 
   $C$, for each $i \in \cS$. 
   We input $T(\sk_\tau, \cdot)$ into the extraction algorithm.
By the properties of  watermarking-compatible reduction, such a $T(\sk_\tau, \cdot)$  will be used to build $C_i$ that is $\gamma_i$-good for some noticeable $\gamma_i$ for each $i \in \cS$. When it comes to the index $i$ that we choose to be our functionality in $T$, then the created circuit $C_i$ using $T$ will naturally help win the security game $G_{P_i}^2$.

If we view the honest user as an adversary that makes one marking query on some message $\tau$, then by the unremovability security of $\WP_i$, we must have $\Pr[\WP_i.\extract(\xk_i, \pk_i, \aux, C_i = \tau] \geq  1-\negl(\lambda)$ for all $i$. Therefore $\Pr[\WP.\extract(\xk, \pk, \aux, (\sk_\tau, \cdot)) = \tau] \geq  1-\negl(\lambda)$.
Since extracting one valid watermark suffices, we can conclude the correctness of extraction.

Going back to the CCA2 encryption example: 
the circuit $T$ can answer any challenge by decrypting using the PRF key; therefore, when it is used by the circuit $C_1$ created in $\extract$ algorithm, its output helps $C_1$ distinguish between pseudorandom input and real random input. By the unremovability security of the watermarkable weak PRF, we must be able to extract the queried watermark $\tau$ from $C_1$.


    

\end{proof}

\paragraph{Security with $G_P$} We show that the (plain) security of $P$ with game $G_P$ holds.

\begin{lemma}[Security with $G_P$]
     Suppose the above construction of $P$ from $P_1, \cdots, P_k$ has a provable black-box security, and watermarkable implementations $\WP_1,\cdots, \WP_{|\cS|}$ as well as the (unwatermarked) constructions for $\{P_i\}_{i \notin \cS}$ satisfy the security with the security defined by game $G_{P_i}, i\in [k]$ respectively, then $\WP$ satisfies security defined by game $G_P$.
\end{lemma}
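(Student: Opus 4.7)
The plan is a direct reduction that piggybacks on the watermarking-compatible reduction from the security of the construction of $P$ to the securities of $P_1,\ldots,P_k$. Since $\wmsetup$, $\wsecEval$, and $\wpubEval$ are defined to run exactly the algorithms of the plain construction of $P$ from $P_1,\ldots,P_k$, except that each invocation of $P_i.\SecAlg, P_i.\PubAlg$ for $i\in\cS$ is replaced by $\WP_i.\wsecEval, \WP_i.\wpubEval$ (and the $\mk_i, \xk_i$ components of the keys are simply ignored inside $G_P$), an execution of $G_P$ with $\WP$ is indistinguishable from the plain $G_P$ with $P$ provided the internal behavior of each $\WP_i$ is indistinguishable from that of $P_i$ when only the $\KeyGen, \SecAlg, \PubAlg$ interfaces are exposed.

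Concretely, suppose for contradiction that there exists an admissible PPT adversary $\A$ and non-negligible $\gamma$ with
\[
\Pr[G_P(1^\lambda,\A) = 1 \text{ against } \WP] \geq \eta + \gamma.
\]
First I would do a hybrid argument over $i\in\cS$: in hybrid $H_i$, the first $i$ components of $\WP$'s key use $\WP_i.\wmsetup, \WP_i.\wsecEval, \WP_i.\wpubEval$, while the remaining components use the plain $P_j.\KeyGen,P_j.\SecAlg,P_j.\PubAlg$. The endpoints are the real $G_P$ game against $\WP$ and the game against the plain construction of $P$. Any non-negligible gap between consecutive hybrids can be translated (via the black-box reduction from the construction's security proof, applied only to the $i$-th component, with all other components simulated using freshly sampled keys) into an adversary with non-negligible advantage in $G_{P_i}$ distinguishing $\WP_i$ from $P_i$ in the $G_{P_i}$ game. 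Since $\WP_i$ satisfies the same security game $G_{P_i}$ as $P_i$ (by the ``Security with Security Game $G_P$'' clause of the watermarkable implementation definition in \S\ref{sec:def_watermarkable_primitive}), each hybrid gap is negligible, so $\A$ must win against the plain $P$ with advantage $\geq \eta + \gamma - \negl(\lambda)$.

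Now applying the watermarking-compatible reduction of $P$'s construction (Property~2 of \S\ref{sec:watermarking_compatible_construction} and \Cref{fact:reduction_property1}), we extract adversaries $\A_1,\ldots,\A_k$ such that at least one $\A_i$ wins $G_{P_i}$ with non-negligible advantage. For $i\in\cS$ this contradicts the security of $\WP_i$ in $G_{P_i}$; for $i\notin\cS$ this contradicts the security of the plain $P_i$. Either way we reach a contradiction, completing the proof.

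The main obstacle is ensuring that the hybrid step is clean: we must verify that the reduction used in the construction's security proof is genuinely black-box in the interfaces of $P_i$ (and not, e.g., in $\sk_i$ directly), so that swapping $P_i$ for $\WP_i$ does not disturb the rest of the simulation. This follows from our standing assumption that the construction of $P$ from $P_1,\ldots,P_k$ is black-box in the sense of \S\ref{sec:watermarking_compatible_construction}, and that $\WP_i$'s public interfaces $(\wmsetup,\wsecEval,\wpubEval)$ are drop-in replacements for $(P_i.\KeyGen,P_i.\SecAlg,P_i.\PubAlg)$, with the watermarking-specific components $(\mk_i,\xk_i)$ appearing only in $\watermark/\extract$ and thus invisible to $G_P$.
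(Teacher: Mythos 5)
The paper's proof is a one-paragraph observation: since $\WP$'s evaluation algorithms are constructed from $\WP_i$'s $(\wmsetup,\wsecEval,\wpubEval)$ interfaces exactly as $P$ is constructed from $P_i$'s $(\KeyGen,\SecAlg,\PubAlg)$, and since each $\WP_i$ (for $i\in\cS$) and each plain $P_j$ (for $j\notin\cS$) is assumed to satisfy the same game $G_{P_i}$, the black-box security reduction for $P$ applies verbatim to $\WP$ once the watermarking-only components $\mk_i,\xk_i$ are stripped. No hybrid between $\WP$ and plain $P$ is needed, and you should not introduce one.

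Your hybrid step has a genuine gap. You claim that a non-negligible gap between consecutive hybrids $H_{i-1}$ and $H_i$ (which differ by swapping $P_i$'s algorithms for $\WP_i$'s) ``can be translated\ldots into an adversary with non-negligible advantage in $G_{P_i}$ distinguishing $\WP_i$ from $P_i$ in the $G_{P_i}$ game.'' But $G_{P_i}$ is a fixed security game, not a distinguishing game between two implementations. The ``Security with Security Game $G_P$'' clause of \S\ref{sec:def_watermarkable_primitive} only asserts that $\WP_i$ itself is secure with respect to $G_{P_i}$; it gives no statement of the form ``$\WP_i$ and $P_i$ are indistinguishable.'' Without such a statement there is no reduction from a hybrid gap to anything assumed, so the hybrid chain does not close. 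A second, independent problem is that your hybrid endpoint is the plain construction of $P$, and you then apply the watermarking-compatible reduction there to conclude ``for $i\in\cS$ this contradicts the security of $\WP_i$.'' But the reduction applied to the plain $P$ yields an adversary against the \emph{plain} $P_i$, not against $\WP_i$; and for $i\in\cS$ the lemma only assumes $\WP_i$ secure, not $P_i$. Dropping the hybrid and applying the black-box reduction directly to $\WP$ (treating $\WP_i$ as the implementation of $P_i$, which is exactly what the construction in \S\ref{sec:watermark_composition} does) removes both issues.
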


\begin{proof}
    Since $\WP_1,\cdots, \WP_k$ satisfy the security property with game $G_{P_i}$ in Definition \ref{sec:def_watermarkable_primitive}, the security properties of $P_1, \cdots, P_k$ are preserved by their watermarkable implementations. $P$'s security with respect to $G_P$ follows from the security properties of $P_1, \cdots, P_k$ with respect to $\{G_{P_i}\}_{i \in [k]}$. Moreover, $\WP$'s evaluation algorithms are constructed the same way as they are in $P$ and even though the $\wmsetup(1^\lambda)$ algorithm differs from $\KeyGen(1^\lambda)$, we can ignore the watermarking-related components $\xk,\mk$ in this context. Therefore, $\WP$'s security property follows from the security of $\WP_1, \cdots, \WP_k$ as $P$'s security property follows from the security of $P_1, \cdots, P_k$.
\end{proof}

\paragraph{Unremovability}
We now show that the construction satisfies unremovability if all building blocks satisfy unremovability (or plain security, for unwatermarked building blocks) and the properties on extraction key/algorithm defined in \cref{sec:def_watermarkable_primitive}.

\jiahui{1. change stage 1 to use watermarked key to answer queries
2. explain more about reductions for primitives in set $T_K$}

\begin{lemma}[Unremovability]
\label{lem:unremovability_composition}
    Suppose the above construction of $P$ from $P_1, \cdots, P_k$ has a watermarking-compatible reduction,   the watermarkable implementations  $\WP_1, \cdots, \WP_{|\cS|}$ satisfy strong unremovability (\Cref{def: strong_unremovability}), the constructions for $\{P_{j}\}_{j \notin \cS}$ satisfy  security defined by their game $\{G_{P_j}\}_{j \notin \cS}$ respectively,
    and the extraction algorithms of $\WP_1, \cdots, \WP_{|\cS|}$ satisfy the extraction syntax in \Cref{frame:extract_syntax},  then the watermarkable implementation $\WP$ satisfies strong unremovability. 
\end{lemma}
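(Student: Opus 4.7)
The plan is to argue by contrapositive: assume a PPT adversary $\A$ breaks strong unremovability of $\WP$ with non-negligible probability $\epsilon$ for some non-negligible $\gamma$, meaning $\A$ produces $(\aux, C^*)$ such that $C^*$ is $\gamma$-good in $G_P^2(\sk, \pk, \aux, \cdot)$ and the extracted tuple $\vec{\tau} = (\tau_1, \ldots, \tau_{|\cS|})$ satisfies $\tau_i \notin \cQ$ for every $i \in \cS$. The first step is to invoke \Cref{fact:reduction_property1}: because the construction of $P$ from $P_1, \ldots, P_k$ is watermarking-compatible, there must exist an index $i^* \in [k]$ and non-negligible $\gamma_{i^*}, \epsilon_{i^*}$ such that the circuit $C_{i^*}$ obtained by applying the reduction strategy to $C^*$ is $\gamma_{i^*}$-good in $G_{P_{i^*}}^2$ with probability at least $\epsilon_{i^*}$. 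Since there are only $k$ primitives, a standard averaging argument fixes such an $i^*$ up to a $1/k$ loss.

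The proof then splits into two cases based on whether $i^* \in \cS$ or $i^* \notin \cS$. In the first case, I construct a reduction $\cB_{i^*}$ that attacks $\gamma_{i^*}$-unremovability of $\WP_{i^*}$: $\cB_{i^*}$ receives $\pk_{i^*}$ and marking-oracle access from the $\WP_{i^*}$ challenger, samples $(\sk_j, \pk_j, \xk_j, \mk_j) \gets \WP_j.\wmsetup(1^\lambda)$ for all $j \in \cS \setminus \{i^*\}$, samples $(\pk_\ell, \td_\ell) \gets P_\ell.\KeyGen(1^\lambda)$ for $\ell \in \cT_K$, and runs $\A$ on the assembled $\pk$. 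Marking queries on $\tau$ are answered by forwarding $\tau$ to $\cB_{i^*}$'s own oracle to obtain $\sk_{i^*,\tau}$ and locally computing $\sk_{j,\tau} \gets \WP_j.\watermark(\mk_j,\sk_j,\tau)$ for $j \neq i^*$; this is a perfect simulation of $\WP.\watermark$. When $\A$ outputs $(\aux, C^*)$, $\cB_{i^*}$ assembles the circuit $C_{i^*}$ using the exact same procedure as the $\extract$ algorithm (which needs only $\{\xk_j\}_{j \neq i^*}$, $\{\td_\ell\}_\ell$, and black-box access to $C^*$) and forwards $(\aux_{i^*}, C_{i^*})$ to the $\WP_{i^*}$ challenger. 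By \Cref{fact:reduction_property1}, $C_{i^*}$ is $\gamma_{i^*}$-good in $G_{P_{i^*}}^2$, so it is an admissible output. Since $\A$'s success implies $\WP_{i^*}.\extract(\xk_{i^*}, C_{i^*}) \notin \cQ$ (this is exactly the condition $\tau_{i^*} \notin \cQ$), $\cB_{i^*}$ breaks $\gamma_{i^*}$-unremovability of $\WP_{i^*}$, contradicting strong unremovability.

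In the second case ($i^* \notin \cS$), the reduction $\cB_{i^*}$ instead plays directly as a stage-2 adversary against the plain security game $G_{P_{i^*}}$: it generates \emph{all} the watermarking keys for the $\cS$-indexed primitives itself, handles $\A$'s marking queries locally, and constructs $C_{i^*}$ exactly as above, which via the reduction function $f_{i^*}$ breaks $G_{P_{i^*}}$. For $i^* \in \cT_K$, one additional step is needed: before extraction, the game is first replaced by a hybrid in which outputs of $P_{i^*}.\PubAlg$ are replaced by those of the simulator $\Sim(\td_{i^*},\cdot)$; the $\gamma$-goodness of $C^*$ is preserved up to a negligible factor by the simulation-indistinguishability of $P_{i^*}$, after which the reduction to the plain security of some other primitive proceeds as in the previous case.

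The main obstacle is bookkeeping: I must verify that the circuit $C_{i^*}$ constructed inside $\cB_{i^*}$ is functionally identical to the one constructed inside $\WP.\extract$ when run on the same $C^*$. This holds because both use the same black-box wrapper with the same auxiliary keys $\{\xk_j\}_{j \neq i^*}$ and trapdoors $\{\td_\ell\}_\ell$, and in both cases the external queries made by $C_{i^*}$ are answered by a game-$G_{P_{i^*}}^2$ simulator—in the extract procedure via $\WP_{i^*}.\extract(\xk_{i^*}, \cdot)$ (which by the extraction-syntax property does exactly this simulation), and in the reduction via the real $\WP_{i^*}$ challenger. The extraction-key simulation property (\Cref{def:extraction_key_simulation}) guarantees these two views are indistinguishable to $C_{i^*}$, so $C_{i^*}$'s $\gamma_{i^*}$-goodness transfers from one setting to the other up to a negligible loss, closing the argument.
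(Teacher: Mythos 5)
Your proof is correct and follows the same architecture as the paper's own argument: use the watermarking-compatible reduction property to locate an index $i^*$ for which the derived circuit $C_{i^*}$ is good, then split on $i^*\in\cS$ (reduce to $\gamma_{i^*}$-unremovability of $\WP_{i^*}$, forwarding $\A$'s marking queries and locally marking the other keys) versus $i^*\notin\cS$ (reduce to the plain security game $G_{P_{i^*}}$), with a separate hybrid step for $\cT_K$ primitives via the simulator/trapdoor. One small remark on your last paragraph: because you build the reduction's $C_{i^*}$ with $\{\xk_j\}_{j\neq i^*}$—exactly as $\WP.\extract$ does—the two circuits are literally the same object, so the extraction-failure event carries over syntactically without appealing to any indistinguishability; the place where the extraction-key simulation property is actually needed is in transferring $\gamma_{i^*}$-goodness from the $\sk_j$-based reduction circuit certified by \Cref{fact:reduction_property1} to the $\xk_j$-based circuit you submit, not in comparing $\WP_{i^*}.\extract$'s internal simulation with the external challenger as you phrase it.
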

\begin{proof}
    If there exists an adversary that breaks $\gamma$-unremovability with game $G_P$ as defined in \Cref{def:unremovability_with_security_game}, then we  We have $\Pr[\extract(\xk, C) \notin \cQ \wedge C \text{ is $\gamma$-good}] \geq \epsilon$ for some non-negligible $\epsilon$
    where $C$ is $\gamma$-good in the sense that $\Pr[G_P^2(\sk, \pk, \aux, C) = 1] \geq \eta+\gamma$. Here the overall winning probability of $\A$ is 
    taken over the randomness used in $\KeyGen$ and the unremovability game; the probability for $G_P^2(\sk, \pk, \aux, C) = 1$ is taken over the randomness used in the stage-2 game $G_P^2$.

   By the design of our $\extract$ algorithm, it must be the case that, for all $i \in \cS$, $\WP_i.\Pr[\extract(\xk_i, C_i) \notin Q] \geq \epsilon$, where $C_i$ is the circuit made black-boxly from $C$ in $\WP.\extract$ algorithm for index $i$.

    Now we divide our analysis into cases.  First, we consider that there is no input primitive in the set $\cT_k$, i.e. no trapdoor $\td$ used in the extraction procedure and the extraction key can perfectly simulate the security game for $G_P$.
    
    For any  $\A$  producing a circuit $C$ such that $\Pr[\extract(\xk, \pk, \aux, C) \notin \cQ \wedge C \text{ is $\gamma$-good}] \geq \epsilon$ (where the probability is taken over the randomness in $\KeyGen$ and randomness used throughout the game), one of the following cases must hold:

   %
    \begin{itemize}
        \item \textbf{Case 1:
       For some non-negligible probability $\epsilon_i$, during the execution of $\extract(\xk, \pk, \aux, C)$, there exists some $\istar \in \cS$ such that $C_{\istar}$ is a $\gamma_\istar$-good program in $G_{P_i}^2(\sk_i, \pk_i, \aux_i, C_i)$ for some non-negligible $\gamma_\istar$.}




     Then, there exists a reduction to break the $\gamma_\istar$-unremovability of $\WP_{i^*}$ for some non-negligible $\gamma_\istar$, using $\A$.
    Let $\A_\istar$ be the reduction and adversary in $\gamma_\istar$-unremovability game of $\WP_\istar$.
    \begin{itemize}
        \item $\A_\istar$ receives $\pk_\istar$ from the challenger and samples $\{\sk_i, \pk_i, \xk_i, \mk_i\}_{i \in \cS, i \neq \istar}$ from $\WP_\istar.\wmsetup$. It gives 
        $\{ \pk_i\}_{i \in \cS, i \neq \istar}$ to $\A$.
        
        \item When $\A$ makes a marking query on some symbol $\tau$, $\A_\istar$ responds as follows:
        \begin{itemize}
            \item compute $\sk_{i,\tau} \gets \watermark(\mk_i, \sk_i, \tau)$ for $i \in \cS, i \neq \istar$.

            \item query the $\WP_{\istar}$ challenger on oracle $\watermark(\mk_\istar, \sk_\istar, \tau)$ to obtain $\sk_{\istar, \tau}$.

            \item Let $\sk_\tau  = \{\sk_{i, \tau}\}_{i \in \cS}$; send  $\sk_\tau$ to $\A$.
        \end{itemize}


        \item Next, $\A$ outputs circuit polynomial-size $C$.

        \item $\A_\istar$ creates the following circuit $C_\istar'$ using $C$  as a black-box:
        \begin{itemize}
        \item $C_\istar'$ is hardcoded with $\{\sk_i\}_{i \in \cS, i \neq \istar}, \{\pk_i\}_{i \in \cS}$.
        
        
            \item  When $C$ makes a query to oracles in game $G_P$, 
        $C_\istar'$ responds as follows:
        if $\sk_\istar$ is required to answer the query, $C_\istar$ queries the external challenger on the oracles provided in game $G_{P_\istar}$; otherwise, $C_\istar'$ answers the query using $\{\sk_i\}_{i \in \cS, i \neq \istar}$.

        \item $C_\istar'$ records all queries from $C$, denoted as $\cQ_C$.
      
        \item In the challenge phase, $C_\istar'$ receives challenge input $\inp_i$ from the interaction with an external challenger, samples a random string $r$, and prepares a challenge input $\inp$ for $C$ using $\inp_i$ and $r$. 
        
        \item When $C$ outputs its final answer $\out$, $C_\istar$ computes the function $f_\istar(\out, \cQ_C, \inp)$ according to the reduction function $f_\istar$ in the watermarking-compatible reduction from $P$ to $P_\istar$.
        \end{itemize}
\item In short, $C$ is treated as a stage adversary $\A_2$ in security game $G_P$; circuit $C_\istar$ is a stage-2 reduction algorithm in the watermarking-compatible reduction from target primitive $P$ to input primitive $P_\istar$, with black box access to $C$. 
  \end{itemize}

  By the design of the $\WP.\extract$ algorithm, it using extraction keys $\{\xk_i\}_{i \in \cS}$, will treat the input circuit as  an adversary in stage-2 game $G_P^2$ and simulate $G_P^2$ for it.
 
Given any adversarial circuit $C$ that is a $\gamma$-good adversary in the security game $G_P$, following our extraction algorithm syntax requirement, the extraction algorithm running at step $i = \istar$ will create a circuit $C_{\istar}$ that has exactly the same functionality as the circuit $C_\istar'$ created in the above reduction \footnote{Note that the external queries made by $C_\istar$ will be answered by $\WP_\istar.\extract(\xk_\istar, C_\istar)$ during its execution, where by our requirement on the extraction syntax, $\WP_\istar.\extract(\xk_\istar, C_\istar)$ will simulate the game $G_{P_\istar}$ for $C_\istar$, answering its queries. }. Therefore, if $C_\istar$ created by $\extract$ is a $\gamma_\istar$-good in breaking $G_{P_\istar}$, then so is $C_\istar'$ created by the above procedure; since we have $\WP_\istar.\Pr[\extract(\xk_\istar, C_\istar) \notin Q] \geq \epsilon$ and the set $Q$ is the same for $\A$ and $\A_\istar$, then we also have $\WP_i.\Pr[\extract(\xk_i, C_\istar') \notin Q] \geq \epsilon$.

\vspace{0.15in}

\item \textbf{Case 2: With overwhelming probability, during the execution of $\extract(\xk, \pk, \aux, C)$, there exists \emph{no} $i \in \cS$ such that $C_i$ is a $\gamma_i$-good adversary in $G_{P_i}^2(\sk_i, \pk_i, \aux_i, C_i)$ for some non-negligible $\gamma_i$.}

In this case, it must be that there exists some $\istar \notin \cS$ such that we can use the unremovability adversary $\A$ for $\WP$ to break the \emph{security} of $P_\istar$, i.e. winning game $G_{P_i}, i \notin \cS$ with some non-negligible advantage.

By the properties 
of watermarking-compatible reduction, if (with non-negligible probability), the circuit $C$ produced by $\A$  is a $\gamma$-good adversary in $G_P$, then there must exist some $i \in [k]$ such that using the watermarking-compatible reduction, one can create a $\gamma_i$-good adversary for some non-negligible $\gamma_i$. By the above analysis, $C_i$ supposedly performs exactly the reduction for $P_i$ and if there is no good $C_i$ for $i \in \cS$, then $C$ can only be used to build a reduction to break the security of some $P_i, i \notin \cS$.

The reduction is as follows.
Let $\A_\istar$ be the reduction and adversary in the security game $G_{P_\istar}$.
    \begin{itemize}
        \item $\A_\istar$ receives $\pk_\istar$ from the challenger and samples $\{\sk_i, \pk_i, \xk_i, \mk_i\}_{i \in \cS}$ from $\WP_\istar.\wmsetup$. It gives 
        $\{ \pk_i\}_{i \in \cS}$ to $\A$.
        
        \item When $\A$ makes a marking query on some symbol $\tau$, $\A_\istar$ answers the query on its own because it has all the $\{\sk_i,  \mk_i\}_{i \in \cS}$.

        \item Besides answering marking queries, $\A_\istar$ will treat $\A$ as a stage-1 adversary in the watermarking-compatible reduction Type 2 (\Cref{sec:watermarking_compatible_construction}):
        when $\A$ makes a query to oracles in game $G_P$, 
        $\A_\istar$ responds as follows:
        if $\sk_\istar$ is required to answer the query, $\A_\istar$ queries the challenger in $G_{P_\istar}$ security game; otherwise, $\A_\istar$ answers the query using $\{\sk_i\}_{i \in \cS}$. 

        For the secret keys of $\{\WP_i\}_{i \notin \cS, i \neq \istar}$, they are all sampled freshly upon every query of $\wsecEval(\sk, \pk,\cdot)$ (or sampled freshly in $\wpubEval(\pk, \cdot)$ which $\A$ can do it on its own).  

        \item Next, $\A$ outputs circuit polynomial-size $C$.

        \item $\A_\istar$ creates the following circuit $C^*$ using $C$  as a black-box:
        \begin{itemize}
        \item $C^*$ is hardcoded with $\{\sk_i, \pk_i\}_{i \in \cS}$.
        
        
            \item  When $C$ makes a query to oracles in game $G_P$, 
        $C^*$ responds as follows:
        if $\sk_\istar$ is required to answer the query, $C^*$ queries the external challenger on the oracles provided in game $G_{P_\istar}$; otherwise, $C^*$ answers the query using $\{\sk_i\}_{i \in \cS}$.

        \item $C^*$ records all queries from $C$, denoted as $\cQ_C$.
      
        \item In the challenge phase, $C^*$ receives challenge input $\inp_i$ from the interaction with an external challenger, samples a random string $r$, and prepares a challenge input $\inp$ for $C$ using $\inp_i$ and $r$. 
        
        \item When $C$ outputs its final answer $\out$, $C^*$ computes the function $f_\istar(\out, \cQ_C, \inp)$ according to the reduction function $f_\istar$ in the watermarking-compatible reduction from $P$ to $P_\istar$.
        \end{itemize}
\item In short, $C$ is treated as a stage-2 adversary $\A_2$ in security game $G_P$; circuit $C^*$ is a stage-2 reduction algorithm in the watermarking-compatible reduction Type 2 from target primitive $P$ to input primitive $P_\istar$, with black box access to $C$. By the property of the reduction, $C^*$ should win the security game $G_{P_\istar}$ with non-negligible advantage.
  \end{itemize}


Finally, we discuss the rare case when the extraction key simulates the security game for $C$ statistically/computationally close to a real game.

In the setting where $(\{\xk_i\}_{i \in \cS}, \{\td_\ell\}_{\ell \in \cT_K})$ can only simulate $G_P^2$ for $C$ statistically close by some negligible distance, then if $C$ is a $\gamma$-good circuit in a real $G_P^2(\sk,\pk, \aux, \cdot)$ game, then 
$C$ is a $(\gamma-\negl(\lambda))$-good
circuit in the game simulated by the $\extract$ algorithm and thus by the reduction.  Afterwards, we can go back to use $C$ in the above Case 1 and Case 2.

Coming to the computational setting, we will see that they will also be transformed into Case 1 or Case 2.

Note that within the scope of this work, this case corresponds precisely to having primitives in the set $\cT_k$ and the computational property comes from primitives in the set $\cT_k$. Recall that $\cT_k$ only contains the primitives $P_\ell$ where there exists a simulator algorithm such that the output of $P_\ell.\PubAlg(P_\ell.\pk, \cdot)$ is indistinguishable from the output of the simulator $\Sim(P_\ell.\td, \cdot)$ and in the security proof for $P$, the rest of the hybrid arguments and reduction happen in a hybrid world where we have already invoked the security for $P_\ell$. (see discussions in \Cref{remark:set_t_k_nizk}, \Cref{def:extraction_key_simulation} and \Cref{remark:computational_sim}, \Cref{remark: comp_sim_2}).

We provide one such example where 
we need to use a NIZK proof and provide simulated proofs to the adversarial circuits in the extraction algorithm. See the example in \Cref{sec:naor_yung_wm} for details.


First consider the case where we only have one primitive $P_\ell$ in set $\cT_K$ in our construction (this is the case in our example \Cref{sec:naor_yung_wm}).
By our premise, using $(\{\xk_i\}_{i \in \cS}, \td_\ell\})$ where $\td_\ell \gets P_\ell.\KeyGen(1^\lambda)$, one can simulate $G_P^2$ for $C$ computationally close by some negligible amount. If with some non-negligible probability $\epsilon$, we have 
 $C$ is a $\gamma$-good circuit in a real $G_P^2(\sk,\pk, \aux, \cdot)$ game. Then with $(\epsilon-\negl(\lambda))$ probability,
$C$ is a $(\gamma-\negl(\lambda))$-good
circuit in the game simulated by the $\extract$ algorithm and the reductions, otherwise there exists a PPT distinguisher that can use $C$ to distinguish between the real game and simulated one to break the security of a primitive in set $\cT_k$: a reduction $\cB$ can samples all $\sk,\pk,\xk, \mk$ on its own and receive the public parameters from the challenger of some $G_{P_\ell}$. The rest of simulation is the same as in Case 2 of the above analysis. At the end, $\cB$ tests if the circuit output by $\A$ is a $\gamma$-good one, if yes, output guess "real view; else output guess "simulated view".

If there are more than one$P_\ell \in \cT_K$, we can then use a hybrid to apply the above to each of them, following the hybrid order in the security proof for the plain security of $P$.

Therefore, by the security of $P_\ell$,  we have that $C$ is a $(\gamma-\negl(\lambda))$-good
circuit in the game simulated by the $\extract$ algorithm.
Since the drop in $C$'s advantage is negligible, we can go back to Case 1 and Case 2. 





  \jiahui{Be careful with probability distribution discussed}

 \end{itemize}
  

\end{proof}

\fi

\ifllncs
\else

\ifllncs
\section{Additional Properties of Watermarking Composition}
\label{sec:append_watermarking_property}

\else
\fi

\paragraph{Additional Properties}
Finally, we discuss some additional properties realized by the watermarking composition.

\paragraph{Private vs. Public Extraction} Firstly, we have a relatively straightforward observation that if all underlying constructions satisfy public extraction, then the target construction also does.

\begin{lemma}[Public Extraction]
  \label{lem:public_extract_compiler}
   Suppose a watermakable implementation $\WP$ is constructed from watermarkable implementations $\WP_1, \cdots, \WP_{|\cS|}$,  if $\WP_1, \cdots, \WP_{|\cS|}$ all have public extraction and there exists no primitive in the set $\cT_K$, then $\WP$ has public extraction.
\end{lemma}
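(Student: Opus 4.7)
The plan is to prove the claim by first observing that under the assumption $\cT_K = \emptyset$, the composed extraction key has the form $\xk = (\xk_1, \ldots, \xk_{|\cS|})$, with every component publicly extractable by hypothesis. I would then verify, step by step, that every property in \Cref{sec:def_watermarkable_primitive} continues to hold when this $\xk$ is treated as public.

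Syntactically, the only secret material consulted by $\WP.\extract$ is the tuple $(\{\xk_i\}_{i \in \cS}, \{\td_\ell\}_{\ell \in \cT_K})$; when $\cT_K = \emptyset$, every quantity it reads is already public by assumption. Thus $\WP.\extract$ remains a publicly runnable algorithm. Correctness of extraction, functionality preservation, and meaningfulness go through unchanged, since their proofs never invoked secrecy of $\xk$. The plain security of $\WP$ with respect to $G_P$ is likewise unaffected because $G_P$ does not mention $\xk$ at all.

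The real content is checking unremovability in the stronger sense that $\A$ is now handed $\xk$ together with $\pk$. I would revisit the reduction proving \Cref{lem:unremovability_composition} and adapt it slightly. In Case 1, the reduction $\A_{\istar}$ to $\WP_{\istar}$-unremovability needed $\{\xk_j\}_{j \neq \istar}$, which it samples itself via $\WP_j.\wmsetup$, plus $\xk_{\istar}$ in order to forward the full tuple to $\A$. Public extraction of $\WP_{\istar}$ guarantees that its own unremovability game may be played by an adversary that receives $\xk_{\istar}$ in the clear; hence $\A_{\istar}$ obtains $\xk_{\istar}$ from its $\WP_{\istar}$-challenger and relays $(\xk_1, \ldots, \xk_{|\cS|})$ to $\A$. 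The rest of the reduction — constructing $C_{\istar}'$ with black-box access to $\A$'s output $C$ and invoking $\WP_{\istar}.\extract$ — is verbatim. Case 2 reduces to the plain security of some $P_{\istar} \notin \cS$, which never relied on secrecy of any $\xk$, so it is unaffected.

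The main (and essentially only) delicate point is confirming that no hidden use of $\td_\ell$ leaks into the extraction or the reductions; this is precisely what the hypothesis $\cT_K = \emptyset$ rules out. Were $\cT_K$ non-empty, the extraction simulation would require trapdoors that cannot be published without breaking the very hybrid (e.g.\ NIZK zero-knowledge) whose invocation they enable, as explained in \Cref{remark: comp_sim_2}, and so the analog of public extraction would fail. In the absence of such primitives, each piece of secret state consulted by $\extract$ is of the form $\xk_i$ for some $\WP_i$ with public extraction, so forwarding is safe and the entire proof becomes a routine bookkeeping adaptation of \Cref{lem:unremovability_composition}.
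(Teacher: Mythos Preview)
Your proposal is correct and follows the same core observation as the paper: once $\cT_K=\emptyset$, the composed extraction key is just the tuple $(\xk_1,\ldots,\xk_{|\cS|})$, each of which is public by hypothesis, and the fresh keys for primitives in $\cT_S,\cT_P$ are sampled online so require no secret state. The paper's proof stops at this syntactic point, whereas you additionally spell out the adaptation of the unremovability reduction so that $\A_{\istar}$ can forward all $\xk_j$ to the outer adversary; this extra bookkeeping is sound and arguably makes the argument more complete, but it is not a different route.
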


\begin{proof}
    It is easy to observe that if all $\WP_i$ have public extraction, then all $\xk_i$ can be made public and the extraction algorithm of $\WP$ can be made public.
    During the extraction procedure, since the keys in the primitives in the set $\cT_S, \cT_P$ are sampled online, even a public procedure can sample them. 
    But if we have a primitive in the set $\cT_K$, then we need a trapdoor during extraction, so we have to rule out this case.
\end{proof}

\paragraph{Meaningfulness}
The meaningfulness of $\WP$ follows  from the meaningfulness of each $\WP_i$ 
construction naturally.

\paragraph{Collusion Resistance}
Finally, we show that collusion resistance also composes. Informally, suppose a watermakable implementation $\WP$ is constructed from watermarkable implementations $\WP_1, \cdots, \WP_{|\cS|}$, if all of $\WP_1, \cdots, \WP_{|\cS|}$ are collusion resistant, then $\WP$ is collusion resistant.
If at least one of them is $q$-bounded collusion resistant, then $\WP$ is $q$-collusion resistant  $q$ (by the smallest $q_i, i\in |\cS|$).

\begin{lemma}[Collusion Resistance]

  \label{lem:wm_collusion_resistant_compiler}

  Suppose the above construction of $P$ from $P_1, \cdots, P_k$ has a watermarking-compatible reduction,   the watermarkable implementations  $\WP_1, \cdots, \WP_{|\cS|}$ all satisfy strong unremovability (\Cref{def: strong_unremovability}) with $q$-collusion-resistance, the constructions for $\{P_{j}\}_{j \notin \cS}$ satisfy  security defined by their game $\{G_{P_j}\}_{j \notin \cS}$ respectively,
    and the extraction algorithms of $\WP_1, \cdots, \WP_{|\cS|}$ satisfy the extraction syntax in \Cref{frame:extract_syntax},  then he watermarkable implementation $\WP$ satisfies strong unremovability with $q$-collusion resistance.

\end{lemma}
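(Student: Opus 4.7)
The plan is to mirror the case analysis of the previous unremovability lemma and simply verify that every reduction extends to the $q$-query setting. The key structural observation is that, by the design of $\WP.\watermark$, each marking query on a symbol $\tau$ in the composed game translates into exactly one marking query on the same symbol $\tau$ at each underlying $\WP_i.\watermark$ for $i \in \cS$. Therefore the query set $\cQ$ recorded by the composed challenger coincides with the query set recorded by any single $\WP_i$ challenger, and the winning condition $\extract(\xk,\pk,\aux,C^*) \notin \cQ$ on the output tuple corresponds faithfully to the per-primitive condition at each index.

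First I would restate the case division from the unremovability lemma: either with non-negligible probability some $\istar \in \cS$ makes the extractor-built circuit $C_\istar$ be $\gamma_\istar$-good in $G_{P_\istar}^2$ (Case 1), or else some $\istar \notin \cS$ can be used to break the plain security of $P_\istar$ (Case 2). In Case 1 I construct a reduction $\A_\istar$ to the $q$-collusion-resistant strong unremovability of $\WP_\istar$ exactly as in the single-key proof, with the only modification that on each of $\A$'s (up to $q$) marking queries for a symbol $\tau$, $\A_\istar$ forwards $\tau$ to its own marking oracle to obtain $\sk_{\istar,\tau}$, computes $\{\sk_{i,\tau}\}_{i \in \cS,\, i \neq \istar}$ locally using self-sampled $\{\mk_i\}_{i \in \cS,\, i \neq \istar}$, and returns the concatenated marked key. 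Since $\A$ issues at most $q$ such queries, so does $\A_\istar$, and the $q$-collusion-resistant strong unremovability of $\WP_\istar$ yields a contradiction. In Case 2, the target primitive $P_\istar$ is not watermarked and its keys are freshly sampled inside $\wsecEval$ or $\wpubEval$, so $\A_\istar$ holds every $\{\sk_i,\mk_i\}_{i \in \cS}$ locally and can answer all of $\A$'s marking queries entirely on its own, independently of $q$; the rest of the reduction is identical to Case 2 of the unremovability proof.

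The main obstacle I anticipate is purely bookkeeping rather than conceptual: I need to verify that in Case 1 the external $\WP_\istar$-marking oracle is queried with exactly the same multiset of symbols as $\A$'s queries to the composed oracle, and that the circuit $C_\istar'$ assembled inside the reduction is functionally identical to the circuit $C_\istar$ assembled inside the composed $\extract$. Both points follow from $\wmsetup$ and $\watermark$ being concatenations across $i \in \cS$, together with the extraction-key simulation property of each $\WP_j$ for $j \neq \istar$, which lets the reduction carry out those indices' contribution to the simulated $G_P^2$ without any external queries. The resulting quantitative loss is a union bound over the $|\cS|$ choices of $\istar$, and strong $q$-collusion-resistant unremovability of $\WP$ follows.
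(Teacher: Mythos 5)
Your proposal is correct and follows essentially the same route as the paper: the key observation in both is that each marking query to $\WP.\watermark$ translates to exactly one marking query to each $\WP_i.\watermark$ on the same symbol, so the reduction in the unremovability proof never exceeds the $q$-query budget of any $\WP_i$, and the case analysis carries over unchanged. The paper states this more tersely (simply noting the query bound is preserved and that the unremovability proof "stays the same"), while you spell out the bookkeeping for both cases explicitly, which is a legitimate elaboration rather than a different argument.
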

\begin{proof}
    Since each marking query of $\WP$ requires querying each underlying marking oracle $\WP_i.\watermark$ once on the same marking message, the number of queries allowed for the reduction from $\WP$'s unremovability is bounded by the input primitive $\WP_i$'s query bound.
    If all underlying schemes are $q$-collusion resistant for some $q$, then all reductions can query $q$ times and the unremovability proof for \Cref{lem:unremovability_composition} stays the same.
    $\WP$ is also $q$-collusion resistant. 
    

\end{proof}

\paragraph{Extraction Simulation Property and Extraction Syntax}
Having this additional property in the composed  watermarkable implementation scheme allows us to further compose watermarking schemes from input schemes that come from composition themselves.

It is easy to see that the ability of the extraction algorithm to simulate the security game also composes.

We can observe that if all watermarkable implementations of  input  primitives satisfy the extraction syntax  (\Cref{frame:extract_syntax},and thus key simulation property \Cref{def:extraction_key_simulation}) , then the combined keys $(\{\xk_i\}_{i \in \cS}, \{\pk_i\}_{i \in \cS})$ can be used to simulate the game $G_P$ because all oracles used in $G_P$ can all be built from oracles used in $\{G_{P_i}\}_{i \in \cS}$ according to the watermarking-compatible reduction.

Since all the $\WP_i.\extract(\xk_i, \pk_i, \aux_i, \cdot)$ algorithm perfectly(statistically/computationally, resp.) simulates the security game for $G_{P_i}$ stage 2 (as a subroutine of extraction), when each $C_i$ created from input program $C$ is executed inside the procedure  $\WP_i.\extract(\xk_i, \pk_i, \aux_i, C_i)$: $C_i$ will simulate the game $G_P$ stage 2 for $C$ using external oracle queries when interacting with the algorithm $\WP_i.\extract(\xk_i, \pk_i, \aux_i, C_i)$ and the other extraction keys $\{\xk_i\}_{i \in \cS}$. The entire $\extract$ algorithm consists of subroutines where one simulates $G_P$ perfectly(statistically/computationally, resp.) stage 2 for $C$ when running $C$.

When we have input primitives in $\cT_k$, as we have discussed in several remarks: in the extract trapdoor(s) $\td$ of primitives in $\cT_k$ to simulate part of the security game, which results in a computationally close simulation from the original game. One concrete example of such primitive in $\cT_k$ is NIZK, which we will use in \Cref{sec:naor_yung_wm} and \Cref{sec:ddn91_wm}.

\fi

\ifllncs
\else
\section{Two Simple Examples: Watermarkable CPA and CCA-secure SKE}

In this section, we give two relatively simple examples through the following two steps: \begin{itemize}
    \item First we give their "plain" construction and show that their reduction is watermarking-compatible.
    \item Then give their watermarking implementation of the target primitive in the watermarking-compatible reduction language.
\end{itemize}
For the remaining examples we show in this work, we will integrate the above two discussions for brevity.

\subsection{Watermarkable CPA-secure SKE from Watermarkable weak PRF}
\label{sec:wm_cpa_ske}


In this section, we show a simple example of obtaining a watermarkable implementation of a CPA secure secret-key encryption scheme from a watermarkable implementation of weak PRF.
The security of watermarkable weak PRF satisfying our requirements in \Cref{sec:def_watermarkable_primitive} can be built from LWE (with private extraction) or iO (with public extraction) \cite{goyal2021beyond}. The following example was also discussed in \cite{goyal2021beyond}.

\subsection{Definition: Watermarkable Implementation of PRF }
\label{sec:def_wm_prf}
We give the functionality-preserving unremovability security definition for PRF and CPA-secure SKE. The other definitions (pseudorandomness, CPA-security, correctness, functionality preserving) follow naturally from the correctness and security defintions of PRF and SKE respectively.

A watermarkable implementation of a PRF scheme consists of the following algorithms
$(\wmsetup, \eval, \watermark, \extract)$ and satisfies the following properties:

\begin{description}
    \item $\wmsetup(1^\lambda) \to (\sk, \mk, \xk)$: on input security parameter, outputs a secret key $\sk$, marking key $\mk$, extraction key $\xk$.

      \item $ \eval(\sk, x \in \{0,1\}^\ell) \to y \in \{0,1\}^\ell$: a deterministic evaluation algorithm that on input $\sk, x$, outputs $y$.

      \item $\watermark(\mk, \sk,\tau)$: on marking key $\mk$, secret key $\sk$, a message $\tau$, output a marked key $\sk_\tau$.

      \item $\extract(\xk, C)$: on extraction key $\sk$ and program $C$, output a mark $\tau \in \cM_\tau$ or $\bot$.
\end{description}

\paragraph{Functionality-Preserving} A watermarkable PRF is functionality preserving if  then there exists a negligible function $\negl(\lambda)$ ssuch that for all $\lambda \in \N, , x \in \sX, \tau \in \cM_\tau$:
$$ \Pr \left[\eval(\tildesk, x) = \eval(\sk, x)  :  \begin{array}{cc} (\sk,\pk, \xk, \mk)  \gets \wmsetup(1^\lambda)  \\  \widetilde{\sk} \gets \watermark(\mk, \sk, \tau)   \end{array} \right] \geq 1 - \negl(\lambda).$$

\paragraph{Weak Pseudorandomness}
A watermarkable PRF satisfies weak pseudorandomness if for all PPT $\A$, there exists a negligible function $\negl(\lambda)$ such that for all $\lambda \in \N$:
$$ \Pr \left[ \A^{\cO(\sk, \cdot)}(x, y_b) = b:  \begin{array}{cc} x \gets \{0,1\}^\ell, y_0 = \eval(\sk, x), y_1  \gets \{0,1\}^\ell, b \gets \{0,1\}\end{array} 
\right] \leq \frac{1}{2} + \negl(\lambda).$$
where the oracle $\cO(\sk, \cdot)$ samples a uniformly random $x \gets \{0,1\}^\ell$ upon every query and outputs $(x, \eval(\sk, \cdot))$. 

We can also consider the equivalent notion where $\A$ is given an oracle $\cO(\sk,\cdot)$ which either computes $\prf.\eval(\sk,\cdot)$ on random inputs or computes a real random function that samples random input together with a random output. There will be no challenge input and $\A$ outputs a bit indicating which oracle it is given.
Similarly for the security below.

\paragraph{Variant of Weak Pseudorandomness Game}
To be better compatible with our unremovability definition, we consider the following variant of the above weak PRF game, described in a 2-stage fasion:
in stage 1, $\A^1$ is allowed to query \emph{adaptively} $\cO(\sk, \cdot)$ on any input of its own choice.
Entering stage 2, it can only query $\cO(\sk, \cdot)$ in a way that each input is sampled at random.  Then $\A^2$ is fed with challenge input
$(x, y_b)$ where $x \gets \{0,1\}^\ell, y_0 = \eval(\sk, x), y_1  \gets \{0,1\}^\ell, b \gets \{0,1\} $ and needs to output the correct $b$. 

\paragraph{$\gamma$-Unremovability of weak PRF}
The $\gamma$-Unremovability for a watermarkable implementation of a weak PRF scheme
 says,  for all PPT admissible stateful adversary $\A  $,  there exists a negligible function $\negl(\lambda)$ such that for all $\lambda\in \N$:
$$ \Pr \left[ \extract(\xk, C) \notin \cQ \wedge C \text{ is $\gamma$-good }:  \begin{array}{cc} (\sk, \xk, \mk)  \gets \wmsetup(1^\lambda)  \\ 
C \gets \A^{ 
\watermark(\mk, \sk, \cdot)}(1^\lambda) \\
\end{array} \right] \leq \negl(\lambda).$$
$\cO(\sk, \cdot)$ samples a uniformly random $x \gets \{0,1\}^\ell$ upon every query and outputs $(x, \eval(\sk, \cdot))$. 

 $\cQ$ is the set of marks queried by $\A$ and $C$ is a PPT admissible, stateful $\gamma$-good adversary in the security game $G_{\prf}(\sk, \cdot)$ if:
$$ \Pr \left[ C^{\cO(\sk, \cdot)}(x, y_b) = b:  \begin{array}{cc} x \gets \{0,1\}^\ell, y_0 = \eval(\sk, x), y_1  \gets \{0,1\}^\ell, b \gets \{0,1\}\end{array} 
\right] \geq \frac{1}{2} + \gamma.$$

\begin{remark}
In the above setting, the oracle $\cO( \cdot)$ will sample a random $r$ and output $r, \cO(r)$. 

 We can consider an even slightly stronger  notion of weak PRF: 
    in the "fully adaptive-query" setting, $\A, C$ are allowed to make any query to the real $\eval$ oracle; only the challenge input $x$ is sampled uniformly at random. 
    Then $C$ is asked o distinguish between $y_0 = \eval(\sk, x), y_1  \gets \{0,1\}^\ell$.

Note that 
\cite{goyal2021beyond} consider the first notion so that they can realize public extraction only by letting the extraction key sample input-output pairs. This notion also suffices to prove watermarkable CPA security. Both cann be constructed in \cite{goyal2021beyond} .

\end{remark}


\subsection{Definition: Watermarkable Implementation of CPA-secure SKE}
\label{sec:def_wm_cpa_ske}

A watermarkable implementation of a SKE scheme consists of the following algorithms:

\begin{description}
    \item $\wmsetup(1^\lambda) \to (\sk, \mk, \xk)$: on input security parameter, outputs a secret key $\sk$, marking key $\mk$, extraction key $\xk$.

      \item $ \Enc(\sk, m) \to \ct$: on input secret key $\sk$ and message $m$, outputs a ciphertext $\ct$.

        \item $\Dec(\sk, \ct) \to m$: on input secret key $\sk$ and ciphertext $\ct$ outputs a message $m$.
        
      \item $\watermark(\mk, \sk,\tau)$: on marking key $\mk$, secret key $\sk$, a message $\tau$, output a marked key $\sk_\tau$.
      \item $\extract(\xk, C)$: on extraction key $\sk$ and program $C$, output a mark $\tau \in \cM_\tau$ or $\bot$.
\end{description}

The correctness property is the same as the correctness property of a plain SKE scheme and we combine it with the functionality preserving property it due to it being standard.

\paragraph{Correctness and Functionality Preserving}
A watermarkable CPA-secure SKE is functionality preserving if  then there exists a negligible function $\negl(\lambda)$ ssuch that for all $\lambda \in \N, , m \in \cM, \tau \in \cM_\tau$:
$$ \Pr \left[\Dec(\tildesk, \ct') = m \wedge \Dec(\sk, \ct) = m  :  \begin{array}{cc} (\sk,\pk, \xk, \mk)  \gets \wmsetup(1^\lambda)  \\  \widetilde{\sk} \gets \watermark(\mk, \sk, \tau) \\
\ct \gets \Enc(\sk, m), \ct' \gets \Enc(\sk_\tau, m)  \end{array} \right] \geq 1 - \negl(\lambda).$$

\paragraph{CPA-security} 
A watermarkable CPA secure SKE scheme satisfies CPA security if  for all PPT admissible stateful adversary $\A = (\A_1, \A_2)$,  there exists a negligible function $\negl(\lambda)$ such that for all $\lambda\in \N$:
$$ \Pr \left[ \A_2^{\Enc(\sk, \cdot)}(\ct_b) = b:  \begin{array}{cc}
(m_0, m_1) \gets \A_1^{\Enc(\sk, \cdot)} \\
\ct_b \gets \Enc(\sk, m_b), b \gets \{0,1\}
\end{array} \right] \leq \frac{1}{2} + \negl(\lambda).$$

\paragraph{$\gamma$-Unremovability of CPA-secure SKE}
The $\gamma$-Unremovability for a watermarkable CPA secure SKE scheme
 says,  for all PPT admissible stateful adversary $\A  $,  there exists a negligible function $\negl(\lambda)$ such that for all $\lambda\in \N$:
$$ \Pr \left[ \extract(\xk, C) \notin \cQ \wedge
C \text{ is } \gamma\text{-good}:  \begin{array}{cc} (\sk, \xk, \mk)  \gets \wmsetup(1^\lambda)  \\ 
C \gets \A^{ \watermark(\mk, \sk, \cdot)}(1^\lambda) 
\\
\end{array} \right] \leq \negl(\lambda).$$
where $\cQ$ is the set of marks queried by $\A$ and $C = (C_1, C_2)$ is said to be a (PPT admissible, stateful) $\gamma$-good program in the security game $G_{CCA}^2(\sk, \cdot)$, more specifically:
$$ \Pr \left[ C_2^{\Enc(\sk, \cdot)}(\ct_b) = b:  \begin{array}{cc}
(m_0, m_1) \gets C_1^{\Enc(\sk, \cdot)} \\
\ct_b \gets \Enc(\sk, m_b), b \gets \{0,1\}
\end{array} \right] \geq \frac{1}{2} + \gamma.$$

\begin{remark}
    The above definition is equivalent to letting $A$ output $(m_0, m_1)$ and making $(m_0, m_1)$ as the auxiliary input $\aux$ for $G_{\cca}$ and $\extract$. 
    Because we can view any distribution over $(m_0, m_1)$ used by $C$ as a convex combination of different message pairs $\{(m_0, m_1)_i\})_i$ and its corresponding strategy such that the overall winning probability is $1/2+\gamma$. Thus, we can let $\A$ pick the message-pair and corresponding strategy with the largest winning probability and hardcode them into $C$ instead.
    The other way of implication is easy to see.

    This also applies to the CCA security game. 
\end{remark}

\subsection{Watermarking-Compatible Reduction}
We show that the textbook construction of CPA encryption from weak PRF is watermarking-compatible.
Given a PRF scheme that satisfies weak PRF security $\prf = (\prf.\KeyGen, \prf.\eval)$, where input and output lengths are $\ell$ and key length is $n$.

The syntax of $\prf$ and 
CPA-security is standard and we refer them to the watermarkable definitions \Cref{sec:def_wm_cpa_ske}, \Cref{sec:def_wm_cpa_ske}, by replace $\wmsetup$ with $\KeyGen$ and removing $\extract, \watermark$ algorithms.

\paragraph{Plain Construction of CPA-secure Encryption from weak PRF}
\begin{itemize}
    \item $\KeyGen(1^\lambda): \sk \gets \prf.\KeyGen(\lambda)$

    \item  $\Enc(\sk, m \in \{0,1\}^\ell) \to \ct:$ samples $r \gets \{0,1\}^\ell$; output $\ct \gets (r, \prf.\eval(\sk, r) \oplus m)$.

    \item $\Dec(\sk, \ct)$: parse $\ct := (r, \ct')$; compute $m  := \prf.\eval(\sk, r) \oplus \ct'$.
\end{itemize}

\paragraph{Watermarking-Compatible Reduction}
We briefly recall the textbook reduction for the above construction, in the language of watermarking-compatible.

\begin{claim}
    \label{claim:wprf_cpa_reduction_compatible}
    The CPA-secure SKE to weak PRF pseudorandomness reduction is a watermarking compatible reduction.
\end{claim}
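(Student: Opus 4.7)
The plan is to cast the textbook reduction in the precise two-stage format required by the Type 1 watermarking-compatible reduction definition (since the PRF secret key lives in the set $\cS$, being generated in $\KeyGen$ and used in $\SecAlg=\Enc$), and then verify each required clause in turn. First I would split both games along the lines of \Cref{def:2-stage_game}: in $G_{\mathsf{CPA}}^1$ the adversary $\A_1$ adaptively queries $\Enc(\sk,\cdot)$; in $G_{\mathsf{CPA}}^2$, $\A_2$ inherits an arbitrary state $\st$, continues issuing encryption queries, submits a challenge pair $(m_0,m_1)$, receives $\ct_b$, and outputs a guess. The variant weak-PRF game is split the same way: stage-1 queries to the PRF oracle may be on adversarially chosen inputs, stage-2 queries must be on fresh random inputs, and the challenge $(x^*,y^*)$ is delivered in stage 2.

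Next I would specify $\A_\prf=(\A_\prf^1,\A_\prf^2)$. $\A_\prf^1$ receives no secret material from its challenger and answers each encryption query from $\A^1$ by picking a fresh $r\leftarrow\{0,1\}^\ell$, querying its own PRF oracle on $r$ to obtain $(r,F(r))$, and returning $(r,F(r)\oplus m)$. By \Cref{def:2-stage_game}, $\A_\prf^2$ receives $(\pk,\aux,\st)$ but none of $\A^1$'s stage-1 queries; nevertheless it can continue to answer $\A^2$'s encryption queries by running exactly the same procedure, since each response is produced independently from a freshly sampled $r$ together with one PRF call. In the challenge phase, $\A_\prf^2$ receives $(x^*,y^*)$, samples $b\leftarrow\{0,1\}$ with randomness $r$, and hands $\A^2$ the challenge ciphertext $\ct^*=(x^*,y^*\oplus m_b)$; any subsequent encryption queries from $\A^2$ are answered as before. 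The reduction function is $f_\prf(\out,r,\cQ,(x^*,y^*))=0$ if $\out=b$ (with $b$ read off from $r$) and $1$ otherwise, which is efficiently computable and independent of stage-1 queries as required.

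Finally I would verify correctness: when $y^*=\prf.\eval(\sk,x^*)$, the view handed to $\A$ is distributed identically to the real CPA experiment, so $\Pr[\out=b]=\tfrac12+\gamma$ whenever $\A$ is $\gamma$-good; when $y^*$ is uniform, $y^*\oplus m_b$ is uniform and independent of $b$, so $\Pr[\out=b]=\tfrac12$. Thus $\A_\prf$ wins the weak-PRF distinguishing game with advantage at least $\gamma/2$, establishing \Cref{fact:reduction_property1}. \Cref{fact:reduction2} is immediate: any algorithm given $\sk$ in the clear decrypts $\ct_b$ and recovers $b$ with probability $1$, and this same strategy, plugged into the above construction of $\A_\prf$, yields a distinguisher with constant advantage.

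The only substantive point — and the reason the claim is worth stating — is the obliviousness requirement: $\A_\prf^2$ must simulate stage-2 encryption queries without any knowledge of what $\A^1$ asked in stage 1. This works here precisely because the encryption procedure is stateless and its randomness is independent per query, so the responses in the two stages are i.i.d. and the stage-1 transcript carries no information needed by the stage-2 simulator. A stateful encryption scheme (say, one maintaining a nonce counter) would not fit this template, which is exactly why the obliviousness clause is nontrivial.
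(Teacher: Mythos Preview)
Your proposal is correct and follows essentially the same approach as the paper: both cast the textbook reduction into the two-stage template, have $\A_\prf$ answer encryption queries via fresh PRF-oracle calls, embed the PRF challenge $(r^*,y^*)$ into the CPA challenge ciphertext, and use the reduction function that outputs $0$ iff $\A^2$'s guess matches the internal bit $b$. The paper additionally records a $-q/2^\ell$ loss in the advantage bound (accounting for collisions of the challenge point with query randomness), which your analysis elides, but this is a negligible term and does not affect the claim; your explicit verification of the obliviousness clause and of \Cref{fact:reduction2} goes slightly beyond what the paper spells out.
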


$\prf$ pseudorandomness adversary $\A_\prf$ simulates CPA game for adversary $\A$ as follows:
\begin{itemize}
    \item There is no public key and public evaluation algorithm $\PubAlg$ in the scheme, so $\A_\prf$ and $\A$ will not receive information regarding public key $\pk$.
    \item First consider stage 1: For any encryption query from $\A^1$, $\A^1_\prf$ will simulate the response as follows: it will query the oracle $\cO(\sk, \cdot)$ in the security game $G_\prf$, 
    where $\cO(\cdot)$
    outputs $\prf.\eval(\sk, x)$ on any query $x$.
    
    \item After entering stage 2, for any encryption query from $\A^2$, $\A^2_\prf$ will still simulate the response as the above. Recall that $\A^2_\prf$ will not get to see any queries made in stage 1.

    \item Note that $\A^2_\prf$'s access to $\cO(\cdot)$ is changed to non-adaptive. $\cO(\cdot)$
   performs the following:
    upon every query, sample a fresh $r$, output $\prf.\eval(\sk, r)$.
    
    \item In the challenge phase, $\A^2_\prf$ will use the challenge input from the weak $\prf$ challenger: 
    $r^* \gets \{0,1\}^\ell, y^* \in \{0,1\}^\ell$ where $y^*$ is either $\prf.\eval(\sk, r^*)$ or uniformly random. 

    \item $\A^2$ sends in challenge messages $(m_0, m_1); $  $\A^2_\prf$ flips a coin $b \gets \{0,1\}$ and sends $\ct = (r^*, y^* \oplus m_b)$ to $\A^2$.

    \item If $\A^2$ guesses the correct $b$, then $\A^2_\prf$ output 0, for $y^*$ is $\prf.\eval(\sk, r^*)$", else $\A^2_\prf$ output 1, for "$y^*$ is uniformly random".
    
\end{itemize}
In the above reduction, $\A^2_\prf$'s reduction function $f(\out, b, \cQ, \inp)$ is: ignore $\cQ, \inp$, check if $\A^2$'s output $\out$ is equal to the randomness $b$ used in preparing the challenge input, if yes output $0$ else $1$.

We denote the security game for weak $\prf$ as $G_{\prf}$ and for CPA encryption as $G_{CPA}$; we also denote the advantage of $\A$ in $G_{CPA}$ as $\adv_{CPA}$ and advantage of $\A_\prf$ as $\adv_\prf$.
We can observe that $\adv_\prf \geq \adv_{CPA}/2-\frac{q}{2^{\ell}}$ where $q$ is the number of queries made by $\A$.

\subsubsection{Security of Watermarkable Implementation for CPA-secure SKE}
\label{sec:wm_cpa_from_prf_construction}

\begin{theorem}
 \label{thm:wm_wprf_to_cpa}
 Assuming that LWE is secure, then there exists secure single-key watermarkable implementation of CPA-secure SKE with private tracing.
 
Assuming that there exists indistinguishability obfuscation, then there exists secure collusion-resistant watermarkable implementation of CPA-secure SKE with public tracing.
   
\end{theorem}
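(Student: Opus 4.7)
The plan is to invoke the watermarking composition theorem (\Cref{thm:watermark_compiler}) applied to the textbook construction of CPA-secure SKE from a weak PRF (\Cref{sec:wm_cpa_from_prf_construction}), instantiating the underlying watermarkable weak PRF with the constructions from \cite{goyal2021beyond}: the LWE-based one for private, single-key extraction, and the iO-based one for public, collusion-resistant extraction. In this construction there is a single input primitive $\prf$, with $\cS = \{\prf\}$ and $\cT_S = \cT_P = \cT_K = \emptyset$, so the hypotheses on ``unwatermarked'' building blocks in \Cref{thm:watermark_compiler} are vacuous.

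First I would invoke \Cref{claim:wprf_cpa_reduction_compatible}, which establishes that the textbook reduction from CPA-security to weak pseudorandomness is a watermarking-compatible reduction of Type 1. This checks the ``watermarking-compatible construction'' hypothesis. Next, I would import from \cite{goyal2021beyond} a watermarkable weak PRF satisfying the definition in \Cref{sec:def_wm_prf}: functionality preserving, weakly pseudorandom in the two-stage variant where stage-1 queries are adaptive and stage-2 queries are on random inputs, strongly $\gamma$-unremovable, and equipped with an $\extract$ algorithm matching the syntax of \Cref{frame:extract_syntax}. The last point holds essentially for free: a stage-2 weak-PRF adversary only needs random input-output pairs, which the extraction key can supply either by invoking $\eval(\sk,\cdot)$ on fresh inputs (the private-extraction case, where $\xk$ contains the key material needed to simulate the oracle) or by invoking an obfuscated public sampler (the iO-based public-extraction case).

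Plugging these into \Cref{thm:watermark_compiler} yields a watermarkable implementation of CPA-secure SKE whose $\wmsetup$, $\Enc$, $\Dec$, $\watermark$, $\extract$ are precisely the composed algorithms of \Cref{sec:watermark_composition}: $\extract$, on input a circuit $C$, builds the stage-2 reduction circuit $C_{\prf}$ exactly as in the proof of \Cref{claim:wprf_cpa_reduction_compatible} (forwarding $C$'s encryption queries to the PRF oracle, embedding the weak-PRF challenge $(r^*,y^*)$ into a ciphertext $(r^*, y^*\oplus m_b)$, and outputting the reduction function $f$ that compares $C$'s guess with $b$) and then runs $\prf.\extract$ on $C_\prf$. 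Strong unremovability of the composed scheme follows from \Cref{lem:unremovability_composition}. For the iO instantiation, \Cref{lem:public_extract_compiler} promotes public extraction from the underlying weak PRF to the SKE, and \Cref{lem:wm_collusion_resistant_compiler} lifts collusion resistance, giving the second half of the theorem.

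The main obstacle is the sanity-check that the \cite{goyal2021beyond} schemes really satisfy our somewhat strengthened syntactic definition---in particular the extraction-key simulation property of \Cref{def:extraction_key_simulation} and the interactive-stage-2 extraction syntax of \Cref{frame:extract_syntax}, which neither paper explicitly formulates. As noted above, this is largely a matter of reinterpreting their existing extraction algorithm as a black-box simulator of the two-stage weak-PRF game; no additional cryptographic assumption is needed, but writing this out carefully is where most of the work lies. Once that reinterpretation is done, the theorem follows immediately by composition, and no new security reduction or hybrid argument is required beyond those already contained in \Cref{thm:watermark_compiler} and \Cref{claim:wprf_cpa_reduction_compatible}.
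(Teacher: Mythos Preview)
Your proposal is correct and takes essentially the same approach as the paper. The only stylistic difference is that the paper, rather than invoking \Cref{thm:watermark_compiler} abstractly, writes out the unremovability reduction for this specific single-primitive case directly (as the proof of \Cref{lem:wm_wprf_to_cpa}), and it handles the ``does \cite{goyal2021beyond} fit our extraction syntax'' issue via a short remark rather than as the main obstacle; substantively the arguments coincide.
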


\begin{theorem}
  \label{thm:goyal21_wprf} [\cite{goyal2021beyond,PKC:MaiWu22}] 
   Assuming that LWE is secure, then there exists secure collusion-resistant watermarkable implementation of weak PRF with private extraction.
   
   Assuming that there exists indistinguishability obfuscation, then there exists secure collusion-resistant watermarkable implementation of weak PRF with public extraction.

\end{theorem}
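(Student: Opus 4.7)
The plan is to instantiate the two parts of the theorem by taking the existing watermarkable PRF constructions from \cite{goyal2021beyond} (LWE-based, private extraction) and \cite{PKC:MaiWu22} or the iO variant of \cite{goyal2021beyond} (iO-based, public extraction), and then verify that each of them satisfies the specific syntactic and semantic requirements laid out in \Cref{sec:def_wm_prf}, most notably the extraction syntax from \Cref{frame:extract_syntax} that the extraction procedure of the underlying watermarkable primitive must internally simulate the stage-2 weak-PRF game for the input circuit. First I would recall the two-stage weak-PRF game: in stage 1 the adversary sees a marked key and makes adaptive evaluation queries (which, in the unremovability reduction, are subsumed by access to the watermarked key itself), and in stage 2 the attacker only sees random-input evaluations and ultimately a challenge pair $(x^*, y^*)$ with $x^*\gets\{0,1\}^\ell$. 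The key observation is that both the stage-2 oracle and the challenge input only involve uniformly random $x$'s; thus, to simulate stage~2 one never needs the secret key $\sk$ per se, only the ability to produce pairs distributed as $(r,\eval(\sk,r))$ on random $r$.

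Next I would rewrite the existing extraction algorithms so that they fit the format of \Cref{frame:extract_syntax}. In the private-extraction/LWE setting of \cite{goyal2021beyond}, the extraction key already contains enough trapdoor information to evaluate (or simulate evaluations of) the PRF on arbitrary inputs; in particular it can itself compute pairs $(r,\eval(\sk,r))$ for random $r$, so wrapping the input circuit $C$ in a subroutine that answers its stage-2 oracle queries with freshly sampled $(r,\eval(\sk,r))$ and feeds it a challenge pair is immediate. In the iO-based public-extraction setting, one publishes an obfuscated program that, on fresh randomness, outputs $(r,\eval(\sk,r))$; the extraction algorithm uses this program as the stage-2 oracle simulator and, for the challenge, either queries it (real case) or samples a fresh random pair (random case), exactly as in the weak pseudorandomness game. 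After this repackaging, $\gamma$-unremovability against any non-negligible $\gamma$ follows from the unremovability theorems of \cite{goyal2021beyond,PKC:MaiWu22} essentially verbatim, because the only structural change we have made to their extraction procedure is to conceptually re-interpret the random evaluations it already performs as a simulation of the stage-2 weak-PRF game.

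Finally I would check the remaining items: functionality-preserving and weak pseudorandomness are inherited directly from the cited constructions; meaningfulness follows because an honestly generated key is statistically indistinguishable from one carrying no mark; collusion resistance for iO and for the LWE construction are proved in \cite{goyal2021beyond,PKC:MaiWu22}. The one place I expect to spend actual thought is aligning the definition of ``$\gamma$-good'' in our 2-stage game with the ``approximate functionality'' style unremovability statement of the source papers: their notion allows the pirate to be merely non-trivially better than random on random inputs, and one has to verify that the reduction in \cite{goyal2021beyond} still succeeds when the pirate program is additionally allowed to be interactive (i.e.\ to issue weak-PRF oracle queries during stage 2) rather than purely a predictor. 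This is handled by observing that the interactive queries are themselves answerable by the simulator built into the extraction algorithm, so the interactive pirate can be converted back into a non-interactive $\gamma'$-good predictor with $\gamma'=\gamma-\negl(\lambda)$, at which point the cited unremovability proofs apply and the theorem follows.
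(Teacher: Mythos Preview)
Your proposal is correct and matches the paper's approach: the paper treats this theorem as essentially a citation, accompanied only by a short remark noting that the tracing key in \cite{goyal2021beyond,PKC:MaiWu22} already has the capability to sample input--output pairs of the PRF and can therefore answer the stage-2 weak-PRF oracle queries, which is precisely the repackaging you describe. Your write-up is in fact more detailed than what the paper provides, but the substance---reinterpreting the existing extraction procedure as a stage-2 game simulator via the sampling capability of the tracing key---is the same.
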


Note that both schemes above  satisfy the  properties we require in \Cref{sec:def_watermarkable_primitive}, including: extraction key of the scheme can be used to simulate the weak PRF pseudorandomness security game \emph{perfectly}; the extraction procedure satisfies the extraction syntax \footnote{For conveniece, in the rest of this work, when we cite an existing watermarking scheme as a "watermarking implementation of a primitive", then the scheme satisfies all properties defined in \Cref{sec:def_watermarkable_primitive}.}. 
\begin{remark}
    In \cite{goyal2021beyond,PKC:MaiWu22}, the watermarkable implementation of weak PRF is called traceable PRF. Their exact extraction procedure does not satisfy our requirement that the adversarial circuit $C$ can make $\prf.\eval(\sk, \cdot)$ queries during the extraction/tracing procedure. But it can be easily modified to satisfy this property: their tracing key has the capability to sample from the input-output space of the PRF, and thus can answer weak PRF queries.
\end{remark}

\begin{lemma}
    \label{lem:wm_wprf_to_cpa}
     Assuming that there exists collusion-resistant (resp. single-key) secure watermarkable implementation of weak PRF with public(resp. private) extraction, then there exists secure collusion-resistant (resp. single-key) watermarkable implementation of CPA-secure SKE with public(resp. private)extraction.
\end{lemma}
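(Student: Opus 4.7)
The plan is to apply the main composition theorem (\Cref{thm:watermark_compiler}) directly to the textbook construction of CPA-secure SKE from weak PRF given in \Cref{sec:wm_cpa_from_prf_construction}. Here the set $\cS = \{1\}$ contains the single input primitive (weak PRF), whose secret key is generated in $\KeyGen$ and used inside the encryption algorithm; the sets $\cT_S, \cT_P, \cT_K$ are all empty since no fresh keys are sampled online and no trapdoor-bearing primitive is involved.

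First I would verify the three watermarking-compatibility preconditions of \Cref{sec:watermarking_compatible_construction}. Correctness of the construction is immediate from the decryption formula. Security with respect to $G_{CPA}$ is the standard textbook reduction. The watermarking-compatibility of that reduction is already established in \Cref{claim:wprf_cpa_reduction_compatible}: the stage-$2$ reduction $\A_\prf^2$ answers $\A^2$'s encryption queries by sampling fresh randomness $r$ and invoking its own weak PRF oracle on uniformly random inputs (allowed in stage $2$ of the variant weak PRF game), and the reduction function $f(\out,b,\cQ,\inp)$ simply compares the pirate's output bit to the bit $b$ used in preparing the challenge. Crucially, no information from $\A^1$'s stage-$1$ queries is passed to stage $2$, so the reduction respects the two-stage structure as required.

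Next I would instantiate the composed construction. The algorithm $\wmsetup$ just forwards to $\prf.\wmsetup$; $\Enc$ and $\Dec$ follow the plain construction with $\prf.\eval$ replaced by the watermarkable evaluation; $\watermark$ forwards to $\prf.\watermark$; and $\extract$ on input a circuit $C$ builds a weak PRF distinguisher $C_\prf$ that simulates the stage-$2$ CPA game for $C$, answering $C$'s encryption queries through its own weak PRF oracle, translating an external weak PRF challenge $(r^*,y^*)$ into a CPA challenge $\ct = (r^*,\,y^* \oplus m_b)$ for a freshly chosen $b$, and finally outputting $0$ if and only if $C$'s guess matches $b$. Running $\prf.\extract$ on $C_\prf$ then yields the embedded mark. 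By \Cref{thm:watermark_compiler}, this gives functionality-preservation, correctness of extraction, CPA-security, and $\gamma$-unremovability. The public-versus-private and collusion-resistant-versus-single-key distinctions follow by invoking \Cref{lem:public_extract_compiler} and \Cref{lem:wm_collusion_resistant_compiler}: since $\cT_K$ is empty, public extractability of the weak PRF lifts to public extractability of the SKE, and each SKE marking query on $\tau$ triggers exactly one PRF marking query on $\tau$, so any $q$-collusion-resistance carries over.

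The main subtle point to double-check is that the stage-$2$ simulation performed by $C_\prf$ uses only non-adaptive, uniformly random inputs to the weak PRF oracle, matching the restrictions of the stage-$2$ variant weak PRF game of \Cref{sec:def_wm_prf}; this is precisely why the variant game is phrased as it is, and why the existing constructions of \cite{goyal2021beyond,PKC:MaiWu22} suffice as the underlying primitive. Everything else is a routine unfolding of the composition theorem, so I do not anticipate a separate hybrid argument beyond what \Cref{thm:watermark_compiler} already provides.
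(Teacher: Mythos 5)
Your proposal is correct and reaches the same conclusion, but by a genuinely more modular route than the one the paper takes. The paper's proof of this lemma does \emph{not} invoke Theorem~\ref{thm:watermark_compiler} as a black box; instead it re-derives the reduction explicitly: it treats the unremovability adversary $\A_\watermark$ as a stage-1 CPA adversary, builds a reduction $\cB$ that forwards marking queries to the PRF marking oracle, argues that the circuit $C'$ built by $\cB$ coincides in input--output behavior with the circuit $C_\prf$ built inside $\extract$, and then directly compares extraction probabilities and $\gamma$-goodness to conclude that $\cB$ breaks $\gamma'$-unremovability of the watermarkable weak PRF for some non-negligible $\gamma'$. You instead cite the general composition theorem after verifying its three preconditions (with $\cS = \{1\}$ and $\cT_S = \cT_P = \cT_K = \emptyset$, and watermarking-compatibility certified by Claim~\ref{claim:wprf_cpa_reduction_compatible}), then obtain the public-extraction and collusion-resistance statements via Lemma~\ref{lem:public_extract_compiler} and Lemma~\ref{lem:wm_collusion_resistant_compiler}. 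Both routes are valid: the paper unfolds the reduction in full because this is the first worked example of its framework and it wants the reader to see the moving parts; your approach is tighter and is arguably the intended use of Theorem~\ref{thm:watermark_compiler}. Your closing remark about the stage-2 simulation using only non-adaptive, random-input weak PRF queries is indeed the load-bearing subtlety that makes the variant weak PRF game the right definition, and it matches the remark the paper makes about why the constructions of \cite{goyal2021beyond,PKC:MaiWu22} suffice.
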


To prove \Cref{lem:wm_wprf_to_cpa}, we present the construction as below.

\paragraph{Watermarking Implementation of CPA-secure Encryption}

Given a watermarkable implementation of weak PRF $(\wprf.\wmsetup,\wprf.\eval, \wprf.\watermark,\wprf.\extract)$, we can construct a watermarkable implementation of CPA-secure SKE as follows:

\begin{description}
    \item $\wmsetup(1^\lambda ) \to (\sk, \xk,\mk)$: compute $(\sk, \xk,\mk) \gets \wprf.\wmsetup(1^\lambda )$

    %

    \item $\Enc(\sk, m \in \{0,1\}^\ell)$:   
samples $r \gets \{0,1\}^\ell$; output $\ct \gets (r, \wprf.\eval(\sk, r) \oplus m)$.

    \item $\Dec(\sk, \ct)$:
    parse $\ct := (r, \ct')$; compute $m  := \wprf.\eval(\sk, r) \oplus \ct'$.

    \item $\watermark(\mk, \sk, \tau \in \cM_{\watermark})$:
    \begin{enumerate}
        \item parse $\mk = \wprf.\mk$
        \item  output $\sk_\tau \gets \wprf.\watermark(\mk, \sk, \tau)$;
        
        
    \end{enumerate}

    \item $\extract(\xk, C):$
    \begin{enumerate}
        \item On input circuit $C$ and $\xk = \wprf.\xk$;

        \item Treat $C$ as the stage-2 adversary $\A_2$ in the watermarking-compatible reduction from CPA-security to weak PRF security to create circuit $C_\prf$ ($C_\prf$ has black-box access to $C$) as a stage-2 reduction $\A_\prf^2$ algorithm from CPA-security to the weak PRF security.
        \item 
        compute $\tau/\bot \gets \wprf.\extract(\xk, C_\prf$); if the mark extracted is not $\bot$, abort and output the mark.
    \end{enumerate}
\end{description}

\begin{proof}
    We describe the reduction in the watermarking-compatible reduction language.

    Suppose there exists a PPT adversary $\A_\watermark$ for the $\gamma$-unremovability security of the above watermarkable CPA encryption, we can consider the adversary $\A_\watermark$ to be the stage-1 adversary $\A^1$ in the CPA-security game: $\A_\watermark$ gets to query the marking oracle $\watermark(\mk, \sk, \cdot)$.
    The reduction $\cB$ which is an adversary in the strong unremovability security game  of the watermarkable implementation of weak PRF, can simulate both the answers to the marking queries by querying the PRF's marking oracle.

    Then $\A_\watermark$ outputs a circuit $C$; during the $\extract(\xk, \cdot)$ algorithm.
    C will be treated as a stage-2 adversary in the CPA-security game; with black-box access to $C$, circuit $C_\prf$ will act like a stage-2 reduction from CPA-security to weak PRF security. Since $\wprf.\extract(\wprf.\sk, \cdot)$ algorithm will perfectly simulate the weak PRF security game on its input circuit,  $C_\prf$ will be able to answer $C$'s queries and use $C$'s output to finish the reduction.

    The reduction $B$ will create a circuit $C'$ with black-box use of $C$ the same way as the above $C_\prf^C$ and output $C'$ in the $\gamma/2$-unremovability security game  of the watermarkable implementation of weak PRF.

    By our assumption, $C$ should satisfy that $\Pr[G_{CPA}(\sk, C) = 1] \geq \frac{1}{2} + \gamma$ and $\Pr[\extract(\xk, C) \notin \cQ] \geq \epsilon$ for some non-negligible $\epsilon$.
    By the design of our $\extract$ algorithm, it must be the case that $\Pr[\wprf.\extract(\wprf.\xk, C_\prf) \notin \cQ] \geq \epsilon$. Since the circuit $C'$ created by $\cB$ is exactly the same as $C_\prf$ and the set of marking queries $\cQ$ made by $\cB$ to the marking oracle is the set as $\A_\watermark$'s, we must have that $\Pr[\wprf.\extract(\wprf.\xk, C') \notin \cQ] \geq \epsilon$. 

    Meanwhile, by the property of the reduction, we have that $$\Pr[G_{\prf}(\wprf.\sk, C') = 1] \geq \frac{1}{2} + \gamma/3$$. Therefore, $\cB$ breaks $\gamma/3$-unremovability for watermarkable implementation of weak PRF for some non-negligible $\gamma$.

\end{proof}

\subsection{Watermarkable CCA-secure Encryption from Watermarkable MAC and PRF}

In this section, we show an example of obtaining a watermarkable implementation of a CCA secure secret-key encryption scheme from a watermarkable implementation of weak PRF and a watermarkable implementation of MAC(or signature).

The security of watermarkable implementation of signatures (and therefore MAC) satisfying our requirements in \Cref{sec:def_watermarkable_primitive} can be built from a modification of the watermarkable signature scheme in \cite{goyal2019watermarking}.

\subsubsection{Watermaking-Compatible Reduction of CCA Security to MAC and weak PRF}
\label{sec:plain_reduction_cca_wprf}

We first give the "plain" construction from weak PRF and MAC to CCA-secure SKE and show that the reduction is watermarking-compatible.

Given a PRF scheme that satisfies weak PRF security $\prf = (\prf.\KeyGen, \prf.\eval)$, where input and output lengths are $\ell$ and key length is $n$ and a MAC scheme $\mac = (\mac.\KeyGen,\mac.\sign, \mac.\verify)$. 

\paragraph{Construction}
\begin{itemize}
    \item $\KeyGen(1^\lambda): \prf.\sk \gets \prf.\KeyGen(\lambda), \mac.\sk \gets \mac.\KeyGen(\lambda)$; output $\sk = (\prf.\sk, \mac.\sk)$.

    \item  $\Enc(\sk, m \in \{0,1\}^\ell) \to \ct:$ 
    \begin{itemize}
    	\item parse $\sk = (\prf.\sk, \mac.\sk)$;
    	\item   samples $r \gets \{0,1\}^\ell$; 
    	\item  compute $\ct_0 \gets (r, \prf.\eval(\sk, r) \oplus m)$;
    	\item compute $\sigma = \mac.\sign(\mac.\sk, \ct_0)$
    	\item output $\ct = (\ct_0, \sigma)$,
\end{itemize}

    \item $\Dec(\sk, \ct)$: 
     \begin{itemize}
     	\item parse $\ct = (\ct_0, \sigma)$ and $\sk = (\prf.\sk, \mac.\sk)$;
     	\item If $\mac.\verify(\mac.\sk, \ct_0, \sigma)  = 1$ continue, else abort and output $\bot$.
     	\item parse $\ct_0 := (r, \ct')$; compute $m  := \prf.\eval(\prf.\sk, r) \oplus \ct'$. 
    	\item output $m$.
     \end{itemize}

\end{itemize}

\begin{claim}
\label{claim:cca_reduction_compatible}
    The CCA-secure SKE to weak PRF pseudorandomnes's reduction and the CCA-secure SKE to EUF-CMA-security of MACs reduction are both watermarking compatible reductions.
\end{claim}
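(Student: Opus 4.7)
The plan is to instantiate each reduction in the precise two-stage template of \Cref{sec:watermarking_compatible_construction} and then verify reduction properties~1 and~2 via the classical CCA case split. Both reductions fall under Type~1 since $\prf.\sk$ and $\mac.\sk$ both lie in $\cS$. I take the natural division of the CCA game: stage~1 runs $\KeyGen$ and hosts any pre-commitment adaptive encryption and decryption queries from $\A^1$, while stage~2 opens with $\A^2$'s further queries, proceeds through the challenge phase (in which $\A^2$ submits $(m_0, m_1)$ and receives $\ct^* = (r^*, y^* \oplus m_b, \sigma^*)$), and ends with post-challenge queries plus $\A^2$'s guess. The only admissibility constraint bans a post-challenge decryption query on $\ct^*$, which is a ciphertext sampled in stage~2, so the stage-2 transcript is self-contained and meets the ``oblivious to stage-1 queries'' requirement.

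For the MAC reduction, $\cB_\mac$ samples $\prf.\sk$ itself. In stage~1 it answers each encryption query by forming the PRF-based $\ct_0$ internally and obtaining $\sigma$ from its MAC signing oracle in $G_\mac^1$; each decryption query is handled by first invoking the MAC verification oracle and, on success, PRF-decrypting with $\prf.\sk$. In stage~2, $\cB_\mac^2$ retains the self-generated $\prf.\sk$ but discards all stage-1 query transcripts, repeats the same oracle-based strategy for $\A^2$'s queries, and records them into $\cQ$. The reduction function is
\[
f_\mac(\out, r, \cQ, \inp_\mac) \;=\; \text{a uniformly chosen stage-2 decryption query } (\ct_0, \sigma) \in \cQ \text{ whose tag verified and whose } \ct_0 \text{ is not a stage-2 encryption output,}
\]
which depends only on stage-2 data as required. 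The freshness check that the MAC challenger ultimately applies refers to its own internal signing-query record across \emph{both} stages, which exists independently of $\cB_\mac^2$'s memory.

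For the weak PRF reduction, $\cB_\prf$ samples $\mac.\sk$ itself. In stage~1 it answers each encryption query by drawing a fresh $r$, querying the weak PRF oracle $\cO$ to obtain $(r, \prf.\eval(\prf.\sk, r))$, XOR-ing with $m$, and MAC-signing under $\mac.\sk$; decryption queries verify the tag and, if valid, query $\cO$ on the embedded $r$ (here the adaptive access of stage~1 in the variant weak PRF game of \Cref{sec:def_wm_prf} is crucial). In stage~2 the same strategy continues under non-adaptive $\cO$ access since all stage-2 encryption queries also use freshly sampled $r$; at the challenge phase $\cB_\prf^2$ receives $(r^*, y^*)$ from the weak PRF challenger, flips $b \gets \{0,1\}$, and returns $\ct^* = (r^*, y^* \oplus m_b, \mac.\sign(\mac.\sk, \cdot))$ to $\A^2$. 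The reduction function is $f_\prf(\out, b, \cQ, \inp_\prf) = 0$ if $\out = b$ and $1$ otherwise, again depending only on stage-2 records and the challenger-provided bit.

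To discharge \Cref{fact:reduction_property1} I will run the textbook case split: let $\mathsf{Forge}$ be the event that some stage-2 decryption query $(\ct_0, \sigma)$ has $\sigma$ a valid MAC tag on $\ct_0$ that was never signed by $\cB_\mac$'s signing oracle across either stage. If $\Pr[\mathsf{Forge}] \geq \gamma/2$ then $\cB_\mac$ wins $G_\mac$ with advantage at least $\gamma/(2q)$ (the factor $1/q$ coming from the uniform choice in $f_\mac$ when only single-shot EUF-CMA is available); otherwise $\Pr[\neg\mathsf{Forge}] > 1 - \gamma/2$, in which case all fresh stage-2 decryption queries return $\bot$ with overwhelming probability, the CCA game is statistically close to one in which $y^*$ is replaced by uniform randomness independent of $b$, and $\cB_\prf$ inherits a weak PRF distinguishing advantage of at least $\gamma/2 - q\cdot 2^{-\ell}$. \Cref{fact:reduction2} is immediate: with $\prf.\sk$ or $\mac.\sk$ in the clear, a trivial decryptor or forger $T_i$ wins CCA with probability~$1$, and $T_i$ plugs in black-box to each reduction. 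The main obstacle I anticipate is precisely the forgery-extraction subtlety: $\cB_\mac^2$ cannot itself distinguish tags it produced via its stage-1 signing-oracle calls from genuinely fresh ones, but this is resolved by outsourcing freshness checking to the MAC challenger's internal record and absorbing the $1/q$ loss (or, alternatively, by appealing to many-forgery EUF-CMA, which is polynomially equivalent to the single-shot notion).
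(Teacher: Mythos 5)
Your proposal takes the same structural route as the paper and is correct in its essentials: you instantiate both reductions as Type~1, take the natural stage split (pre-commitment queries in stage~1, challenge and subsequent queries in stage~2), and observe that admissibility of stage-2 decryption queries depends only on the stage-2 challenge, so the obliviousness requirement is met. Where you diverge — to your advantage — is in the advantage analysis. The paper's own ``proof'' (the text under \Cref{claim:cca_reduction_compatible}) simply describes the two reductions and asserts the case split; in particular its reduction function for the MAC reduction is stated as ``find a decryption query that has never been the output of an encryption query,'' quietly conflating stage-2-visible encryption outputs with the MAC challenger's full signing record across both stages, which $\A_\mac^2$ cannot see. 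You correctly surface this as the central subtlety and fix it with a uniform guess over at most $q$ candidates (a $1/q$ loss, or equivalently many-forgery EUF-CMA), so your discharge of \Cref{fact:reduction_property1} is genuinely more careful than the paper's, which invokes it only via the blanket remark that the property ``follows naturally from any hybrid argument.'' Your treatment of \Cref{fact:reduction2} is likewise fine: the $T_\mac$ strategy (re-tag a tampered ciphertext and ask the decryption oracle to reveal the pad) is exactly the reason $\mac.\sk$ in the clear breaks CCA.

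One point you should tighten. For the weak-PRF leg you write that stage~2 ``continues under non-adaptive $\cO$ access since all stage-2 encryption queries also use freshly sampled $r$,'' but that only covers encryption queries. In Case~2 the stage-2 adversary may submit a decryption query on a ciphertext whose $(r,\sigma)$ originated from a \emph{stage-1} encryption query and was forwarded in the state; its tag verifies under the $\mac.\sk$ you sampled, so $\cB_\prf^2$ must produce $\prf.\eval(\sk_1,r)$ for a specific non-fresh $r$, and that requires adaptive access to $\cO$ in stage~2, not the random-input sampler. The paper itself acknowledges this (the remark after \Cref{lem:wm_mac_cca} notes that the CCA composition needs the adaptive-query variant of watermarkable weak PRF, unlike CPA), but its reduction description in the technical overview glosses over decryption in stage~2 exactly as your proposal does. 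Flag this dependence explicitly: the claim holds with respect to the adaptive-query flavor of $G_\prf^2$, not the sampling-oracle flavor sufficient for CPA.
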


\paragraph{Watermarking-Compatible Reduction }
We describe the security reduction of the above construction in the language of watermarking-compatible.

First consider a reduction for $\mac$ security,
$\A_\mac$ simulates CPA game for adversary $\A$ as follows:
\begin{itemize}
    \item There is no public key and public evaluation algorithm $\PubAlg$ in the scheme, so $\A_\prf$ and $\A$ will not receive information regarding public key $\pk$.

    \item $\A_\mac$ samples its own $\prf$ secret key $\prf.\sk$.
    \item For stage 1: For any encryption query from $\A^1$, $\A^1_\mac$ will simulate the response as follows: it will compute the $\ct_0$ part of ciphertext and then query the signing oracle $\mac.\sign(\mac.\sk, \cdot)$ in the security game $G_\mac$.
    For any decryption query, $\A_\mac$ will query the verification oracle $\mac.\verify(\mac.\sk, \cdot)$.
    \item After entering stage 2, for any encryption or decryption query from $\A^2$, $\A^2_\mac$ will still simulate the response as the above. Recall that $\A^2_\mac$ will not get to see any queries made in stage 1 and $\A^2_\mac$ will record all queries from $\A_2$.
    
    \item In the challenge phase, $\A^2$ sends in challenge messages $(m_0, m_1); $  $\A^2_\mac$ flips a coin $b \gets \{0,1\}$ and sends the encryption of $m_b$ to $\A^2$.

    \item $A^2_\mac$ continues to simulate the encryption and decryption oracles for $\A^2$.
    
    \item In the end, $\A^2_\mac$ will look up  $\A^2$'s queries: find a decryption query  $\ct = (\ct_0, \sigma)$ such that it has never been the output of an encryption query and the decryption oracle did not output $\bot$ on this query. $\A^2_\mac$ output $(\ct_0, \sigma)$ as its forgery.
    
\end{itemize}
If such a decryption query 
exists in the end, then $\A_\mac$ break the EUF-CMA-security of $\mac$
(Otherwise, we will do a reduction to weak PRF).

In the above reduction, $\A^2_\mac$'s reduction function $f(\out, r, \cQ, \inp)$ is: ignore $\out,\inp, r$, check if $\A^2$'s queries $\cQ$ contain a decryption query where the input of this query is not from the output of an encryption query; if found, output this query.

Since such a ciphertext pair was never output by the encryption oracle, then it means that the reduction $\A_\mac$ never queried the corresponding message-signature out of the decryption oracle. Thus $\A_\mac$ can use it to break EUF-CMA-security of MAC.


If no such a decryption query 
exists, we can build a reduction to break the weak pseudorandomness of PRF. The reduction to weak PRF security will be similar to \Cref{sec:wm_cpa_ske}, except that the reduction will sample its own MAC keys to simulate the oracles. Since we have discussed this example in sufficient detail in the technical overview, we omit repeating the details here. 

\subsubsection{Definition: Watermarkable Implementation of CCA-secure SKE}
\label{sec:wm_cca_ske_def}
We give the CCA2 security and  unremovability security definition for CCA2-secure SKE. The other definitions (correctness, functionality preserving) follow naturally from the correctness and previously discussed CPA secure SKE.
\paragraph{CCA2-security} 
A watermarkable CPA secure SKE scheme satisfies CPA security if  for all PPT admissible stateful adversary $\A = (\A_1, \A_2)$,  there exists a negligible function $\negl(\lambda)$ such that for all $\lambda\in \N$:
$$ \Pr \left[ \A_2^{\Enc(\sk, \cdot), \Dec(\sk, \cdot)}(\ct_b) = b:  \begin{array}{cc}
(m_0, m_1) \gets \A_1^{\Enc(\sk, \cdot),  \Dec(\sk, \cdot)} \\
\ct_b \gets \Enc(\sk, m_b), b \gets \{0,1\}
\end{array} \right] \leq \frac{1}{2} + \negl(\lambda).$$
where $\A_2$ can only make decryption queries on $\ct \neq \ct_b$.

\paragraph{$\gamma$-Unremovability of CCA2-secure SKE}
The $\gamma$-Unremovability for a watermarkable CCA2 secure SKE scheme
 says,  for all PPT admissible stateful adversary $\A = (\A_1, \A_2) $,  there exists a negligible function $\negl(\lambda)$ such that for all $\lambda\in \N$:
$$ \Pr \left[ \extract(\xk, C) \notin \cQ \wedge
C \text{ is } \gamma\text{-good} :  \begin{array}{cc} (\sk, \xk, \mk)  \gets \wmsetup(1^\lambda)  \\ 
C \gets \A^{
\watermark(\mk, \sk, \cdot)}(1^\lambda) \\
\end{array} \right] \leq \negl(\lambda).$$
where $\cQ$ is the set of marks queried by $\A$ and $C = (C_1, C_2)$ is a PPT admissible, stateful $\gamma$-good adversary in the security game $G_{CCA}(\sk, \cdot)$, more specifically:
$$ \Pr \left[ C_2^{\Enc(\sk, \cdot), \Dec(\sk, \cdot)}(\ct_b) = b:  \begin{array}{cc}
(m_0, m_1) \gets C_1^{\Enc(\sk, \cdot), \Dec(\sk, \cdot)} \\
\ct_b \gets \Enc(\sk, m_b), b \gets \{0,1\}
\end{array} \right] \geq \frac{1}{2} + \gamma.$$
$C$ is admissible if it only make queries to the oracle on $\ct \neq \ct_b$.

We refer the watermarkable MAC (and signatures) definition to \Cref{sec:def_wm_signatures}.

\subsubsection{Security of Watermarkable Implementation for CCA-secure SKE}

\begin{theorem}
 \label{thm:wm_cca}
 Assuming LWE, then there exists secure bounded collusion-resistant watermarkable implementation of CCA-secure SKE with private extraction.
 
   Assuming that there exists indistinguishability obfuscation, then there exists secure bounded collusion-resistant watermarkable implementation of CCA-secure SKE with private tracing.
   
\end{theorem}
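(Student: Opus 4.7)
The plan is to obtain this theorem as a direct corollary of the main composition theorem (\Cref{thm:watermark_compiler}) applied to the textbook construction of CCA-secure SKE from a weak PRF and a MAC that we already analyzed in \Cref{sec:plain_reduction_cca_wprf}. Concretely, the target primitive is $P = $ CCA-secure SKE, the set $\cS = \{1,2\}$ of building blocks whose keys are generated in $\KeyGen$ and used in $\SecAlg$ is exactly $\{\prf, \mac\}$, and $\cT_S = \cT_P = \cT_K = \emptyset$.

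First I would invoke \Cref{claim:cca_reduction_compatible}, which already establishes that the security reduction proceeds by a case split into (i) a stage-2 adversary forging a fresh MAC on a decryption query, reducing to EUF-CMA of $\mac$, and (ii) the complementary case, reducing to weak pseudorandomness of $\prf$, and that both reductions are watermarking-compatible in the sense of \Cref{sec:watermarking_compatible_construction}. In particular, I would check that in each reduction the stage-2 algorithm $\A_i^2$ can be written as a function $f_i(\out, r, \cQ, \inp_i)$ of only the stage-2 transcript, which is the case here since the forgery in (i) is extracted from $\A^2$'s stage-2 decryption queries and the bit output in (ii) is computed from $\A^2$'s final guess and the coin $b$ used to prepare the challenge ciphertext.

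Next I would instantiate the two watermarkable building blocks satisfying the syntax and unremovability of \Cref{sec:def_watermarkable_primitive}. For $\prf$ I cite \Cref{thm:goyal21_wprf}: under LWE we get collusion-resistant watermarkable weak PRF with private extraction, and under iO we get the same with public extraction; the extraction key perfectly simulates the weak-PRF game (it can sample input-output pairs), matching the extraction syntax of \Cref{frame:extract_syntax}. For $\mac$ I would use a watermarkable MAC obtained by viewing the watermarkable signature scheme of~\cite{goyal2019watermarking} as a (publicly verifiable) MAC, whose extraction key contains the signing key and hence can simulate signing and verification queries inside $\extract$; from~\cite{goyal2019watermarking}, this is known to give bounded-collusion-resistant unremovability under LWE. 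Plugging both into $\WP.\wmsetup$, $\watermark$, and the two-branch $\extract$ of the composition scheme in \Cref{sec:watermark_composition} yields the candidate watermarkable CCA-secure SKE.

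Finally, correctness, functionality-preserving, CCA-security, and strong unremovability follow directly from \Cref{thm:watermark_compiler} together with \Cref{lem:wm_collusion_resistant_compiler}: $\gamma$-unremovability of $\WP$ reduces either to unremovability of the weak PRF (in case the circuit $C_\prf$ built by $\extract$ is $\gamma'$-good) or unremovability of the MAC (in case $C_\mac$ is $\gamma'$-good), and at least one of these must hold by \Cref{fact:reduction_property1} since $C$ is $\gamma$-good for $G_{CCA}$. Private extraction is inherited because the MAC's extraction key is inherently private; the collusion bound of the composed scheme is the minimum of the two underlying bounds, which in the LWE case is the bounded $q$ from the watermarkable MAC and in the iO case is also bounded by the same MAC construction, giving the bounded collusion-resistant statement. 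The main obstacle I anticipate is the bookkeeping in the MAC instantiation: one must verify that the $\extract$ algorithm of the watermarkable signature scheme of~\cite{goyal2019watermarking} can indeed be recast to take as input an interactive circuit that plays the stage-2 CCA game (rather than a non-interactive forger), which requires observing that its extraction key already contains the signing key and can therefore answer the sign/verify oracle queries that our composed $\extract$ routes through it.
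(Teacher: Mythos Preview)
Your proposal is correct and follows essentially the same approach as the paper: instantiate the CCA-from-PRF+MAC construction with the watermarkable weak PRF of \Cref{thm:goyal21_wprf} and the (modified) watermarkable signature/MAC of \cite{goyal2019watermarking}, then invoke the composition machinery. The only cosmetic difference is that the paper, treating this as a worked example, writes out the explicit construction and the two-case unremovability reduction (to $\wprf$ and to $\wmac$) directly rather than appealing to \Cref{thm:watermark_compiler} as a black box; your anticipated ``bookkeeping obstacle'' about recasting the signature $\extract$ to handle an interactive stage-2 circuit is exactly what the paper handles via the modification in \Cref{sec:wm_signatures} (cf.\ \Cref{thm:goyal19signature}), and the signature assumption is OWFs rather than LWE specifically.
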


Tha above theorem is based on \Cref{thm:goyal21_wprf} and the following:

\begin{theorem}
  \label{thm:goyal19signature} [Revision of \cite{goyal2019watermarking}] 
   Assuming that there exists secure OWFs, then there exists secure  watermarkable implementation of digital signatures (and MAC) with private extraction and bounded collusion resistance.
   
\end{theorem}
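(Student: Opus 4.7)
The plan is to revise the Goyal et al.\ watermarkable signature scheme so that it meets our unremovability definition (\Cref{def:unremovability_with_security_game}) with respect to an EUF-CMA-style stage-2 game, together with the extraction-key simulation property (\Cref{def:extraction_key_simulation}) and the extraction syntax of \Cref{frame:extract_syntax}. The existing construction from OWFs already provides the underlying marking and extraction machinery with bounded collusion resistance; what must be redone is the \emph{interface} between that extractor and the stage-2 security game we now demand.

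First I would modify the key setup so that the extraction key $\xk$ contains (a copy of) the signing key $\sk$ along with the Goyal et al.\ extraction trapdoor. This forces private extraction, matching the theorem statement, and it is exactly what is needed for the extraction key to perfectly simulate the stage-2 EUF-CMA game for an input circuit $C$: $\extract$ can answer $C$'s signing queries with genuine signatures under $\sk$, record the query set $\cQ_C$, and receive a candidate forgery $(m^*,\sigma^*)$ from $C$. The underlying $\sign,\verify$ algorithms are unchanged; $\watermark$ is invoked as in Goyal et al.; and $\extract$ wraps their extractor inside this simulated game.

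Next I would prove unremovability by reduction to the original Goyal et al.\ unremovability notion (in which a pirate is judged by its ability to produce valid signatures on fresh messages). Given a stage-1 adversary $\A$ that outputs $C$ which is $\gamma$-good in our stage-2 EUF-CMA game, the reduction $\cB$ receives a watermarked signing key from the Goyal et al.\ challenger, forwards all marking queries one-for-one (so bounded collusion resistance is preserved), and then assembles a Goyal et al.-style pirate $C'$ that internally runs $C$ black-box, simulates $C$'s signing queries by forwarding them to the external signing oracle, and outputs $C$'s forgery $(m^*,\sigma^*)$ when asked to produce a signature. By admissibility of $C$, with probability at least $\gamma$ this yields a valid $(m^*,\sigma^*)$ with $m^*$ outside $\cB$'s query set, so $C'$ is a good pirate in the Goyal et al.\ sense and its extraction output agrees with ours.

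The main obstacle is the gap between ``produces one forgery on an adversarially chosen message'' and the ``signs-on-random-challenges'' style of pirate that the Goyal et al.\ extractor was originally written for. I would close this gap by exploiting the fact that our model (see the stateful-pirate remark in \Cref{sec:def_watermarkable_primitive}) allows $\extract$ to reset $C$ to its initial state between runs of the simulated game, so the extractor can invoke $C$ polynomially many times to amplify success probability and, if needed, feed $C$ fresh simulated transcripts on each run so that the forgery messages it outputs cover a non-negligible portion of message space. With this amplification, a single Goyal et al.-style extraction call on $C'$ succeeds with overwhelming probability, and the mark recovered is in $\cQ$ whenever the original Goyal et al.\ extraction would return a queried mark, completing the reduction.
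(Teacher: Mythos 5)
Your high-level plan — keep Goyal et al.'s marking and extraction machinery, give $\xk$ the signing key so the extractor can simulate the EUF-CMA game, and wrap the new extractor around the old one — diverges from what the paper actually does, and the wrapping step contains a genuine gap.

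The paper does \emph{not} reduce the new unremovability notion to GKM's original one. Instead, it reopens the box and rebuilds the scheme directly from the underlying \emph{constrained signature} primitive (\Cref{def:constrain_sig}). The watermark $\tau$ is encoded as a prefix on the signed message: marking produces a key constrained to $f_\tau$, unmarked signing uses a $\bot$ prefix, and $\extract$ runs $C$ a few times with the honest $\sign(\sk,\cdot)$ oracle, reads off the prefix in any valid forgery $C$ outputs, and returns it. Unremovability is then argued \emph{directly} from the constrained unforgeability of $\csig$: a $\gamma$-good $C$ must output a valid signature on a fresh message, and since the only constrained keys $\A$ ever saw sign only $\tau$-prefixed messages for $\tau\in\cQ$, the forgery's prefix must lie in $\cQ$ — any other prefix, including $\bot$, would break $\csig$. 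No appeal to GKM's unremovability theorem is needed, which is precisely why the definitional change (dropping the stage-1 $\sign$ oracle from $\A$) can be absorbed cleanly.

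The concrete gap in your plan is the interface mismatch you yourself flag as the ``main obstacle,'' and your proposed fix does not close it. A GKM-style pirate is evaluated by handing it a \emph{random} message $m^*$ and asking for a signature on $m^*$; your wrapped pirate $C'$ instead ``outputs $C$'s forgery $(m^*,\sigma^*)$'' — a signature on a message of $C$'s own choosing. That is almost never the random message the GKM extractor is probing with, so $C'$ would look like a bad pirate and the GKM extraction guarantee gives you nothing. Resetting $C$ and rerunning the simulated game amplifies $C$'s probability of producing \emph{some} valid forgery, but gives no control over \emph{which} messages those forgeries fall on: $C$ may deterministically forge on a single fixed message every time, and nothing in your argument forces the forgery messages to ``cover a non-negligible portion of message space.'' Because of this, the reduction to GKM's unremovability does not go through, and you would need to drop down to the $\csig$ level as the paper does.
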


\begin{remark}
    The construction for the watermarkable digital signature scheme in the original \cite{goyal2019watermarking} does not exactly satisfy our  requirements in \Cref{sec:def_watermarkable_primitive}. But we will show that a minor modification of the scheme will suffice.
\end{remark}

\begin{lemma}
    \label{lem:wm_mac_cca}
     Assuming that there exists collusion-resistant (resp. single-key) secure watermarkable implementation of weak PRF and MAC, then there exists secure collusion-resistant (resp. single-key) watermarkable implementation of CCA-secure SKE with  private extraction.
\end{lemma}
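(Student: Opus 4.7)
The plan is to instantiate the composition framework of \Cref{thm:watermark_compiler} with the target primitive $P$ being CCA2-secure SKE, and the two input primitives $P_1 = \wprf$ and $P_2 = \mac$, both taken to be in the ``watermarked'' set $\cS = \{1,2\}$ (there are no primitives in $\cT_S, \cT_P, \cT_K$ here). The construction follows the textbook template from \Cref{sec:plain_reduction_cca_wprf}: the $\wmsetup$ jointly runs $\wprf.\wmsetup$ and $\mac.\wmsetup$, concatenating the keys component-wise; $\Enc, \Dec$ are unchanged from the plain construction but invoke $\wprf.\eval$ and $\mac.\sign/\mac.\verify$ as black-box subroutines; $\watermark$ on $(\sk = (\prf.\sk,\mac.\sk),\tau)$ outputs $(\wprf.\watermark(\wprf.\mk,\prf.\sk,\tau),\wmac.\watermark(\mac.\mk,\mac.\sk,\tau))$; and $\extract$ on a circuit $C$ constructs two derived circuits $C_\prf$ and $C_\mac$ exactly as in \Cref{sec:watermark_composition}, each implementing the watermarking-compatible reduction described in \Cref{sec:plain_reduction_cca_wprf}, and then runs $\wprf.\extract$ and $\wmac.\extract$ on them respectively.

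The first step is to observe that the plain security proof for this CCA2 construction is a watermarking-compatible reduction in the sense of \Cref{sec:watermarking_compatible_construction}; this was already established in \Cref{claim:cca_reduction_compatible}. Note in particular that the reduction splits the CCA game into two stages so that the winning condition of both underlying games ($\mac$-unforgeability and weak-$\prf$ pseudorandomness) depends only on stage-2 queries of the adversary, and both reductions carry a well-defined reduction function $f_\istar$ to be used inside $\extract$. Functionality preserving of the composed scheme is immediate: both $\wprf$ and $\wmac$ are functionality-preserving, the evaluation/decryption algorithms are identical to the plain construction, and correctness then follows from correctness of the plain CCA2 construction. Plain CCA2 security of the composed scheme is preserved by the analogous appeal to security of $\wprf$ and $\wmac$.

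The main content is the unremovability argument, which I would derive as a direct corollary of \Cref{lem:unremovability_composition}. Suppose $\A$ is a PPT adversary that on input $(1^\lambda, \pk = \emptyset)$ and access to $\watermark(\mk,\sk,\cdot)$ produces a circuit $C$ that is $\gamma$-good in the CCA2 stage-2 game yet no queried mark can be extracted from $C$. Following the case split of the unremovability proof: either (Case 1) with non-negligible probability $C_\prf$ constructed inside $\extract$ is $\gamma_\prf$-good in $G_\wprf^2$, in which case we convert $\A$ into an unremovability adversary for $\wprf$ by replacing $\wprf$'s marking queries by oracle queries and replacing $C_\prf$'s internal simulation using an externally provided $\wprf$ challenger; or (Case 2) $C_\mac$ is $\gamma_\mac$-good in $G_\mac^2$, and we analogously reduce to $\wmac$'s unremovability. \Cref{fact:reduction_property1} guarantees that at least one of the two cases must occur whenever $C$ is CCA2-good, because the watermarking-compatible reduction of \Cref{claim:cca_reduction_compatible} handles both branches of the standard hybrid (forgery vs.\ PRF distinguishing). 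Since the $C_i$'s constructed by $\extract$ have identical input-output behavior to the $C_i'$'s constructed in these reductions, $\wprf.\extract$ (resp.\ $\wmac.\extract$) must fail to output a queried mark with non-negligible probability, contradicting strong unremovability of the corresponding watermarkable building block.

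The step I expect to require the most care is verifying the Case 1 / Case 2 dichotomy, specifically that the reduction function $f_\mac$ (which inspects $C$'s recorded decryption queries for an unqueried valid ciphertext) and $f_\prf$ (which inspects $C$'s final guess) are jointly exhaustive in the sense of \Cref{fact:reduction_property1}: if $C$ wins the CCA2 game with advantage $\gamma$, then either a valid forgery appears among its stage-2 decryption queries with non-negligible probability, or $C$'s distinguishing advantage conditioned on no such forgery is $\Omega(\gamma)$, giving a weak-PRF attacker. This is the content of the textbook CCA2 proof and is exactly what the watermarking-compatible reduction in \Cref{claim:cca_reduction_compatible} encodes; plugging it into \Cref{lem:unremovability_composition} yields strong unremovability of the composed scheme. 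Collusion resistance (resp.\ single-key security) is then inherited coordinate-wise via \Cref{lem:wm_collusion_resistant_compiler}, and private extraction follows because both components have private extraction, completing the proof.
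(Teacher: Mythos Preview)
Your proposal is correct and follows essentially the same approach as the paper: instantiate the general composition framework with $\cS = \{\wprf, \wmac\}$, verify watermarking-compatibility via \Cref{claim:cca_reduction_compatible}, and run the Case~1/Case~2 dichotomy (forgery $\Rightarrow$ break $\wmac$ unremovability; no forgery $\Rightarrow$ break $\wprf$ unremovability) exactly as the paper does. The paper writes out the two reductions explicitly rather than invoking \Cref{lem:unremovability_composition} and \Cref{lem:wm_collusion_resistant_compiler} as black boxes, but the underlying argument is identical.
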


\begin{remark}
    \cite{goyal2021beyond,PKC:MaiWu22}can give us a watermarkable weak PRF scheme that  suffices for our use, though their definition of weak PRF allows only non-adaptive queries of random $r$ and its evalaution. We need the weak PRF to allow adaptive queries to be used in CCA construction. But their prvate-key watermarkable PRF scheme already satisfies this stronger definition, by having the PRF evaluation key included in the tracing key.
\end{remark}

To prove \Cref{lem:wm_mac_cca}, we present the construction as below.

\paragraph{Watermarking Implementation of CCA-secure Encryption}

Given a watermarkable implementation of weak PRF $(\wprf.\wmsetup,\wprf.\eval, \wprf.\watermark,\wprf.\extract)$ and one of MAC $(\wmac.\wmsetup, \\\wmac.\sign, \wmac.\verify, \wmac.\watermark, \wmac.\extract)$, we can construct a watermarkable implementation of CCA-secure SKE as follows:

\begin{description}
    \item $\wmsetup(1^\lambda ) \to (\sk, \xk,\mk)$: 
    \begin{enumerate}
        \item compute $(\wprf.\sk, \\ \wprf.\xk,\wprf.\mk) \gets \wprf.\wmsetup(1^\lambda )$; compute $(\wmac.\sk, \wmac.\xk,\mac.\mk) \gets \wmac.\wmsetup(1^\lambda )$

        \item output $\sk = (\wprf.\sk,\wmac.\sk); \mk = (\wprf.\mk,\wmac.\mk); \xk = (\wprf.\xk,\wmac.\xk)$
    \end{enumerate}
    %

    \item $\Enc(\sk, m \in \{0,1\}^\ell)$:   
  \begin{enumerate}
    	\item parse $\sk = (\wprf.\sk, \wmac.\sk)$;
    	\item   samples $r \gets \{0,1\}^\ell$; 
    	\item  compute $\ct_0 \gets (r, \wprf.\eval(\wprf.\sk, r) \oplus m)$;
    	\item compute $\sigma = \wmac.\sign(\wmac.\sk, \ct_0)$
    	\item output $\ct = (\ct_0, \sigma)$,
\end{enumerate}

    \item $\Dec(\sk, \ct)$:

     \begin{enumerate}
     	\item parse $\ct = (\ct_0, \sigma)$ and $\sk = (\wprf.\sk, \wmac.\sk)$;;
     	\item If $\wmac.\verify(\wmac.\sk, \ct_0,\sigma)  = 1$ continue, else abort and output $\bot$.
     	\item parse $\ct_0 := (r, \ct')$; compute $m := \wprf.\eval(\wprf.\sk, r) \oplus \ct'$. 
    \item output $m$.
      \end{enumerate}

    \item $\watermark(\mk, \sk, \tau \in \cM_{\watermark})$:
    \begin{enumerate}
        \item parse $\mk = (\wprf.\mk, \wmac.\mk)$
        \item  compute $\sk_{\prf,\tau} \gets \wprf.\watermark(\wprf.\mk, \wprf.\sk, \tau); \\
        \sk_{\mac,\tau} \gets \wmac.\watermark(\wmac.\mk, \wmac.\sk, \tau); $ 
        
        
        
        \item output 
        $(\sk_{\prf,\tau}, \sk_{\mac,\tau})$
    \end{enumerate}

    \item $\extract(\xk, C):$
    \begin{enumerate}
        \item On input circuit $C$ and $\xk = (\wprf.\xk, \wmac.\xk) $;

         \item Initialize empty tuple $\vec{\tau}$.
         
         \item Let $P_1 := \prf$ and $P_2 := \mac$; For $i = 1,2$:
        \begin{enumerate}
           \item Treat $C$ as the stage-2 adversary $\A_2$ in the watermarking-compatible reduction from CCA-security to $P_i$ security; create a circuit $C_{P_i}^C$ (i.e. $C_{P_i}$ has black-box access to $C$) as a stage-2 reduction algorithm $\A_{i}^2$ from CCA-security game to $P_i$'s security game.
        \item compute $\tau/\bot \gets \mathsf{wP_i}.\extract(\mathsf{wP_i}.\xk, C_{P_i}^C$); if the mark extracted is not $\bot$, add it to $\vec{\tau}$.
    \end{enumerate}
    \item output $\vec{\tau}$
        \end{enumerate}
        
\end{description}

Now we prove the unremovability security of the watermarkable implementation of the CCA secure encryption scheme. 
\begin{proof}
    We describe the reduction in the watermarking-compatible reduction language.

    Suppose there exists a PPT adversary $\A_\watermark$ for the $\gamma$-unremovability security of the above watermarkable CCA encryption, we can consider the adversary $\A_\watermark$ to be the stage-1 adversary $\A^1$ in the CCA-security game: $\A_\watermark$ gets to query the marking oracle $\watermark(\mk, \sk, \cdot)$, which is a leakage on the secret key $\{\wprf.\sk, \wmac.\sk\}$.
    
    Let us denote $P_1 = \prf; P_2 = \wmac$. The reduction $\cB_i$ which is an adversary in the $\gamma/3$-unremovability security game of the watermarkable implementation of primitive $P_i$, can simulate both the answers to the marking queries 
    by querying the challenger.

    Then $\A_\watermark$ outputs a circuit $C$; during the $\extract(\xk, \cdot)$ algorithm.
    C will be treated as a stage-2 adversary in the CCA-security game; with black-box access to $C$.
    circuit $C_{P_i}$ will act like a stage-2 reduction from CCA-security to $P_i$'s security. Since $\mathsf{wP_i}.\extract(\mathsf{wP_i}.\xk, \cdot)$ algorithm will perfectly simulate the security game $G_{P_i}$ on its input circuit,  $C_{P_i}$ will be able to answer $C$'s queries and use $C$'s output to finish the reduction.

    Since we are assuming that $\Pr[\extract(\xk, C) \notin \cQ] \geq \epsilon$, for some non-negligible $\epsilon$. By the design of our $\extract$ algorithm, it must be that for both $i \in \{1,2\}$, $\Pr[\mathsf{wP_i}.\extract(\mathsf{wP_i}.\xk, C_i^C) \notin \cQ] \geq \epsilon$. 
    
    We now consider two cases.
      For any 
    $\A^{\Enc(\sk, \cdot), \Dec(\sk, \cdot), \watermark(\mk,
    \sk, \cdot)}(\pk)$ producing a $\gamma$-good $C$ such that $\Pr[\extract(\xk, \pk, C) \notin \cQ] \geq \epsilon$, one of the following cases must hold:

    \begin{itemize}
        \item \textbf{Case 1}: 
  In case 1, with some non-negligible probability, $C_1$ is $\gamma_1$-good in the game $G_{\prf}(\wprf.\sk, \cdot )$ for some non-negligible $\gamma_1$. 
 In other words, during the execution of $\extract(\xk, C)$, $C$ makes decryption queries on ciphertexts that all come from the encryption queries.

    The reduction $B_1$ will create a circuit $C_1'$ with black-box use of $C$ the same way as the circuit  $C_{P_1}^C$ created in the extraction algorithm and output $C_1'$ in the $\gamma_1$-unremovability security game  of the watermarkable implementation of weak PRF.

    
    Since the circuit $C'$ created by $\cB_1$ is exactly the same as $C_\prf^C$ and the set of marking queries $\cQ$ made by $\cB$ to the marking oracle is the set as $\A_\watermark$'s, we must have that $\Pr[\wprf.\extract(\wprf.\xk, C_1') \notin \cQ] \geq \epsilon$. 

    Meanwhile, by the property of the reduction, we have that $\Pr[G_{\prf}(\wprf.\sk, C_1') = 1] \geq \frac{1}{2} + \gamma_1$ for some non-negligible $\gamma_1$. Therefore, $\cB$ breaks $\gamma_1$-unremovability for watermarkable implementation of weak PRF.

    \item \textbf{Case 2}: 
      In case 2, with some non-negligible probability, $C_2$ is $\gamma_2$-good in the game $G_{\mac}(\wmac.\sk, \cdot )$ for some non-negligible $\gamma_2$. 

     In other words, during the execution of $\extract(\xk, C)$, $C$ makes decryption queries on ciphertexts that not only come from the encryption queries.

    The reduction $B_2$ will create a circuit $C_2'$ with black-box use of $C$ the same way as the circuit  $C_{\mac}^C$ created in the extraction algorithm and output $C_2'$ in the $\gamma_2$-unremovability security game  of the watermarkable implementation of weak PRF.

    
    Since the circuit $C'$ created by $\cB_2$ is exactly the same as $C_\mac$ and the set of marking queries $\cQ$ made by $\cB_2$ to the marking oracle is the set as $\A_\watermark$'s, we must have that $\Pr[\wmac.\extract(\wmac.\xk, C_2') \\ \notin \cQ] \geq \epsilon$. 

    Meanwhile, by the property of the reduction, we have that $\Pr[G_{\mac}(\wmac.\sk, C_2') = 1] \geq \frac{1}{2} + \gamma_2$ for some non-negligible $\gamma_2$. Therefore, $\cB_2$ breaks $\gamma_2$-unremovability for watermarkable implementation of MAC.
      \end{itemize}
    By the property of the watermarking-compatible reduction (shown in \Cref{sec:plain_reduction_cca_wprf}), one of the above two cases must hold.
\end{proof}

\section{Watermarkable CCA-secure PKE from Watermarkable Identity-based Encryption and Strong One-Time Signature}
\label{sec:wm_cca2_pke_ibe}
In this section, we give an example of how to achieve watermarkable implementation of CCA-secure PKE, where type-2 reduction is used. The watermarkable CCA-secure PKE scheme is based on the construction from selectively secure IBE and strong one-time signatures (\cite{boneh2007chosen})

\subsection{Definition: Watermarkable CCA2-secure PKE}
\label{sec:wm_cca2_pke_def}

A watermarkable implementation of a SKE scheme consists of the following algorithms:

\begin{description}
    \item $\wmsetup(1^\lambda) \to (\sk, \pk, \mk, \xk)$: on input security parameter, outputs a secret key $\sk$, public key $\pk$, marking key $\mk$, extraction key $\xk$.

      \item $ \Enc(\pk, m) \to \ct$: on input public key $\pk$ and message $m$, outputs a ciphertext $\ct$.

        \item $\Dec(\sk, \ct) \to m$: on input secret key $\sk$ and ciphertext $\ct$ outputs a message $m$.
        
      \item $\watermark(\mk, \sk,\tau)$: on marking key $\mk$, secret key $\sk$, a message $\tau$, output a marked key $\sk_\tau$.
      \item $\extract(\xk, \pk, C)$: on extraction key $\sk$, public key $\pk$ and program $C$, output a mark $\tau \in \cM_\tau$ or $\bot$.
\end{description}

The correctness property is the same as the correctness property of a plain PKE scheme and we combine it with the functionality preserving property it due to it being standard.

\paragraph{Correctness and Functionality Preserving}
A watermarkable PKE is functionality preserving if  then there exists a negligible function $\negl(\lambda)$ ssuch that for all $\lambda \in \N, , m \in \cM, \tau \in \cM_\tau$:
$$ \Pr \left[\Dec(\tildesk, \ct) = m \wedge \Dec(\sk, \ct) = m  :  \begin{array}{cc} (\sk,\pk, \xk, \mk)  \gets \wmsetup(1^\lambda)  \\  \widetilde{\sk} \gets \watermark(\mk, \sk, \tau) \\
\ct \gets \Enc(\pk, m) \end{array} \right] \geq 1 - \negl(\lambda).$$

\paragraph{CCA2-security} 
A watermarkable PKE scheme satisfies CCA2 security if  for all PPT admissible stateful adversary $\A = (\A_1, \A_2)$,  there exists a negligible function $\negl(\lambda)$ such that for all $\lambda\in \N$:
$$ \Pr \left[ \A_2^{\Dec(\sk, \cdot)}(\st, \ct_b) = b:  \begin{array}{cc}
(m_0, m_1, \st) \gets \A_1^{\Dec(\sk, \cdot)}(\pk) \\
\ct_b \gets \Enc(\pk, m_b), b \gets \{0,1\}
\end{array} \right] \leq \frac{1}{2} + \negl(\lambda).$$
$\A$ is admissible if it only make queries to the oracle on $\ct \neq \ct_b$.

\paragraph{$\gamma$-Unremovability of CCA2-secure PKE}
The $\gamma$-Unremovability for a watermarkable CCA2 secure PKE scheme
 says,  for all PPT admissible stateful adversary $\A = (\A_1, \A_2) $,  there exists a negligible function $\negl(\lambda)$ such that for all $\lambda\in \N$:
$$ \Pr \left[ \extract(\xk, \pk, C) \notin \cQ \wedge
C \text{ is } \gamma\text{-good}:  \begin{array}{cc} (\sk, \pk, \xk, \mk)  \gets \wmsetup(1^\lambda)  \\ 
C \gets \A^{ 
\watermark(\mk, \sk, \cdot)}(1^\lambda, \pk) \\
\end{array} \right] \leq \negl(\lambda).$$
where $\cQ$ is the set of marks queried by $\A$ and $C = (C_1, C_2)$ is a PPT admissible, stateful $\gamma$-good adversary in the security game $G_{CCA}(\sk, \pk, \cdot)$, more specifically:
$$ \Pr \left[ C_2^{\Dec(\sk, \cdot)}(\ct_b) = b:  \begin{array}{cc}
(m_0, m_1) \gets C_1^{\Dec(\sk, \cdot)}(\pk, 1^\lambda) \\
\ct_b \gets \Enc(\pk, m_b), b \gets \{0,1\}
\end{array} \right] \geq \frac{1}{2} + \gamma.$$
$C$ is admissible if it only make queries to the oracle on $\ct \neq \ct_b$.

\subsubsection{Preliminaries}
\paragraph{Definition: Watermarkable Implementation of Identity-Based Encryption}
\label{sec:wm_ibe_def}

For clarity, 
we first give the definition for watermarkable implementation of a selectively secure identity-based encryption $\wibe$
where one can mark the master secret key, though it is easy to see that they match the definition in \Cref{sec:def_watermarkable_primitive}. It consists of the following algorithms.

\begin{description}
    \item $ \wmsetup(1^\lambda) \to (\msk, \mpk, \mk, \xk)$: 
     on input security parameter, outputs a master secret key $\msk$, master public key $\mpk$, marking key $\mk$, extraction key $\xk$.
    
    \item $\KeyGen(\msk, \id) \to (\sk_\id)$: on input mster secret key, and $\id \in \mathcal{ID}$, output a key $\sk_\id$.

    \item $\Enc(\mpk, m, \id) \to \ct_\id$: on input public key $\pk$, message $m$ and $\id \in \mathcal{ID}$, outputs a ciphertext $\ct$.

    \item $\Dec(\sk_\id, \ct) \to m: $ on input secret key $\sk_\id$ and ciphertext $\ct$ outputs a message $m$.

    \item $\watermark(\mk, \msk, \tau) \to \msk_\tau$: on marking key $\mk$, secret key $\msk$, a message $\tau$, output a marked key $\msk_\tau$.

    \item $\extract(\xk, \mpk, \aux, C) \to \tau/\bot: $ on extraction key $\sk$, public key $\mpk$ auxiliary input $\aux$ and program $C$, output a mark $\tau \in \cM_\tau$ or $\bot$.
\end{description}

Note that for our applications, we only consider marking a master secret key. But a scheme where one marks an ID-embedded key can also be applicable in certain compositions. Both types of watermarkable IBE can be consructed from modifications of the watermarkable ABE scheme in \cite{goyal2019watermarking}.
    
\paragraph{Correctness}
The functionality-preserving property says, for all $m \in \cM, \id \in ID$, there exists a negligible function $\negl(\lambda)$ such that:
$$ \Pr \left[\begin{array}{cc}
(\Dec(\sk_\id, \ct) = m )\wedge \\
(\Dec(\msk, \ct) = m)
\end{array}:  \begin{array}{cc} (\msk,\mpk, \xk, \mk)  \gets \wmsetup(1^\lambda)  \\
\sk_\id \gets \KeyGen(\msk, \id) \\
\ct \gets \Enc(\mpk, \id, m)
\end{array} \right] \geq 1 - \negl(\lambda).$$

\paragraph{Functionality-Preserving}
The functionality-preserving property says, for all $m \in \cM, \id \in ID$ and all $\tau \in \cM_\tau$, there exists a negligible function $\negl(\lambda)$ such that:
$$ \Pr \left[\begin{array}{cc}
(\Dec(\sk_{\id, \tau}, \ct) = m )\wedge \\
(\Dec(\msk_\tau, \ct) = m)
\end{array}:  \begin{array}{cc} (\msk,\mpk, \xk, \mk)  \gets \wmsetup(1^\lambda)  \\ \msk_\tau \gets \watermark(\mk, \sk, \tau) \\
\sk_{\id, \tau} \gets \KeyGen(\msk_\tau, \id) \\
\ct \gets \Enc(\mpk, \id, m)
\end{array} \right] \geq 1 - \negl(\lambda).$$

\paragraph{Selective-ID Security}
The selective-ID CPA security
property says, for all $m \in \cM, \id \in ID$ and all $\tau \in \cM_\tau$, and for all PPT admissible stateful adversary $\A = (\A_1, \A_2, \A_3)$,  there exists a negligible function $\negl(\lambda)$ such that:
$$ \Pr \left[ \A_3^{\KeyGen(\msk, \cdot)}(\ct_b) = b :  \begin{array}{cc} \id^* \gets \cA_1(1^\lambda) \\ (\msk,\mpk, \xk, \mk)  \gets \wmsetup(1^\lambda)  \\ 
(m_0, m_1) \gets \A_2^{\KeyGen(\msk, \cdot)}(\mpk) \\
\ct \gets \Enc(\mpk, \id^*, m_b), b \gets \{0,1\}
\end{array} \right] \leq \frac{1}{2} + \negl(\lambda).$$
where $\A = (\A_1, \A_2, \A_3)$ is admissible if $\A_2, \A_3$ only makes queries to the oracle $\KeyGen(\msk, \cdot)$ on $\id \neq \id^*$.

Additionally, we denote the above security game after $\id^*,msk, \mpk$ are fixed as $G_\ibe(\msk,\mpk, \id^*, \cdot)$.

Note that the stages in the above defintion do not exactly match the stage in the watermarking-compatible reduction, as shown below.

\paragraph{$\gamma$-Unremovability}
The $\gamma$-Unremovability for a watermarkable selective-ID secure IBE scheme
 says, for all $m \in \cM, \id \in ID$ and all $\tau \in \cM_\tau$, and for all PPT admissible stateful adversary $\A = (\A_1, \A_2) $,  there exists a negligible function $\negl(\lambda)$ such that:
$$ \Pr \left[\begin{array}{cc} \extract(\xk, \mpk, \aux= \id^*, C) \notin \cQ  \\
\wedge C \text{ is } \gamma\text{-good}
\end{array} :  \begin{array}{cc} \id^* \gets \cA_1(1^\lambda) \\ (\msk,\mpk, \xk, \mk)  \gets \wmsetup(1^\lambda)  \\ 
C \gets \A_2^{
\watermark(\mk, \msk, \cdot)}(\mpk) \\
\end{array} \right] \leq \negl(\lambda).$$
where $\cQ$ is the set of marks queried by $\A_2$ and $C = (C_1, C_2)$ is a PPT admissible, stateful $\gamma$-good adversary in the security game $G_\ibe(\msk, \mpk, \id^*, \cdot)$, more specifically:
$$ \Pr \left[ C_2^{\KeyGen(\msk, \cdot)}(\ct_b) = b:  \begin{array}{cc}
(m_0, m_1) \gets C_1^{\KeyGen(\msk, \cdot)}(\mpk) \\
\ct_b \gets \Enc(\mpk, \id^*, m_b), b \gets \{0,1\}
\end{array} \right] \geq \frac{1}{2} + \gamma.$$
$C$ is admissible if  and only if 
$C$ only make queries to the oracle $\KeyGen(\msk, \cdot)$ on $\id \neq \id^*$.

\subsection{Watermarkable CCA-secure PKE Construction}

In this section, we prove 
\Cref{thm:wm_cca_pke}with an alternative approach, based on \cite{boneh2007chosen}.



\begin{theorem}
\label{thm:ibe_from_lwe}
        Assuming the hardness of LWE, there exists secure watermarkable implementation of identity-based encryption with selective-ID security, with private extraction and collusion-resistant security.
\end{theorem}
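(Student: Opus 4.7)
The plan is to bootstrap from the watermarkable ABE construction of Goyal et al.\ \cite{goyal2019watermarking}, which is already based on LWE, and specialize/modify it to give a watermarkable IBE satisfying the stronger definition from \Cref{sec:def_watermarkable_primitive}. Since IBE is the special case of KP-ABE where policies are point functions (equality tests against an identity), the underlying LWE-based construction carries over directly; the non-trivial work lies in reshaping the marking/extraction interface to meet our requirements (extraction key simulation property, extraction syntax treating the pirate as a stage-2 adversary, and collusion resistance for $\msk$-marking).

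First I would fix a concrete dual-Regev style IBE from LWE (e.g.\ Agrawal--Boneh--Boyen) and instantiate the Goyal et al.\ template on top of it: the marking key samples puncturing-like parameters whose effect on an unmarked $\msk$ is undetectable under LWE, while the extraction key retains enough trapdoor information to answer $\KeyGen(\msk,\cdot)$ queries on \emph{any} admissible identity $\id\ne \id^*$ and to generate challenge ciphertexts under $\id^*$. Second, I would rewrite the extraction algorithm in the format prescribed by \Cref{frame:extract_syntax}: on input $(\xk,\mpk,\aux=\id^*,C)$, the extractor internally simulates the stage-2 selective-ID game---answering $C$'s key-derivation queries via $\xk$, letting $C$ output $(m_0,m_1)$, sampling $b$ and producing $\ct_b \gets \Enc(\mpk,\id^*,m_b)$---and invokes the underlying Goyal et al.\ tracing subroutine (thresholding / projective-implementation style) using the simulated transcript to recover $\tau$.

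The third step is to argue the four definitional properties. Correctness and (exact) functionality preservation follow from the ABE analogue since identity-based predicates are a sub-class of ABE predicates. Selective-ID CPA security reduces to the plain IBE's CPA security plus an LWE-based indistinguishability between marked and unmarked $\msk$, exactly as in \cite{goyal2019watermarking}. For the extraction key simulation property (\Cref{def:extraction_key_simulation}), I would show that $\xk$ contains a delegated trapdoor that allows perfectly (or statistically) simulating the stage-2 oracle $\KeyGen(\msk,\cdot)$ restricted to $\id\ne \id^*$ together with an honest challenge encryption under $\id^*$; this is immediate from the way Goyal et al.\ split the master trapdoor for their tracing key. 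Finally, $\gamma$-unremovability for any non-negligible $\gamma$ (i.e.\ strong unremovability and collusion resistance) is inherited from the ABE scheme's tracing guarantee, which is already collusion-resistant against bounded or unbounded collusions depending on the parameter regime.

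The main obstacle will be the interface mismatch between the Goyal et al.\ tracing algorithm and our stage-2 framework. Their original extraction is phrased against non-interactive ``decryption boxes'' that output a plaintext bit on a fresh ciphertext, whereas our pirate $C$ is a stateful, interactive stage-2 adversary that \emph{also} issues $\KeyGen$ queries during the game. I would handle this by wrapping $C$ in a non-interactive decryption predicate: for each tracing query the extractor freshly re-simulates the entire stage-2 transcript (answering $C$'s key queries using $\xk$), runs $C$ to the point of emitting a guess for $b$, and treats that guess as the bit produced by a ``decryption box,'' invoking the ABE tracer on this constructed box. The resettability convention for stateful pirates (cf.\ the remark on statefulness) justifies re-simulation, and a standard averaging argument shows that if $C$ is $\gamma$-good in the interactive game then the induced decryption box has advantage $\ge \gamma - \negl(\lambda)$, so the Goyal et al.\ tracing guarantee yields a valid $\tau\in\cQ$ except with negligible probability, completing the unremovability proof.
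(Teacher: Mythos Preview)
Your proposal is essentially on the right track and matches the paper's approach: both take the Goyal--Kim--Koppula--Waters watermarkable ABE and modify it so that (i) the \emph{master} secret key (rather than an individual $\sk_f$) is marked, and (ii) the extraction algorithm simulates the stage-2 selective-ID game for the pirate $C$, wrapping $C$ into a non-interactive distinguishing box exactly as you describe. Your identification of the ``interface mismatch'' and the re-simulation fix is precisely the point of the modification.

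Two places where your description diverges from what the paper actually does are worth flagging. First, the marking mechanism is not ``puncturing-like parameters'' on an ABB-style trapdoor; the construction is the generic one from \cite{goyal2019watermarking} built from a \emph{delegatable} ABE plus a \emph{mixed functional encryption} (MFE) scheme. A mark $\tau$ is embedded by computing an MFE secret key $\sk_\tau$, letting $g_\tau = \mfe.\Dec(\sk_\tau,\cdot)$, and issuing the DABE key for the predicate $\tilde{g}_\tau$; a ``marked'' functional key is then obtained by DABE delegation. Tracing uses the $\qtrace$ jump-finding algorithm over MFE secret-encryptions of comparison functions $\comp_\tau$, not a threshold/projective test. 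Second, the extraction key is not a delegated trapdoor restricted to $\id\ne\id^*$: it simply contains the full $\dabe.\msk$ (together with the MFE keys), so the $\KeyGen$ oracle is simulated \emph{perfectly} for all identities, and the admissibility constraint $\id\ne\id^*$ is enforced only by the game, not by any limitation of $\xk$. The security reduction then re-routes $\A$'s marking and $\KeyGen$ queries through the DABE key-oracle modes (StoreKey/OutputKey/DelegateKey) and the MFE $\KeyGen$ oracle; the rest of the unremovability proof is verbatim from \cite{goyal2019watermarking}. Finally, note that the collusion resistance you inherit is $q$-bounded (tied to the MFE's $q$-bounded security), not unbounded.
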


We prove this above theorem in \Cref{sec:abe_construction} by making a modification of \cite{goyal2019watermarking}.

\begin{lemma}
\label{lem:cca_pke_from_ibe_ots}
      Assuming there exists secure watermarkable implementation of (private-extraction, collusion resistant) identity-based encryption with selective-ID security and strongly unforgeable one-time signatures, then there exists secure (private-extraction, collusion resistant) watermarkable implementation of CCA-secure PKE
\end{lemma}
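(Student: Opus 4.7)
The plan is to instantiate the main composition theorem (\Cref{thm:watermark_compiler}) with the BCHK~\cite{boneh2007chosen} transformation, placing the watermarkable IBE in the online set $\cS$ (its master secret key is sampled in $\wmsetup$ and used during decryption) and the one-time signature scheme in the set $\cT_P$ (its keys are sampled fresh inside each invocation of encryption, so only plain strong unforgeability will be required). Concretely, $\wmsetup$ runs $\wibe.\wmsetup$ to produce $(\msk,\mpk,\mk,\xk)$ and sets $\sk=\msk,\pk=\mpk$; encryption of $m$ samples $(\mathsf{svk},\mathsf{ssk})\gets \OTS.\gen(1^\lambda)$, computes $c\gets\wibe.\Enc(\mpk,\mathsf{svk},m)$ and $\sigma\gets \OTS.\sign(\mathsf{ssk},c)$, and returns $(\mathsf{svk},c,\sigma)$; decryption first verifies $\sigma$, then derives $\sk_{\mathsf{svk}}\gets\wibe.\KeyGen(\msk,\mathsf{svk})$ and decrypts $c$; marking is simply $\wibe.\watermark(\mk,\msk,\tau)$. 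Correctness and functionality preservation are inherited from the IBE. It then remains to cast the BCHK security proof into the two watermarking-compatible reduction templates of \Cref{sec:watermarking_compatible_construction}.

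For the type-1 reduction to the watermarkable IBE, I follow the standard ``no-forgery'' hybrid. The stage-1 reduction $\A^1_\ibe$ samples its own $(\mathsf{svk}^*,\mathsf{ssk}^*)\gets \OTS.\gen(1^\lambda)$ and commits $\aux=\mathsf{svk}^*$ as the selective-ID target before receiving $\mpk$; it then forwards $\mpk$ to $\A^1$ and simulates decryption queries $(\mathsf{svk},c,\sigma)$ by invoking the external $\wibe.\KeyGen(\msk,\cdot)$ oracle on $\mathsf{svk}\neq\mathsf{svk}^*$ and returning $\bot$ on $\mathsf{svk}=\mathsf{svk}^*$ (which, outside the forgery case, occurs only with negligible probability). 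Crucially, the stage-2 reduction $\A^2_\ibe$ only needs $(\mpk,\mathsf{svk}^*,\mathsf{ssk}^*)$ and the IBE key-extraction oracle to continue the identical simulation, using none of $\A^1$'s stage-1 queries; at challenge time it forwards $(m_0,m_1)$ to its IBE challenger, receives $c^*$, and returns $(\mathsf{svk}^*,c^*,\OTS.\sign(\mathsf{ssk}^*,c^*))$ to $\A^2$. The reduction function $f_\ibe$ just echoes $\A^2$'s guess bit.

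For the type-2 reduction to strong one-time unforgeability, I exploit the complementary case. Here the reduction samples the IBE keys itself (matching the type-2 template, which permits the $\bar{\cS}$-primitive's oracle to be avoided in stage 1) and receives $\mathsf{svk}^*$ from its $\OTS$ challenger; at challenge time it submits $c^*=\wibe.\Enc(\mpk,\mathsf{svk}^*,m_b)$ to its single-query signing oracle and delivers $(\mathsf{svk}^*,c^*,\sigma^*)$ to $\A^2$. The reduction function $f_{\OTS}(\out,r,\cQ,\inp)$ scans $\cQ$ for a post-challenge decryption query $(\mathsf{svk}^*,c,\sigma)\neq(\mathsf{svk}^*,c^*,\sigma^*)$ that decrypts to something other than $\bot$, and outputs $(c,\sigma)$ as its forgery. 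The standard BCHK case analysis then guarantees that for any $\gamma$-good pirate $\A$ in the composed CCA game, at least one of the two reductions yields a non-negligibly good stage-2 adversary, which is precisely what \Cref{fact:reduction_property1} demands.

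The hard part will be handling the selective-ID constraint cleanly inside the watermarking framework: because the IBE only offers selective-ID security, the challenge identity $\mathsf{svk}^*$ must be committed as $\aux$ before any queries and then threaded through the stage-2 extraction game in the sense of \Cref{remark:define_stages_in_game}. I will verify that (i) the $\wibe.\extract$ algorithm, which by \Cref{frame:extract_syntax} simulates the stage-2 IBE game parametrized by $\aux=\mathsf{svk}^*$, can correctly answer the $\wibe.\KeyGen(\msk,\cdot)$ oracle queries issued by the inner circuit $C_\ibe$, and that (ii) the probability of the pirate ever producing $\mathsf{svk}^*$ on its own in a valid decryption query is negligible, by strong unforgeability and the size of the OTS verification-key space. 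With these two checks, \Cref{thm:watermark_compiler} gives the watermarkable CCA-secure PKE, and collusion resistance together with private extraction follow from \Cref{lem:wm_collusion_resistant_compiler} and the fact that $\OTS$ is invoked only through its plain security.
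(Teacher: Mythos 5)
Your proposal is correct and takes essentially the same approach as the paper: the identical BCHK construction (IBE master key in $\cS$, OTS in $\cT_P$), the same bifurcated case analysis between (a) a Type-1 reduction to watermarkable IBE unremovability when the pirate never queries on $\mathsf{svk}^*$, and (b) a Type-2 reduction to plain strong one-time unforgeability when it does. The paper gives a self-contained two-case argument rather than literally invoking \Cref{thm:watermark_compiler}, but the logical content is the same, and your two verification items (i) and (ii) correspond precisely to the subtleties the paper handles: (i) is discharged by the extraction-syntax requirement (\Cref{frame:extract_syntax}) on $\wibe.\extract$, and (ii) is discharged in the paper by observing that the $\mathsf{svk}^*$ sampled inside $\extract$ and the one hardcoded by the reduction are both fresh uniform samples independent of $\A$'s stage-1 view, so they coincide in distribution up to the negligible event that $\A$ already queried a valid decryption under that $\mathsf{svk}^*$.
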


Now we describe the construction for watermarkable implementation of CCA-secure PKE from a watermarkable implementation of selective-ID secure IBE scheme $\wibe = (\wmsetup, \KeyGen, \\ \Enc, \Dec, \watermark, \extract)$ and a one-time strongly unforgeable signature scheme $\OTS = (\KeyGen, \sign, \verify)$ (without watermarking).

\begin{description}
      \item $ \wmsetup(1^\lambda) \to (\msk, \mpk, \mk, \xk)$: 
      compute $(\wibe.\msk, \wibe.\mpk, \wibe.\mk, \wibe.\xk) \gets \wibe.\wmsetup(1^\lambda)$; output $\sk = \wibe.\msk; \pk = \wibe.\mpk; \mk = \wibe.\mk; \xk = (\wibe.\xk, \wibe.\mpk)$.

    \item $\Enc(\pk, m) \to \ct$:
    \begin{enumerate}
        \item compute $(\OTS.\vk, \OTS.\sk) \gets \OTS.\KeyGen(1^\lambda)$;

        \item compute $\ct_\vk \gets \wibe.\Enc(\pk, \id = \vk, m)$; 

        \item compute $\sigma \gets \OTS.\sign(\OTS.\sk, \ct_\vk)$;

        \item output $\ct = (\ct_\vk, \sig, \vk)$.
    \end{enumerate}

    \item $\Dec(\sk, \ct) \to m: $
    \begin{enumerate}
        \item parse $\ct := (\ct_\vk, \sig, \vk), 
 \sk = \wibe.\msk$;

        \item if $\OTS.\verify(\vk, (\ct_\vk, \sig)) = 1$, continue; else abort and output $\bot$.

        \item compute $\sk_\vk \gets \wibe.\KeyGen(\wibe.\msk, \id = \vk)$;

        \item Output $m \gets \wibe.\Dec(\sk_\vk, \ct_\vk)$.
    \end{enumerate}

    \item $\watermark(\mk, \sk, \tau) \to \sk_\tau$: 
   compute $\sk_\tau \gets \wibe.\watermark(\mk, \sk = \wibe.\msk, \tau)$.

    \item $\extract(\xk, \aux = \bot, C) \to \tau/\bot: $ 
     \begin{enumerate}
     
        \item parse $\xk := \wibe.\xk,  \wibe.\mpk$. Samples $(\OTS.\vk^*, \OTS.\sk^*) \gets \OTS.\KeyGen(1^\lambda)$.
        
        \item Create a circuit $C_\ibe$ that has black-box access to $C$.

        \begin{itemize}
        \item $C_\ibe$ is hardcoded with challenge $\id^* = \OTS.\vk^*$ and $\OTS.\sk^*$, $\wibe.\mpk$ 
      \item $C_\ibe$ simulates the CCA PKE security game for $C$ by simulating $\Dec(\sk, \cdot)$ oracles as follows: 
            \begin{enumerate}
                \item  when $C$ makes a query $\ct = (\ct_\vk, \sig, \vk)$; first check if $\OTS.\verify(\vk, (\ct_\vk, \sig)) = 1$, if yes continue; else output $\bot$. 

                \item make an external query to a $\wibe.\KeyGen(\msk, \cdot)$ oracle for $\sk_\vk \gets \wibe.\KeyGen(\msk, \vk)$. Output $m \gets \wibe.\Dec(\sk_\vk, \ct_\vk)$.

            \end{enumerate}
             \item In the challenge phase, $C$ outputs $(m_0, m_1)$; then $C_\ibe$ submits $(m_0, m_1)$ to the external challenger and receives challenger ciphertext $\ct_{b, \vk^*} = \Enc(\mpk, \OTS.\vk^*, m_b), b \gets \{0,1\}$;

\item feed $C$ with $\ct_b^* = (\ct_{b, \vk^*}, \sig^* = \OTS.\sign(\OTS.\sk^*, \ct_{b, \vk^*}), \OTS.\vk^*)$.

 \item Continue to simulate $\Dec(\sk, \cdot)$ oracle
          for $C$ and only answer queries $\ct \neq \ct_b^*$; finally $C$ outputs a bit  $b'$ and $C_\ibe$ outputs the same.
        \end{itemize}
        \item Output  $\tau/\bot \gets \wibe.\extract(\wibe.\xk, \aux = \id^*, C_\ibe)$; 
    \end{enumerate}
\end{description}

\begin{proof}
    We devide our analysis into cases.

    Suppose there exists adversary $\A$ that breaks the $\gamma$-unremovability of watermarkable CCA-secure PKE for some non-negligible $\gamma$, i.e. $\A^{\Dec(\sk, \cdot),\watermark(\mk, \sk,\cdot)}(\pk)$ produces some program $C$ such that $\Pr[G_{CCA}(\sk, \pk, C) = 1] \geq \frac{1}{2}+\gamma$ (for $G_{CCA}$ see \Cref{sec:wm_cca_ske_def}) and $\Pr[\extract(\xk, C) \in\cQ] \geq \epsilon$ for some non-negligible $\epsilon$.
   

    For any $(\sk, \pk, \xk, \mk)$ generated by $\wmsetup$ and any $\A^{ 
    \watermark(\mk,
    \sk, \cdot)}(\pk)$ producing a $\gamma$-good $C$ such that $\Pr[\extract(\xk, \pk, C) \notin\cQ] \geq \epsilon$, one of the following cases must hold:

    \begin{itemize}
        \item \textbf{Case 1: during the execution of $\extract(\xk, C)$, the program $C$ produced by the adversary $\A$ will only make queries that have the format $\ct = (\ct_\vk, \vk, \sig)$, where $\vk \neq \vk^*$, $\vk*$ is the $\OTS$'s verification key part in the challenge ciphertext $\ct_b^* = (\ct_{b, \vk^*}, \sig^*, \OTS.\vk^*)$}.

        In this case, we show that we can use $\A$ to break the $\gamma$-unremovability of the watermarkable IBE.

        The reduction $\cB_\wibe$ interacts with $\A$ as follows:
        \begin{itemize}

            \item $\cB$ samples $(\vk^*, \sk^*) \gets \OTS.\KeyGen(1^\lambda)$ and submits $\vk^*$ as the challenge $\id^*$ to the challenger;
            \item $\cB_\wibe$ receives $\wibe.\mpk$ from the challenger and gives it as the public key to $\A$. For $\A$'s marking queries, $\cB$ queries the marking oracle $\wibe.\watermark(\wibe.\mk, \wibe.\msk, \cdot)$;


        \end{itemize}

        Next, $\A$ outputs a program $C$.
        $\cB$ creates a circuit $C_\wibe$ with black-box access to $C$. $C$ is treated as a stage-2 adversary in the reduction from CCA-secure PKE to selective-ID secure IBE:
        \begin{itemize}
            \item $C_\wibe$ is hardcoded with the challenge $\id^* = \vk^*$ and its corresponding signing key $\sk^*$ and $\wibe.\mpk$. Note that $C_\wibe$ does not receive any queries $\cB$ has answered/queried when answering $\A$s queries in the previous stage.

            \item $C_\wibe$ simulates answers to $C$s decryption queries:
            when $C$ makes a query $\ct = (\ct_\vk, \sig, \vk)$; first check if $\OTS.\verify(\vk, (\ct_\vk, \sig)) = 1$, if yes continue; else output $\bot$. Make an external query to the $\wibe.\KeyGen(\msk, \cdot)$ oracle for $\sk_\vk \gets \wibe.\KeyGen(\msk, \vk)$. Output $m \gets \wibe.\Dec(\sk_\vk, \ct_\vk)$. Since all the $\vk$ queried are not equal to $\vk^*$, these queries arre all valid.

            \item  In the challenge phase, $C$ outputs $(m_0, m_1)$; then $C_\wibe$ submits $(m_0, m_1)$ to the external challenger and receives challenger ciphertext $\ct_{b, \vk^*} = \Enc(\mpk, \vk^*, m_b), b \gets \{0,1\}$;

\item feed $C$ with $\ct_b^* = (\ct_{b, \vk^*}, \sig^* = \OTS.\sign(\sk^*, \ct_{b, \vk^*}), \vk^*)$.

 \item Continue to simulate $\Dec(\sk, \cdot)$ oracle
          for $C$ and only answer queries $\ct \neq \ct_b^*$; finally $C$ outputs a bit  $b'$ and $C_\wibe$ outputs the same.
        \end{itemize}

In the execution of the extraction $\extract(\xk, C)$, $C$ is used in the program $C_\ibe$ as a black box and by our design, it must hold that $\Pr[\wibe.\extract(\xk, \mpk, C_\ibe) \in \cQ \wedge C \text{ is $\gamma$-good}] \geq \epsilon$ for some non-negligible $\epsilon$. 
Note that $C_\ibe$ created by $\extract(\xk, C)$ differs from $C_\wibe$ created by $\cB$ in that: $\extract(\xk, C)$ samples a fresh $(\vk^*, \sk^*) \gets \OTS.\KeyGen(1^\lambda)$;  $C_\wibe$ is hardcoded with a  $(\vk^*, \sk^*)$ sampled by $\cB$ previously. However,  the challenge $\id$'s in both settings are sampled freshly at random, independent of the $\wibe.\KeyGen(\wibe.\msk,  \cdot)$(i.e. $\Dec(\sk, \cdot)$) queries made in the stage of $\cB$ interacting with $\A$ before $\A$ outputs $C$ (i.e. stage-1 in the watermarking-compatible reduction). Therefore, the output of $C_\wibe, C_\ibe$ are the same unless in the real CCA PKE security game, $\A$ has made a query on some ciphertext $\ct = (\ct_{\vk^*} = \wibe.\Enc(\mpk, \vk^*, m_b), \vk^*, \sig)$
for the same $\vk^*$ that is later sampled by $\extract(\xk, C)$  and for one of the challenge messages $m_b \in \{m_0. m_1\}$. But this happens with negligible probability since $\vk^*$ is sampled at random.


Since we have $\Pr[\wibe.\extract(\xk, \aux = \vk^*, C_\ibe) \in \cQ] \geq \epsilon$ and given our assumption in case 1, we also have $\Pr[G_\ibe(\wibe.\mpk, \wibe.\msk, \id^* = \vk^*, C_\ibe) = 1] \geq 1/2+ \gamma$, we should deduce that the program $C_\wibe$ produced by the reduction $\cB_\wibe$ from $C$ performs the same with all but negligible difference: with probability $\epsilon - \negl(\lambda)$, we still have $\wibe.\extract(\xk, \aux = \id^* C_\wibe) \in \cQ]$ and  $\Pr[G_\ibe(\wibe.\mpk, \wibe.\msk, \aux = \id^*,  C_\wibe) = 1] \geq 1/2+ \gamma$.

\vspace{\baselineskip}

        \item \textbf{Case 2:
        during the execution of $\extract(\xk, C)$, the program $C$ produced by the adversary $\A$ will make at least one query that have the format $\ct = (\ct_\vk, \sig, \vk)$, where $\vk =  \OTS.\vk^*$, $\OTS.\vk^*$ is the $\OTS$'s verification key part in the challenge ciphertext $\ct^*$; further more, the query $\ct = (\ct_\vk, \sig, \vk)$ satisfies $\OTS.\verify(\OTS.\vk^*, \ct_\vk, \sig) =  1$ and $\ct^* \neq  \ct$}

        In this case, we show a Type-2 watermarking compatible reduction  to the strongly unforgeability security of the one-time signature scheme.

        In stage-1 of the watermarking-compatible reduction, $\cB_\OTS^1$ operates as follows:
        \begin{itemize}
            \item $\cB_\OTS^1$ receives $\vk^*$ from the challenger. It samples $\wibe.\xk, \msk,\mpk,\mk$ on its own. $\cB_\OTS^1$ can thus answer all the marking queries from $\A$.


        \end{itemize}

        After $\A$ outputs program $C$, since we are given that  $\Pr[G_{CCA}(\sk,\pk, C) = 1] \geq 1/2+\gamma$ and $C$ makes decryption queries on valid ciphertexts with the $\vk^*$.
        $\cB_\OTS$ can do the following.  

        Entering stage-2 of the reduction, $\cB_\OTS^2$ receives all stateful information from $\cB_\OTS^1$ except for the queries made by $\A$.

        $\cB_\OTS^2$ runs $C$ and simulates the CCA-security game, $\Dec(\sk, \cdot)$ oracles for $C$ as follows:
        \begin{itemize}
            \item   For decryption queries o the format $\ct = (\ct_0, \vk, \sig)$,  first checks if $\OTS.\verify(\vk, (\ct_\vk, \sig)) = 1$, if yes continue; else output $\bot$. 

            Then decrypt $\ct$ directly using $\sk_\vk \gets \KeyGen(\msk, \vk)$ and $m \gets \Dec(\sk_\vk, \ct_\vk)$.
        \end{itemize}

        In the challenge phase, $C$ submits 2 messages $(m_0, m_1)$ and $\cB_\OTS^2$ outputs challenge ciphertext $\ct^* = (\ct_{\vk^*, b} = \wibe.\Enc(\mpk,\vk^*, m_b \gets (m_0, m_1)), \vk^*, \sig^*)$ by making one query to the signing oracle $\OTS.\sign(\sk^*, \cdot)$ on message $\ct_{\vk^*, b}$.

        Then the query phase continues and $C$ is only allowed to query on $\ct\neq \ct^*$. Since in the end, there will be one query $\ct$ such that $\ct = (\ct', \vk^*, \sig')$ where $\ct \neq \ct^*$ but $\OTS.\verify(\vk^*, \ct', \sig') = 1$. Thus $\cB_\OTS^2$ can output $(\ct', \sig')$ as a forgery.
    \end{itemize}
\end{proof}

\fi

\ifllncs
\printbibliography
\else
\bibliography{main}
\bibliographystyle{alpha}
\fi

\appendix

\ifllncs
\else
\fi 

\ifllncs
\else
\fi

\ifllncs
\else

\section{Watermarkable CCA-secure PKE from Watermarkable CPA-secure PKE and Statistically Simulation-Sound NIZK}
\label{sec:naor_yung_wm}

In this section we present a watermarkable CCA2-secure PKE, based on a modification of the \cite{naor1990public} scheme.


\subsection{Preliminaries: Statistically Simulation Sound $\nizk$ Proof for $\NP$}
\label{def:sss-nizk}
A statistically simulation sound $\nizk$ proof for language $L \in \NP$ 
with relation $R_L$ consists of the following efficient algorithms.

\begin{itemize}
    \item $\nizk.\setup(1^\lambda) \to (\crs, \td)$: On input the security parameter
$1^\lambda$, the setup returns a common reference string $\crs$ and a trapdoor $\td$.

\item $\nizk.\prove(\crs, w, x) \to \pi$: On input a common  reference string $\crs$, a classical witness $w$, and a
statement $x$, the proving algorithm returns a proof $\pi$.

\item $\nizk.\ver(\crs, \pi, x) \to 0/1$: On input a common reference string $\crs$, a proof $\pi$, and a statement $x$, the verification
algorithm returns a bit in $\{0, 1\}$.
\end{itemize}

A SSS-$\nizk$ for $\NP$ scheme should satisfy the following properties:

\paragraph{Correctness}
A NIZK proof $(\nizk.\setup, \nizk.\prove, \nizk.\ver)$ is correct if there exists
a negligible function $\negl(\cdot)$ such that for all $\lambda \in \N$, all $x \in L$, and all $w \in \mathcal{R}_L(x)$ it holds that
$$\Pr[\nizk.\ver(\crs, \nizk.\prove(\crs, w, x), x) = 1 ] = 1 - \negl(\lambda)$$
where $\crs \gets \nizk.\setup(1^\lambda)$.

\paragraph{Statistical Soundness}
A $\nizk$ proof $(\nizk.\setup, \nizk.\prove, \nizk.\ver)$ is computationally sound if there exist a negligible function $\negl(\cdot)$ such that for all unbounded adversaries $\A$ and all $x \notin L$, it holds that:
$$\Pr[\nizk.\ver(\crs, \pi \gets \A(\crs, x), x) = 1] = \negl(\lambda)$$

where $\crs \gets \nizk.\setup(1^\lambda)$.

\paragraph{Computational Zero Knowledge}
A $\nizk$ proof $(\nizk.\setup, \nizk.\prove, \nizk.\ver)$ is computationally zero-knowledge if there exists a PPT simulator $\Sim$ such that for all non-uniform PPT adversaries, all statements $x \in L$ and all witnesses $w \in \mathcal{R}_L(x)$, it holds that
$\Sim(1^\lambda, \crs, \td,
, x) \approx_c  \nizk.\prove(\crs, w, x)$
where $\crs \gets \nizk.\setup(1^\lambda)$.

\paragraph{Statistically Simulation Soundness}
A $\nizk$ proof satisfies statistically simulation soundness if
it is infeasible to convince an honest verifier of a false statement even when the adversary itself is provided
with a simulated proof. 

Formally, for all statements $x$ and all (even unbounded) adversaries $\A =(\Sim_1, \Sim_2)$ , there exists a negligible function $\negl(\cdot)$:
$$ \Pr\left[(\crs, \td) \gets \Sim_1(1^\lambda,x), \pi \gets \Sim_2(x,\crs, \td) : \substack{\exists(x',\pi') \wedge x' \neq x \wedge x' \in L_{no}\\ \text{ and }\\ V (\crs, x', \pi') = 1} \right] \leq \negl(\cdot)$$
where $\td$ is a trapdoor generated by the simulator along with the simulated $\crs$.

\paragraph{Instantiation}
A SSS-$\nizk$ proof system can be realized from a standard $\nizk$ proof system \cite{GGHRSW13}, which can be built on the hardness of LWE \cite{peikert2019noninteractive, waters2024nizk}.

\subsection{Construction and Security}

\begin{theorem}
    \label{thm:wm_cca_pke}
    Assuming the hardness of LWE, there exists secure watermarkable implementation of CCA-secure PKE with private extraction and collusion resistance.
\end{theorem}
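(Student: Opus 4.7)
The plan is to instantiate the Naor--Yung--style compiler discussed in the technical overview (and formalized for general compositions in Theorem~\ref{thm:watermark_compiler}) and verify that its security proof is a watermarking-compatible reduction. Concretely, I would build $\WP$ from three ingredients: two independent instances of a watermarkable CPA-secure PKE $\WP_0, \WP_1$ (placed in the set $\cS$) and one statistically simulation-sound NIZK $\nizk = (\setup, \prove, \ver)$ for the language $L = \{(\pk_0,\pk_1,\ct_0,\ct_1) : \exists m, r_0, r_1 \text{ s.t. } \ct_b = \Enc(\pk_b, m; r_b)\}$ (placed in the set $\cT_K$, with its trapdoor $\td$ playing the role described in Remark~\ref{remark:set_t_k_nizk}). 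The watermarkable CPA-secure PKE is obtained from the LWE-based watermarkable selective-ID IBE of Theorem~\ref{thm:ibe_from_lwe} by instantiating a fixed identity, and SSS-NIZK under LWE is recalled in Section~\ref{def:sss-nizk}. Encryption outputs $(\ct_0,\ct_1,\pi)$ where $\pi \gets \nizk.\prove(\crs,(r_0,r_1,m),(\pk_0,\pk_1,\ct_0,\ct_1))$; decryption checks $\pi$ and returns $\Dec(\sk_0,\ct_0)$; and $\watermark$ watermarks both $\sk_0$ and $\sk_1$ with the same tag.

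Next I would re-cast the textbook Naor--Yung proof as two type-1 watermarking-compatible reductions to $\WP_0$ and $\WP_1$ that operate after a preparatory hybrid in which real NIZK proofs are replaced by simulated ones (justified by the zero-knowledge property of $\nizk$, which is exactly the ``$\cT_K$-hybrid'' invoked in Remark~\ref{remark: comp_sim_2}). In the hybrid game with simulated $\pi^*$, statistical simulation soundness of $\nizk$ guarantees that, except with negligible probability, every decryption query whose proof verifies satisfies $\Dec(\sk_0,\ct_0) = \Dec(\sk_1,\ct_1)$, so decryption queries may be answered using whichever key the reduction holds. The reduction function $f_b$ for $\WP_b$ simply forwards the pirate's guess bit. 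Both reductions are oblivious to the marking-query transcript from stage~1 (the stage-2 reduction only needs the other secret key $\sk_{1-b}$, the trapdoor $\td$, and access to its own challenger); the winning condition depends only on the stage-2 challenge, matching the format required in Section~\ref{sec:watermarking_compatible_construction}.

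The extraction algorithm instantiated by the composition template in Section~\ref{sec:watermark_composition} then does the following on input $(\xk_0,\xk_1,\td,\pk,C)$: for each $b \in \{0,1\}$ it builds a circuit $C_b$ that, using $\sk_{1-b}$ and $\td$, simulates the CCA game for $C$ (answering decryption queries via $\sk_{1-b}$, forging challenge proofs via $\td$, and relaying the challenge ciphertext component $\ct_b^*$ from its own CPA challenger), and then runs $\WP_b.\extract(\xk_b,C_b)$. The correctness of extraction on honestly marked keys, the meaningfulness, the collusion resistance, and the functionality preservation all follow from Lemmas~\ref{lem:unremovability_composition} and~\ref{lem:wm_collusion_resistant_compiler} once the reduction is shown to be watermarking-compatible.

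The main obstacle is the rigorous handling of the $\cT_K$ hybrid inside $\extract$. Because the extraction key only contains $(\xk_0,\xk_1,\td)$ and not the marking transcript, I must argue that (i)~by computational zero-knowledge, any $\gamma$-good pirate $C$ in the real CCA game remains $(\gamma - \negl(\lambda))$-good in the simulated-proof game that $\extract$ actually runs (otherwise a direct reduction, sampling $\msk_0,\msk_1$ itself and embedding the challenge $\crs$, would break ZK), and (ii)~statistical simulation soundness prevents $C$ from smuggling in a decryption query whose two inner ciphertexts decrypt to different messages, which is what lets $\extract$ answer queries with only one of $\sk_0,\sk_1$. After these two steps the adversary's $\gamma$-goodness transfers, mutatis mutandis, to at least one of $C_0, C_1$, and the composition theorem yields strong $\gamma$-unremovability of $\WP$ under the stated LWE assumption.
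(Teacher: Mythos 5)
Your proposal follows essentially the same Naor--Yung route that the paper uses in its proof of this theorem (Section~\ref{sec:naor_yung_wm}): two watermarked CPA-PKE instances in $\cS$, an SSS-NIZK in $\cT_K$ with the trapdoor placed in $\xk$, a ZK hybrid to justify simulated proofs inside $\extract$, SSS to justify decrypting via whichever $\sk_b$ the reduction lacks, and an invocation of the composition machinery (Lemmas~\ref{lem:unremovability_composition} and~\ref{lem:wm_collusion_resistant_compiler}) for collusion resistance. The only nuance you elide is the paper's intermediate hybrid $G_{\cca,H_2}$ (independent challenge bits for $\ct_1^*,\ct_2^*$, costing a factor two in $\gamma$), which is forced because $\extract$ obtains only one component from the CPA challenger and must sample the other itself; also worth noting the paper separately gives an alternative BHK-style proof of the same theorem in Section~\ref{sec:wm_cca2_pke_ibe} via watermarkable selective-IBE plus (unwatermarked) one-time signatures.
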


\begin{lemma}
\label{thm:wm_cca_pke_nizk_cpa}
    Assuming the security of watermarkable implementation of a CPA-secure PKE $\wpke = (\wmsetup,\Enc, \Dec, \watermark, \extract)$ and  statistically simulation sound NIZK scheme $\nizk = (\setup, \prove, \verify)$ (without watermarking) , there exists a secure watermarkable implementation  of CCA(2)-secure PKE.
\end{lemma}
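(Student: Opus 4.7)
My plan is to instantiate the Naor--Yung construction inside the composition framework of \Cref{thm:watermark_compiler}, with the two CPA-secure PKE instances placed in the set $\cS$ (their keys are generated by $\wmsetup$ and used in $\Dec$) and the NIZK placed in the set $\cT_K$ (its $\crs$ is generated at setup and is public, while its trapdoor $\td$ is only needed during the $\extract$ procedure to simulate proofs). Concretely: $\wmsetup$ runs $\wpke.\wmsetup$ twice to obtain $(\sk_1,\pk_1,\mk_1,\xk_1), (\sk_2,\pk_2,\mk_2,\xk_2)$ and runs $\nizk.\setup$ (using the simulation-based setup) to produce $(\crs,\td)$; encryption outputs $(\ct_1,\ct_2,\pi)$ where $\pi$ proves plaintext equality under $\pk_1,\pk_2$; decryption verifies $\pi$ and outputs $\wpke.\Dec(\sk_1,\ct_1)$; $\watermark$ concatenates $\wpke.\watermark(\mk_i,\sk_i,\tau)$ for $i=1,2$; and $\extract$ creates two reduction circuits $C_1,C_2$ using $\td$, and calls $\wpke.\extract(\xk_i,\cdot)$ on each.

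The first step is to verify that the usual Naor--Yung security proof is a watermarking-compatible reduction in the sense of \Cref{sec:watermarking_compatible_construction}. Here one takes stage~1 of $G_{\cca}$ to be the marking-query phase only (the CCA game has no pre-challenge interaction that depends on $\sk$ in a way the reduction cannot re-simulate), and stage~2 to be the challenge-plus-query-plus-guess phase that $C$ plays. The Type~1 reduction to $\wpke_i$ works as follows: $\cB_i$ receives $\pk_i$ from the challenger, samples $(\sk_{3-i},\pk_{3-i})$ itself, and generates $(\crs,\td)\gets \Sim_1(1^\lambda)$ so that it can later forge a proof for the challenge ciphertext. Decryption queries from $C$ are answered using $\sk_{3-i}$ after verifying $\pi$; by statistical simulation soundness, the $\ct_{3-i}$ component of any accepted ciphertext encrypts the same plaintext as $\ct_i$, so this answer is correct except with negligible probability. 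For the challenge, $\cB_i$ forwards $(m_0,m_1)$ to its challenger, encrypts $m_b$ (self-chosen $b$) under $\pk_{3-i}$, forms the challenge using $\Sim_2$ for the NIZK, and outputs whatever $C$ outputs as its reduction function $f_i$. Crucially, in stage~2 neither $\cB_i$ needs to remember the stage-1 decryption queries (which in the watermarking game are replaced by marking queries that the reduction can answer itself via the marking oracle of $\wpke_i$), so the reduction is indeed watermarking-compatible.

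Next I would argue that $\nizk$ sits naturally in $\cT_K$ as described in \Cref{remark:set_t_k_nizk} and \Cref{remark: comp_sim_2}: its secret ``trapdoor'' is used by $\extract$ only to simulate the hybrid experiment in which challenge proofs are generated by $\Sim_2$ rather than by $\prove$, and the ``rest'' of the security proof (the reductions to CPA of $\wpke_1$ and $\wpke_2$) takes place inside this simulated hybrid. The hybrid-game transition uses computational zero-knowledge of the NIZK, which is exactly a Type~2 reduction: a reduction that breaks ZK given an unremovability adversary $\A$ can sample all $\wpke$ keys itself, answer $\A$'s marking queries, and finally distinguish real vs.\ simulated $\crs$ by checking whether $\A$'s output $C$ remains $\gamma$-good in the extraction experiment. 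Together these two Type~1 reductions and the Type~2 reduction for the NIZK exhaust the ways $\A$ could break the composed scheme, so \Cref{lem:unremovability_composition} delivers strong unremovability.

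The only delicate point---which I expect to be the main obstacle---is the use of statistical simulation soundness to license decryption of adversarial ciphertexts using the ``wrong'' key inside the extraction-simulated game. In a standard Naor--Yung proof this is the step that goes from a real CRS to a simulated one and then argues that no $C$-produced ciphertext $(\ct_1',\ct_2',\pi')\neq \ct^*$ with a verifying proof can encrypt differing plaintexts. I will need to justify that this argument still applies when $\Sim_2$ is invoked to fake the challenge proof (which is where \emph{statistical} simulation soundness, rather than plain soundness, is essential) and to check that the $\extract$ algorithm's simulated game, as constructed, matches the hybrid in which the Type~1 CPA reductions actually live. Once these pieces are assembled, functionality preservation and correctness of extraction reduce in a routine manner to the corresponding properties of $\wpke$, and \Cref{thm:watermark_compiler} yields the conclusion; collusion resistance and privacy of extraction carry over via \Cref{lem:wm_collusion_resistant_compiler} and \Cref{lem:public_extract_compiler} (the latter failing here because $\cT_K$ is nonempty, giving us private extraction only).
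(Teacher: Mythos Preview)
Your proposal is correct and follows essentially the same approach as the paper: both instantiate Naor--Yung with the two $\wpke$ instances in $\cS$ and the SSS-NIZK in $\cT_K$, use computational zero-knowledge to pass to a hybrid where the challenge proof is simulated via $\td$, invoke statistical simulation soundness to justify answering decryption queries with the ``other'' secret key, and then reduce to unremovability of each $\wpke_i$. The paper carries this out as a direct sequence of hybrids (\Cref{claim:hyb1_nizk}, \Cref{claim:hyb_2_pke2}, \Cref{claim:reduction_wpke2}) rather than by an explicit appeal to \Cref{thm:watermark_compiler}, and it makes the independent-bits step explicit (losing a factor of $2$ in $\gamma$), but the content is the same as what you outline.
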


The two building blocks: watermarkable implementation of a CPA-secure PKE $\wpke = (\wmsetup,\Enc, \Dec, \\ \watermark, \extract)$ and a SSS-NIZK scheme $\nizk = (\setup, \prove, \verify)$ can both be obtained from LWE.

\paragraph{Construction} Given a watermarkable implementation of a CPA-secure PKE $\wpke = (\wmsetup,\Enc, \\ \Dec, \watermark, \extract)$ and a SSS-NIZK scheme $\nizk = (\setup, \prove, \verify)$, the construction of watermarkable implementation  of CCA-secure PKE is as follows.

\begin{description}
    \item $\KeyGen(1^\lambda, 1^n):$
     Compute $(\wpke.\pk_1, \wpke.\sk_1, \wpke.\xk_1, \wpke.\mk_1) \gets \wpke.\wmsetup(\lambda)$; $(\wpke.\pk_2, \wpke.\sk_2, \wpke.\xk_2, \wpke.\mk_2) \gets \wpke.\wmsetup(\lambda)$;

     Compute $(\crs, \td) \gets \nizk.\setup(1^\lambda)$;
    Output $\sk = (\wpke.\sk_{i})_{i \{1,2\}}; \pk = (\{\wpke.\pk_{i}\}_{i\in \{1,2\}}, \crs); \xk = (\{\wpke.\xk_{i}\}_{i\in \{1,2\}}, \td); \mk = \{\wpke.\mk_{i}\}_{i\in \{1,2\}}$ 
    
    \item $\Enc(\pk, m)$: 
    \begin{itemize}
        \item parse $\pk := \pk_1, \pk_2, \crs$; 
        
        \item compute 
        $\ct_1 \gets \wpke.\Enc(\pk_{1}, m); \ct_2 \gets \wpke.\Enc(\pk_2, m)$;

        \item compute $\pi \gets \nizk.\prove(\crs, (\ct_1, \ct_2), (r_1, r_2, m))$ for the following statement:

        $\exists$ witness $(r_1, r_2 , m)$ such that $\ct_i = \wpke.\Enc(\pk_{i}, m; r_i)$ for all $i \in [2]$, where $r_i$ is the randomness used in encryption.

        \item Output $\ct = (\{\ct_i\}_{i \in [2]},  \pi)$.
    \end{itemize}

    \item $\Dec(\sk, \ct)$: 
    \begin{itemize}
        \item parse $\ct = (\ct_1, \ct_2, \pi); \sk = (\sk_1,\sk_2)$;
        \item if $\nizk.\verify(\crs, \pi, (\ct_1, \ct_2)) = 1$, continue; else abort and output $\bot$.

        \item compute $m \gets \wpke.\Dec(\sk_{1}, \ct_1)$;
        output $m'$. 
    \end{itemize}

    \item $\watermark(\mk, \sk, \tau)$: parse $\mk = (\mk_1, \mk_2); \sk = (\sk_1, \sk_2)$; output $(\sk_{\tau,1} \gets \wpke.\watermark(\sk_1, \tau); \sk_{\tau,2} \gets \wpke.\watermark(\sk_2, \tau)$;
    
    \item $\extract(\xk, \pk, \aux, C)$: \begin{itemize}
        \item parse $\xk := (\xk_1, \xk_2, \nizk.\td); \pk := (\pk_1, \pk_2, \crs)$. 
        Initialize an empty tuple $\vec{\tau}$;
        
        For $i = 1, 2$:
        Create the following circuit $C_i$ with black-box access to $C$:
        \begin{itemize}
            \item $C_i$ is hardcoded with $(\pk_1, \pk_2, \xk_{j \neq i},  \td, \crs)$ and simulates the $\ccapke$ game for $C$ as follows: 
            
            \item For $C$'s decryption queries $\ct = (\ct_1, \ct_2, \pi)$: 
            \begin{itemize}
                \item First check if $\nizk.\verify(\crs, \pi) = 1$, if 0 output $\bot$; else continue;

                \item By the extraction key simulation property, since $\xk_{j \neq i}$ can be used to simulate the oracles used in $G_{CPA}(\pk_j, \sk_j, \cdot)$, $C_i$ can simulate the oracle $\wpke.\Dec(\sk_j, \cdot)$ and thus decrypt $\ct_j$ in the ciphertext $\ct$ to output $m$.
            \end{itemize}

            \item $C$ submits challenge messages $(m_0, m_1)$; $C_i$ submits $(m_0, m_1)$ to the external challenger; $C_i$ receives challenge ciphertext $\ct_i^* = \wpke.\Enc(\pk_i, m_b), b \gets \{0,1\}$ from the challenger.
            
            $C_i$ prepares the following ciphertext:
            compute $\ct_j \gets \wpke.\Enc(\pk_j, m_{b_j}), b_j \gets \{0,1\}$;
            
            compute $\widehat{\pi} \gets \Sim(\td, \crs, (\ct_1^*, \ct^*_2))$  where $\Sim$ is the simulator algorithm for $\nizk$.
            
            Then it sends 
            $\ct^{*} = (\ct_1^*, \ct^*_2, \widehat{\pi})$ to $C$.
            
            \item $C$ continues to simulate the decryption oracle as above to decrypt only valid ciphertexts $\ct \neq \ct^*$.
            
            \item In the end, $C_i$ outputs the same as
            $C$ outputs.
            \item Add $\tau/\bot \gets \wpke.\extract(\wpke.\xk_i, \wpke.\pk_i, \aux = \bot, C_i)$ to $\vec{\tau}$.
        \end{itemize}
        \item Output $\vec{\tau}$.
    \end{itemize}
    
\end{description}

\begin{proof} 
We make a few claims for the proof of unremovability.

First, we consider a hybrid game where stage-1 (before the unremovability adversary $\A$ produces program $C$) is the same, but the stage-2 game for $\ccapke$ where the program $C$ plays in is different. We call this stage-2 game $G_{\cca, H_1}(\sk, \pk, \cdot)$, defined as follows: 

 \begin{itemize}
            \item On input  $(\sk, \pk) = (\pk_1, \pk_2, \sk_1, \sk_2,  \td, \crs)$; the challenger plays the $\ccapke$ game for $C$ as follows: 
            
            \item For $C$'s decryption queries $\ct = (\ct_1, \ct_2, \pi)$: 
            \begin{itemize}
                \item First check if $\nizk.\verify(\crs, \pi) = 1$, if 0 output $\bot$; else continue; output $m' \gets \wpke.\Dec(\sk_1, \ct_1)$.
            \end{itemize}

            \item $C$ submits challenge messages $(m_0, m_1)$ 
            
            The challenger prepares the following ciphertext:
            compute $\ct_i^* \gets \wpke.\Enc(\pk_i, m_b)$ for all $i = 1,2$, $b \gets \{0,1\}$;
            
            \underline{compute $\widehat{\pi} \gets \Sim(\td, \crs, (\ct_1^*, \ct^*_2))$  where $\Sim$ is the simulator algorithm for $\nizk$.}
            
            Then it sends 
            $\ct^{*} = (\ct_1^*, \ct^*_2, \widehat{\pi})$ to $\A$ and $C$ outputs a guess $b'$.

                \item continues to simulate the decryption oracle as above to decrypt only valid ciphertexts $\ct \neq \ct^*$.
        \end{itemize}

    \begin{claim}
    \label{claim:hyb1_nizk}
          Assuming the computational zero-knowledge property of $\nizk$, for any admissible PPT $C$, if $C$ is a $\gamma$-good program in the original game $G_{\ccapke}(\sk, \pk, \cdot)$, then $C$ is a $(\gamma - \negl(\lambda))$-good program in the above 
            $G_{\cca, H_1}(\sk, \pk, \cdot)$ defined.
    \end{claim}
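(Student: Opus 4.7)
The plan is to prove this by a straightforward reduction to the computational zero-knowledge property of $\nizk$. Suppose for contradiction that there is an admissible PPT $C$ that is $\gamma$-good in $G_{\ccapke}(\sk,\pk,\cdot)$ but whose winning probability in $G_{\cca,H_1}(\sk,\pk,\cdot)$ drops by a non-negligible amount $\delta(\lambda)$. I would build a PPT distinguisher $\cD$ against the zero-knowledge game that tells apart real proofs (produced by $\nizk.\prove$) from simulated proofs (produced by $\Sim(\td,\crs,\cdot)$).

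The distinguisher $\cD$ receives the challenge $\crs$ from its own challenger and gets either oracle access to $\nizk.\prove(\crs,\cdot,\cdot)$ or to $\Sim(\td,\crs,\cdot)$. It then internally plays the stage-$2$ CCA game with $C$ as follows. First, $\cD$ runs $\wpke.\wmsetup$ twice itself to obtain $(\pk_1,\sk_1),(\pk_2,\sk_2)$, and hands $C$ the public key $(\pk_1,\pk_2,\crs)$. Since $\cD$ holds both PKE secret keys, it answers each decryption query $(\ct_1,\ct_2,\pi)$ in exactly the way both real and hybrid games do: check $\nizk.\ver(\crs,\pi,(\ct_1,\ct_2))=1$, and if so return $\wpke.\Dec(\sk_1,\ct_1)$. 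When $C$ submits $(m_0,m_1)$, $\cD$ samples $b\gets\{0,1\}$, computes $\ct_i^*\gets\wpke.\Enc(\pk_i,m_b;r_i)$ honestly for $i\in\{1,2\}$, queries its own oracle on statement $(\ct_1^*,\ct_2^*)$ with witness $(r_1,r_2,m_b)$ to get $\pi^*$, and sends $(\ct_1^*,\ct_2^*,\pi^*)$ to $C$. Post-challenge decryption queries are handled as before, and finally $\cD$ outputs $1$ if and only if $C$ correctly guesses $b$.

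The key observation is that the view of $C$ in $\cD$'s simulation is \emph{identical} to its view in $G_{\ccapke}(\sk,\pk,\cdot)$ when the oracle is the real prover, and \emph{identical} to its view in $G_{\cca,H_1}(\sk,\pk,\cdot)$ when the oracle is the simulator; this uses the fact that both games differ only in the proof component of the challenge ciphertext, while decryption and everything else is computed the same way. Consequently,
\[ \left|\,\Pr[\cD=1 \mid \text{real}] - \Pr[\cD=1 \mid \text{simulated}]\,\right| \;=\; \left|\,\Pr[C \text{ wins in } G_{\ccapke}] - \Pr[C \text{ wins in } G_{\cca,H_1}]\,\right| \;\geq\; \delta(\lambda), \]
contradicting the computational zero-knowledge property of $\nizk$.

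The only subtlety to handle is ensuring the oracle query is on a true statement so the zero-knowledge guarantee applies: this holds because $\cD$ generates $\ct_1^*,\ct_2^*$ itself as encryptions of the same message $m_b$ under $\pk_1,\pk_2$ with known randomness $r_1,r_2$, so $(r_1,r_2,m_b)$ is a valid witness. The argument extends trivially to the multi-theorem setting by a standard hybrid if one wishes to treat ZK with adversarial statements, but here a single query suffices. This yields the claimed $(\gamma-\negl(\lambda))$ bound.
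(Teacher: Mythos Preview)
Your proposal is correct and follows essentially the same approach as the paper's proof: both observe that the two games differ only in whether the challenge proof is real or simulated, and both build a distinguisher against computational zero-knowledge that samples the PKE keys itself (so it can answer decryption queries) and receives the $\crs$ from the ZK challenger. Your write-up is simply a more detailed instantiation of the paper's one-line sketch.
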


    \begin{proof}
     We refer the definition of $G_{\ccapke}(\sk, \pk, C)$ to \Cref{sec:wm_cca2_pke_def}. 

Let $\view_{{\cca, H_1}}$(see \Cref{def:game_view} for defintion) be the transcript output by the above game.
     
It is to see that the only difference in $G_{{\cca, H_1}}(\sk, \pk, C)$ and $G_{\ccapke}(\sk, \pk, C)$ is in the generation of the proof $\pi$: $G_{\ccapke}(\sk, \pk, C)$ the proof is generated honestly using $\nizk.\prove$ and in $G_{{\cca, H_1}}(\sk, \pk, C)$ it is generated by $\Sim(\td, \cdot)$. By the computational zero knowledge property of $\nizk$,
$\view_{G_{\ccapke}}$ and $\view_{{\cca, H_1}}$ are computationally indistinguishable.
Otherwise we can build a distinguisher, given $\nizk.\crs$, that samples $\wmsetup$ on its own and use $C$ to break the computational zero knowledge property.

Therefore, the output distributions of any admissible PPT $\A$ in $G_{{\cca, H_1}}(\sk, \pk, C)$ and $G_{\ccapke}(\sk, \pk, C)$ must be computationally indistinguishable. Thus, any $\gamma$-good $C$ in $G_{\cca, H_1}(\sk, \pk, \cdot)$ must be $(\gamma-\negl(\lambda))$-good in $G_{\ccapke}(\sk, \pk, \cdot)$
        
    \end{proof}

    Next, we consider a next stage-2 hybrid game $G_{\cca, H_2}(\sk, \pk, \cdot)$ (note that the stage before $\A$ outputs $C$ is still the same:

 \begin{itemize}
            \item On input  $(\sk, \pk) = (\pk_1, \pk_2, \sk_1, \sk_2,  \td, \crs)$;
            
            \item For $C$'s decryption queries $\ct = (\ct_1, \ct_2, \pi)$: 
            \begin{itemize}
                \item First check if $\nizk.\verify(\crs, \pi) = 1$, if 0 output $\bot$; else continue; output $m' \gets \wpke.\Dec(\sk_1, \ct_1)$.
            \end{itemize}

            \item $C$ submits challenge messages $(m_0, m_1)$ 
            
            The challenger prepares the following ciphertext:
            compute \underline{$\ct_1^* \gets \wpke.\Enc(\pk_1, m_{b_1})$},\\ \underline{$b_1 \gets \{0,1\}$; compute $\ct_2^* \gets \wpke.\Enc(\pk_2, m_{b_2})$  $b_2 \gets \{0,1\}$}
            
            compute $\widehat{\pi} \gets \Sim(\td,\crs, (\ct_1^*, \ct^*_2))$  where $\Sim$ is the simulator algorithm for $\nizk$.
            
            Then it sends 
            $\ct^{*} = (\ct_1^*, \ct^*_2, \widehat{\pi})$ to $\A$ and $C$ outputs a guess $b'$.

                \item continues to simulate the decryption oracle as above to decrypt only valid ciphertexts $\ct \neq \ct^*$.
        \end{itemize}

    \begin{claim}
        \label{claim:hyb_2_pke2}

    For any admissible PPT $C$, if $C$ is a $\gamma$-good program in the game  $G_{\cca, H_1}(\sk, \pk, \cdot)$, then $C$ is a $\gamma/2$-good program in the above 
            $G_{\cca, H_2}(\sk, \pk, \cdot)$ defined.

    \end{claim}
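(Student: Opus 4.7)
The plan is to argue this purely information-theoretically by comparing the joint distributions of $(b_1, b_2, b')$ across $H_1$ and $H_2$. Since the two games agree on every coin toss outside the choice of $(b_1, b_2)$---namely $\wmsetup$, the challenge ciphertexts once $b_1, b_2$ are fixed, the simulated proof $\widehat{\pi}$, the decryption oracle answered with $\sk_1$, and $C$'s internal coins---the conditional probabilities $p_{ij} := \Pr[b' = 0 \mid b_1 = i, b_2 = j]$ are well-defined and take identical values in the two games, for $i, j \in \{0, 1\}$.

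First, I would rewrite the hypothesis in terms of these conditionals. In $H_1$ we have $b_1 = b_2 = b$ for $b \gets \{0, 1\}$, so
\[
\Pr[b' = b \text{ in } H_1] \;=\; \tfrac{1}{2}\, p_{00} + \tfrac{1}{2}(1 - p_{11}) \;=\; \tfrac{1}{2} + \tfrac{1}{2}(p_{00} - p_{11}) \;\geq\; \tfrac{1}{2} + \gamma,
\]
which gives $p_{00} - p_{11} \geq 2\gamma$.

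Next, I would compute $C$'s two natural success probabilities in $H_2$ (matching $b_1$, and matching $b_2$) by averaging over the four equally likely assignments of $(b_1, b_2)$:
\[
\Pr[b' = b_1 \text{ in } H_2] = \tfrac{1}{4}\bigl( p_{00} + p_{01} + (1 - p_{10}) + (1 - p_{11}) \bigr),
\]
\[
\Pr[b' = b_2 \text{ in } H_2] = \tfrac{1}{4}\bigl( p_{00} + (1 - p_{01}) + p_{10} + (1 - p_{11}) \bigr).
\]
Summing, the cross-terms $p_{01}, p_{10}$ cancel and
\[
\Pr[b' = b_1 \text{ in } H_2] + \Pr[b' = b_2 \text{ in } H_2] = 1 + \tfrac{1}{2}(p_{00} - p_{11}) \geq 1 + \gamma.
\]
By averaging, at least one summand is $\geq \tfrac{1}{2} + \gamma/2$, so there exists $i \in \{1, 2\}$ with $\Pr[b' = b_i \text{ in } H_2] \geq \tfrac{1}{2} + \gamma/2$. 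This is precisely the notion of $\gamma/2$-goodness needed downstream, since the extractor's circuit $C_1$ attacks $\wpke_1$ by trying to recover $b_1$, while $C_2$ attacks $\wpke_2$ by trying to recover $b_2$.

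The main obstacle is interpretational rather than technical: because $H_2$ has two independent challenge bits but $C$ outputs a single bit $b'$, one has to commit to reading ``$\gamma/2$-good in $H_2$'' disjunctively across $b_1$ and $b_2$. Beyond this bookkeeping the argument is a one-line averaging calculation invoking no computational assumption---the SSS-$\nizk$ property and the CPA-security of the $\wpke$ schemes are reserved for the hybrids that follow this claim.
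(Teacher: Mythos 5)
Your computation is correct, and it is actually more careful than the paper's own justification. The paper argues in one line: since $H_1$ and $H_2$ coincide on the event $b_1 = b_2$, which has probability $\tfrac{1}{2}$, $C$ is $\gamma/2$-good in $H_2$. Taken literally, that conditioning only yields $\Pr[\text{win}] \geq \tfrac{1}{2}(\tfrac{1}{2}+\gamma) + \tfrac{1}{2}\cdot 0 = \tfrac{1}{4}+\tfrac{\gamma}{2}$, because nothing in the sentence controls what $C$ does when $b_1 \neq b_2$. Your four-term average over the joint law of $(b_1,b_2)$, culminating in $\Pr[b'=b_1]+\Pr[b'=b_2] = 1+\tfrac{1}{2}(p_{00}-p_{11}) \geq 1+\gamma$, is what genuinely produces an advantage of $\gamma/2$ above one half, with no assumption beyond the identity of the conditional laws $p_{ij}$ across the two games.

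The interpretational issue you flag is substantive and worth keeping visible. The conclusion you obtain is disjunctive: for at least one $i \in \{1,2\}$, $\Pr[b'=b_i] \geq \tfrac{1}{2}+\tfrac{\gamma}{2}$. One cannot strengthen this to a specific index: a $C$ that decrypts $\ct_1$ under $\sk_1$ and outputs the matching bit, ignoring $\ct_2$ entirely, achieves $p_{00}-p_{11}$ near $1$ yet $\Pr[b'=b_2]=\tfrac{1}{2}$. The paper's wording, which asserts $\gamma/2$-goodness ``in $G_{\cca,H_2}$'' --- the game that the extractor's circuit $C_2$ simulates --- reads as this false stronger statement, but the downstream argument only ever uses the disjunction: $\extract$ runs both $C_1$ and $C_2$, and the paper's symmetric $H_3$ analysis (decryption under $\sk_2$, tied to $H_2$ by statistical soundness of the $\nizk$) covers the $b_1$ branch. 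So your proof delivers exactly what the enclosing lemma consumes, by a route the paper leaves implicit, and quietly repairs the imprecision in its stated claim.
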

    \begin{proof}
      The only difference between  $G_{\cca, H_1}(\sk, \pk, \cdot)$ and $G_{\cca, H_2}(\sk, \pk, \cdot)$ 
 is that the bits $b_i$ used to decide which of $m_0, m_1$ to encrypt for the challenge ciphertext is the same for $G_{\cca, H_1}(\sk, \pk, \cdot)$ and independent for $G_{\cca, H_2}(\sk, \pk, \cdot)$. When $b_1 = b_2$, the two games are the same. Therefore, we have that $C$ is a $\gamma/2$-good program in the above $G_{\cca, H_2}(\sk, \pk, \cdot)$ given that $C$ is a $\gamma$-good program in the above $G_{\cca, H_1}(\sk, \pk, \cdot)$.

    \end{proof}

Next we show that for any $\gamma$-unremovability adversary $\A$ where the output program $C$ is $\gamma$-good in $G_{\cca, H_2}(\sk, \pk, \cdot)$, $\A$ can be used to build a reduction to $\gamma$-unremovability of CPA-security of for $\wpke$.
\begin{claim}
\label{claim:reduction_wpke2}
   For any $\gamma$-unremovability adversary $\A$ where the output program $C$ is $\gamma$-good in $G_{\ccapke}(\sk, \pk, \cdot)$, $\A$ can be used to build a reduction to $(\gamma/2- \negl(\lambda))$-unremovability of CPA-security of for $\wpke$.
\end{claim}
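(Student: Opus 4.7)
The plan is to chain the previous two hybrid claims and then exhibit a black-box reduction that, given $\A$, forges an unremovability attack on $\wpke$. By \Cref{claim:hyb1_nizk} followed by \Cref{claim:hyb_2_pke2}, a $\gamma$-good $C$ in $G_{\ccapke}(\sk,\pk,\cdot)$ is $(\gamma/2 - \negl(\lambda))$-good in the fully-simulated, independent-bits game $G_{\cca,H_2}(\sk,\pk,\cdot)$; our job is to repackage this advantage as a CPA-unremovability attack on $\wpke$.

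First I would construct the reduction $\cB$. Given a $\wpke$-unremovability challenger returning $\pk^*$ and a marking oracle $\wpke.\watermark(\mk^*,\sk^*,\cdot)$, $\cB$ samples $i \samp \{1,2\}$, sets $\pk_i := \pk^*$, samples the other slot $(\pk_j,\sk_j,\xk_j,\mk_j) \gets \wpke.\wmsetup(1^\lambda)$ locally, runs $\nizk.\setup(1^\lambda)$ to obtain $(\crs,\td)$, and hands $\pk = (\pk_1,\pk_2,\crs)$ to $\A$. A marking query on $\tau$ is answered by pairing a local $\wpke.\watermark(\mk_j,\sk_j,\tau)$ with the response of the external marking oracle on $\tau$; the query set forwarded to the external challenger equals $\A$'s query set $\cQ$, and $\A$'s view up to producing its pirate $C$ is identically distributed to the real experiment. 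When $\A$ outputs $C$, the reduction assembles the circuit $C_i$ exactly as $\extract$ would: $C_i$ answers each decryption query $(\ct_1,\ct_2,\pi)$ by first running $\nizk.\ver$ and then decrypting $\ct_j$ via $\xk_j$; it forwards $C$'s challenge pair $(m_0,m_1)$ to the external CPA challenger to obtain $\ct_i^* \gets \wpke.\Enc(\pk_i,m_{b_i})$, encrypts $\ct_j^* \gets \wpke.\Enc(\pk_j,m_{b_j})$ for a fresh $b_j \samp \{0,1\}$, appends $\widehat{\pi} \gets \Sim(\td,\crs,(\ct_1^*,\ct_2^*))$, and relays $C$'s final guess $b'$. $\cB$ outputs $C_i$ as its pirate program.

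Two facts then finish the proof. (a) The view $C_i$ presents to $C$ is exactly the view $C$ would see in $G_{\cca,H_2}$, so $C_i$ inherits the advantage: since the winning condition in $G_{\cca,H_2}$ is $b' = b_i$ for a specific slot and the game is symmetric in $b_1,b_2$, averaging over the random slot $i \samp \{1,2\}$ keeps a $\gamma/2 - \negl(\lambda)$ CPA advantage (losing at most a constant factor), so $C_i$ is a $(\gamma/2 - \negl(\lambda))$-good adversary against CPA of $\wpke$. (b) Extraction failure transfers mechanically: by hypothesis $\extract(\xk,\pk,\bot,C) \notin \cQ$, and the tuple returned by $\extract$ is precisely $(\wpke.\extract(\xk_1,\pk_1,\bot,C_1),\wpke.\extract(\xk_2,\pk_2,\bot,C_2))$ where the internal $C_\ell$ is identically distributed to $\cB$'s $C_i$ when $\ell = i$; hence $\wpke.\extract$ on $\cB$'s $C_i$ also misses $\cQ$ with non-negligible probability, which is exactly the unremovability violation for $\wpke$.

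The main obstacle is ensuring the simulation of the CCA decryption oracle used by $C_i$ agrees with the real oracle up to negligible statistical distance. This relies crucially on statistical simulation-soundness of $\nizk$: even though the challenge ciphertext carries a simulated proof of a potentially false statement (whenever $b_i \ne b_j$), any later decryption query $(\ct_1,\ct_2,\pi)$ that verifies must, with overwhelming probability, have $\ct_1$ and $\ct_2$ encrypting the same plaintext, so decrypting with $\xk_j$ agrees with the real $\Dec(\sk,\cdot)$ oracle (which uses $\sk_1$). Without SSS, $C$ could exploit an inconsistent decryption query to detect the reduction, and the advantage transfer would break; this is also the one place in the argument where we must be careful that the extraction-key-simulation property of $\wpke$ for slot $j$ actually supports decryption of arbitrary slot-$j$ ciphertexts rather than only the oracles appearing in the bare CPA game.
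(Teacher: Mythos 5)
Your proof takes essentially the same route as the paper's: chain \Cref{claim:hyb1_nizk} and \Cref{claim:hyb_2_pke2} to land in $G_{\cca,H_2}$, then turn the pirate $C$ into the circuit $C_i$ that $\extract$ itself constructs and forward it to a $\wpke$-unremovability challenger, noting that the query set $\cQ$ is preserved and that the internal and external circuit distributions coincide. The one structural deviation is that you pick the slot $i \samp \{1,2\}$ at random, whereas the paper fixes $i=2$ in the displayed proof and disposes of slot $1$ afterwards by a symmetric argument through the auxiliary hybrid $G_{\cca,H_3}$ (decryption switched to $\sk_2$, justified by statistical soundness). Your uniform ``decrypt $\ct_j$ with the locally-held key material and invoke SSS'' handles both slots in one shot, which is the same mechanism $\extract$ uses and is cleaner to state; the paper's two-hybrid presentation makes the role of statistical soundness more explicit but does not differ in substance.

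One point to tighten: the sentence ``averaging over the random slot $i \samp \{1,2\}$ keeps a $\gamma/2 - \negl(\lambda)$ CPA advantage \ldots\ so $C_i$ is a $(\gamma/2 - \negl(\lambda))$-good adversary'' is not quite what the hybrid chain gives you. What you get is $\adv_1 + \adv_2 \approx \gamma$, hence $\max(\adv_1,\adv_2) \geq \gamma/2 - \negl(\lambda)$; for a specific $C$, the other slot's advantage can be arbitrarily small, so the circuit you output is only guaranteed $(\gamma/2-\negl)$-good when your random $i$ lands on the favorable slot. The fix is trivial — the random-slot reduction then hits that slot with probability at least $1/2$, so the probability of simultaneously outputting a good circuit and missing $\cQ$ drops by at most a factor $2$, which is still non-negligible — but the current phrasing overclaims per-output goodness rather than per-slot existence. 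The paper sidesteps this by enumerating both slots rather than sampling one. Your flagging of the SSS requirement and of the need for $\xk_j$ to support arbitrary slot-$j$ decryptions (not merely the bare CPA oracles) is correct and matches the discussion around the paper's $\cT_K$ set and the private-extraction ``$\xk$ contains $\sk$'' remark.
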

\begin{proof}
    By \Cref{claim:hyb1_nizk} and \Cref{claim:hyb_2_pke2}, we know that if there is a $\gamma$-unremovability adversary $\A$ for \Cref{sec:wm_cca2_pke_def}, then if we put its output circuit $C$ into our $\extract(\xk, \pk, \aux, C)$ procedure, the game simulated by $\extract$, when $i = 2$ is exactly the same as $G_{\cca, H_2}(\sk, \pk, \cdot)$, i.e. the NIZK proof in the challenge ciphertext is generated by simulator and the bits for encryption are independent.

    Thus, for any $C$ produced by a winning $\A$, if $C$ is $\gamma$-good by definition in $G_{\ccapke}(\sk, \pk, \cdot)$, $C$ will be a  $(\gamma-\negl(\lambda))/2$-good program during the execution of $\extract(\xk, \pk, C)$ for $i = 2$. 

    Also, by the statistical soundness and statistical simulation soundness of $\nizk$, no adversary $\A$ and its output program $C$ will make a query on a $\ct$ that contains a false proof $\pi'$, that will pass the verification, even after $C$ has seen the simulated proof $\widehat{\pi}$, except with negligible probability. Therefore we can say that with overwhelming probability, all queries made in the unremovability game (including execution of $\extract$ contain valid proofs.
    
    Now we can create a reduction $\cB_2$ to break the $\gamma/2 - \negl(\lambda)$-unremovability of $CPA$-secure PKE (of $\sk_2$).
    $\cB_2$ simulates the marking oracle for $\A$ by querying the marking oracle for $\wpke.\sk_1$.
    $\cB_2$ also  samples $\wpke.\sk_1$ on its own.
    
    After $\A$ outputs $C$, $\cB_2$ created program $C_2'$ that works exactly as the circuit $C_2$ in our $\extract(\xk,\pk, C)$ algorithm when $i = 2$, except that $C_2'$ can hardcode the secret key $\sk_1$ sampled by $\cB_2$ to simulate the decryption oracle.

    Also, since we have that $\Pr[\extract(\xk,\pk, C) \notin \cQ ] \geq \epsilon$ for some non-negligible $\epsilon$, we must have that $\Pr[\wpke.\extract(\xk_2,\pk_2, C_2') \notin \cQ ] \geq \epsilon$ for some non-negligible $\epsilon$ by the design of our $\extract$ algorithm.

\end{proof}

We next analyze a symmetric case, consider the following game $G_{\cca, H_3}(\sk, \pk, \cdot)$. 

First, before $\A$ outputs program $C$, we switch to answering $\A$'s decryption queries using $\wpke.\Dec(\sk_2, \cdot)$.

Note that the underlined differences are between \emph{ game $G_{\cca, H_1}(\sk, \pk, \cdot)$ and  game $G_{\cca, H_3}(\sk, \pk, \cdot)$}.

 \begin{itemize}
            \item On input  $(\sk, \pk) = (\pk_1, \pk_2, \sk_1, \sk_2,  \td, \crs)$;
            
            \item For $C$'s decryption queries $\ct = (\ct_1, \ct_2, \pi)$: 
            \begin{itemize}
                \item First check if $\nizk.\verify(\crs, \pi) = 1$, if 0 output $\bot$; else continue; \underline{output $m' \gets \wpke.\Dec(\sk_2, \ct_1)$}.
            \end{itemize}

            \item $C$ submits challenge messages $(m_0, m_1)$ 
            
            The challenger prepares the following ciphertext:
            compute \underline{$\ct_1^* \gets \wpke.\Enc(\pk_1, m_{b_1})$}, \\
            \underline{$b_1 \gets \{0,1\}$; compute $\ct_2^* \gets \wpke.\Enc(\pk_2, m_{b_2})$  $b_2 \gets \{0,1\}$}
            
            compute $\widehat{\pi} \gets \Sim(\td, \crs, (\ct_1^*, \ct^*_2))$  where $\Sim$ is the simulator algorithm for $\nizk$.
            
            Then it sends 
            $\ct^{*} = (\ct_1^*, \ct^*_2, \widehat{\pi})$ to $\A$ and $C$ outputs a guess $b'$.

                \item continues to simulate the decryption oracle as above to decrypt only valid ciphertexts $\ct \neq \ct^*$.
        \end{itemize}

Note that the only difference is that we now answer decryption queries of $C$ using $\sk_2$. 
This results in no difference on the adversary's view because by the statistical soundness of $\nizk$, all ciphertexts submitted to the decryption oracle have their $\ct_1, \ct_2$ encrypt the same message or will result in $\bot$ replies.

By the exact (but symmetric) analysis as \Cref{claim:reduction_wpke2}, we can do a reduction to break the $(\gamma/2-\negl(\lambda))$-unremovability of $\wpke$ where the reduction is challenged with CPA-security game under $(\pk_1,\sk_1, \mk_1, \xk_1)$ and samples $(\pk_2,\sk_2, \mk_2, \xk_2)$ on its own. 

\end{proof}


\section{Watermarkable Weak PRP from Watermarkable weak PRF}
\label{sec:wm_wprp_from_wprf}

In this section, we present a watermarkable weak PRP built using the two-round Feistel network from a watermarkable weak PRF.

A weak $\prp$ of input-output space $\{0,1\}^\ell$ consists of algorithms:
\begin{description}
    \item $\KeyGen(1^\lambda) \to \sk$: a randomized algorithm that generates a secret key $\sk$.
    \item $ \eval(\sk, x \in \{0,1\}^\ell) \to y \in \{0,1\}^\ell$: a deterministic evaluation algorithm that on input $\sk, x$, outputs $y$.
    \end{description}

    The other syntax and definitions (such as correcntess, functionality preserving)  are the same as watermarkable weak $\prf$ and we omit them here.

\paragraph{$\gamma$-Unremovability of weak PRP}
The $\gamma$-Unremovability for a watermarkable implementation of a weak PRF scheme
 says,  for all PPT admissible stateful adversary $\A  $,  there exists a negligible function $\negl(\lambda)$ such that for all $\lambda\in \N$:
$$ \Pr \left[ \extract(\xk, C) \notin \cQ \wedge C \text{ is } \gamma\text{-good}:  \begin{array}{cc} (\sk, \xk, \mk)  \gets \wmsetup(1^\lambda)  \\ 
C \gets \A^{
\watermark(\mk, \sk, \cdot)}(1^\lambda) \\
\end{array} \right] \leq \negl(\lambda).$$

 $\cQ$ is the set of marks queried by $\A$ and $C$ is said to be a PPT admissible, stateful $\gamma$-good circuit if: 
$$ \Pr \left[ C^{\eval(\sk,\cdot)}(x,y_b) = b:  \begin{array}{cc} x \gets \{0,1\}^\ell, y_0 = \eval(\sk, x), y_1  \gets \{0,1\}^\ell, b \gets \{0,1\}\end{array} 
\right] \geq \frac{1}{2} + \gamma.$$

$\eval(\cdot)$ samples a uniformly random $x \gets \{0,1\}^\ell$ upon every query and outputs $(x, \eval(\sk, x))$. 

\subsection{Construction and Security}

Given a watermarkable implementation of a weak PRF $\wprf = (\wmsetup, \eval, \watermark, \extract)$ where the input and output space of $\eval$ is $\{0,1\}^{\ell/2}$, we give the construction based on a 2-round Feistel network as below:

\begin{description}
    \item $\wmsetup(1^\lambda):$ compute $(\sk_1, \xk_1, \mk_1)\gets \wprf.\wmsetup(1^\lambda)$; $(\sk_2, \xk_2, \mk_2)\gets \wprf.\wmsetup(1^\lambda)$; output $\sk = (\sk_1, \sk_2); \xk = (\xk_1, \xk_2); \mk = (\mk_1, \mk_2)$.

    \item $\eval(\sk, x \in \{0,1\}^\ell)$: 
    \begin{enumerate}
        \item parse $\sk = (\sk_1, \sk_2)$; parse $x = x_0 \Vert x_1$ where $x_0, x_1 \in \{0,1\}^{\ell/2}$.

        \item for $i = 1, 2$:
        \begin{description}
            
          \item $x_{i+1} = \wprf.\eval(\sk_i, x_{i}) \oplus x_{i-1}$ \end{description}
           
        \item output $x_2 \Vert x_3$
    \end{enumerate}

    \item $\watermark(\mk, \sk, \tau):$ parse $\sk = (\sk_1, \sk_2); \mk = (\mk_1, \mk_2)$;

    Output $(\tildesk_1 \gets \wprf.\watermark(\sk_1, \tau), \tildesk_2 \gets \wprf.\watermark(\sk_2, \tau))$.

    \item $\extract(\xk, C)$:
    \begin{enumerate}
        \item parse $\xk = (\xk_1, \xk_2)$;
        \item For $i = 1$: create the following circuit $C_1$
        \begin{enumerate}
            \item $C_1$ is harcoded with $\xk_2$ has black-box access to $C$. $C_1$ answers $C$'s queries using $\xk_2$ and external queries:
            \begin{enumerate}
                \item Upon $C$'s query: 
                query external oracle $\cO(\sk_1, \cdot)$ which will return $(x_1 \gets \{0,1\}^{\ell/2}, \\ \wprf.\eval(\sk_1, x_1))$;

                \item sample $(x_2 \gets \{0,1\}^{\ell/2}, \wprf.\eval(\sk_2, x_2 ))$ using $\xk_2$. Let $x_0 := \wprf.\eval(\sk_1, x_1) \oplus x_2$ and $x_3 = \wprf.\eval(\sk_2, x_2 ) \oplus x_1$.

                \item output $(x_0\Vert x_1, x_3 \Vert x_2)$.
            \end{enumerate}
            \item In the challenge phase, receive the challenge $(x_1^*, y^*_1)$ from the external challenger; compute challenge input for $C$: $(x_0^*\Vert x_1^*, x_3^* \Vert x_2^*)$ the same way as above  queries.

            \item $C$ outputs a guess $b'$ and $C_1$ outputs the same.
        \end{enumerate}
        \item compute $\tau_1/\bot \gets \wprf.\extract(\xk_1, C_1)$;

     \item For $i = 2$: create the following circuit $C_2$
        \begin{enumerate}
            \item $C_2$ is harcoded with $\xk_1$ has black-box access to $C$. $C_2$ answers $C$'s queries using $\xk_1$ and external queries:
            \begin{enumerate}
                \item Upon $C$'s query: 
                query external oracle $\cO(\sk_2, \cdot)$ which will return $(x_2 \gets \{0,1\}^{\ell/2}, \\ \wprf.\eval(\sk_2, x_2))$;

                \item sample $(x_1 \gets \{0,1\}^{\ell/2}, \wprf.\eval(\sk_1, x_1 ))$ using $\xk_1$. Let $x_0 := \wprf.\eval(\sk_1, x_1) \oplus x_2$ and $x_3 = \wprf.\eval(\sk_2, x_2 ) \oplus x_1$.

                \item output $(x_0\Vert x_1, x_3 \Vert x_2)$.
            \end{enumerate}
            \item In the challenge phase, receive the challenge $(x_2^*, y^*_2)$ from the external challenger; compute challenge input for $C$: $(x_0^*\Vert x_1^*, x_3^* \Vert x_2^*)$ the same way as above  queries.

            \item $C$ outputs a guess $b'$ and $C_2$ outputs the same.

        \end{enumerate}
        \item compute $\tau_2/\bot \gets \wprf.\extract(\xk_2, C_2)$;

\item output $(\tau_1, \tau_2)$ (which can include $\bot$'s).
        \end{enumerate}

\end{description}

\begin{proof}
    The unremovability proof is relatively straightforward given the framework of proof in \Cref{sec:watermark_composition}. We will make it brief.
    Suppose a given $C$ satisfies that $\Pr[\extract(\xk, C) \notin \cQ] \geq \epsilon$, then it must be that $\Pr[\wprf.\extract(\xk_i, C_i) \notin \cQ]$ for both $i = 1,2$.

    Since the extraction keys of the $\wprf$ scheme can be used to simulate the PRF pseudorandomness game perfectly (see \Cref{sec:wm_cpa_from_prf_construction}), 
     the program $C_i$ built in the $\extract$ algorithm is a stage-2 reduction from weak PRP to weak PRF (with key $\sk_i$). 
    Therefore, for each $i = 1,2$, we can build a reduction $\cB_i$ that samples the key $\sk_{j \neq i, j \in [2]}$ and simulates the 
    $\watermark(\mk, \sk, \cdot)$ queries from $\A$ 
    making queries to the unremovability challenger of $\wprf$ with key $\sk_i$. 
    
     After $\A$ outputs program $C$, $\cB_i$ makes a program $C_i(\sk_j)'$ that uses black-box access to $C$: $C_i(\sk_j)'$ behaves the same as $C_i$ built in $\extract$ except using the real secret key $\sk_j$ instead of the extraction key $\xk_j$ to simulate the game for $C$.

     By the reduction property, either $C_1$ or $C_2$ must be a $\gamma_i$-good program for winning the weak pseudorandomness game of weak PRP for some non-negligible $\gamma_i$: if $C_1$ has only negligible advantage, then we can replace $\wprf.\eval(\sk_1, \cdot)$ in the evaluation algorithm with a real random function (or always give the challenge point evaluation $\wprf.\eval(\sk_1, x_1^*)$  with a random value) and $C$ should still be $(\gamma-\negl(\lambda))$-good. Then we use $C$ in a way where $\wprf.\eval(\sk_1, \cdot)$ is replaced with random to build a program $C_2$: if $C_2$ is also not $\gamma_2$-good for any non-negligible $\gamma_2$, then we must be able to replace $\wprf.\eval(\sk_2, \cdot)$ with a random function while $C$ should still be $(\gamma-\negl(\lambda))$-good. But now since all evaluations are random, $C$ should have no advantage. Contradiction.

\end{proof}

\section{Watermarkable Implementation of Attribute-based Encryption}
\label{sec:abe_construction}

In this section, we give a modified version of the watermarkable $\abe$ scheme in \cite{goyal2019watermarking} which will imply a watermarkable implementation of the $\ibe$ scheme in \Cref{sec:wm_ibe_def}.

\subsection{ Watermarkable Attribute-based Encryption}
\label{def:abe_def}
\paragraph{Syntax} A watermarkable implementation of $\abe$ consists of the following algorithms:

\begin{description}
    \item $\wmsetup(1^\lambda) \to (\mk,\xk,\msk, \mpk)$: on the security parameter, outputs a master public/secret key, a marking key, an extraction key. 

    \item $\KeyGen(\msk, f, \mode \in \{\Marked, \Unmarked\}) \to \sk_f$
    on master secret key $\msk$, a policy $f$, and a mode $\mode \in \{\Marked, \Unmarked\}$  outputs a secret key $\sk_f$.

    \item $\Enc(\mpk, x, m) \to \ct$: on master public key $\mpk$, anttribute $x \in \cX$, a message $m \in \cM$, outputs a ciphertext $\ct$.

    \item $\Dec(\sk_f, \ct) \to m'/\bot$: on secret key $\sk_f$ and ciphertext $\ct$, output a message $m \in \cM$ or $\bot$. 

    \item $\watermark(\mk, \sk, \tau, \mode \in \{\MSK, \SKF\})$: 
    on marking key $\mk$ and secret key $\sk$, a message $\tau \in\cM_\tau$,
and a mode $\mode \in \{\MSK, \SKF\}$, output a marked key $\sk_\tau$.

    \item $\extract(\xk, \mpk, \aux, C)$: on input an extraction key $\xk$, master public key $\mpk$, and program $C$, output a mark $\tau \in \cM_\tau/\bot$. 
\end{description}

\begin{remark}
    The definition in \cite{goyal2019watermarking} only considers running $\KeyGen$ on an unmarked master secret key and marking the policy-embedded key $\sk_f$. 
    
    For the use of our watermarkable $\ibe$ scheme, we consider marking the master secret key and allowing $\KeyGen$ to first "mark" a master secret key.
    
    We additionally allow computing the functional key $\sk_{f,\tau}$ from a marked master secret key $\msk_\tau$.
    When running on $\Marked$ mode, the $\KeyGen$ algorithm also takes in a symbol $\tau$.

    Note that not giving out $\KeyGen(\msk,\cdot)$ oracle to $\A$, but only $\watermark(\msk, \cdot)$ suffices for our use of watermarkable IBE in \Cref{sec:wm_cca2_pke_ibe}. Here we simply prove a slightly stronger security.
\end{remark}

 \paragraph{Correctness}
 There exists a negligible funcion $\negl(\cdot)$
such that for all $\lambda \in \N$, all $x \in \sX$, $f, g\in \cF, m \in \cM$, when $f(x) = 1$ the following holds:
    \begin{align*}
        \Pr\left[\Dec(\sk_{f}, \ct) = m: \begin{array}{cc}
           (\mpk, \msk) \gets \wmsetup(1^\lambda), \sk_f \gets \KeyGen(\msk, f)   
              \\
            \ct \gets \Enc(\mpk, x, m)
        \end{array}\right] \geq 1-\negl(\lambda)
    \end{align*}

\begin{definition}[ABE security]
\label{def:abe_security}

The standard notion of security for a KP-ABE scheme is that of full or adaptive
security. Specifically, a key-policy attribute-based encryption scheme is said to be fully secure if for every stateful PPT adversary A, there exists a negligible function
$negl(\cdot)$, such that for every $\lambda \in \N$ the following holds:
    \begin{align*}
        \Pr\left[\cA^{\KeyGen(\msk, \cdot)}(\ct) =b : \begin{array}{cc}
           (\mpk, \msk) \gets \wmsetup(1^\lambda)   \\
              ((m_0, m_1), x) \gets \cA^{\KeyGen(\msk, \cdot)}(1^\lambda, \mpk) \\
              b \gets \{0,1\}, \ct \gets \Enc(\mpk, x, m_b)
        \end{array}\right] \leq \frac{1}{2} +\negl(\lambda)
    \end{align*}
$\A$ is admissible if all query $f \in \cF$ made to oracle $\KeyGen(\msk, \cdot)$ satisfies $f(x) = 0$.

It is selectively securre if $\A$ needs to output $x$ before seeing $\mpk$.
\end{definition}

\begin{definition}[$\gamma$-Unremovability]
\label{def:wm_abe_security}
The $\gamma$-Unremovability for a watermarkable ABE scheme
 says, for all $\lambda \in \N$, and for all PPT admissible stateful adversary $\A $,  there exists a negligible function $\negl(\lambda)$ such that:
\begin{align*}
 \Pr \left[ \begin{array}{cc}
\extract(\xk, \mpk, \aux=(m_0, m_1, x) , C) \notin \cQ \\ \wedge C \text{ is } \gamma\text{-good}
\end{array}:  \begin{array}{cc}  (\msk,\mpk, \xk, \mk)  \gets \wmsetup(1^\lambda)  \\ 
(\{m_0, m_1\}, x, C) \gets \cA^{\KeyGen(\msk, \cdot), 
\watermark(\mk, \msk, \cdot)}(1^\lambda, \mpk)
\end{array} \right]  \\ 
 \leq \negl(\lambda).
\end{align*}
where $\cQ$ is the set of marks queried by $\A$ and $C$ is a PPT admissible, stateful $\gamma$-good adversary in the security game $G_\abe(\msk, \mpk, \aux = (m_0, m_1, x), \cdot)$, more specifically:
$$ \Pr \left[ C^{\KeyGen(\msk, \cdot)}(\ct) = b:  \begin{array}{cc}
\ct^* \gets \Enc(\mpk, m_b, x), b \gets \{0,1\}
\end{array} \right] \geq \frac{1}{2} + \gamma.$$
$\A, C$ is admissible if all query $f \in \cF$ made to oracle $\KeyGen(\msk, \cdot)$ satisfies $f(x) = 0$.

$C$ will be given the $\KeyGen$ oracle as $\KeyGen(\msk, \cdot, \Unmarked)$ where it only queries on unmarked keys.

A selective variant is requiring $\A$ output $x$ before seeing $\mpk$.
\end{definition}

\begin{remark}
    As discussed in \Cref{sec:wm_cca_ske_def}, the above definition is equivalent to letting $C$ output $(m_0, m_1)$ because we can view any distribution over $(m_0, m_1)$ used by $C$ as a convex combination of different message pairs $\{(m_0, m_1)_i\})_i$ and its corresponding strategy such that the overall winning probability is $1/2+\gamma$. Thus, we can let $\A$ pick the message-pair and corresponding strategy with the largest winning probability and hardcode them into $C$ instead.

    But $x$ needs to be output by $\A$ instead of $C$, because the $\extract$ algorithm does not get to see queries of $\A$ and thus cannot make sure if $f(x) = 0$ for all $f$ queried.
\end{remark}

\begin{remark}[Modification of \cite{goyal2019watermarking} Watermarkable ABE scheme]
    In \cite{goyal2019watermarking}, the unremovability game allows  $\A$ to be given only one policy-embedded key $\sk_f$ and then allowed to query $\watermark(\mk, \sk_f, \cdot)$ for polynomially many times. In our setting, we let $\A$ query a marked master secret key $\watermark(\mk, \msk, \cdot)$ and the $\KeyGen$ algorithm can derive a functional key from a "marked" master secret key.
    We will show how to adapt \cite{goyal2019watermarking}'s construction for our use.
\end{remark}

\subsection{Preliminaries: Delegatable ABE and Mixed FE}
The \cite{goyal2019watermarking}'s watermakable PKE and ABE building blocks are delegatable $\abe$ and mixed $\fe$.

\subsubsection{Delegatable ABE}
 Delegatable ABE has the same syntax with ABE (see \Cref{def:abe_def}, ignoring the extraction key, marking key and $\extract, \watermark$ algorithm) with an additional algorithm $\delegate$:
 \begin{description}
     \item $\delegate(\sk_f, g) \to sk_{f,g}$: on input $\sk_f$ and a predicate $g \in \cF$, output a delegated key $\sk_{f,g}$.
 \end{description}

 \paragraph{Correctness of Delegation}
 There exists a negligible funcion $\negl(\cdot)$
such that for all $\lambda \in \N$, all $x \in \sX$, $f, g\in \cF, m \in \cM$, when $f(x) = g(x) = 1$ the following holds:
    \begin{align*}
        \Pr\left[\Dec(\sk_{f,g}, \ct) = m: \begin{array}{cc}
           (\mpk, \msk) \gets \wmsetup(1^\lambda), \sk_f \gets \KeyGen(\msk, f)   \\
           \sk_{f,g} \gets \delegate(\sk_f, g)
              \\
            \ct \gets \Enc(\mpk, x, m)
        \end{array}\right] \geq 1-\negl(\lambda)
    \end{align*}

\paragraph{Delegatable ABE Security}
The $\dabe$ security is the same as $\abe$ security except that the adversary $\A$ queries an oracle $\cO(\msk, \cdot)$ that has several modes and takes in a query of the form $(f, \ind, \mode)$:
\begin{itemize}
    \item If $\mode = \mathsf{Storekey}$, the challenger generates a new key $\sk_f \gets \KeyGen(\msk, f)$ and store the generated $(n, f, \sk_f)$ where $n$ is an index; update $n := n+1$.

    \item If $\mode = \mathsf{OutputKey}$: the challenger first checks if there exists a key tuple of the form $(\ind, g, sk_g)$. If no such tuple exists or if $g(x) = 1$, it outputs $\bot$. Otherwise, it replies with
$(\ind,sk_g)$.

\item If $\mode = \mathsf{DelegateKey}$, then the challenger first checks if there exists a key tuple of the form
$(\ind, g, sk_g)$. If no such tuple exists or if $g(x) = f(x) = 1$, it outputs $\bot$. Otherwise, it generates
$sk_{g,f} \gets \delegate(\sk_g, f)$ and replies with $(\ind,sk_{g,f})$.
\end{itemize}
    
\subsubsection{Mixed Functional Encryption}

An $\mfe$ scheme consists of the following algorithms:
\begin{description}
    \item $\setup(1^\lambda) \to (\mpk,\msk)$: on input security parameter $\lambda$, outputs the public parameters/master public key $\mpk$ and master secret key $\msk$.

    \item $\Enc(\mpk) \to \ct$: on master public key, the normal encryption outputs ciphertext $\ct$.

    \item $\skenc(\msk, f) \to \ct$: on master secret key $\msk$ and fucnction $f \in \cF$, outputs a ciphertext $\ct$.

    \item $\KeyGen(\msk, m) \to \sk_m$: on master secret key $\msk$ and message $m$, outputs a key $\sk_m$.

    \item $\Dec(\sk_m, \ct) \to \{0,1\}$: on secret key $\sk_m$ and ciphertext $\ct$, outputs a bit.
\end{description}

\paragraph{Correctness}
A mixed functionl encryption scheme is correct if there exists a negligible function $\negl(\lambda)$ such  that for all $\lambda \in \N, f \in \cF, m \in \cM$:
    \begin{align*}
        \Pr\left[\Dec(\sk_{m}, \ct) = 1: \begin{array}{cc}
           (\mpk, \msk) \gets \setup(1^\lambda), \sk_m \gets \KeyGen(\msk, m)   
              \\
            \ct \gets \Enc(\mpk)
        \end{array}\right] \geq 1-\negl(\lambda) \\
        \Pr\left[\Dec(\sk_{m}, \ct) = f(m): \begin{array}{cc}
           (\mpk, \msk) \gets \setup(1^\lambda), \sk_m \gets \KeyGen(\msk, m)   
              \\
            \ct \gets \skenc(\msk, f)
        \end{array}\right] \geq 1-\negl(\lambda) 
    \end{align*}


\paragraph{$q$-Bounded function indistinguishability}
Let $q = q(\lambda)$  be any fixed polynomial. A mixed
functional encryption scheme  is said to satisfy $q$-bounded function indistinguishability security if for every stateful PPT adversary $\A$, there
exists a negligible function $\negl(\cdot)$, such that for every $\lambda 
\in \N$ the following holds:
   \begin{align*}
        \Pr\left[\cA^{\KeyGen(\msk, \cdot), \skenc(\msk, \cdot)}(\ct) = b : \begin{array}{cc}
           (\mpk, \msk) \gets \setup(1^\lambda)   \\
              (f_0, f_1) \gets \cA^{\KeyGen(\msk, \cdot),\skenc(\msk, \cdot)}(1^\lambda, \mpk) \\
              b \gets \{0,1\}, \ct \gets \skenc(\mpk, f_b)
        \end{array}\right] \leq \frac{1}{2} +\negl(\lambda)
    \end{align*}
    where $\A$ can make at most $q$ queries to $\skenc(\msk, \cdot)$ oracle; every secret key query $m$ made by adversary $\A$ to the $\KeyGen(\msk,\cdot)$ oracle must satisfy $f_0(m) = f_1(m)$.

\paragraph{$q$-Bounded accept indistinguishability}
Let $q = q(\lambda)$  be any fixed polynomial. A mixed
functional encryption scheme  is said to satisfy $q$-bounded accept indistinguishability security if for every stateful PPT adversary $\A$, there
exists a negligible function $\negl(\cdot)$, such that for every $\lambda 
\in \N$ the following holds:
   \begin{align*}
        \Pr\left[\cA^{\KeyGen(\msk, \cdot), \skenc(\msk, \cdot)}(\ct_b) = b : \begin{array}{cc}
           (\mpk, \msk) \gets \setup(1^\lambda)   \\
              f^* \gets \cA^{\KeyGen(\msk, \cdot),\skenc(\msk, \cdot)}(1^\lambda, \mpk) \\
              b \gets \{0,1\}, \ct_1 \gets \skenc(\msk, f^*), \ct_1 \gets \Enc(\mpk)
        \end{array}\right] \leq \frac{1}{2} +\negl(\lambda)
    \end{align*}
    where $\A$ can make at most $q$ queries to $\skenc(\msk, \cdot)$ oracle; every secret key query $m$ made by adversary $\A$ to the $\KeyGen(\msk,\cdot)$ oracle must satisfy $f^*(m) = 1$.

\paragraph{$\qtrace$ Jump-Finding Algorithm}
The \cite{goyal2019watermarking} watermarkable ABE schemes $\extract$ procedure also additionally uses an algorithm $\qtrace$ that is widely used in the traitor-tracing literature.
$\qtrace$ on inpput parameters $(\lambda, N, q, \delta, \gamma)$ and given black box access to a program $Q$, runs in time $t = \poly(\lambda, \log N, q, 1/\delta)$. Since detailed discussions are not essential to our presentation on the modified scheme, we refer the readers to \cite{goyal2019watermarking} Section 4.2.3 for details.

\subsection{Construction}
Now we give the construction for watermarkable implementation of ABE, which is modified from \cite{goyal2019watermarking} Section 4.3.

\begin{itemize}
    \item 
    \ifllncs
    Given a delegatable $\abe$ scheme $\dabe = (\dabe.\setup, \dabe.\KeyGen, \dabe.\Enc,  \\ \dabe.\Dec, \dabe.\delegate)$ 
    
    and a $\mfe$ scheme $(\mfe.\setup, \mfe.\KeyGen, \mfe.\skenc, \mfe.\Enc, \mfe.\Dec)$.
    \else
     Given a delegatable $\abe$ scheme $\dabe = (\dabe.\setup, \dabe.\KeyGen, \dabe.\Enc,  \dabe.\Dec, \\ \dabe.\delegate)$ 
    and a $\mfe$ scheme $(\mfe.\setup, \mfe.\KeyGen, \mfe.\skenc, \mfe.\Enc, \mfe.\Dec)$.
    \fi
    
    \item Let $\cM_\tau$ be the marked space, $\cX = \{0,1\}^{\ell_2}$ the attribute space, and $\cC \subseteq \mathsf{Functions}[\cX, \{0,1\}]$ be thepredicate class for the target watermarkable ABE scheme, with parameters $\ell_1, \ell_2$.
    
    \item Let $\cY = \{0,1\}^{\ell_2+\kappa}$ and $ \cD = \cC\cup \{\mfe.\Dec\} \cup \{\mfe.\Dec \wedge C\}_{C \in \cC}$.

    \item the $\dabe$ scheme has attribute class $\cY$, message space $\cM$, predicate class $cD$. Elements in $\cY$ hav the following format $(x \in \cX, \mfe.\ct)$.
    
    \item Let $\gamma$ be the unremovability parameter.
\end{itemize}

\begin{description}
    \item $\wmsetup(1^\lambda) \to (\mpk, \msk, \mk, \xk)$:
    \begin{enumerate}
        \item compute $\mfe.\mpk, \mfe.\msk \gets \mfe.\setup(1^\lambda)$;

        \item compute $(\dabe.\mpk, \dabe.\msk) \gets \dabe.\setup(1^\lambda)$;

        \item output $\mpk = (\mfe.\mpk, \dabe.\mpk); \msk = (\mfe.\msk, \dabe.\msk);$
        
        $ \mk = \xk = (\mfe.\mpk, \mfe.\msk, \dabe.\msk)$.

   \end{enumerate}

   \item $\KeyGen(\msk, f, \mode, \tau) \to \sk_f$:
   Let $\msk = (\mfe.\msk, \dabe.\msk)$
  
   If $\mode = \Unmarked$:
   \begin{enumerate}
       \item Let $\tilde{f}: \{0,1\}^{\ell_2+\kappa} \to \{0,1\}$ denote the predicate $\tilde{f}(x,c) = f(x), x \in \{0,1\}^{\ell_2}, c \in \{0,1\}^{\kappa}$.

       \item output $\sk_f \gets \dabe.\KeyGen(\msk, \tilde{f})$.
   \end{enumerate}

   If $\mode = \Marked$:
   \begin{enumerate}
   
   \item compute $\sk_\tau \gets \mfe.\KeyGen(\mfe.\msk, \tau)$
       \item Let $g_{\tau}$
denote the mixed FE decryption circuit with $\sk_\tau$ hardwired, i.e. $g_\tau = \mfe.\Dec(\sk_\tau, \cdot)$; 

\item compute $\sk_{\tilde{g_\tau}} \gets \dabe.\KeyGen(\dabe.\msk, \tilde{g_\tau})$ where $\tilde{g_\tau}(x,c) = g_\tau(c), \forall x \in \{0,1\}^{\ell_2}, c \in \{0,1\}^{\kappa}$;

\item compute and output delegated key $\sk_{f, \tilde{g_\tau}} \gets \dabe.\delegate(\sk_{\tilde{g_\tau}}, f)$
   \end{enumerate}

   \item $\Enc(\mpk, x, m)$: 

   Let $\mpk = (\mfe.\mpk, \dabe.\mpk);$ compute $\ct_\mfe \gets \mfe.\Enc(\mfe.\mpk)$.

   Next compute ad output $\ct \gets \dabe.\mpk, (x, \ct_\mfe), m)$ where $(x, \ct_\mfe)$ is the attribute.

   \item $\Dec(\sk, \ct) \to m/\bot$: output $m \gets \dabe.\Dec(\sk, \ct)$.

   \item $\watermark(\mk, \msk, \tau)$:
   \begin{enumerate}
       \item $\mk := (\mfe.\mpk, \mfe.\msk)$;

 \item compute $\sk_\tau \gets \mfe.\KeyGen(\mfe.\msk)$
       \item Let $g_{\tau}$
denote the mixed FE decryption circuit with $\sk_\tau$ hardwired, i.e. $g_\tau = \mfe.\Dec(\sk_\tau, \cdot)$; 

\item compute and output $\sk_{\tilde{g_\tau}} \gets \dabe.\KeyGen(\dabe.\msk, \tilde{g_\tau})$ where $\tilde{g_\tau}(x,c) = g_\tau(c), \forall x \in \{0,1\}^{\ell_2}, c \in \{0,1\}^{\kappa}$
   \end{enumerate}

\item $\extract(\xk, \mpk, \aux = (x, m_0, m_1), C, q)$:

Let $\xk = (\mfe.\mpk, \mfe.\msk, \dabe.\msk); \mpk= (\dabe.\mpk, \mfe.\mpk)$;

The extraction algorithms runs the $\qtrace$ algorithm as $\tau \gets \qtrace^{Q_C}(\lambda, 2^{\ell_1}, q, \delta, \gamma)$ where $q$ is the bounded parameter for the $\mfe$ scheme (upper bound on the number of queries to the $\skenc(\msk, \cdot)$ oracles), $\gamma$ is the unremovability paramter and $\delta = \gamma/(5+2\ell_1 q)$. $Q_C$ is simulated as follows:
\begin{mdframed}
    On input $\tau \gets [0, 2^{\ell_1}]$:

    \begin{itemize}
        \item compute $\mfe$ ciphertext as $\ct_\mfe \gets \mfe.\skenc(\mfe.\msk, \comp_\tau)$, where $\comp_\tau$ is the comparison function that on input $z$, output 1 if and only if $z \geq \tau$.

        \item sample a bit $b \gets \{0,1\}$ and compute $\ct_b \gets \dabe.\Enc(\dabe.\mpk, (x, \ct_\mfe), m_b)$ where $(x, \ct_\mfe)$ is the attribute input.

        \item compute $b' \gets C^{\KeyGen(\msk, \cdot, \Unmarked)}(\ct)$ and output 1 if $b' = b$ and 0 otherwise. $C$ has oracle access to $\KeyGen(\msk, \cdot, \Unmarked)$, which can be simulated using $\dabe.\msk$ provided in $\xk$.
    \end{itemize}
\end{mdframed}

\end{description}

\paragraph{Correctness and Other Properties}
The correctness of decryption, correctness of extraction, functionality-preserving and ABE security can directly follow the same proof as in \cite{goyal2019watermarking} so we omit them here.

We can observe that the above scheme also satisfies the additional extraction key simulation property and extraction syntax also satisfy our requirement on watermarkable implementation: the extraction key has $\dabe.\msk$ embedded so enough to simulate the $\KeyGen(\msk, \cdot, \Unmarked)$ queries for program $C$.  The entire extraction algorithm follows the format that runs the simulated stage-2 security game for $\abe$.

\paragraph{Security Proof}
The security proof will be similar to the proof in \cite{goyal2019watermarking}. Overall, our changes are minor.
To avoid repetitive work of doing a same proof verbatimly, we omit the parts of our proof which are exactly the same as \cite{goyal2019watermarking} and refer the readers to Section 4.3.1 of \cite{goyal2019watermarking} "Proof of Theorem 4.15(Unremovability)" part.
We only discuss the major  changes to make to adjust to our construction:

All discussions and lemmas before Lemma 4.18 will be exactly the same when moved to our case, except some minor changes:
\begin{enumerate}
    \item We adjust Experiment $\mathsf{GetCircuit}_\A(\lambda)$ in Figure 2 of \cite{goyal2019watermarking} to fit our unremovability game: $\A$ is allowed to query $\KeyGen(\msk, \cdot)$ for arbitrarily polynomial times. We will discuss how this oracle is simulated when doing different reductions.

    \item In Lemma 4.16 and 4.17 invokes security of $\mfe$: in our setting, the reduction algorithm will additionally query the $\mfe.\KeyGen(\msk, \cdot)$ oracle when dealing with the adversary $\A$s $\KeyGen(\msk, \cdot, \Marked, \cdot)$ queries. But these queries have exactly the same format as when answering the marking queries $\watermark(\mk, \msk, \tau)$. Moreover, in both the $q$-bounded function indistinguishability and accept indistinguishability games, the number of queries a reduction can make to oracle $\mfe.\KeyGen(\msk, \cdot)$ are unbounded polynomial.
    Therefore, the proof of Lemma 4.16 and 4.17 are unaffected.    
\end{enumerate}

We mainly discuss the changes to Lemma 4.18 and take some steps from \cite{goyal2019watermarking} verbatimly for completeness:
Let $\cB$ be the reduction to security of $\dabe$.
\begin{enumerate}
    \item Adversary $\A$  chooses a challenge attribute $x$ and sends it over to $\cB$, since we consider only selective security.

    \item The DABE challenger samples  a $\dabe$ key pair $(\dabe.\mpk, \dabe.\msk) \gets \dabe.\setup(1^\lambda)$. Algorithm $\cB$ samples a mixed FE
    key pair $(\mfe.\mpk, \mfe.\msk) \gets \mfe.\setup(1^\lambda)$,
and sends $\mfe.\mpk$ to $\cB$.
, and sends the public key $\mpk = (\mfe.\mpk, \dabe.\mpk)$ to $\A$.

\item $\cB$ computes $\ct_\mfe^* \gets \mfe.\skenc(\mfe.\msk, \comp_{2^\ell_1})$ and commits to the challenge attribute $x^* = (x, \ct^*_\mfe)$.

\item When $\A$ makes a query to the $\KeyGen(\msk, \cdot)$ and $\watermark(\mk, \msk, \cdot)$ oracles, $\cB$ simulates the answer as follows:
\begin{enumerate}
    \item If $\A$ makes a marking query on message $\tau$: $\cB$ computes the function $\tilde{g_{\sk_\tau}}$ as described in the construction on its own; using $\mfe.\KeyGen$; then it sends query $(\tilde{g_{\sk_\tau}}, \bot, \mathsf{StoreKey})$ to $\cO(\msk, \cdot)$.

    Note that $\tau \in [0, 2^{\ell_1}]$, so $\tilde{g_{\sk_\tau}}(x^*) =0 $ is satisfied with overwhelming probability.

    $\cB$ gets reply in the form $(\ind, \bot)$; $\cB$ then makes another query $(\tilde{g_{\sk_\tau}}, \ind, \mathsf{OutputKey})$
    and gets back $(\ind, \dabe.\sk_{g_{\sk_\tau}})$.

    $\cB$ makes a table that stores the following information $(\ind, \tau, \sk_{g_{\sk_\tau}})$ in each entry.

    \item If $\A$ makes $\KeyGen$ query of the format $(\Marked, \tau, f)$: $\cB$ first checks if an entry containing $\tau$ is stored in its table in the format $(\ind, \tau, \sk_{g_{\sk_{\tau}}})$; if so $\cB$ prepares $\tilde{f}(x,c) = f(x)$ and queries the $\cO(\msk, \cdot)$ oracle on input $(\ind, f, \mathsf{DelegateKey})$ and gets back $(\ind, \dabe.\sk_{g_{\sk_{\tau}}, \tilde{f}})$. $\cB$ sends $\dabe.\sk_{g_{\sk_{\tau}}, \tilde{f}}$ to $\A$.

    If there is no such entry, $\cB$ will  query $(\tilde{g_{\sk_{\tau'}}}, \bot, \mathsf{StoreKey})$ first and then query $(\ind', f, \mathsf{DelegateKey})$ to get the same output. 

    \item If $\A$ makes a $\KeyGen$ query in the format
    $(\Unmarked, f)$, then 
    $\cB$ prepares $\tilde{f}(x,c) = f(x)$ and first queries $(\tilde{f}, \bot, \mathsf{StoreKey})$ to get back some $(\ind_{\tilde{f}}, \bot)$; then queries again $(\ind_{\tilde{f}}, \tilde{f}, \mathsf{OutputKey})$ for $sk_{\tilde{f}}$.
    Note it is also satisfied that $\tilde{f}(x^*) = f(x) = 0$.   

\end{enumerate}

\item At the end of the game, $\A$ outputs a pair of messages $(m_0, m_1)$ and a circuit $C^*$.

\item $\cB$ then sends $(m_0, m_1)$ to DABE challenger and gets a ciphrtext $\ct^*$. $\cB$ samples a random bit $\beta \gets \{0, 1\}$
and computes a fresh ciphertext $\ct \gets \dabe.\Enc(\dabe.\mpk,(x, \ct_{\mfe}), m_\beta)$, where $\ct_\mfe \gets \mfe.\skenc(\mfe.\msk, \comp_{2^{\ell_1}})$.

$C^*$ will continue to make queries but only allowed to query $\KeyGen(\msk, \Unmarked, \cdot)$ oracle. $\cB$ can  answer these queries by querying $\cO(\msk, \cdot)$ with $(\tilde{f}, \bot, \mathsf{StoreKey})$, because $\cB$ is allowed to make such queries in $\dabe$ game:
\begin{itemize}
    \item  If $C$ makes a $\KeyGen$ query in the format
    $(\Unmarked, f)$, then 
    $\cB$ prepares $\tilde{f}(x,c) = f(x)$ and first queries $(\tilde{f}, \bot, \mathsf{StoreKey})$ to get back some $(\ind_{\tilde{f}}, \bot)$; then queries again $(\ind_{\tilde{f}}, \tilde{f}, \mathsf{OutputKey})$ for $sk_{\tilde{f}}$.
    Note it is also satisfied that $\tilde{f}(x^*) = f(x) = 0$.   
\end{itemize}

\item Finally, $\cB$ runs the decryption circuit $C^*$ on $\ct^*$ and $\ct$, and
if $C^{\KeyGen(\msk, \Unmarked, \cdot)}(\ct^*) = C^{\KeyGen(\msk, \Unmarked, \cdot)}(\ct)$, it outputs $b' = \beta$. Otherwise, it outputs $b' = 1 -\beta$.
\end{enumerate}

The rest of the analysis is the same as in the original proof.

\section{Watermarkable Implementation of Digital Signatures}
\label{sec:wm_signatures}
\subsection{Preliminaries: Constrained Signatures}
\label{def:constrain_sig}

The building block of a bounded collusion resistant watermarkable implementation of digital signatures in \cite{goyal2019watermarking} relies on the following building block. 

\paragraph{Constrained Signatures}
A constrained signature with message space $\cM$ and consraint family $\cF \subseteq \mathsf{Function}[\cM, \{0,1\}]$ is a tuple of algorithms:
\begin{description}
    \item $\setup(1^\lambda) \to (\vk, \msk)$. On input the security parameter $\lambda$, the setup algorithm outputs the verification key $\vk$ and the master secret key $\msk$.

\item $\sign(\msk, m) \to \sig$. On input the master signing key $\msk$ and a message $m \in \cM$, the signing
algorithm outputs a signature $\sig$.

\item $\verify(\vk, m, \sig) \to b$. On input the verification key $\vk$, a message $m \in \cM$, and a signature $\sig$, the
verification algorithm outputs a bit $b \in \{0, 1\}$.

\item $\constrain(\msk, f) \to \sk_f$: On input the master signing key $\msk$ and a function $f \in \cF$, the constrain
algorithm outputs a constrained key $\sk_f$.

\item $\ConstrainSign(\sk_f , m) \to \sig$. On input a constrained key $\sk_f$ and a message $m \in \cM$, the signing
algorithm outputs a signature $\sig$.
\end{description}

\paragraph{Correctness}
A constrained signature scheme is correct if for all messages $m \in \cM$ and key pair
$(vk, msk) \gets \setup(1^\lambda)$:
$$ \Pr[\verify(\vk, m, \sign(\msk, m)) = 1] = 1.$$
In addition, for all constraints $f \in \cF$ where $f(m) = 1$, 
$$ \Pr[\verify(\vk, m, \ConstrainSign(\sk_f, m)) = 1: \sk_f \gets \constrain(\msk, f)] = 1. $$

\paragraph{Constrained Unforgeability}
A constrained signature scheme is secure if for every stateful admissible PPT $\A$, there exists a negligible function $\negl(\cdot)$ such that for all $\lambda in \N$:
$$ \Pr \left[ \verify(\vk, m^*, \sig^*) = 1:  \begin{array}{cc}  (\msk,\vk)  \gets \setup(1^\lambda)  \\ 
(m^*, \sig^*) \gets \A^{\sign(\msk, \cdot), \constrain(\msk, \cdot)}(1^\lambda, \vk) \\
\end{array} \right] \leq \negl(\lambda). $$
where $\A$ is admissible if (1) it does not make a signing query on message $m^*$
; and (2)
it does not make a constrained key query for any function $f \in \cF$ such that $f(m^*) = 1$.



\subsection{Definition: Watermarkable Signatures}
\label{sec:def_wm_signatures}
A watermarkable signatures scheme consists of the following algorithms: 
\begin{itemize}
    \item $\wmsetup(1^\lambda) \to (\vk, \sk, \xk, \mk)$: on input security parameter outputs verification key $\vk$, signing key $\sk$; extraction key $\xk$, marking key $\mk$.

    \item $\watermark(\mk,\sk, \tau) \to \sk_\tau$. On input the marking key $\mk$, a signing key $\sk$, and a mark $\tau \in \cM_\tau$ , the
marking algorithm outputs a marked key $\sk_\tau$.

\item $\sign(\sk, m) \to \sig$. On input a signing key $\sk$ and a message $m \in \cM$, the signing
algorithm outputs a signature $\sig$.
$\verify(\vk, m, \sigma) \to 0/1$. On input a verification key $\vk$, a message $m \in \cM$, and a signature $\sig$, the
verification algorithm outputs a bit to signify whether the signature is valid or not.

\item $\extract(\xk, \vk, C) \to \tau /\bot$. On input the extraction key $\xk$, a verification key $\vk$, and a circuit
$C$, the extraction algorithm either outputs a mark $\tau \in \cM_\tau$ or $\bot$.
    
\end{itemize}

The correctness, meaningfulness and other properties are natural to derive from the general watermarking definition \Cref{sec:def_watermarkable_primitive}. We refer to \cite{goyal2019watermarking} for more details and only present functionality-preserving and security defintions. 

\paragraph{Functionality Preserving}
A watermarkable signature scheme satisfies
the functionality-preserving property if there exists a negligible function $\negl(\cdot)$ such that for all
$\lambda \in \N, (\vk, \sk, \mk, \xk) \gets \wmsetup(1^\lambda)$, $m \in \cM, \tau \in \cM_\tau$, the following holds:
\begin{align*}
      \Pr\left[\verify(\vk, m, \wsecEval(\sk_\tau, m)) = 1: \begin{array}{cc}
           (\vk, \sk, \xk, \mk) \gets \wmsetup(1^\lambda), \\ \sk_\tau \gets \watermark(\mk, \sk, \tau)
        \end{array}\right] \geq 1-\negl(\lambda) 
\end{align*}

\jiahui{subtlety on watermarking game: 1.should $\A$ gets to query signing oracle in the first stage? can get "unmarked" signatures on $m^*$ unless commits $m^*$ in 

2.  }

We provide the unremovability definition, which is different from the one defined in \cite{goyal2019watermarking}, but their scheme can be modified to satisfy this definition.


%

\paragraph{ $\gamma$-Unremovability}
For every stateful $\gamma$-unremovable admissible PPT adversary $\A$, there exists a
negligible function $\negl(\cdot)$ such that for all $\lambda \in \N$, the following holds:
$$ \Pr \left[ \extract(\xk, \vk, C) \notin \cQ \\ \wedge C \text{ is } \gamma\text{-good}:  \begin{array}{cc}  (\sk,\vk, \xk, \mk)  \gets \wmsetup(1^\lambda)  \\ 
 C \gets \cA^{ \watermark(\mk, \sk, \cdot)}(1^\lambda, \vk) \\
\end{array} \right] \leq \negl(\lambda).$$
where $\cQ$ is the set of marks queried by $\A$ and $C$ is said to  be a (PPT admissible, stateful) $\gamma$-good adversary if: 
$$ \Pr \left[ C^{\sign(\sk, \cdot)} \to (m^*, \sig^*):  \begin{array}{cc}
\verify(\vk, m^*, \sig^*) = 1
\end{array} \right] \geq \gamma.$$
$ C$ is admissible if and only if it does not query $\sign(\sk, \cdot)$ on $m^*$.

\paragraph{Discussions on Security Definitions and \cite{goyal2019watermarking} construction}
In the \cite{goyal2019watermarking}'s watermarkable signature scheme, the unmarked $\sign$ function and the circuit used to compute signatures using marked keys have different output distributions, if we let $\A$ query $\sign(\sk, \cdot)$ oracle in the first stage but allow $C$ to choose the challenge message to sign on, then there can be an attack. Because the \cite{goyal2019watermarking} scheme 's $\sign(\sk, \cdot)$  and marked signing circuit have different output distributions. If $C$ can choose its own $m^*$ but the $\extract$ algorithm does not know which queries $\A$ has made in stage 1, then $C$ can be hardcoded with some $m^*, \sig'$ where watermark cannot be extracted. We thus give the above definition, which  suffice for our applications 1. 



\subsection{Construction}

We show the following modified construction from \cite{goyal2019watermarking} based on a constrained signature scheme $\csig$ defined in \Cref{def:constrain_sig}:

\begin{itemize}
    \item Let $\cT' = \cT \cup  \{\bot\}$. For a mark $\tau^* \in \cT$, let $f_{\tau^*}: \cT' \times \cM \to \{0,1\}$ be the function $f_{\tau^*}(\tau,m) = 1$ if $\tau = \tau^*$ and 0 otherwise.
    
\end{itemize}

\begin{description}
    \item $\wmsetup(1^\lambda) \to (\vk,\sk, \mk, \xk)$: 
    outputs a signing/verification key-pair $(\vk,\sk) \gets \CSig.\setup(1^\lambda)$; $\mk = \bot, \xk = \sk$.

    \item $\sign(\sk, m) \to \sig$: 
    On input a signing key $\sk$, and a message $m \in \cM$, the signing algorithm signs  $\sig' \gets \CSig.\sign(\sk,(\bot, m))$, and outputs the signature $\sig = (\bot, \sig')$.

    \item $\verify(\vk, m, \sig) \to b$. On input a verification key $\vk$, a message $m \in \cM$, and a signature $\sig = (\tau', \sig')$; the verification algorithm outputs $b \gets \csig.\verify(\vk,(\tau',m), \sig')$.

    \item $\watermark(\mk,\sk,\tau) \to C$. On input a marking key $\mk = \bot$, a signing key $\sk$, and a mark $\tau \in \cT$ , the
marking algorithm computes $\sk_\tau \gets \csig.\constrain(\sk, f_\tau)$ and outputs a circuit $C_\tau: \cM \to
\mathcal{SIG}$ where $C_\tau(\cdot) := (\tau, \csig.\ConstrainSign(\sk_\tau,(\tau, \cdot)))$.

\item $\extract(\xk, \vk, C) \to \tau/\bot$: on $\xk = \csig.\sk, \vk = \csig.\vk$ and circuit $C: \cM \to \mathcal{SIG}$, perform the following for $T  = \lambda/\gamma$ times where $\gamma$ is the unremovability parameter:
\begin{itemize}
    \item For $i \in [T]$: compute $(m_i = (\tau_i', m_i'), \sig_i) \gets C^{\sign(\sk, \cdot)}$. If $\csig.\verify(\vk, (\tau_i, m_i'), \sig_i) = 1$ and $m_i'$ has not been queried, abort and output $\tau_i'$.
\end{itemize}
    
\end{description}

The correctness, functionlity-preserving, meaningfulness proof will all follow exactly from \cite{goyal2019watermarking}.
Even though our extraction algorithm is different, the unremovability proof will follow similarly: any valid signatures provided by $C$ must contain some $\tau_i \in \cQ$ of the marking queries, otherwise it helps break the constrained unforgeable security of the constrained signature.
The only change is that the reduction 
does not answer any $\sign(\sk, \cdot)$ queries in the first stage, and thus do not need to query $\csig.\sign(\sk, \cdot)$ oracle  when interacting with $\A$, but only when interacting with the circuit $C$ produced by $\A$.

\section{Watermarkable CCA-secure Hybrid Encryption (Key Encapsulation Scheme) }
\label{sec:wm_ccahybrid_enc}

\begin{theorem}
    \label{lem:wm_hybrid_enc}
    Assuming LWE, there exists secure watermarkable implementation of a CCA-secure hybrid encryption scheme.
\end{theorem}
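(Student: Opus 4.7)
My plan is to instantiate the watermarking composition framework of \Cref{thm:watermark_compiler} on the standard KEM-DEM hybrid construction. The target primitive $P$ is CCA-secure PKE with a large message space, built from two input primitives: $P_1 = $ CCA-secure PKE (with a short message space, used to encapsulate a session key) and $P_2 = $ CCA-secure SKE. Concretely, $\KeyGen(1^\lambda)$ just runs $\pke.\KeyGen(1^\lambda)$ to get $(\pke.\pk, \pke.\sk)$; $\Enc(\pk, m)$ samples a fresh SKE key $k \gets \ske.\KeyGen(1^\lambda)$ and outputs $\ct = (c_1, c_2)$ where $c_1 \gets \pke.\Enc(\pk, k)$ and $c_2 \gets \ske.\Enc(k, m)$; $\Dec(\sk, (c_1, c_2))$ recovers $k \gets \pke.\Dec(\sk, c_1)$ and outputs $\ske.\Dec(k, c_2)$. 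In the framework's partitioning, $\cS = \{\pke\}$ (the only key generated in $\KeyGen$ and used in $\SecAlg$) and $\cT_P = \{\ske\}$ (the SKE key is freshly sampled inside every invocation of $\Enc$). Consequently, by the rule of thumb of \Cref{sec:watermark_composition}, we only need a watermarkable implementation of the underlying PKE and the plain security of the SKE.

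Next I would verify that the textbook CCA-security reduction for KEM-DEM is watermarking-compatible in the sense of \Cref{sec:watermarking_compatible_construction}. Split the hybrid CCA game as usual: stage 1 consists of the pre-challenge $\Enc/\Dec$ queries; stage 2 begins when the adversary commits to $(m_0, m_1)$ and receives the challenge. In either reduction, the stage-2 portion is essentially oblivious to stage-1 decryption queries, since one answers them by invoking the appropriate external oracle (either $\pke.\Dec$ or $\ske.\Enc/\Dec$ after sampling a fresh $k$). For the Type-1 reduction to CCA-security of $\pke$, the reduction receives $\pke.\pk$, samples an independent session key $k^*$, sets $c_1^*$ from the PKE challenger's encryption of a pair of keys, and simulates decryption queries with $c_1 \neq c_1^*$ via the external $\pke.\Dec$ oracle (the finite-many stage-1 queries are handled identically but do not need to be forwarded to stage 2). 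For the Type-2 reduction to CCA-security of the underlying $\ske$, the reduction samples $(\pke.\pk, \pke.\sk)$ itself and a fresh $k^*$ for the challenge session key (which is what the SKE challenger implicitly holds); $\pke.\sk$ lets it answer all decryption queries with $c_1 \neq c_1^*$ directly, while queries with $c_1 = c_1^*$ but $c_2 \neq c_2^*$ are forwarded to the external $\ske.\Dec$ oracle. In both cases the reduction function $f_i$ just passes through the adversary's output bit, so the reduction syntax of \Cref{sec:watermarking_compatible_construction} is satisfied.

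Having done this, I would invoke \Cref{thm:watermark_compiler}: plugging in the watermarkable CCA-PKE construction of \Cref{thm:wm_cca_pke} (from LWE, either via the Naor--Yung compiler in \Cref{sec:naor_yung_wm} or via the IBE$+$OTS compiler in \Cref{sec:wm_cca2_pke_ibe}) together with any standard CCA-secure SKE (which exists from LWE, or indeed from one-way functions via a watermarkable weak PRF and MAC, but here used in unwatermarked form), yields a watermarkable implementation of the hybrid CCA scheme. The watermarked key is just the watermarked PKE key; the $\extract$ algorithm runs the Type-1 reduction above to produce a program $C_\pke$ and applies $\pke.\extract$ to it, and the unremovability analysis from \Cref{lem:unremovability_composition} covers the remaining Type-2 case via the plain CCA security of the SKE.

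The main obstacle, as usual for CCA hybrid encryption, is treating decryption queries of the form $(c_1^*, c_2)$ with $c_2 \neq c_2^*$ cleanly within both reductions, since na\"ively the PKE reduction does not know $k^*$ (to decrypt the inner SKE ciphertext), and the SKE reduction must ensure that no such query is disallowed when the adversary has actually triggered the CCA challenge. The standard trick---decrypting $c_1^*$ once under the reduction's chosen key (for the SKE case) and refusing $c_1=c_1^*$ queries only in the post-challenge phase (for the PKE case)---adapts cleanly, but one must double-check that the extraction algorithm can carry out this adaptation using only the stage-2 information available to $C_\pke$ (namely the challenge pair $(k_0, k_1)$ it chose itself), which is exactly what the two-stage watermarking-compatible framework is designed to support.
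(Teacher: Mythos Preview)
Your proposal is correct and matches the paper's approach essentially point-for-point: the paper also instantiates the composition framework on the KEM--DEM construction with $\cS = \{\pke\}$ and the SKE in $\cT_P$ (unwatermarked), builds $\extract$ by creating $C_\pke$ via the Type-1 reduction you describe (submitting a pair of fresh SKE keys $(\sk_0,\sk_1)$ to the PKE challenger and encrypting $m_b$ under $\sk_0$), and handles the remaining case via the plain CCA security of the SKE. The $(c_1^*,c_2)$-query issue you flag is exactly the subtlety the paper's Case~1/Case~2 split is designed to absorb, and your plan to resolve it with the standard trick is what the paper does (somewhat tersely).
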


The assumption of LWE comes from the need of watermarkable CCA-secure PKE.

\paragraph{Construction}
A watermarkable implementation of CCA-secure hybrid encryption $\whe = (\KeyGen, \\ \Enc, \Dec, \watermark, \extract)$  can be built from a CCA-secure secret key encryption scheme $\ske = (\ske.\KeyGen, \ske.\Enc,  \ske.\Dec)$ and a watermarkable implementatin of CCA-secure public key encryption $\wpke = (\wpke.\wmsetup, \wpke.\Enc,  \wpke.\Dec, \wpke.\watermark, \wpke.\extract)$ as follows:

\begin{itemize}
    \item $\KeyGen(\lambda):$ compute  $(\wpke.\pk, \wpke.\sk, \wpke.\xk, \wpke.\mk) \gets \wpke.\wmsetup(\lambda)$. Output $\sk = \wpke.\sk; \pk = \wpke.\pk; \xk = \wpke.\xk; \mk = \wpke.\mk $ 

    \item $\Enc(\pk, m)$: compute $\ske.\sk \gets \ske.\KeyGen(\lambda)$; then compute $\ct_1 \gets \wpke.\Enc(\pk, \ske.\sk)$; compute $\ct_2 \gets \ske.\Enc(\ske.\sk, m)$. Output $\ct = (\ct_1, \ct_2)$.

    \item $\Dec(\sk, \ct)$: parse $\ct = (\ct_1, \ct_2)$; compute $\sk' \gets \wpke.\Dec(\sk, \ct_1)$ and then $m' \gets \ske.\Dec(\sk', \ct_2)$. Output $m'$.

    \item $\watermark(\mk, \sk, \tau)$: parse $\mk = \wpke.\mk; \sk = \wpke.\sk$; output $\sk_\tau \gets \wpke.\watermark(\sk, \tau)$;
    \item $\extract(\xk, \pk, \aux = \bot, C)$: \begin{itemize}
        \item parse $\xk := \wpke.\xk; \pk := \wpke.\pk$. Create the following circuit $C_\pke$ with black-box access to $C$:
        \begin{itemize}
            \item $C_\pke$ is hardcoded with $\pk$ and simulates the CCA-PKE game for $C$ as follows: 
            
            \item give $\pk$ to $C$; for $C$'s decryption queries $\ct = (\ct_1, \ct_2)$: query an external oracle $\sk' \gets \wpke.\Dec(\wpke.\sk, \ct_1)$;
            output $m' \gets \ske.\Dec(\sk', \ct_2)$.

            \item $C$ submits challenge messages $(m_0, m_1)$; $C_\pke$ submits $(\sk_0 \gets \ske.\KeyGen(1^\lambda), \sk_1 \gets \ske.\KeyGen(1^\lambda))$ to the external challenger; $C_\pke$ receives challenge ciphertext $\ct^*$ from the challenger and $C_\pke$ sends 
            $\ct^{**} = (\ct^*, \ct^*_2 \gets \ske.\Enc(\sk_0, m_b))$ to $C$,  
            where $m_b \gets \{m_0, m_1\}$.
            \item $C$ continues to simulate the decryption oracle as above to decrypt only valid ciphertexts.
            
            \item If $C$ outputs $b' = b$, then $C_\pke$ outputs guess $0$; else it outputs guess 1.
        \end{itemize}
        \item output $\tau/\bot \gets \wpke.\extract(\wpke.\xk, \wpke.\pk, \aux = \bot, C_\pke)$.
    \end{itemize}
\end{itemize}

Now we prove \Cref{lem:wm_hybrid_enc}
\begin{proof}

   Suppose there exists adversary $\A$ that breaks the $\gamma$-unremovability of watermarkable CCA-secure PKE for some non-negligible $\gamma$, i.e. $\A^{\Dec(\sk, \cdot),\watermark(\mk, \sk,\cdot)(\pk}$ produces some program $C$ such that  $\Pr[\extract(\xk, \pk,  C) \notin\cQ] \geq \epsilon$ for some non-negligible $\epsilon$, whereas $C$ satisfies $\Pr[G_{CCA}(\sk, \pk, C) = 1] \geq \frac{1}{2}+\gamma$ (for $G_{CCA}$ see \Cref{sec:wm_cca_ske_def}).
   The probability $\Pr[\extract(\xk, \pk, C) \notin\cQ]$  is taken over the randomness in $\wmsetup$ and the probability  $\Pr[G_{CCA}(\sk, \pk, C) = 1]$ is take over the randomness used in $G_{CCA}$.

    We will show that we can either break $\gamma'$-unremovability of the watermarkable $\wpke$ or CCA-security of $\ske$.

   For any $(\sk, \pk, \xk, \mk)$ generated by $\wmsetup(1^\lambda)$ and any $\A^{\Dec(\sk, \cdot), \watermark(\mk,  \sk, \cdot)}(\pk)$ producing a $\gamma$-good $C$ such that $\Pr[\extract(\xk, \pk, C) \notin \cQ] \geq \epsilon$, one of the following cases must hold:

    
\begin{itemize}
    \item \textbf{Case 1: the program $C_\pke$ created during the execution of $\extract(\xk, C)$ is a $\gamma'$-good adversary for the stage-2 game $G_{\ccapke}(\sk, \pk, \cdot)$ for some non-negligible $\gamma'$. }
    The reduction $\cB_\wpke$ works as follows: 
    \begin{itemize}
        \item $\cB_\wpke$ receives $\wpke.\pk$ from the challenger and sends to $\A$.
        $\cB_\wpke$;
        \item for $\A$'s marking queries : query the challenger's marking oracle $\watermark(\wpke.\sk, \cdot)$ and forward the output.
    \end{itemize}
    After $\A$ outputs program $C$, $\cB_\wpke$ creates circuit $C_\wpke'$ with black-box access to $C$: 
    \begin{itemize}
        \item $C_\wpke'$ is hardcoded with $\pk$ and makes external queries to simulate the decryption oracle for $C$:
        \begin{itemize}
            \item for $C$'s decryption queries $\ct = (\ct_1, \ct_2)$: query the challenger's decryption oracle $\sk' \gets \wpke.\Dec(\wpke.\sk, \ct_1)$;    output $m' \gets \ske.\Dec(\sk', \ct_2)$.
        \end{itemize}

      \item $C$ submits challenge messages $(m_0, m_1)$; $C_\pke$ submits $(\sk_0 \gets \ske.\KeyGen(1^\lambda), \sk_1 \gets \ske.\KeyGen(1^\lambda))$ to the external challenger; $C_\pke$ receives challenge ciphertext $\ct^*$ from the challenger and $C_\pke$ sends 
            $\ct^{**} = (\ct^*, \ct^*_2 \gets \ske.\Enc(\sk_0, m_b))$ to $C$,  
            where $m_b \gets \{m_0, m_1\}$.
            \item $C$ continues to simulate the decryption oracle as above to decrypt only valid ciphertexts.
            
            \item If $C$ outputs $b' = b$, then $C_\pke$ outputs guess $0$; else it outputs guess 1.
    \end{itemize}
   By the design of out $\extract$ and by our assumption, we must have that $\Pr[\wpke.\extract(\wpke.\xk,\\ \wpke.\pk, \aux,  C_\pke) \notin \cQ] \geq \epsilon$ where $C_\pke$ is the program created dring $\extract(\xk, \pk,C)$.

For any $(\xk, \mk, \sk,\pk)$, it is easy to see that program $C_\pke$ created by $\extract(\xk, \pk, C)$'s input-output behavior is the same as $C_\wpke'$ above. Therefore, if $\Pr[\extract(\xk, \pk, C_\pke) \in\cQ]$
  is $\gamma'$-good and $\Pr[\extract(\xk, \pk, C_\pke)] \geq \epsilon$, so will $C_\wpke'$ satisfy these two conditions. Thus $\cB_\wpke$ breaks the $\gamma'$-unremovability.

\vspace{\baselineskip}

      \item \textbf{Case 2: the program $C_\pke$ created during the execution of $\extract(\xk, C)$ is not a $\gamma'$-good adversary for the stage-2 game $G_{\ccapke}(\sk, \pk, \cdot)$ for any non-negligible $\gamma'$. } 
         In this case, we must have that $\vert \Pr[C_\pke \to 0 \vert \ct^* = \pke.\Enc(\pk, \sk_0)] -\Pr[C_\pke \to 0 \vert \ct^* = \pke.\Enc(\pk, \sk_1)]  = \vert \Pr[C \text{ outputs } b' = b \vert \ct^* = \pke.\Enc(\pk, \sk_0)] -\Pr[C \text{ outputs } b' = b \vert \ct^* = \pke.\Enc(\pk, \sk_1)]  \vert \leq \negl(\lambda)$; 

        Thus we can switch to doing the following in running $\extract(
        \xk, \pk, C)$: when  $C$ submits challenge messages $(m_0, m_1)$; 
        $C_\pke$ sends 
            $\ct^{**} = (\ct_1^* = \pke.\Enc(\pke.\pk,\sk_1), \ct^*_2 \gets \ske.\Enc(\sk_0, m_b))$ to $C$,  
            where $m_b \gets \{m_0, m_1\}$ where $\ct^*$ is always $\pke.\Enc(\pke.\pk,\sk_1)$.

            Since $C$ is still $(\gamma-\negl(\lambda))$ good, we will build a reduction to break IND-CCA security of 
            $\ske$.

            The reduction $\cB_\ske$ samples $\wpke$'s keys on its own and can query the oracles $\ske.\Enc(\sk_0, \cdot), \\\ske.\Dec(\sk_0, \cdot)$ provided by the $\ske$ challenger. 

            After $\A$ outputs $C$,
            $\cB_\ske$ enters stage-2 of reduction and  simulates the encryption and decryption oracles by submitting the decryption queries to the encryption, decryption oracles $\ske.\Enc(\sk_0, \cdot),\ske.\Dec(\sk_0, \cdot)$ in the CCA-security game of SKE.
            In the challenge phase, 
            $C$ submits challenge messages $(m_0, m_1)$; $\cB_\ske^2$ submits $(m_0, m_1)$ to the external challenger; and samples $\sk_1 \gets \ske.\KeyGen(1^\lambda), $;  $C_\pke$ receives challenge ciphertext $\ct^*_2 \gets \ske.\Enc(\sk_0, m_b)$ from the challenger, where $\sk_0$ is only known to the challenger and $C_\pke$ sends 
            $\ct^{**} = (\ct^*_1 = \wpke.\Enc(\pk, \sk_1), \ct^*_2 \gets \ske.\Enc(\sk_0, m_b))$ to $C$.
            $C$ continues to simulate the encryption and decryption oracle as above to decrypt only valid ciphertexts. 
            In the end, if $C$ outputs $b'$, then $\cB_\ske^2$ outputs the same $b'$.

\end{itemize}

\end{proof}

\section{Watermarkable Functional Encryption from Watermarkable Attribute-Based Encryption}
\label{sec:wm_fe_from_abe}


We give a high-level description on how to turn the \cite{goldwasser2013reusable} functional encryption construction from ABE, FHE, garbled circuits into a watermarkable FE based on watermarkable ABE. We will not elaborate details of the construction as we cannot possibly cover all explicit constructions for watermarking-compositions of existing schemes, and our main goal here is to give reference to another example of watermarking-composition when applied to an advanced encryption scheme.

All the main algorithms are the same as the \cite{goldwasser2013reusable} construction. We will mainly remark on how we do extraction and refer interested readers to \cite{goldwasser2013reusable} for the detailed construction.

\paragraph{Watermarkable Functional Encryption: Definitions} A watermarkable $\fe$ scheme can be given different 
watermarking definitions. One is to watermark the master secret key and given out an unmarked functional key to the adversarial program $\A$; the other is to let $\A$ submit a function $f$ in the first stage and gets to query on marked versions of the functional
key $\sk_f$. Both notions are interesting.

For the first notion, we can consider the following security game: 
The $\gamma$-Unremovability for a watermarkable single-key FE scheme
 says, for all $\lambda \in \N$, and for all PPT admissible stateful adversary $\A $,  there exists a negligible function $\negl(\lambda)$ such that:
$$ \Pr \left[ \begin{array}{cc}
\extract(\xk, \mpk, \aux= x , C) \notin \cQ \\ \wedge C \text{ is } \gamma\text{-good}
\end{array}:  \begin{array}{cc}  (\msk,\mpk, \xk, \mk)  \gets \wmsetup(1^\lambda)  \\ 
(x, C) \gets \cA^{\watermark(\mk, \msk, \cdot)}(1^\lambda, \mpk) \\
\end{array} \right] \leq \negl(\lambda).$$
where $\cQ$ is the set of marks queried by $\A$ and $C = (C_1, C_2, D)$ is a PPT admissible, stateful $\gamma$-good adversary in the security game $G_\abe(\msk, \mpk, \aux = (x), \cdot)$ if:
$$ \Pr \left[ D(\st, \ct_b) = b:  \begin{array}{cc}
(f, \st_1) \gets C_1(\mpk) \\
\sk_f \gets \fe.\KeyGen(\msk, f) \\
\ct_0 \gets \fe.\Enc(\mpk, x), \ct_1 \gets \Sim(\mpk, \sk_f, f, f(x), 1^{|x|}), b \gets \{0,1\}\\
 \st \gets C_2(\st_1, \ct_b)
\end{array} \right] \geq \frac{1}{2} + \gamma.$$
where $\Sim$ is some PPT simulator.


A selective variant is requiring $\A$ output $x$ before seeing $\mpk$. Note that we need $\A$ to output $x$ in the first stage because by relying out construction on ABE, the choice of message $x$
will influence the choice of the ABE reduction's choice of challenge attribute, which must be committed to before entering stage 2. 
We can also let $\A$ outputs $f$ first so that $\A$ sees $\sk_f$ before outputing $x$ (which matches the semantics of a fully secure FE).
But then the $\extract$ algorithm will have to take in $f$ as an auxiliary inputs. 

The case where $\sk_f$ gets watermarked is slightly trickier to define. We need to work with an indistinguishability-based FE security where the function family $\cF$ is a "high-entropy" function family. Also, the $\extract
$ needs to take in the challenge $f$
as an auxiliary input. We refer to \cite{goyal2019watermarking} Remark 4.5 for more discussions. \cite{goyal2019watermarking}'s watermarkable predicate encryption
scheme can be extended to a watermarkable FE scheme for watermarking a functional key when working with the above restricted function class.

\subsection{Two-Outcome Attribute-Based Encryption}
The first building block for \cite{goldwasser2013reusable} FE is a two-outcome ABE, called $\abe_2$. It has the similar syntax to a normal ABE except that: the encryption scheme takes in $\mpk, x, m_0, m_1$; when decrypting with a functional key $\sk_f$ on ciphertext $\ct_{x, m_0, m_1}$, the decryption will output $m_b$ when $f(x) = b$.

The single-key $\abe_2$ security game is exactly like a single-key ABE security game except that $\A$ provides 3 challenge messages $(m, m_0, m_1)$ and attribute $x$, then $\A$ is given either $\abe_2.\Enc(\mpk, x, (m_0, m))$ if $f(x) = 1$ for the key generation query $f$,  or 
$\abe_2.\Enc(\mpk, x, (m, m_1))$  if $f(x) = 0$, and asked to output $b$.

A single-key $\abe_2$ scheme can be built with a simple construction from a regular single-key $\abe$ .
We refer the readers to \cite{goldwasser2013reusable} Appendix B for details and give a high level idea:
the scheme generates two pairs regular ABE keys; the encryption generates two encryptions of the same message and attribute, each under one of the master public keys respectively; the decryption algorithm on $\sk_f$ and $\ct = (\ct_0, \ct_1)$ will try to decrypt both ciphertexts, and output that one that has a valid decryption result.

In the watermarking construction, we watermark both the master secret keys of the two $\abe$ (regarding single-key watermarkable ABE security: see \Cref{def:wm_abe_security} except that $\A$ is not give $\KeyGen(\msk,\cdot)$ oracle and $C$ can only query $\KeyGen(\msk,\cdot)$ once)).
The $\extract$ algorithm in the watermarkable $\abe_2$ will turn the input program $C$ into a circuit $C_i$ to break the single-key ABE security for each of the keys in the construction. The reduction in \cite{goldwasser2013reusable} Appendix B would go through because 
each stage of $\abe_2$ in the unremovability game would map to the stage of the regular $\abe$ unremovability game. That is, making some $\msk$ marking queries after receiving $\mpk$, and outputs challenge messages, attribute and program $C$; then $C$ gets to query a single $sk_f$ and challenged with the challenge ciphertext (If we choose to work with a watermarkable FE scheme where we let $\A$ choose $f$, we can also let the adversary in watermarkble $\abe$ scheme query one $\sk_f$ and then the program $C$ is not allowed to make any queries). 

\paragraph{Watermarkable Functional Encryption Construction, Extraction and Security}
We refer the readers to \cite{goldwasser2013reusable} Section 3.1 for details on the construction and its preliminaries for definitions of FHE and garbled circuit, since we only plan to give an idea here.

The construction only needs to rely on a watermarkable $\abe_2$ scheme (which we have shown above can be based on watermarkable regular $\abe$). The FHE and garbled circuits do not have to be watermarked because they are only generated freshly on each evaluation.

The $\wmsetup$ algorithm generates all $\lambda$ number of $(\mpk_i, \msk_i, \mk_i, \xk_i)$ from the setup of a watermarkable $\abe_2$ scheme.

The $\KeyGen(\msk, f)$ algorithm generates a $sk_i \gets \abe_2.\KeyGen(\msk_i, \fhe.\eval_f^i)$ where $\fhe.\eval_f^i$ is outputing the $i$-th bit after the FHE evaluation on some FHE $\eval$ key $\fhe.\pk$, the input function $f$ and some FHE ciphertexts $c_1, \cdots, c_n$.

The $\Enc(\mpk, x)$ algorithm takes input attribute $x = x_1 \cdots x_n$, generates a fresh $\fhe$ key pair, encrypt each $x_i$ to get FHE ciphertext $c_i$. Then produce garbled circuit for the FHE decryption algorithm  $\fhe.\Dec(\fhe.\sk, \cdot)$ with $2\lambda$ labels $\{L_i^0, L_i^1\}_{i \in [\lambda]}$. Then we produce $\abe_2$ ciphertext $\ct_i \gets \abe_2.\Enc(\mpk_i, (\fhe.\pk,\vec{c}), L_i^0, L_i^1)$ for all $i \in [\lambda]$ where $\vec{c} = (c_1, \cdots, c_n)$.
Output the ciphertext $(\ct_1, \cdots \ct_\lambda)$ and the garbled circuit.

The decryption algorithm $\Dec(\sk_f, \ct)$ runs the $\abe_2$ decryption on each$\ct_i$ using $\sk_i$ to obtain a label $L_i^{d_i}$, for each $i \in [\lambda]$.
Then one can recover $f(x) \gets \fhe.\Dec(\fhe.\sk, d_1 \cdots d_\lambda)$ by using the labels  $L_i^{d_i}$ and the garbled circuit.

Now we can talk about how we watermark and extract: the marking scheme simply marks each $\abe_2.\msk_i$ for $i \in [\lambda]$. To see how we extract, we roughly open up the security proof of the above scheme: On an input program $C$, the extraction interacts with $C$ as in the stage-2 FE game in our unremovability definition. Recall that in the end of the original security game, we need to prepare either real ciphertext $\ct \gets \Enc(\mpk, x)$ or a simulated ciphertext, and ask the final stage adversary $D$ to distinguish. But in our extraction algorithm that tries to create circuit $C_i$ to break the unremovability of the $i$-the $\abe_2$ scheme, we instead sample either of the following: (1) generate the ciphertext as in the original scheme; (2) use the label  $L_i^{d_i}$ twice in the ciphertext $\ct_i' \gets \abe_2.\Enc(\mpk_i, (\fhe.\pk, \vec{c}), L_i^{d_i}, L_i^{d_i})$, where $d_i = \fhe.\eval_f^i(\fhe.\pk, \vec{c})$. 
More specifically, circuit $C_i$ obtains one of these ciphetexts from some external $\abe_2$ challenger and uses $C$'s output to distinguish them.

Why would a $\gamma$-good $C$ in the original security game still be $\gamma$-good when we simulate the game as in the above extraction?
This relies on the hybrid arguments invoking the IND-CPA security of $\fhe$ and then  the security of garbled circuit. We refer the details to the proofs for Hybrid 0 to Hybrid 2 in Section 3.2 of \cite{goldwasser2013reusable}. By invoking these security properties, we move from giving a completely simulated ciphertext, to give the fake ciphertext described in the extraction algorithm (when flipping the coin to $1$), while $C$ should still be $\gamma$-good with some noticeable probability.

Then we can argue that at least one of $C_i$ is $\gamma_i$-good for some non negligible $\gamma_i$, we can break unremovability of $\abe_2$ with the $i$-th key pair.
 In the marking stage, the $i$-th $\abe_2$ reduction can simulate the marking queries by making marking query to the $\abe_2$ marking oracle (and mark the other keys on its own). 
 After $\A$ commits to chellenge message $x$, the reduction can also commit the challenge attribute $(\fhe.\pk, \vec{c})$ where $\vec{c}$ is computed from $x$ as in the construction. 
 After $\A$ outputs $C$, the reduction creates circuit $C_i$ as in the extraction algorithm, and $C_i$ can answer $C$'s single key query by making a single query to the $\abe_2$ $\KeyGen(\msk, \cdot)$ oracle. 


\section{Watermarkable Non-malleable PKE from Watermarkable CPA-secure PKE, NIZK and Signatures}
\label{sec:ddn91_wm}

\subsection{Watermarkable Implementation of Non-malleable CCA2 PKE}

We present that an alternative construction to \Cref{sec:naor_yung_wm}, the CCA2 non-malleable PKE in \cite{dolev1991non} also has a watermarking implementation. We will omit the security proof due to the similarity to \Cref{sec:naor_yung_wm}.

\subsection{Construction and Security}

Our construction is based on the non-malleable encryption scheme in \cite{dolev1991non}.

Given a watermarkable implementation of a CPA-secure PKE $\wpke = (\wmsetup, \\ \Enc, \Dec, \watermark, \extract)$, a NIZK scheme $\nizk = (\setup, \prove, \verify)$ and a signature scheme $\DS = (\KeyGen,\sign, \verify)$. where the verification key size of $\DS$ is $n$.

\paragraph{Construction}

\begin{description}
    \item $\KeyGen(1^\lambda, 1^n):$
    For $i \in [n], b \in \{0,1\}$: 
     compute  $(\wpke.\pk_{i,b}, \wpke.\sk_{i,b}, \xk = \wpke.\xk_{i,b}; \mk = \wpke.\mk_{i,b}) \gets \wpke.\wmsetup(\lambda)$;

     Compute $\crs \gets \nizk.\setup(1^\lambda)$;
    Output $\sk = (\wpke.\sk_{i,b})_{i\in [n], b \in \{0,1\}}; \pk = (\{\wpke.\pk_{i,b}\}_{i\in [n], b \in \{0,1\}}, \crs); \xk = \{\wpke.\pk_{i,b}\}_{i\in [n], b \in \{0,1\}}; \mk = \{\wpke.\pk_{i,b}\}_{i\in [n], b \in \{0,1\}}$ 
    
    \item $\Enc(\pk, m)$: 
    \begin{itemize}
        \item parse $\pk := \{\pk_{i, b}\}_{i\in [n], b \in \{0,1\}}, \crs$; compute $(\DS.\sk, \DS.\vk) \gets \DS.\KeyGen(1^\lambda)$;
        Let us denote $i$-th bit of $\DS.\vk$ as $\vk_i$.
        
        \item For each $i\in [n]$: compute 
        $\ct_i \gets \wpke.\Enc(\pk_{i, \vk_i}, m)$; 

        \item compute $\pi \gets \nizk.\prove(\crs, (\ct_1, \ct_2, \cdots, \ct_n, \DS.\vk), (r_1, r_2, \cdots, r_n, m))$ for the following statement:

        $\exists$ witness $(r_1, r_2 , \cdots, r_n, m)$ such that $\ct_i = \wpke.\Enc(\pk_{i, \vk_i}, m; r_i)$ for all $i \in [n]$, where $r_i$ is the randomness used in encryption.
        
        \item compute a signature $\sig \gets \DS.\sign(\DS.\sk, (\{\ct_i\}_{i \in n}, \pi))$
        \item Output $\ct = (\{\ct_i\}_{i \in n}, \DS.\vk,  \pi, \sig)$.
    \end{itemize}

    \item $\Dec(\sk, \ct)$: 
    \begin{itemize}
        \item parse $\ct = (\{\ct_i\}_{i \in n}, \DS.\vk, \pi); \sk = \{\sk_{i,b}\}_{i\in [n], b \in \{0,1\}}$;
        \item if $\DS.\verify(\DS.\vk, (\{\ct_i\}_{i \in n}, \pi), \sig) = 1$ continue; else abort and output $\bot$.
        \item if $\nizk.\verify(\crs, \pi, ((\{\ct_i\}_{i \in n}, \DS.\vk)) = 1$, continue; else abort and output $\bot$.

        \item compute $m \gets \wpke.\Dec(\sk_{1, \vk_1}, \ct_1)$;
        output $m'$. 
    \end{itemize}

    \item $\watermark(\mk, \sk, \tau)$: parse $\mk = \{\mk_{i, b}\}_{i\in [n], b \in \{0, 1\}}; \sk =  \{\sk_{i, b}\}_{i\in [n], b \in \{0, 1\}}$; output $ \{\sk_{i, b, \tau} \gets \wpke_{i,b}.\watermark(\sk_{i,b}, \tau)\}\}_{i\in [n], b \in \{0, 1\}}$;
    
    \item $\extract(\xk, \pk, \aux = (m_0, m_1), C)$: \begin{itemize}
        \item parse $\xk := ( \{\xk_{i, b}\}_{i\in [n], b \in \{0, 1\}}, \nizk.\td); \pk := (\{\pk_{i, b}\}_{i\in [n], b \in \{0, 1\}}, \crs)$. 
        Initialize an empty tuple $\vec{\tau}$;
        
        For $i^* = 1, \cdots, n; b^* = 0, 1$:
        Create the following circuit $C_{\istar,b^*}$ with black-box access to $C$:
        \begin{itemize}
            \item $C_{\istar,b^*}$ is hardcoded with $(\{\pk_{i,b}\}_{i \in [n], b\in \{0,1\}}, \{\xk_{i,b}\}{i \neq \istar \vee b \neq b^*},  \td, \crs)$
            
            $C_{\istar,b^*}$ acts as a stage-2 reduction from CCA2-PKE to CPA-PKE with the keys $(\sk_{\istar, b^*}, \pk_{\istar, b^*} )$ simulates the CCA2-PKE stage-2 game for $C$ as follows: 
            
            \item For $C$'s decryption queries $\ct = (\{\ct_{i,\vk_i}\}, \vk, \pi, \sig) $: 
            \begin{itemize}
                \item First check if $\nizk.\verify(\crs, \pi) = 1$, if 0 output $\bot$; else continue;

                \item By the extraction key simulation property, since $\xk_{i \neq \istar \vee b \neq b^*}$ can be used to simulate the oracles used in $G_{CPA}(\pk_i, \sk_j, \cdot)$ for $i \neq \istar$ or $b \neq b^*$, $C_{\istar, b^*}$ can simulate the oracle $\wpke.\Dec(\sk_{\istar, 1-b^*}, \cdot)$ and thus decrypt $\ct_{\istar, 1-b^*}$ in the ciphertext $\ct$ to output $m$.
            \end{itemize}

            \item 
            $C_{\istar, b^*}$ submits $(m_0, m_1)$ to the external challenger; $C_{\istar, b^*}$ receives challenge ciphertext $\ct_{\istar, b^*}^* = \wpke.\Enc(\pk_{\istar, b^*}, m_\delta), \delta \gets \{0,1\}$ from the challenger.

            \item
            $C_{\istar, b^*}$ prepares the following ciphertext:

            \begin{enumerate}
                \item   Sample $(\DS.\vk^*, \DS.\sk^*) \gets \DS.\KeyGen(1^\lambda)$ so that the $i^*$-th bit of $\vk^*$ is $b^*$.

                \item   Compute $\ct_{i,b_i}^* \gets \wpke.\Enc(\pk_{i,b_i}, m_{\delta'}), \delta' \gets \{0,1\}$ for all $i \neq \istar$, where $b_i = \vk^*_i$.

                \item  \underline{ compute $\widehat{\pi} \gets \Sim(\td, \crs,$,$ (\{\ct_{i,b_i}^*\}_{i \in [n], b_i = \vk^*_i}))$ } \\ \underline{ where $\Sim$ is the simulator algorithm for $\nizk$.}
            \end{enumerate}

            Then it sends 
            $\ct^{*} = (\{\ct_{i,b_i}^*\}_{i \in [n], b_i = \vk^*_i}), \vk^*, \widehat{\pi})$ to $C$.
            
            \item $C$ continues to simulate the decryption oracle as above to decrypt only valid ciphertexts $\ct \neq \ct^*$, except adding the following check:
            \begin{itemize}
                \item \underline{Check if $\vk = \vk^*$, if yes, output $\bot$; else continue to decrypt.}
            \end{itemize}

            \item $C$ outputs $\ell$ ciphertexts $\ct_1, \cdots, \ct_\ell$ and $C_i$ computes $d_j \gets \Dec(\sk, \ct_j)$ if $ct_j \neq \ct^*$, else $d_j = \bot$.
            
            \item In the end, feed $(d_1, \cdots, d_\ell)$ to $C$ and $C$ outputs a bit b'.
            $C_i$ outputs the same as
            $C$ outputs.

            \item Add $\tau/\bot \gets \wpke.\extract(\wpke.\xk_{i^*, b^*}, \wpke.\pk_{i^*, b^*}, \aux = (m_0, m_1), C_{i^*, b^*})$ to $\vec{\tau}$.
        \end{itemize}
        \item Output $\vec{\tau}$.
    \end{itemize}
\end{description}

The security proof is highly similar to \Cref{sec:naor_yung_wm} and we omit it here. 

\fi



\vspace{\baselineskip}

\end{document}